\documentclass[letterpaper]{article}
% DO NOT CHANGE THIS
\usepackage{aaai23} % DO NOT CHANGE THIS
\usepackage{times} % DO NOT CHANGE THIS
\usepackage{helvet} % DO NOT CHANGE THIS
\usepackage{courier} % DO NOT CHANGE THIS
\usepackage[hyphens]{url} % DO NOT CHANGE THIS
\usepackage{graphicx} % DO NOT CHANGE THIS
\urlstyle{rm} % DO NOT CHANGE THIS
 % DO NOT CHANGE THIS
\usepackage{graphicx}  % DO NOT CHANGE THIS
\usepackage{natbib}  % DO NOT CHANGE THIS
\usepackage{caption}  % DO NOT CHANGE THIS
\frenchspacing % DO NOT CHANGE THIS
\setlength{\pdfpagewidth}{8.5in} % DO NOT CHANGE THIS
\setlength{\pdfpageheight}{11in} % DO NOT CHANGE THIS
%
% Keep the \pdfinfo as shown here. There’s no need
% for you to add the /Title and /Author tags.
\pdfinfo{
    /TemplateVersion (2023.1)
}
%
% These are recommended to typeset algorithms but not required. See the subsubsection on algorithms. Remove them if you don't have algorithms in your paper.
\usepackage{algorithm}
\usepackage{algorithmic}
\usepackage{stmaryrd}
%
% These are are recommended to typeset listings but not required. See the subsubsection on listing. Remove this block if you don't have listings in your paper.
\usepackage{newfloat}
\usepackage{listings}
\lstset{%
	basicstyle={\footnotesize\ttfamily},% footnotesize acceptable for monospace
	numbers=left,numberstyle=\footnotesize,xleftmargin=2em,% show line numbers, remove this entire line if you don't want the numbers.
	aboveskip=0pt,belowskip=0pt,%
	showstringspaces=false,tabsize=2,breaklines=true}
\floatstyle{ruled}
\newfloat{listing}{tb}{lst}{}
\floatname{listing}{Listing}
\usepackage{xspace}
\usepackage{amsfonts}
\usepackage{thm-restate}
\usepackage{amsmath}
\usepackage{amsthm}
\usepackage{amssymb}
\usepackage{ulem}
\normalem
\usepackage{misc}   %custom for logic

\usepackage{color}
\usepackage{graphicx}
\usepackage{tikz}
\usepackage{tikz-cd}
\usetikzlibrary{calc}

\usepackage{amsthm}
\usepackage[shortcuts]{extdash} %for \=/

\newcommand{\ourEL}{\ensuremath{\mathcal{ELIHF}}}
\newcommand{\editColor}{black}

\setcounter{secnumdepth}{2} %May be changed to 1 or 2 if section numbers are desired.

% The file aaai22.sty is the style file for AAAI Press
% proceedings, working notes, and technical reports.
%

% Title

% Your title must be in mixed case, not sentence case.
% That means all verbs (including short verbs like be, is, using,and go),
% nouns, adverbs, adjectives should be capitalized, including both words in hyphenated terms, while
% articles, conjunctions, and prepositions are lower case unless they
% directly follow a colon or long dash
\title{Efficient Answer Enumeration in Description Logics with
  Functional Roles -- Extended Version}
\author{
    %Authors
    % All authors must be in the same font size and format.
    Carsten Lutz,
    Marcin Przyby{\l}ko
}
\affiliations{
    %Afiliations
    Institute of Computer Science, Leipzig University, Germany\\
    % If you have multiple authors and multiple affiliations
    % use superscripts in text and roman font to identify them.
    % For example,

    % Sunil Issar, \textsuperscript{\rm 2}
    % J. Scott Penberthy, \textsuperscript{\rm 3}
    % George Ferguson,\textsuperscript{\rm 4}
    % Hans Guesgen, \textsuperscript{\rm 5}.
    % Note that the comma should be placed BEFORE the superscript for optimum readability

%    1900 Embarcadero Road, Suite 101\\
%    Palo Alto, California 94303-3310 USA\\
    % email address must be in roman text type, not monospace or sans serif
    \{clu,przybyl\}@informatik.uni-leipzig.de
    %
    % See more examples next
}

%Example, Single Author, ->> remove \iffalse,\fi and place them surrounding AAAI title to use it
\iffalse
\title{My Publication Title --- Single Author}
\author {
    Author Name
}
\affiliations{
    Affiliation\\
    Affiliation Line 2\\
    name@example.com
}
\fi

\iffalse
%Example, Multiple Authors, ->> remove \iffalse,\fi and place them surrounding AAAI title to use it
\title{My Publication Title --- Multiple Authors}
\author {
    % Authors
    First Author Name,\textsuperscript{\rm 1}
    Second Author Name, \textsuperscript{\rm 2}
%    Third Author Name \textsuperscript{\rm 1}
}
\affiliations {
    % Affiliations
    \textsuperscript{\rm 1} Affiliation 1\\
    \textsuperscript{\rm 2} Affiliation 2\\
    firstAuthor@affiliation1.com, secondAuthor@affilation2.com %, thirdAuthor@affiliation1.com
}
\fi

%%%%%%%%%%%%%%comments

\newcommand{\cMP}[1]{{\color{violet} MP:~#1}}

\newtheorem{theorem}{Theorem}
\newtheorem{lemma}{Lemma}
\newtheorem{proposition}{Proposition}
\newtheorem{claim}{Claim}
\newtheorem{example}{Example}

%%%%%%%%%%%%%%math
\newcommand{\fun}[3]{\ensuremath{#1\colon #2 \to #3}}

%%%%%%%%%%%%%%EL
\newcommand{\func}{\mn{func}}
\newcommand{\fEL}{\ensuremath{\mathcal{ELIF}}}

%%%%%%%%%%% tikz
\tikzstyle{quantified}=[circle, draw, scale=.6]
\tikzstyle{answer} = [circle, fill, scale=.6]
\begin{document}

\maketitle

\begin{abstract}
  We study the enumeration of answers to ontology-mediated queries
  when the ontology is formulated in a description logic that supports
  functional roles and the query is a CQ. In particular, we show that
  enumeration is possible with linear preprocessing and constant delay
  when a certain extension of the CQ (pertaining to functional roles)
  is acyclic and free-connex acyclic. This holds both for complete answers and
  for partial answers. We provide matching lower bounds for the
  case where the query is self-join free.
\end{abstract}

% Ontology-mediated querying is a well-established paradigm in which a
% query is enriched with an ontology to facilitate access to incomplete
% and heterogeneous data \cite{some,stuff}.
% In ontology-mediated querying, a query is enriched with an ontology to
% facilitate access to incomplete and heterogeneous data
% \cite{some,stuff}. This framework has been the subject of intense
% research
In ontology-mediated querying, a query is combined with an ontology to
enrich querying with domain knowledge and to facilitate access to incomplete and
heterogeneous data \cite{bienvenu-ontology-disjunctive-datalog,
  DBLP:conf/rweb/CalvaneseGLLPRR09,DBLP:journals/ws/CaliGL12}.
Intense
research has been carried out on the complexity of ontology-mediated
querying, considering in particular description logics and existential
rules
as the ontology languages and conjunctive queries (CQs) as the actual
queries.
% The complexity of ontology-mediated querying has been the subject of
% intense research, studying for example its combined, data, and
% parameterized complexity, and considering various combinations of
% ontology language and query language, see e.g.\ \cite{find,a,lot}.
% Relevant ontology languages include description logics and existential
% rules while conjunctive queries (CQs) are the most studied query
% language.
Most of the existing studies have concentrated on the basic
problem of single-testing which means to decide, given an
ontology-mediated query (OMQ)~$Q$, a database \Dmc, and a candidate
answer $\bar a$, whether $\bar a$ is indeed an answer to $Q$
on~\Dmc. From the viewpoint of many practical applications, however,
the assumption that a candidate answer is provided is hardly realistic
and it seems much more relevant to enumerate, given an OMQ $Q$ and a
database \Dmc, all answers to $Q$ on \Dmc.

The investigation of answer enumeration for OMQs has recently been
initiated in \cite{ourPODS} which also introduces useful new notions
of minimal partial answers; such answers may contain wildcards to
represent objects that are known to exist, but whose exact identity is
unknown. If, for example, the ontology stipulates that
$$
\begin{array}{rcl}
\mn{Researcher} &\sqsubseteq& \exists \mn{worksFor}. \mn{University} \\[1mm]
\mn{Unversity} &\sqsubseteq& \mn{Academia}
\end{array}
$$
and the database \Dmc is $\{ \mn{Researcher}(\mn{mary})\}$, then
there are no complete answers to the CQ
$$q(x,y) = \mn{worksFor}(x,y) \wedge \mn{Academia}(y),$$
but $(\mn{mary},{\ast})$ is a minimal partial answer that conveys information
which is otherwise lost.  The ontologies in \cite{ourPODS} are sets of
guarded existential rules which generalize well-known description
logics such as $\EL$ and $\mathcal{ELIH}$. An important feature of
description logics that is not captured by guarded rules are
functionality assertions on roles, making it possible to declare that
some relations must be
interpreted as partial functions.

The purpose of this paper is to study answer enumeration to
OMQs based on description logics with functional roles, in particular
\ourEL\xspace and its fragments. For the actual queries, we
concentrate on CQs. We consider both the traditional complete answers
and two versions of minimal partial answers that differ in which kind
of wildcards are admitted. In one version, there is only a single
wildcard symbol `$\ast$' while in the other version, multiple
wildcards `$\ast_1$', `$\ast_2$', etc are admitted and multiple
occurrences of the same wildcard represent the same unknown constant.

We study enumeration algorithms with a preprocessing phase
that takes time linear in the size of \Dmc and with constant delay,
that is, in the enumeration phase the delay between two answers must
be independent of \Dmc. Note that we assume the OMQ $Q$ to be fixed
and of constant size, as in data complexity.  If such an algorithm
exists, then enumeration belongs to the complexity class \dlc. If in
addition the algorithm writes in the enumeration phase only a
constant amount of memory, then it belons to the class \cdlin. It is
not known whether \dlc and \cdlin coincide
\cite{kazana-phd}. Enumeration algorithms with these properties have
been studied intensely, see
e.g.\ \cite{berkholz-enum-tutorial,segoufin-enum} for an
overview.

It is an important result for conjunctive queries $q$ without
ontologies that enumeration is possible in \cdlin if $q$ is acyclic and
free-connex acyclic \cite{bagan-enum-cdlin}, the latter meaning that
$q$ is acyclic after adding an atom that covers all answer
variables. If these conditions are not met and $q$ is self-join free
(i.e., no relation symbol occurs more than once), then enumeration is
not possible in \dlc unless
% If $q$ is self-join free (i.e., no relation
% symbol occurs more than once), then there is also a converse: when the
% stated conditions are not met, then enumeration is not possible in
% \dlc unless
certain algorithmic assumptions fail that pertain to the triangle
conjecture, the hyperclique conjecture, and Boolean matrix
multiplication \cite{bagan-enum-cdlin,BraultBaron}. In the presence of
functional dependencies, which in their unary version are identical to
functionality assertions in description logic, this characterization
changes: enumeration in \cdlin is then possible if a certain extension
$q^+$ of $q$ guided by the functional dependencies is acyclic and
free-connex acyclic \cite{carmeli-enum-func}. In particular, adding
functional dependencies may result in queries to become enumerable in
\cdlin which were not enumerable before.

The main results presented in this paper are as follows. We consider
OMQs $Q$ where the ontology \Omc is formulated in the description
logic \ourEL\xspace and the query $q$ is a CQ, and show that
enumerating answers to $Q$ is possible in \cdlin if $q^+$ is acyclic
and free-connex acyclic. For complete answers, this is done by using
carefully defined universal models and showing that they can be
constructed in linear time via an encoding as a propositional Horn
formula. For minimal partial answers (with single or multiple
wildcards), we additionally make use of an enumeration algorithm that
was given in \cite{ourPODS}. Here, we only attain enumeration in \dlc.

We also prove corresponding lower bounds for self-join free queries,
paralleling those in \cite{bagan-enum-cdlin,BraultBaron} and
\cite{carmeli-enum-func}. They concern only ontologies formulated in
the fragment $\mathcal{ELIF}$ of \ourEL\xspace that disallows role
inclusions. The reason is that lower bounds for OMQs with role
inclusions entails a characterization of enumerability in \cdlin for
CQs with self-joins, a major open problem even without ontologies. The
lower bounds apply to complete and (both versions of) minimal partial
answers and are conditional on the same algorithmic assumptions as
in the case without ontologies. Our constructions and
correctness proofs are more challenging than the existing ones in the
literature since, unlike in the upper bounds, we cannot directly use
the query~$q^+$. This is because the transition from $q$ to $q^+$
changes the signature, extending the arity of relation symbols beyond
two, and it is unclear how this can be reflected in the ontology.

We also study the combined complexity of single-testing for the new
notions of minimal partial answers, concentrating on the important
tractable description logics \EL and $\mathcal{ELH}$. It turns out
that tractability is only preserved for acyclic CQs and the single
wildcard version of minimal partial answers, while in all other cases
the complexity increases to \NPclass-complete or {\sc DP}-complete.

% An \emph{$f$-CQ} is a CQ that may contain atoms of the form
% $\nrleq 1 r(x)$. The semantics of $f$-CQs $q$ is defined as for CQs,
% with the additional condition that every homomorphism $h$ from
% $q$ to an interpretation \Imc must satisfy: if $\nrleq 1 r(x) \in q$,
% then $h(x) \in \nrleq 1 r^\Imc$. An \emph{$f$-decoration} of a
% CQ $q$ is any f-CQ that can be obtained from $q$ by choosing
% (any number of) atoms $r(x,y) \in q$ and adding $\nrleq 1 r (x)$.

% \medskip

% Now let $(\Omc,\Sigma,q)$ be an OMQ. We replace $q$ with the f-UCQ
% $q_f$ that contains all f-decorations $p$ of $q$ such that
% %
% \begin{enumerate}

% \item $q \subseteq_\Omc p$ and

% \item for any $f$-decoration $p'$ of $q$ with $p \subsetneq p'$ (this
%   means set inclusion, not containment; in other words: more
%   functionalities), $q \not\subseteq_\Omc p'$.

% \end{enumerate}
% %
% We then work with the f-decorations when increasing the arity in
% $q^+$-style, so we get CD$\circ$Lin when for every CQ $p$ in~$q_f$,
% $p^+$ is acyclic and fc-acyclic. It seems pretty clear that this is
% not tight, that is, there cannot be a matching lower bound, even for
% self-join free queries. We'd also have to argue that the containments
% in Points~1 and~2 above are decidable, which probably isn't too hard.

%Detailed proofs are in the appendix of the long version, made
%available at \cite{arxiv}.

\section{Preliminaries}
\label{sect:prelims}

\newcommand{\Dtree}{\Dmc_{\mn{tree}}}

% We reserve a countably infinite supply of relation symbols of every
% arity and a countably infinite set \Cbf of constants. As usual in
% description logic, we refer to unary relation symbols as \emph{concept
%   names} and to binary relation symbols as \emph{role names}.
Let \NC, \NR, and \NK be countably infinite sets of \emph{concept names},
\emph{role names}, and \emph{constants}.
A
\emph{role} $R$ is a role name $r \in \NR$ or an \emph{inverse role} $r^-$
with $r$ a role name. If $R=r^-$, then $R^- = r$.  An
\emph{\ELI-concept} is built according to the rule
$$C, D ::= A \mid C \sqcap D \mid \exists R.C$$ % \mid \qnrleq 1 r C$$
where $A$ ranges over concept names and $R$ over roles.
%{\color{orange}What about $\bot$?}
% An
% \emph{\EL-concept} is an \ELI-concept that does not use inverse
% roles.
An \emph{\ourEL-ontology} is a finite set of \emph{concept inclusions
  (CIs)} $C \sqsubseteq D$ \emph{role inclusions (RIs)} $R \sqsubseteq
S$, and \emph{functionality assertions}
$\mn{func}(R)$ where (here and in what follows) $C,D$ range over
\ELI concepts and $R,S$ over roles. An
\emph{$\mathcal{ELIF}$-ontology} is an \ourEL-ontology that does not
use RIs.

A \emph{database} is a finite set of \emph{facts} of the form
$A(c)$ or $r(c,c')$ where $A$ is a concept name or $\top$,
$r$ is a role name, and $c,c' \in \NK$. We use
$\mn{adom}(\Dmc)$ to denote the set of constants used in
database~\Dmc, also called its \emph{active domain}. We may write
$r^-(a,b) \in \Dmc$ to mean $r(b,a) \in \Dmc$.

A \emph{signature} is a set of concept and role names, uniformly
referred to as \emph{relation symbols}. For a syntactic object $O$
such as a concept or an ontology, we use $\mn{sig}(O)$ to denote the
set of relation symbols used in it and $||O||$ to denote its size, that
is, the number of symbols needed to write it as a word using a
suitable
encoding. %, with every occurrence of a concept and role name contributing one.
% {\color{violet} Note about size of tuples and sets? We use $|\cdot |$ instead of $||\cdot ||$.}

The semantics is given in terms of \emph{interpretations}
$\Imc=(\Delta^\Imc,\cdot^\Imc)$ where $\Delta^\Imc$ is a non-empty set
called the \emph{domain} and $\cdot^\Imc$ is the interpretation
function, see \cite{baader-introduction-to-dl} for details.
We take the liberty to identify interpretations with non-empty and
potentially infinite databases. The interpretation
function $\cdot^\Imc$ is then defined as $A^\Imc = \{ c \mid A(c) \in
\Imc\}$ for concept names $A$ and $r^\Imc = \{ (c,c') \mid r(c,c') \in
\Imc\}$ for role names~$r$.
An interpretation \Imc \emph{satisfies} a CI $C \sqsubseteq D$ if
$C^\Imc \subseteq D^\Imc$, a fact $A(c)$ if $c \in A^\Imc$, and a fact
$r(c,c')$ if $(c,c') \in r^\Imc$. We thus make the standard names
assumption, that is, we interpret constants as themselves. % This
% implies the unique name assumption. For $S \subseteq \Delta^\Imc$, we
% use $\Imc|_S$ to denote the restriction of \Imc to domain~$S$.
An interpretation \Imc is a \emph{model} of an ontology (resp.\ database) if it
satisfies all inclusions and assertions (resp.\ facts) in it.

A database~$\Dmc$ is \emph{satisfiable} w.r.t.\ an ontology \Omc if
there is a model \Imc of~\Omc and~\Dmc. Note that functionality
assertions in an ontology \Omc can result in databases that are
unsatisfiable w.r.t.\ \Omc. %, due to the standard names assumption.
We write $\Omc \models \mn{func}(R)$ if every model of \Omc satisfies
the functionality assertion $\mn{func}(R)$. In \ourEL, this is
decidable and \ExpTime-complete; see {\color{\editColor} appendix}.
% This is the case iff \Omc contains RIs
% $R_0 \sqsubseteq S_1, \dots, R_n \sqsubseteq S_n$ such that
% $R_0 \in \{r,r^-\}$, $S_n \in \{s,s^-\}$,

\paragraph{Queries.}
A \emph{conjunctive query (CQ)} is of the form
$q(\bar x) = \exists \bar y\,\varphi(\bar x,\bar y)$, where $\bar x$
and $\bar y$ are tuples of variables and $\varphi(\bar x,\bar y)$ is a
conjunction of \emph{concept atoms} $A(x)$ and \emph{role atoms}
$r(x,y)$, with $A$ a concept name, $r$ a role name, and $x,y$ variables
from $\bar x \cup \bar y$.
We % require that all variables in $\bar x$ are
% used in $\varphi$,
call the variables in $\bar x$ the \emph{answer variables} of~$q$, and
use $\mn{var}(q)$ to denote $\bar x \cup \bar y$.  We may write
$\alpha \in q$ to indicate that $\alpha$ is an atom in $q$.
% and sometimes write $r^-(x,y) \in q$ in place of \mbox{$r(y,x) \in q$}.
% CQ $q$ is a \emph{tree} if $\Dmc_q$ is.
For
$V \subseteq \mn{var}(q)$, we use $q|_V$ to denote the restriction of
$q$ to the atoms that use only variables in~$V$.
A \emph{homomorphism} from $q$ to an interpretation \Imc is a function
$h:\mn{var}(q) \rightarrow \Delta^\Imc$ such that $A(x) \in q$ implies
$A(h(x)) \in \Imc$ and $r(x,y) \in q$ implies $r(h(x),h(y)) \in \Imc$.
A tuple $\bar d \in (\Delta^\Imc)^{|\bar x|}$, where $|\bar x|$
denotes
the length of the tuple $\bar x$,
% (with $|\bar x|$ the length of tuple $\bar x$)
is an \emph{answer} to $q$ on interpretation \Imc %, written
% $\Imc \models q(\bar d)$,
if there is a homomorphism $h$ from $q$ to
\Imc with $h(\bar x)=\bar d$.

Every CQ $q$ is associated with a canonical database $\Dmc_q$ obtained
from $q$ by viewing variables as constants and atoms as facts. We
associate every database, and via $\Dmc_q$ also every CQ $q$, with an
undirected graph
$G_\Dmc = (\mn{adom}(\Dmc), \{ \{ a,b \} \mid R(a,b) \in \Dmc \text{
  for some role } R \}$. It is thus clear what we mean by a
\emph{path} $c_0,\dots,c_k$ in a database and a path $x_0,\dots,x_k$
in a CQ.  A CQ $q$ is \emph{self-join free} if every relation
symbol occurs in at most one atom in $q$.

% For CQs $q_1,q_2$, we say that $q_1$ is \emph{contained} in $q_2$,
% written $q_1 \subseteq q_2$, if $q_1(\Dmc) \subseteq q_2(\Dmc)$ for
% every database \Dmc. We say that $q_1$ and $q_2$ are
% \emph{equivalent}, written $q_1 \equiv q_2$, if $q_1 \subseteq q_2$
% and $q_2 \subseteq q_1$.

\paragraph{Ontology-Mediated Queries.}
An \emph{ontology-mediated query (OMQ)} is a pair
$Q=(\Omc,\Sigma,q)$ with \Omc an ontology,
$\Sigma \subseteq \mn{sig}(\Omc) \cup \mn{sig}(q)$
%$\Sigma \subseteq \mn{sig}(\Omc) \cup \mn{sig}(q)$
a finite signature called
the \emph{data schema},
and $q$ a query. We write $Q(\bar
x)$ to indicate that the answer variables of $q$ are $\bar x$.
% {\color{blue}We use
% ELIQ$(\Sigma)$ to denote the language of all ELIQs that use only
% symbols from signature $\Sigma$.}
%
The signature $\Sigma$ expresses the promise that $Q$ is only
evaluated on $\Sigma$-databases.
 Let \Dmc be such a database.
A tuple $\bar a \in \mn{adom}(\Dmc)^{|\bar x|}$ is an \emph{answer} to
$Q(\bar x)$ on \Dmc, written $\Dmc \models Q(\bar a)$, if
$\Imc \models q(\bar a)$ for all models \Imc of \Omc and~\Dmc.
% When
% more convenient, we
We
might alternatively write
$\Dmc,\Omc \models q(\bar a)$. With $Q(\Dmc)$ we denote
the set of all answers to $Q$ on~\Dmc.
%For OMQs $Q_1(\bar
%x)$ and $Q_2(\bar x)$, $Q_i =(\Omc_i,\Sigma,q_i)$, we say that
% $Q_1$
% is \emph{contained} in $Q_2$ and write $Q_1 \subseteq Q_2$, if for
% every $\Sigma$-database \Dmc, %  and every $\bar a \in
% % \mn{adom}(\Dmc)^{|\bar x|}$
% $Q_1(\Dmc) \subseteq Q_2(\Dmc)$.
% We say that
% $Q_1$ is \emph{equivalent} to $Q_2$
% % and write $Q_1 \equiv Q_2$,
% if $Q_1(\Dmc)=Q_2(\Dmc)$ for every $\Sigma$-database~\Dmc.
An OMQ
$Q=(\Omc,\Sigma,q)$ is \emph{empty} if $Q(\Dmc)=\emptyset$ for every
$\Sigma$-database~\Dmc that is satisfiable w.r.t.~\Omc.

We may assume w.l.o.g.\ that \ourEL\xspace ontologies used in OMQs are
in \emph{normal form}, that is, all CIs in it are of one of the
following forms:
%
% $$
% \begin{array}{rclcrcl}
% \top &\sqsubseteq& A, & &
% A_1 \sqcap A_2 &\sqsubseteq& A \\[1mm]
% A_1 &\sqsubseteq& \exists R . A_2 &&
%                                      \exists R . A_1 &\sqsubseteq& A_2
%   % \\[1mm]
% % A_1 &\sqsubseteq& \qnrleq 1 R {A_2}
% \end{array}
% $$
$$
\top \sqsubseteq A, \quad
A_1 \sqcap A_2 \sqsubseteq A, \quad
A_1 \sqsubseteq \exists R . A_2, \quad
\exists R . A_1 \sqsubseteq A_2
$$
where $A_1, A_2, A$ range over concept names. Every \ourEL-ontology
\Omc can be converted into this form in linear time without affecting the
answers to OMQs \cite{baader-introduction-to-dl}.

% Clearly, the OMQs $Q=(\Omc,\Sigma,q)$ and $Q=(\Omc',\Sigma,q)$
% are equivalent if $\Omc'$ is the result of converting \Omc into normal
% form. We may thus generally assume that the ontologies in OMQs are
% in normal form.

With $(\Lmc,\Qmc)$, we denote the \emph{OMQ language} that
contains all OMQs~$Q$ in which $\Omc$ is formulated in DL \Lmc and $q$
in query language \Qmc, such as in $(\ourEL,\text{CQ})$.

\paragraph{Partial Answers.}
Fix a wildcard symbol `$\ast$' that is not in \NK. A \emph{wildcard
  tuple} for a database $\Dmc$ takes the form
$(c_1,\dots,c_n) \in (\mn{adom}(\Dmc) \cup \{\ast\})^n$, $n \geq 0$.
For wildcard tuples $\bar c = (c_{1},\dots,c_{n})$ and
$\bar c' = (c'_{1},\dots,c'_{n})$, we write $\bar c \preceq \bar c'$
if $c'_{i} \in \{ c_i, \ast \}$ for $1 \leq i \leq n$.  Moreover,
$\bar c \prec \bar c'$ if $\bar c \preceq \bar c'$ and
$\bar c \neq \bar c'$. For example, $(a,b) \prec (a,\ast)$ and
$(a,\ast) \prec (\ast,\ast)$ while $(a,\ast)$ and $(\ast,b)$ are
incomparable w.r.t.\ `$\prec$'. Informally, $\bar c \prec \bar c'$
expresses that tuple $\bar c$ is preferred over tuple~$\bar c'$ as it
carries more information.

A \emph{partial answer} to OMQ
$Q(\bar x)=(\Omc,\Sigma,q)$ on an \Sbf-database~$\Dmc$ is a wildcard
tuple $\bar c$ for $\Dmc$ of length $|\bar x|$ such that for each
model $\Imc$ of \Dmc and $\Omc$, there is a $\bar c' \in q(\Imc)$ such
that
% {\color{blue}$\bar c$ can be obtained from $\bar c'$ by replacing every constant
% from $\mn{adom}(I) \setminus \mn{adom}(D)$ with `$\ast$'.}
$\bar c' \preceq \bar c$. Note that some positions in $\bar c'$ may
contain constants from $\mn{adom}(\Imc) \setminus \mn{adom}(\Dmc)$,
and that the corresponding position in $\bar c$ must then have a
wildcard.
%{\color{red} Are we sure $c$ and $c'$ do not have to agree on domain
%  of $D$?}
%since $\bar c$ is a wildcard tuple for $D$.
%
A partial answer $\bar c$ to $Q$ on a $\Sigma$-database $\Dmc$ is a
\emph{minimal partial answer} if there is no partial answer $\bar c'$
to $Q$ on $\Dmc$ with $\bar c' \prec \bar c$.  We use $Q(\Dmc)^{\ast}$
to denote the set of all minimal partial answers to $Q$ on~$\Dmc$.  An
example is provided in the introduction.  Note that
$Q(\Dmc) \subseteq Q(\Dmc)^{\ast}$.  To distinguish them from partial
answers, we also refer to the answers in $Q(\Dmc)$ as \emph{complete}
answers.

\smallskip We also define a second version of minimal partial answers
where multiple wildcards are admitted, from a countably infinite set
$\Wmc = \{ \ast_1,\ast_2,\dots \}$ disjoint from \NK. Multiple
occurrences of the same wildcard then represent the same unknown
constant while different wildcards may or may not represent different
constants. We use $Q(\Dmc)^{\Wmc}$ to denote the set of minimal
partial answers with multiple wildcards. A precise definition is provided in
the {\color{\editColor} appendix}, here we only give an example.
\begin{example}
  Let $Q(x,y,z)=(\Omc,\Sigma,q)$ where \Omc contains
  $$
  \begin{array}{r@{\;}c@{\;}l}
    % \mn{TechWorker} & \sqsupseteq & \mn{Person} \sqcap \exists
    %                                 \mn{hasEmployer} . \mn{TechCompany} \\[1mm]
    % \mn{CarWorker} & \sqsupseteq & \mn{Person} \sqcap \exists
    %                                 \mn{hasEmployer} . \mn{CarCompany}
    % \\[1mm]
    \mn{Company} & \sqsubseteq & \exists \mn{hasEmployee}
                                 . \mn{Person} \\[1mm]
    \mn{TechCompany} &\sqsubseteq& \mn{Company},\ \
    \mn{CarCompany} \sqsubseteq \mn{Company}, \\[1mm]
    % \mn{TechCompany}  & \sqsubseteq & \mn{Company}\\[1mm]
    % \mn{CarCompany}  & \sqsubseteq & \mn{Company}\\[1mm]
    \mn{TechFactory} & \sqsubseteq & \exists \mn{hasOwner} . \mn{TechCompany}
    \\[1mm]
    \mn{CarFactory} & \sqsubseteq & \exists \mn{hasOwner} . \mn{CarCompany}
    %\\[1mm]
    % \mn{TechCompany} & \sqsupseteq &
    % \mn{Company} \sqcap \exists \mn{hasOwner}^- . \mn{TechFactory} \\[1mm]
    % \mn{CarCompany} & \sqsupseteq &
    % \mn{Company} \sqcap \exists \mn{hasOwner}^- . \mn{CarFactory}
    % \\[1mm]
   %  \end{array}
   % $$
   % $$
   % \begin{array}{c}
    % \mn{CarFactory} \sqsubseteq  \mn{Factory} \quad
    % \mn{TechFactory} \sqsubseteq  \mn{Factory} \\[1mm]
    % \mn{Factory} & \sqsubseteq & \exists \mn{hasOwner} . \mn{Company} \\[1mm]
    % \mn{Factory} & \sqsubseteq & \exists \mn{hasOwner} . \mn{Company} \\[1mm]
    % \mn{Tesla} & \sqsubseteq & \mn{TechCompany} \sqcap
    %                                  \mn{CarCompany} \\[1mm]
%    \multicolumn{3}{c}{\mn{func}(\mn{hasOwner}),}
    % \mn{TechWorker} & \sqsubseteq & \mn{Person} \sqcap \exists
    %                                 \mn{hasEmployer} . \mn{TechCompany} \\[1mm]
    % \mn{CarWorker} & \sqsubseteq & \mn{Person} \sqcap \exists
    %                                 \mn{hasEmployer} . \mn{CarCompany} \\[1mm]
    % \mn{TeslaWorker} & \sqsubseteq & \mn{TechWorker} \sqcap
    %                                  \mn{CarWorker} \\[1mm]
    % \multicolumn{3}{l}{\qquad \mn{func}(\mn{hasEmployer}),}
  \end{array}
  $$
  and $\mn{func}(\mn{hasOwner})$, $\Sigma$ contains all symbols from \Omc, and
  $$
  \begin{array}{r@{\;}c@{\;}l}
  q(x,y,z) &=& \mn{Person}(x) \, \wedge\\[1mm]
&&  \mn{hasEmployee} (y,x) \wedge
  \mn{TechCompany}(y) \, \wedge \\[1mm]
 &&   \mn{hasEmployee} (z,x) \wedge
  \mn{CarCompany}(z).
  \end{array}
  $$
 Further consider the database \Dmc with facts $$ %\Dmc = \{
 \mn{CarFactory}(\mn{gigafactory1}), \mn{TechFactory}(\mn{gigafactory1}).
 % \}.
 $$
 Then $Q^\Wmc(\Dmc)=\{ (\ast_1,\ast_2,\ast_2) \}$. If we extend
 \Dmc with $\mn{hasOwner}(\mn{gigafactory1},\mn{tesla})$, then this
 changes to $Q^\Wmc(\Dmc)=\{ (\ast_1, \mn{tesla}, \mn{tesla})\}$.
\end{example}

% \smallskip We shall also consider partial answers that may only use a
% single wildcard `$\ast$'. In this case, repeated occurrences of the
% wildcard do \emph{not} indicate occurrences of the same unknown
% constant.  Formally, all definitions are identical to the
% multi-wildcard case, except that we now use $\Wmc = \{ \ast \}$ and
% Condition~2 of `$\preceq$' is dropped. We use $Q(\Dmc)^{\ast}$ to
% denote the minimal partial answer evaluation of an OMQ $Q(\bar x)$ on
% a database~$\Dmc$ with a single wildcard. For distinction, we also
% speak about the answers in $Q(\Dmc)^\Wmc$ as the minimal partial
% answers \emph{with multi-wildcards}.

\paragraph{Enumeration.}  We are interested in
enumerating the complete and minimal partial answers to a given OMQ
$Q(\bar x) =(\Omc,\Sigma,q)\in (\Lmc,\Qmc)$ on a given
$\Sigma$-database $D$.  An \emph{enumeration algorithm} has a
\emph{preprocessing phase} where it may produce data
structures, but no output. In the subsequent \emph{enumeration phase},
it enumerates all tuples from $Q(D)$, without repetition.
%, followed by
%an end of enumeration signal. % An \emph{all-testing algorithm} for
% $\class{C}$ is defined similarly.  It takes the same two inputs, and
% has the same preprocessing phase, followed by a \emph{testing phase}
% where it repeatedly receives tuples
% $\bar a \in \mn{adom}(D)^{|\bar x|}$ and returns `yes' or 'no'
% depending on whether $\bar a \in Q(D)$.
Answer enumeration for an OMQ language $(\Lmc,\Qmc)$ is
possible \emph{with linear preprocessing and constant delay}, or
\emph{in \dlc}, if there is an enumeration algorithm for
$(\Lmc,\Qmc)$ in which preprocessing takes time
$f(||Q||) \cdot O(||D||)$, $f$ a computable function, while the
delay between the output of two consecutive answers depends only on
$||Q||$, but not on~$||D||$. % Note that this is indeed linear time in
% the sense of data complexity.
Enumeration in \cdlin is defined
likewise, except that
it can use a constant total amount of extra memory in the enumeration phase.
% amount of memory  written  in the
%enumeration phase must be independent of $||Q||$. %  It is not clear whether
% \dlc and \cdlin coincide or not, see e.g.\
% \cite{kazana-phd}.
% Note that this corresponds to data
% complexity: if $||Q||$ is bounded by a constant, then the time required
%   for preprocessing is linear and the delay is
%   $O(1)$.
  % Also note that the time requirement for
% the preprocessing phase can be described as
% fixed-parameter tractability with linear runtime, whence the term FPL.
% The definition of \emph{\dlc} and \emph{\cdlin} for all-testing is analogous,
% except that the enumeration delay is replaced with the time needed for
% testing.

The above definition only becomes precise when we fix a concrete
machine model. We use RAMs under a uniform cost measure
\cite{DBLP:journals/jcss/CookR73}, see \cite{Grandjean-RAM} for a
formalization. A RAM has a one-way read-only input tape, a write-only
output tape, and an unbounded number of registers that store
non-negative integers of $O(\log n)$ bits, $n$ the input size. In this
model, which is standard in the \dlc context, sorting is possible in
linear time and we can access in constant time lookup tables indexed
by constants from $\mn{adom}(D)$ %  or by tuples of constants of length
% $O(1)$
\cite{Grandjean-RAM}.

We also consider \emph{single-testing} which means to decide,
given an OMQ $Q(\bar x)=(\Omc,\Sigma,q)$, a $\Sigma$-database~$D$, and an
answer candidate $\bar c \in \mn{adom}(D)^{|\bar x|}$, whether
$\bar c \in Q(D)$. % We generally consider data complexity, where the
% OMQ $Q$ is fixed and thus of constant size and the only remaining
% inputs are $D$ and $\bar c$. % Unless otherwise

\paragraph{\bf Acyclic CQs.} Let $q(\bar x) =
\exists  \bar y \, \varphi(\bar x, \bar y)$ be a CQ. A \emph{join tree} for
$q(\bar x)$ is an undirected tree $T=(V,E)$ where $V$ is the set of
atoms in $\varphi$ and for each $x \in \mn{var}(q)$, the set $\{
\alpha \in V \mid x \text{ occurs in } \alpha \}$ is a connected
subtree of $T$. % Note that for constant occurring in $q$, there is no
% connectedness condition.
A CQ $q(\bar x)$ is \emph{acyclic} if it has a join tree.
If $q$ contains only unary and binary relations (which shall not
always be the case), then $q$ being acyclic is equivalent to $G_{\Dmc_q}$
being a tree, potentially with multi-edges and self-loops.
% and it is
% \emph{weakly acyclic} if $q|_{\bar y}$ is acyclic.
% SHORT
%\footnote{With `consistently', we mean that different
%  occurrences of the same answer variable must be replaced by the same
%  constant.}
%
A CQ $q(\bar x)$ is
\emph{free\=/connex acyclic} if adding a \emph{head atom}
$H(\bar x)$ that `guards' the answer variables, where $H$ is a
relation symbol of arity $|\bar x|$, results in an acyclic CQ.
% Note
% that other authors have called a CQ $q$ free\=/connex acyclic (or even
% just free\=/connex) if $q$ is both acyclic and (in our sense)
% free\=/connex acyclic
% \cite{berkholz-enum-tutorial}. % We do not
% require the former.
%
Acyclicity and free-connex acyclicity are independent
properties, that is, neither of them implies the other.
% Each of them
% implies weak acyclicity while the converse is
% false.

\section{Upper Bounds}
\label{sect:upperbounds}

We identify cases that admit answer enumeration in \cdlin and \dlc,
considering both complete answers and minimal partial answers. From
now on, we also use relation symbols of arity exceeding two,
identifying concept names and role names with relations symbols of arity one and two.

We start with some preliminaries. Let
$Q(\bar x)=(\Omc,\Sigma,q) \in (\ourEL,\text{CQ})$.  Fix a linear
order on the variables in $q$.  A \emph{path} $y_0, \dots, y_k$ in $q$
is \emph{functional} if for \mbox{$0\leq i <k$}, there is an atom
$R(y_i,y_{i+1}) \in q $ such that $\Omc \models \func(R)$. For a tuple
$\bar x$ of variables, we use $\bar x^+$ to denote the tuple
$\bar x \bar y$ where $\bar y$ consists of all variables~$y$, in the
fixed order, that are reachable from a variable in $\bar x$ on a
functional path in $q$ and that are not part of $\bar x$.

The \emph{FA-extension of $q$} is the CQ $q^+(\bar x^+)$ that contains
an atom $R'(\bar y^+)$ for every atom $R(\bar y)$ in $q$, where $R'$
is a fresh relation symbol of arity $|\bar y^+|$. Note that $q$ and
$q^+$ are over different signatures. %, and that
Moreover, $q^+$ may contain symbols
of high arity and is self-join
free. % Our FA-existensions are defined slightly differently to
% the FD-extensions of  \cite{carmeli-enum-func}, which is due to a
% technical glitch in that paper, see \cite{erratum}.
%
We will also consider the CQ $q^+(\bar x)$ with a non-extended
set of answer variables.
Our FA-extensions are a variation of the notion of
FD-extension used in \cite{carmeli-enum-func}.
% in the context of
% databases with functional dependencies.
% study the enumeration
% of answers to CQs in the presence of functional dependencies (FDs),
% which in their unary incarnation correspond to functionality
% assertions in \ourEL-ontologies. A central role is played by a certain
% extension of CQs called an FD-extension, which we adopt here in a
% slightly adapted form.
%

%\input{draws/tikzu/ex-query-and-fa}
%%%%%%%%%%%%%%%%%%%%%%%%%
%%%%%%%%%%%%%%%%%%%%%%%%%

\begin{figure}[t]
    %\begin{wrapfigure}{r}{0.45\textwidth}
    \centering
    \begin{tikzpicture}[scale=.7]

        %delimiters
        %%%%%%%%%%%%%%%grid
        \draw[dashed,gray] (4.25,-2)--(4.25,2.5);
        \draw[dashed,gray] (8.75,-2)--(8.75,2.5);
        %        \draw[dashed,gray] (15,-1)--(15,2);
        \node (end) at (14,0) {};

        \node (l0) at (2.5, -1.5) {\small CQ $q$};
        \node (x0) at (3.5,0)  {$x$};
        \node (y0) at (2.5,1)  {$y$};
        \node (z0) at (1.5,0)  {$z$};
        \node (t0) at (2.5,2)  {$t$};

        \draw [-latex] (x0)--(y0) node [midway, right] {\scriptsize $R_1$};
        \draw [-latex] (y0)to node [midway, left] {\scriptsize $R_2$} (z0);
        \draw [-latex] (z0)--(x0) node [midway, below] {\scriptsize $R_3$};
        \draw [-latex] (t0)--(y0) node [midway, right] {\scriptsize $R_4$};

        %        \node (l0) at (7.5, -1) {\small Graph of FAs in $Q$};
        %        \node (x1) at (8.5,0)  {$x$};
        %        \node (y1) at (7.5,1)  {$y$};
        %        \node (z1) at (6.5,0)  {$z$};
        %        \node (t1) at (7.5,2)  {$t$};
        %
        %        \draw [-latex] (x1)--(z1);
        %        \draw [-latex] (z1)--(y1);
        %        \draw [-] (z1)--(x1);
        %        \draw [-latex] (t1)--(y1);
        %

        \node (l0) at (6.5, -1.5) {\small Join tree for $q^+(x,t,y)$};
        \node (x2) at (7.5,1)  {\scriptsize ${R}_1'(x{,}y{,}z)$};
        \node (y2) at (5.5,2)  {\scriptsize ${R}_2'(y{,}z)$};
        \node (z2) at (7.5,2)  {\scriptsize ${R}_3'(z{,}x{,}y)$};
        \node (t2) at (5.5,0)  {\scriptsize ${R}_4'(t{,}y)$};

        \draw  (x2)--(z2);
        \draw  (z2)--(y2);
        \draw  (t2)--(y2);

        \node (l0) at (11.5, -1) {\small Join tree for $q^+(x,t,y)$};
        \node (l0) at (11.5, -1.5) {\small with added head atom ${H}$};
        \node (x3) at (12.5,1)  {\scriptsize ${R}_1'(x{,}y{,}z)$};
        \node (y3) at (10 ,2)   {\scriptsize ${R}_2'(y{,}z)$};
        \node (z3) at (12.5,2)  {\scriptsize ${R}_3'(z{,}x{,}y)$};
        \node (t3) at (10,0)    {\scriptsize ${R}_4'(t{,}y)$};
        \node (h3) at (12.5,0)  {\scriptsize ${H}(x{,}t{,}y)$};

        \draw  (x3)--(z3);
        \draw  (z3)--(y3);
        \draw  (t3)--(h3);
        \draw  (h3)--(x3);

    \end{tikzpicture}
    \caption{Illustration of Example~\ref{ex:query-and-fa}
    }

    \label{fig:ex-1}
\end{figure}
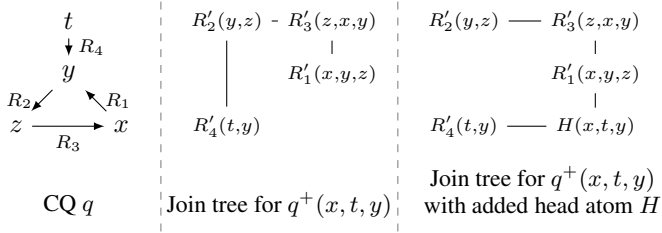

%%%%%%%%%%%%%%%%%%%%%%%%%
%%%%%%%%%%%%%%%%%%%%%%%%%
\begin{example}
    \label{ex:query-and-fa}
    Let $\Omc {=} \{\func(R_2^-), \func(R_3^-), \func(R_4)\}$ and
    $q(x,t) {=} R_1(x,y) \wedge R_2(y,z) \wedge R_3(z,x)\wedge
    R_4(t,y)$. Fix the variable order $x < y <z <t$.
    Then the CQ $q^+(({x,t})^+)$ is
    $$q^+(x,t,y) = {R}_1'(x,y,z), {R}_2'(y,z), \allowbreak
    {R}_3'(z,x,y), \allowbreak {R}_4'(t,y),$$ which is acyclic
    and free\=/connex acyclic while $q$ is neither.  See Figure~\ref{fig:ex-1} for a
    graphical representation.

    In contrast, the version of $q^+$ that has a non-extended set of
    answer variables $\{x,t\}$ (but is otherwise identical) is acyclic, but
    not free\=/connex acyclic.
\end{example}

We state  our upper bound for complete answers.
\begin{theorem}
\label{thm:firstupper}
  Let $Q(\bar x)=(\Omc,\Sigma,q) \in (\ourEL,\text{CQ})$ such that
  $q^+(\bar x^+)$ is acyclic and free-connex acyclic. Then the complete answers
  to $Q$ can be enumerated in \cdlin.
\end{theorem}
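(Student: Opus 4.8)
The plan is to reduce certain-answer enumeration for $Q$ to ordinary CQ enumeration over a finite, linearly-sized database, and then to invoke the known constant-delay result for acyclic free-connex CQs. First I would fix a universal model $\mathcal{U}$ of \Omc and \Dmc enjoying the standard property that, for every tuple $\bar a$ over $\mn{adom}(\Dmc)$, we have $\bar a \in Q(\Dmc)$ iff there is a homomorphism $h\colon q \to \mathcal{U}$ with $h(\bar x)=\bar a$. Such a model exists because \ourEL is Horn; unsatisfiability of \Dmc w.r.t.\ \Omc, which in the presence of functionality assertions can occur when two distinct constants are forced to be identified, is detected separately in linear time, and in that case every tuple is an answer and enumeration is trivial. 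Since $q$ has constant size, every connected component of $q$ containing an answer variable maps within a ball of constant radius $d=O(\|q\|)$ around $\mn{adom}(\Dmc)$ in the (generally infinite) anonymous part of $\mathcal{U}$, and by self-similarity of that part the remaining Boolean components can be evaluated there as well. Hence it suffices to retain the restriction $\mathcal{U}^{\ast}$ of $\mathcal{U}$ to the depth-$d$ neighbourhood of $\mn{adom}(\Dmc)$, which has size $O(\|\Dmc\|)$.

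Second, I would show that $\mathcal{U}^{\ast}$ is computable in linear time. The concept memberships at database elements, the identifications forced by functionality assertions, and the propagation along role inclusions are all captured by a propositional Horn formula of size $O(\|\Dmc\|)$; its minimal model, obtained in linear time by a Dowling--Gallier style closure combined with union--find for the merges, yields the \emph{core}, and the constantly many anonymous successor patterns (which, with inverse roles, may fold back onto their parents) are then attached up to depth $d$.

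Third, and this is the conceptual heart of the reduction, I would exploit that every functional role is interpreted in $\mathcal{U}^{\ast}$ as a genuine partial function: if a variable $y$ is reachable from an answer variable on a functional path, then $h(y)$ is determined by $h(\bar x)$ for every homomorphism $h$, and these are exactly the variables added in passing from $\bar x$ to $\bar x^+$. I would then materialise, in linear time, a database $D^{+}$ over the fresh high-arity signature of $q^+$: for each atom $R(\bar y)$ of $q$, the relation $R'$ collects all tuples obtained by taking an $R$-fact of $\mathcal{U}^{\ast}$ and following the functional paths inside $\bar y^+$ to their unique successors, discarding any tuple whose $\bar x$-positions are not in $\mn{adom}(\Dmc)$. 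By construction, the satisfying assignments of $q^+$ over $D^{+}$ are precisely the homomorphisms $q \to \mathcal{U}^{\ast}$ sending $\bar x$ into $\mn{adom}(\Dmc)$, read off on $\bar x^+$.

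Finally, since $q^+(\bar x^+)$ is acyclic and free-connex acyclic, its answers over $D^{+}$ can be enumerated in \cdlin by the algorithm of \cite{bagan-enum-cdlin}; this is also where the functional-dependency enumeration of \cite{carmeli-enum-func}, applied to the unary dependencies induced by functional roles, enters. I would pipe each enumerated $\bar x^+$-tuple through the projection to $\bar x$; because the added components are functionally determined by $\bar x$, distinct $q^+$-answers have distinct projections, so this post-processing introduces neither repetitions nor more than constant delay and outputs exactly $Q(\Dmc)$. I expect the main obstacle to be the second step: defining the universal model so that functionality-induced merges and the inverse-role folding of the anonymous part are handled correctly, while simultaneously arranging that its relevant restriction is produced faithfully by a single linear-time Horn computation. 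Establishing the universal-model property under functionality, rather than the concluding appeal to \cite{bagan-enum-cdlin}, is where the real work lies.
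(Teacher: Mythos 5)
Your proposal is correct and follows essentially the same route as the paper: a query-directed universal model of linear size computed via a propositional Horn formula (Proposition~\ref{prop:japanesenoodles}), a linear-time transition to $q^+$ over a materialised database $\Dmc_0^+$ by following functional successors (Lemma~\ref{lem:samehoms}), an appeal to the \cdlin algorithm for acyclic free-connex CQs, and a duplicate-free projection justified by the fact that the extra components of $\bar x^+$ are functionally determined. The only cosmetic differences are that the paper restricts answer variables to $\mn{adom}(\Dmc)$ by adding a fresh marker concept to the query rather than filtering tuples, and avoids any union--find merging by defining the anonymous part of the universal model directly instead of chasing.
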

%
%In fact, this result is proved in \cite{carmeli-enum-func} for the
%special case that the ontology contains only functionality assertions,
%but no concept inclusions.
The proof of Theorem~\ref{thm:firstupper} relies on the following.
\begin{proposition}
\label{prop:japanesenoodles}
Given an OMQ $Q(\bar x)=(\Omc,\Sigma,q) \in (\ourEL,\text{CQ})$ and a
$\Sigma$-database \Dmc that is satisfiable w.r.t.~\Omc, one can
compute in time $2^{\mn{poly}(||Q||)} \cdot O(||\Dmc||)$ a database
$\Umc_{\Dmc,Q} \supseteq \Dmc$ that satisfies all functionality assertions in~\Omc
and such that $Q(\Dmc)=q(\Umc_{\Dmc,Q}) \cap \mn{adom}(\Dmc)^{|\bar x|}$.
\end{proposition}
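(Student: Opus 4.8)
The goal is to construct a finite database $\Umc_{\Dmc,Q}$ that behaves like a universal model for $Q$ as far as the query $q$ is concerned: it must satisfy all functionality assertions and correctly reflect the complete answers. The plan is to build $\Umc_{\Dmc,Q}$ as a suitably truncated version of the canonical (chase-like) universal model of $\Dmc$ and $\Omc$, pruned so that it stays finite and linear in $\|\Dmc\|$ while still admitting exactly the homomorphisms from $q$ that witness certified answers over $\mn{adom}(\Dmc)$.

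First I would recall the structure of the canonical model for an $\ourEL$-ontology without functionality: it is obtained from $\Dmc$ by repeatedly applying the CIs in normal form, where an existential CI $A_1 \sqsubseteq \exists R.A_2$ generates fresh anonymous successors. Since $q$ is fixed and of constant size, only successors within bounded distance of $\mn{adom}(\Dmc)$ can be reached by any homomorphic image of $q$; more precisely, one only needs to unravel the anonymous part up to depth $|\mn{var}(q)|$, and at each node the relevant information is its \emph{type}, i.e.\ the set of concept names it satisfies together with the witnessing existentials. Because the normal form bounds the number of distinct types by $2^{O(\|\Omc\|)}$, the anonymous trees hanging off each constant can be shared and truncated, giving an anonymous part whose total size depends only on $\|Q\|$ and whose attachment to each constant can be computed in constant time per constant --- hence overall time $2^{\mn{poly}(\|Q\|)} \cdot O(\|\Dmc\|)$.

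The crucial new ingredient, and what I expect to be the main obstacle, is making the construction \emph{satisfy all functionality assertions} $\mn{func}(R)$ while preserving the certain-answer semantics. The naive chase violates functionality whenever two existential CIs or a database edge and a CI both demand an $R$-successor of the same element; the repair is to \emph{merge} these successors, which in the presence of inverse functional roles and role inclusions can propagate merges back into $\mn{adom}(\Dmc)$ and even force the whole database to collapse or become unsatisfiable. Since $\Dmc$ is assumed satisfiable w.r.t.\ $\Omc$, this collapse does not derail satisfiability, but I must argue that the merging process (both among anonymous elements and between anonymous elements and constants) (i) terminates with a model satisfying every $\mn{func}(R)$, and (ii) can be organized to run in linear time in $\|\Dmc\|$. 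The linear-time bound is what the paper's phrase about an \emph{encoding as a propositional Horn formula} is meant to secure: I would encode the saturation-and-merge process --- which concept names hold where, and which pairs of elements are forced equal by functionality, role inclusions, and the CIs --- as a Horn formula of size $O(\|\Dmc\|)$ (with $\|Q\|$ affecting only the per-constant clause templates), and invoke linear-time Horn satisfiability/least-model computation to read off $\Umc_{\Dmc,Q}$.

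Finally I would verify the homomorphism equivalence $Q(\Dmc) = q(\Umc_{\Dmc,Q}) \cap \mn{adom}(\Dmc)^{|\bar x|}$. The inclusion from left to right follows because $\Umc_{\Dmc,Q}$ is itself a model of $\Dmc$ and $\Omc$ (it satisfies all CIs by saturation, all functionality assertions by the merging, and contains $\Dmc$), so any certain answer must map into it with answer variables landing in $\mn{adom}(\Dmc)$. For the converse --- the genuinely universal direction --- I would show that $\Umc_{\Dmc,Q}$ homomorphically maps into every model $\Imc$ of $\Dmc$ and $\Omc$ via a map fixing $\mn{adom}(\Dmc)$; the standard argument unfolds existentials step by step, and the only subtlety is that the merges performed for functionality are \emph{forced} in every model (two $R$-successors of the same element coincide whenever $\Omc \models \mn{func}(R)$), so the merged $\Umc_{\Dmc,Q}$ does not identify more than $\Imc$ already does. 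Composing, any homomorphism from $q$ into $\Umc_{\Dmc,Q}$ with image of $\bar x$ in $\mn{adom}(\Dmc)$ yields homomorphisms into all models, establishing that the tuple is a certain answer. The depth truncation at $|\mn{var}(q)|$ is safe because $q$ is connected within each of its components and no homomorphic image of $q$ can reach anonymous elements deeper than that.
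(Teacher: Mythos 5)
Your overall architecture (truncated universal model, Horn encoding for linear time, universality argument) matches the paper's, but there are two places where your plan as written does not go through.

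The clearest gap is your depth truncation. You justify cutting the anonymous part at depth $|\mn{var}(q)|$ by saying that ``no homomorphic image of $q$ can reach anonymous elements deeper than that'' because $q$ is connected within each component. This is only true for components containing a variable that is mapped into $\mn{adom}(\Dmc)$. A connected component of $q$ all of whose variables map into the anonymous part (in particular any Boolean component) can have its only homomorphic images at arbitrary depth in an anonymous tree --- e.g.\ a concept name that is first derived after a long chain of existential CIs. With your construction, such an answer would be lost and the inclusion $Q(\Dmc)\subseteq q(\Umc_{\Dmc,Q})\cap\mn{adom}(\Dmc)^{|\bar x|}$ fails. The paper repairs exactly this with a third step of the construction: for every tree-shaped Boolean CQ $p$ with at most $|\mn{var}(q)|$ variables such that $\Dmc_{M_c},\Omc\models p$ for some $c\in\mn{adom}(\Dmc)$, a disjoint copy of $\Dmc_p$ is added. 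You need this (or a comparable device, such as truncating at a depth where node types provably repeat) for correctness.

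The second issue is your treatment of functionality by merging. Under the standard names assumption, a database satisfiable w.r.t.\ $\Omc$ can never force two constants of $\Dmc$ to be identified, so the ``collapse back into $\mn{adom}(\Dmc)$'' you worry about reduces to: a successor demanded by an existential CI must coincide with an \emph{existing} database successor reached via a functional role (possibly through role inclusions). The paper deliberately never merges: its chase rule {\bf R4} adds the forced concept and role memberships directly to the already-present successor, and in the anonymous part each trace step carries \emph{maximal} sets of roles and concept names, so all co-forced successors are realized as a single fresh node and no fresh successor is ever generated for a functional role that already has one. This is why its Horn formula needs only variables $x_{A(c)}$ and $x_{r(c,c')}$ over existing constants and edges of $\Dmc$, keeping it linear. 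Your proposed Horn encoding of ``which pairs of elements are forced equal'' is not worked out; naively it introduces a quadratic number of equality atoms, and the cascading of merges along anonymous successors is exactly the bookkeeping the paper's design avoids. The merge-based route can be made to work, but as stated it neither establishes termination/linearity nor the claim that the merged structure still maps homomorphically into every model.
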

The database $\Umc_{\Dmc,Q}$ from
Proposition~\ref{prop:japanesenoodles} should be thought of as a
univeral model for the ontology \Omc and database \Dmc that is
tailored specifically towards the query $q$. Such `query-directed'
universal models originate from \cite{DBLP:conf/ijcai/BienvenuOSX13}.
We compute $\Umc_{\Dmc,Q}$ in time
$2^{\mn{poly}(||Q||)} \cdot O(||\Dmc||)$ by constructing a suitable
propositional Horn logic formula~$\theta$, computing a minimal model
for $\theta$ in linear time \cite{dowling-gallier-horn}, and then
reading off $\Umc_{\Dmc,Q}$ from that model.  Details are provided in
the {\color{\editColor} appendix}.

\smallskip

To prove Theorem~\ref{thm:firstupper}, let
$Q(\bar x)=(\Omc,\Sigma,q) \in (\ourEL,\text{CQ})$ with
$q^+(\bar x^+)$ acyclic and free-connex, and let \Dmc be a
$\Sigma$-database.  We first replace $q$ by the CQ $q_0$ that is
obtained from $q$ by choosing a fresh concept name $D$ and adding
$D(x)$ for every answer variable $x$.  We also replace \Dmc by the
database $\Dmc_0=\Umc_{\Dmc,Q}$ from
Proposition~\ref{prop:japanesenoodles} extended with $D(c)$ for all
$c \in \mn{adom}(\Dmc)$. Clearly, $Q(\Dmc)=q_0(\Dmc_0)$ and thus it
suffices to enumerate the latter.

% We next replace $q$ with a self-join free CQ
% $q_1$ obtained by replacing every atom $S(\bar y)$ in $q$ with
% $S'(\bar y)$ where $S'$ is a fresh relation symbol of the same arity
% as $S$. We call $S'$ a \emph{copy} of $S$. We also replace $\Dmc_0$
% with a database $\Dmc_1$ in the signature of $q'$ obtained from
% $\Dmc_0$ by replacing every fact $S(\bar y)$ with a fact $S'(\bar y)$
% for every copy $S'$ of $S$. Moreover, we use $\Omc_1$ to denote the
% set of functionality assertions $\mn{func}(R')$ such that \Omc
% contains a functionality assertion $\mn{func}(R)$ and $R'$ is obtained
% from $R$ by replacing the role name $r$ in $R$ with a copy of $r$.  It
% should be clear that $q_1$ and $\Dmc_1$ can be constructed time linear
% in $||\Dmc||$, that $\Dmc_1$ satisfies all functionality assertions in
% $\Omc_1$, and that $q_0(\Dmc_0)=q_1(\Dmc_1)$. As a consequence of the
% latter, it suffices to enumerate $q_1(\Dmc_1)$.
%
We next replace $q_0(\bar x)$ by $q^+_0(\bar x^+)$ and $\Dmc_0$ by a database
$\Dmc_0^+$ that reflects the change in signature which comes with the
transition from $q_0$ to~$q_0^+$. More precisely if atom $R(\bar y)$ is
replaced with $R'(\bar y^+)$ in the construction of $q_0^+$ and $h$ is a
homomorphism from $q_0|_{\bar y^+}$ to $\Dmc_0$, then $\Dmc^+_0$
contains the fact $R'(h(\bar y^+))$. The following implies that
$q_0(\Dmc_0)$ is the projection of $q_0^+(\Dmc_0^+)$ to the
first $|\bar x|$ components.
\begin{restatable}{lemma}{lemsamehoms}
  \label{lem:samehoms}
 % $q(\Dmc_0)=q^+(\Dmc_0^+)|_{\bar x}$ and
  Every homomorphism from $q_0$ to $\Dmc_0$ is also a homomorphism from
  $q_0^+$ to $\Dmc_0^+$ and vice versa.  Moreover, $\Dmc^+_0$ can be
  constructed in time linear in $||\Dmc||$.
\end{restatable}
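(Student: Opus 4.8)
The plan is to prove Lemma~\ref{lem:samehoms} by carefully tracking how the transition from $q_0$ to $q_0^+$ changes both the query and the database, and showing that the homomorphism condition is preserved in both directions. The key observation is that $q_0^+$ is built by replacing each atom $R(\bar y)$ with $R'(\bar y^+)$ using fresh relation symbols, and $\Dmc_0^+$ is built so that its facts over $R'$ are exactly the images $R'(h(\bar y^+))$ for homomorphisms $h$ from $q_0|_{\bar y^+}$ to $\Dmc_0$. So the two constructions are designed to mirror one another, and the proof should amount to unwinding these definitions.

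First I would prove the forward direction: if $h$ is a homomorphism from $q_0$ to $\Dmc_0$, then $h$ is a homomorphism from $q_0^+$ to $\Dmc_0^+$. Here I fix an arbitrary atom $R'(\bar y^+)$ of $q_0^+$, coming from an atom $R(\bar y)$ of $q_0$. Since $h$ is a homomorphism from $q_0$ to $\Dmc_0$, its restriction to the variables in $\bar y^+$ is a homomorphism from $q_0|_{\bar y^+}$ to $\Dmc_0$ --- one must check that every atom of $q_0$ using only variables from $\bar y^+$ is indeed mapped correctly, which is immediate because $h$ maps all atoms correctly. By the definition of $\Dmc_0^+$, the fact $R'(h(\bar y^+))$ then belongs to $\Dmc_0^+$, so $h$ satisfies the atom $R'(\bar y^+)$. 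Since $R'(\bar y^+)$ was arbitrary, $h$ is a homomorphism from $q_0^+$ to $\Dmc_0^+$.

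For the converse, suppose $h$ is a homomorphism from $q_0^+$ to $\Dmc_0^+$. I want to show $h$ is a homomorphism from $q_0$ to $\Dmc_0$, i.e.\ that every atom $R(\bar y)$ of $q_0$ is satisfied, meaning $R(h(\bar y)) \in \Dmc_0$. Fix such an atom, with corresponding atom $R'(\bar y^+)$ in $q_0^+$. Since $h$ is a homomorphism, $R'(h(\bar y^+)) \in \Dmc_0^+$. By the construction of $\Dmc_0^+$, every $R'$-fact arises as $R'(g(\bar y^+))$ for some homomorphism $g$ from $q_0|_{\bar y^+}$ to $\Dmc_0$; matching components gives $g(\bar y^+) = h(\bar y^+)$, so in particular $g$ and $h$ agree on the variables of $\bar y$. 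Because $R(\bar y)$ is an atom of $q_0|_{\bar y^+}$ and $g$ is a homomorphism from $q_0|_{\bar y^+}$ to $\Dmc_0$, we get $R(g(\bar y)) \in \Dmc_0$, hence $R(h(\bar y)) \in \Dmc_0$. The subtle point to argue carefully here is that $R(\bar y)$ really does live in $q_0|_{\bar y^+}$, which holds since all its variables are among $\bar y \subseteq \bar y^+$; and that the fresh symbol $R'$ is unique to this atom, so no spurious collision occurs.

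Finally, for the linear-time construction of $\Dmc_0^+$, I would observe that each relation symbol $R'$ corresponds to a fixed atom $R(\bar y)$ of the fixed query $q_0$, so $q_0|_{\bar y^+}$ is a fixed CQ of constant size. Enumerating all homomorphisms from a fixed acyclic fragment would in general be superlinear, but here the arity and structure are bounded by $||Q||$; the main obstacle I expect is arguing that the relevant homomorphisms $h$ from $q_0|_{\bar y^+}$ to $\Dmc_0$ can be enumerated in time linear in $||\Dmc||$. The key is that $\bar y^+$ is obtained by following functional paths, and functionality of the roles involved forces the images of the extension variables $\bar y^+ \setminus \bar y$ to be uniquely determined once the images of $\bar y$ are fixed (since $\Dmc_0 = \Umc_{\Dmc,Q}$ satisfies all functionality assertions in $\Omc$ by Proposition~\ref{prop:japanesenoodles}). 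Thus each $R'$-fact is determined by a choice of images for $\bar y$ alone, of which there are $O(||\Dmc||)$, and the unique functional completion can be looked up in constant time via the lookup-table access guaranteed by the RAM model. Summing over the constantly many atoms of $q_0$ yields the linear bound.
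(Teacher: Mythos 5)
Your proof is correct and follows essentially the same route as the paper's: the same restriction/unwinding argument in both directions, and the same linear-time construction via functionality-forced unique extensions and constant-time lookup tables. The only point you leave implicit is that, after the functional completion determines the candidate mapping, one must still verify in constant time (again via lookup tables) that it is a homomorphism of all of $q_0|_{\bar y^+}$, which the paper states explicitly.
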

To enumerate $q_0(\Dmc_0)$, we may thus enumerate $q^+(\Dmc_0^+)$ and
project to the first $|\bar x|$ components. The former can be done in
\cdlin: since $q^+(\bar x^+)$ is acyclic and free-connex, so is
$q_0^+(\bar x^+)$, and thus we may apply the \cdlin
enumeration procedure
from~\cite{bagan-enum-cdlin,berkholz-enum-tutorial}.  Clearly,
projection can be implemented in constant time.  To argue that the
resulting algorithm produces no duplicates, it remains to observe that
the answers to $q_0(\Dmc_0)$ and to $q^+(\Dmc_0^+)$ are in a one-to-one
correspondence, that is, every $\bar c \in q_0(\Dmc_0)$ extends in a
unique way to a $\bar c' \in q^+(\Dmc_0^+)$. This, however, is an
immediate consequence of Lemma~\ref{lem:samehoms}, the definition of
$q_0^+$ and the fact that $\Dmc_0$ saties all functionality
assertions in \Omc.

\medskip We now turn to minimal partial answers. Here, we cannot
expect a result as general as Theorem~\ref{thm:firstupper}, a
counterexample is presented in Section~\ref{sect:lower}. We thus
resort to the stronger condition that $q^+(\bar x)$ rather than
$q^+(\bar x^+)$ is acyclic and free-connex acyclic. The difference
between the two conditions is related to the interplay of answer
variables and functional roles.  In particular, $q^+(\bar x)$ and
$q^+(\bar x^+)$ are identical for OMQs $Q=(\Omc,\Sigma,q)$ such that
answer variables have no functional edges to quantified variables,
that is, for every atom $R(x,y)$ in $q$, if
$\Omc \models \mn{func}(R)$ and $x$ is an answer variable, then $y$ is
also an answer variable. % We state our upper bound for minimal
% partial answers.
%
\begin{theorem}
  \label{thm:mpaupper}
  Let $Q(\bar x) =(\Omc,\Sigma,q) \in (\ourEL,\text{CQ})$ such that
  $q^+(\bar x)$ is acyclic and free-connex acyclic. Then the minimal
  partial answers to $Q$ can be enumerated in \dlc, both with
  multi-wildcards and with a single wildcard.
\end{theorem}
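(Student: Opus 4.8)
The plan is to reuse the first steps of the proof of Theorem~\ref{thm:firstupper}---building the query-directed universal model---but then, in place of the complete-answer enumeration procedure used there, to invoke the minimal-partial-answer enumeration algorithm of \cite{ourPODS}. I would begin by constructing $\Umc_{\Dmc,Q}$ via Proposition~\ref{prop:japanesenoodles}; as noted after that proposition, this database serves as a universal model for $\Omc$ and $\Dmc$ tailored to $q$, and in particular maps homomorphically into every model of $\Dmc$ and $\Omc$ while fixing $\mn{adom}(\Dmc)$. The first key step is to show that the minimal partial answers to $Q$ on $\Dmc$ are exactly the minimal elements, with respect to $\prec$, of the canonical partial answers $\bar c_h$ induced by homomorphisms $h\colon q \to \Umc_{\Dmc,Q}$, where $(\bar c_h)_i$ is $h(x_i)$ if $h(x_i)\in\mn{adom}(\Dmc)$ and $\ast$ otherwise. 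One direction uses universality: composing $h$ with the homomorphism into an arbitrary model yields a tuple $\preceq \bar c_h$. The other uses minimality together with the fact that $\Umc_{\Dmc,Q}$ is itself a model. For the multi-wildcard version, $\bar c_h$ additionally records the equality pattern of the anonymous images, so that two answer positions hitting the same anonymous element receive the same wildcard.

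It then remains to enumerate these minimal canonical partial answers in \dlc. Since $q$ may be cyclic even when $q^+(\bar x)$ is acyclic, I would next pass to the FA-extension exactly as in Theorem~\ref{thm:firstupper}: replace $q$ by $q^+(\bar x)$, now with the non-extended answer variables $\bar x$, and replace $\Umc_{\Dmc,Q}$ by the database $\Umc^+$ built as in Lemma~\ref{lem:samehoms}. By that lemma the homomorphisms from $q$ to $\Umc_{\Dmc,Q}$ and from $q^+$ to $\Umc^+$ coincide, and since the answer variables are unchanged, the induced canonical partial answers---together with their $\prec$-order and their anonymous-equality patterns---are preserved. The task thus reduces to enumerating the minimal partial answers to the free-connex acyclic CQ $q^+(\bar x)$ over $\Umc^+$, with the anonymous elements of $\Umc^+$ playing the role of the nulls that become wildcards.

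This last problem is precisely what the enumeration algorithm of \cite{ourPODS} is designed to solve: given a universal model with a distinguished set of anonymous elements and a free-connex acyclic CQ, it enumerates the minimal partial answers---in both the single- and multi-wildcard versions---with constant delay after linear preprocessing. I would invoke it with $q^+(\bar x)$ and $\Umc^+$, relying on the fact that its algorithm is expected to be modular in the universal model and independent of how that model was produced, so that feeding it our functionality-aware $\Umc^+$ is legitimate. Because the preprocessing of Proposition~\ref{prop:japanesenoodles} is linear in $||\Dmc||$ and the transition to $\Umc^+$ is linear by Lemma~\ref{lem:samehoms}, the whole procedure stays within \dlc; we obtain only \dlc rather than \cdlin because the partial-answer routine must track dominated tuples and therefore uses non-constant memory during enumeration.

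I expect the main obstacle to be the interface between the functionality-aware construction and the machinery of \cite{ourPODS}. Concretely, one must verify that $q^+(\bar x)$ being acyclic and free-connex acyclic is exactly the precondition under which that algorithm runs in \dlc on $\Umc^+$, and---most delicately---that the correspondence of the first step, including the anonymous-equality patterns needed for multi-wildcards, survives the passage to $q^+$ and $\Umc^+$ and carries minimal partial answers to minimal partial answers without introducing duplicates. A second point requiring care is to justify that $\Umc_{\Dmc,Q}$ may be treated as a genuine universal model for the partial-answer correspondence, rather than only satisfying the complete-answer identity $Q(\Dmc)=q(\Umc_{\Dmc,Q})\cap\mn{adom}(\Dmc)^{|\bar x|}$ guaranteed by Proposition~\ref{prop:japanesenoodles}.
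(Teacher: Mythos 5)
Your proposal follows essentially the same route as the paper: establish that the query-directed universal model $\Umc_{\Dmc,Q}$ is universal for (minimal) partial answers in the sense that $Q(\Dmc)^\ast=q(\Umc_{\Dmc,Q})^\ast_\Nbf$ (and likewise for multi-wildcards), pass to $q^+(\bar x)$ and $\Dmc_0^+$ via Lemma~\ref{lem:samehoms} so that the sets of minimal partial answers are preserved without projection, and then invoke the enumeration procedure of \cite{ourPODS} as a blackbox. The one interface issue you flag---why the blackbox may legitimately be fed this particular universal model---is exactly what the paper resolves by isolating the ``chase-like with bounded witness domains'' condition of Theorem~\ref{prop:enummultiwildcards} and checking that $\Umc_{\Dmc,Q}$ satisfies it.
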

To prove Theorem~\ref{thm:mpaupper}, we make use of a recent result
regarding OMQs in which the ontologies are sets of guarded existential
rules. We refer to the class of such ontologies as~$\class{G}$. It was
shown in \cite{ourPODS} that minimal partial answers to OMQs
$Q=(\Omc,\Sigma,q) \in (\class{G},\text{CQ})$ can be enumerated in
\dlc if $q$ is acyclic and free-connex acyclic, both with
multi-wildcards and with a single wildcard.
% $\mathcal{ELIH}$-ontologies in normal form are a special case of
% $\class{G}$-ontologies whereas $\class{G}$-ontologies do not support
% functionality assertions.
The enumeration algorithms presented in
\cite{ourPODS} are non-trivial and we use them as a blackbox. To
achieve this, we need a slightly more `low-level' formulation of the
results from \cite{ourPODS}. In what follows, we restrict our
attention to minimal partial answers with a single wildcard. The
multi-wildcard case is analogous, details are in the {\color{\editColor} appendix}.

Fix a countably infinite set \NN of \emph{nulls} that is disjoint from
\NK and does not contain the wildcard symbol `$\ast$'. In what
follows, we assume that databases may use nulls in place of constants.
Let \Dmc be a database and $q(\bar x)$ a CQ.
For an answer $\bar a \in q(\Dmc)$, we use $\bar a^\ast_\Nbf$ to
denote the unique wildcard tuple for $\Dmc$ obtained from $\bar a$ by
replacing all nulls with `$\ast$'. We call $\bar a^\ast_\Nbf$
a \emph{partial answer} to $q$ on $\Dmc$ and say that it is
\emph{minimal} if there is no $\bar b \in q(\Dmc)$ with
$\bar b^\ast_\Nbf \prec \bar a^\ast_\Nbf$. With $q(\Dmc)^{\ast}_\Nbf$,
we denote the set of minimal partial answers to $q$ on~$\Dmc$.

A database $\Emc$ is \emph{chase-like} if there are
databases $\Dmc_{1},\dots,\Dmc_{n}$ such that
\begin{enumerate}

\item $\Emc=\Dmc_1 \cup \cdots \cup \Dmc_n$,

\item $\Dmc_i$ contains exactly one fact that uses no nulls,
  and that fact contains all constants in  $\mn{adom}(\Dmc_i) \setminus \Nbf$,

  % $\mn{adom}(D_i) \setminus N$ is a guarded
  % set in $D_i$ for $1 \leq i \leq n$, and

\item
 % $(\mn{adom}(D_i) \setminus N) \neq (\mn{adom}(D_j) \setminus N)$
%and
  $\mn{adom}(\Dmc_i) \cap \mn{adom}(\Dmc_j) \cap \Nbf = \emptyset$
    for
  $1 \leq i < j \leq n$.

\end{enumerate}
We call $\Dmc_{1},\dots,\Dmc_{n}$ a \emph{witness} for $\Emc$ being
chase-like.
The term `chase-like' refers to the chase, a well-known procedure for
constructing universal models \cite{DBLP:conf/pods/JohnsonK82}. The
query-directed universal models $\Umc_{\Dmc,Q}$ from
Proposition~\ref{prop:japanesenoodles} are chase-like when the
elements of $N= \mn{adom}(\Umc_{\Dmc,Q}) \setminus \mn{adom}(\Dmc)$
are viewed as nulls. A witness $\Dmc_1,\dots,\Dmc_n$ is obtained by
removing from $\Umc_{\Dmc,Q}$ all atoms $r(a,b)$ with
$a,b \in \mn{adom}(\Dmc)$ and taking the resulting maximally connected
components. The domain sizes $|\mn{adom}(\Dmc'_{i})|$ then only depend
on $Q$, but not on \Dmc.  The following is Proposition~E.1 in
\cite{DBLP:journals/corr/abs-2203-09288}.
\begin{theorem}
  \label{prop:enummultiwildcards}
  For every CQ $q(\bar x)$ that is acyclic and free\=/connex ayclic,
  enumerating the answers $q(\Dmc)^{\ast}_\Nbf$ is in \dlc for
  databases $\Dmc$ and sets of nulls $N \subseteq \mn{adom}(\Dmc)$ such that
  $\Dmc$ is chase-like with witness $\Dmc_{1},\dots,\Dmc_{n}$ where
  $|\mn{adom}(\Dmc_{i})|$ does not depend on $\Dmc$ for $1 \leq i \leq n$.
\end{theorem}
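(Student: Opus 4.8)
The plan is to combine the constant-delay enumeration of \emph{complete} answers for acyclic free-connex acyclic CQs \cite{bagan-enum-cdlin,berkholz-enum-tutorial} with a preprocessing step that eliminates the nulls, using the chase-like structure of \Dmc in an essential way. Its key consequence is \emph{locality}: since the nulls of distinct witness pieces $\Dmc_i$ are pairwise disjoint and each $\Dmc_i$ is attached to the rest of \Dmc only through its single null-free fact, every homomorphism from a connected part of $q$ into \Dmc maps all the variables it sends to nulls into one piece. As $|\mn{adom}(\Dmc_i)|$ is bounded independently of \Dmc, the possible behaviours of $q$ inside a piece --- its finitely many \emph{local types} --- are bounded by a function of $Q$ alone. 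In the linear preprocessing phase I would compute a witness $\Dmc_1,\dots,\Dmc_n$ and tabulate, for every piece, its local types.

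Next I would reduce minimality to a local condition. Unfolding the definitions, a realizable partial answer $\bar a^\ast_\Nbf$ is \emph{non-minimal} exactly when some $\bar b \in q(\Dmc)$ agrees with it on all constant positions while carrying a constant in at least one of its wildcard positions; equivalently, $\bar a^\ast_\Nbf$ is minimal iff no wildcard position can be \emph{promoted} to a constant with the existing constants held fixed. Such a promotion only re-routes where one answer variable is sent, so by locality it is governed by the local types of the pieces receiving the wildcard-variables together with the constants pinned at their boundary facts; hence the non-promotion test is a function of boundedly much local data.

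For the enumeration I would not stream complete answers, since the number of them collapsing to a single partial answer may grow with \Dmc (a constant can anchor many pieces). Instead I would enumerate minimal partial answers directly, partitioned by their wildcard pattern $W \subseteq \{1,\dots,|\bar x|\}$: for fixed $W$ the task is to enumerate the constant assignments to the positions outside $W$ that extend to a homomorphism sending the $W$-variables to nulls and admit no promotion, and distinct patterns yield distinct tuples, so each minimal partial answer is produced once. The difficulty, and the crux of the proof, is that treating the null-variables as merely existentially quantified is a \emph{projection} of the free-connex query $q$, and projection does not preserve free-connex acyclicity in general (a two-edge path with both endpoints as answer variables is the standard counterexample). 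Here the chase-like structure is what saves us: because each piece is bounded and the null-variables mapped into it form a connected block attached only at boundary constants, I would \emph{contract} each such block during preprocessing into a precomputed relation over its boundary constants, holding exactly those boundary tuples that some piece realizes with an admissible, non-promotable local type. This replaces the null-variables by bounded-arity atoms over $\mn{adom}(\Dmc)\setminus N$ and, since the blocks attach through variables that co-occur in the boundary facts, yields a query whose acyclic free-connex structure is inherited from that of $q$; the enumeration of \cite{bagan-enum-cdlin,berkholz-enum-tutorial} then applies to it. I expect the main obstacle to be exactly this last step --- proving that the contraction preserves acyclicity and free-connexity and faithfully encodes both realizability of the pattern and the non-promotion condition --- with the locality of the chase-like structure being the property that makes it go through.
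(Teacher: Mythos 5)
First, a point of comparison: the paper does not prove this statement at all. It is imported verbatim as Proposition~E.1 of the arXiv version of \cite{ourPODS} and explicitly used as a black box (``the enumeration algorithms presented in \cite{ourPODS} are non-trivial and we use them as a blackbox''). So there is no in-paper proof to measure your sketch against; what you have written is an attempted reconstruction of the external result. Parts of it are sound: the locality argument is correct (if two adjacent variables are both mapped to nulls, the witnessing fact lies in some $\Dmc_l$ and condition~3 of chase-likeness forces both nulls into the same piece, so connected null-blocks land in a single $\Dmc_i$ of bounded size), and reducing to the enumeration of \cite{bagan-enum-cdlin} after eliminating nulls is the right general shape.

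However, as a proof the sketch has a genuine gap, concentrated where you yourself locate the crux but beginning one step earlier. Your claim that a violation of minimality ``only re-routes where one answer variable is sent'' and is therefore ``governed by the local types of the pieces receiving the wildcard-variables'' is not correct. A witness $\bar b \in q(\Dmc)$ with $\bar b^\ast_\Nbf \prec \bar a^\ast_\Nbf$ arises from an entirely fresh homomorphism: it must agree with $\bar a$ on the constant positions, but it maps some wildcard position $i$ to a \emph{constant}, i.e.\ outside every piece, and it may send all quantified variables and the remaining wildcard positions anywhere. So the non-promotion test is not a bounded local condition on the pieces; it is a global query-evaluation condition, namely membership of $(c_j)_{j\notin W}$ in the projection of the answers of $q$ with an added non-null constraint on $x_i$, for each $i \in W$. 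This projection is exactly the kind of operation that destroys free-connex acyclicity, and filtering realizable tuples against it after the fact destroys constant delay. Your proposed repair --- contracting each null-block into a precomputed bounded-arity relation over its boundary constants that encodes both realizability and non-promotability, and arguing that the contracted query inherits acyclicity and free-connexity from $q$ --- is asserted rather than proved, and it is precisely the content of the cited Proposition~E.1. Note also that the boundary of a piece is its set of non-null constants, which co-occur in the single null-free fact but may also appear in mixed facts of $\Dmc_i$ and be shared with other pieces; the ``attached only through one fact'' picture needs this refinement before the contraction can be defined. As it stands, the proposal is a plausible plan with the decisive lemma missing, not a proof.
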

The strategy for proving Theorem~\ref{thm:mpaupper} is now similar to
the case of complete answers. Let
$Q(\bar x)=(\Omc,\Sigma,q) \in (\ourEL,\text{CQ})$ with $q^+(\bar x)$
acyclic and free-connex acyclic, and let \Dmc be a $\Sigma$-database.
It is shown in the {\color{\editColor} appendix} that the query-directed universal model
$\Umc_{\Dmc,Q}$ is also universal for partial answers with a single
wildcard in the sense that
$Q(\Dmc)^\ast=q(\Umc_{\Dmc,Q})^\ast_\Nbf$. We thus first replace \Dmc
with $\Umc_{\Dmc,Q}$, aiming to enumerate
$Q(\Dmc)^\ast=q(\Umc_{\Dmc,Q})^\ast_\Nbf$. We next replace $q(\bar x)$
with $q^+(\bar x)$ and $\Dmc_0=\Umc_{\Dmc,Q}$ with $\Dmc^+_0$.  It
follows from Lemma~\ref{lem:samehoms} that
$q(\Dmc_0) ^\ast_\Nbf=q^+(\Dmc^+_0) ^\ast_\Nbf$.  Note that no
projection is needed since $q^+$ has answer variables $\bar x$ here,
in contrast to answer variables $\bar x^+$ in the case of complete
answers.  It remains to invoke Theorem~\ref{prop:enummultiwildcards}.

\section{Lower Bounds}
\label{sect:lower}

The main aim of this section is to establish lower bounds that
(partially) match the upper bounds stated in
Theorems~\ref{thm:firstupper} and~\ref{thm:mpaupper}. First, however,
we show that Theorem~\ref{thm:mpaupper} cannot be strengthened
by using $q^+(\bar x^+)$ in place of $q^+(\bar x)$.
% by using , when compared to
% Theorem~\ref{thm:firstupper},
% cannot easily be avoided.
All results presented in this section are conditional on algorithmic
conjectures and assumptions. One of the conditions concerns Boolean
matrix multiplication.

A Boolean $n \times n$ matrix is a function
$M:[n]^2 \rightarrow \{0,1\}$ where $[n]$ denotes the set
$\{1,\dots,n\}$.  The \emph{product} of two Boolean $n \times n$
matrices $M_1,M_2$ is the Boolean $n \times n$ matrix
$M_1M_2 := \sum_{c=1}^{n} M_1(a,c)\cdot M_2(c,b)$ where sum and
product are interpreted over the Boolean semiring. In (non-sparse)
\emph{Boolean matrix multiplication (BMM)}, one wants to compute
$M_1M_2$ given $M_1$ and $M_2$ as $n \times n$ arrays.  In
\emph{sparse Boolean matrix multiplication (spBMM)}, input and output
matrices $M$ are represented as lists of pairs $(a,b)$ with
$M(a,b)=1$. The currently best known algorithm for BMM achieves
running time $n^{2.37}$ {\color{\editColor} \cite{williams-soda-laser}} and it is open whether running time $n^2$ can
be achieved; this would require dramatic advances in algorithm theory.
Regarding
spBMM, it is open whether running time $O(|M_1| + |M_2| + |M_1M_2|)$
can be attained, that is, time linear in the size of the input and the
output (represented as lists). This clearly implies BMM in time $n^2$,
but the converse is not known.

The following implies that Theorem~\ref{thm:mpaupper} cannot be
strengthened by using $q^+(\bar x^+)$ in place of $q^+(\bar x)$.
\begin{theorem}
    \label{thm:example-pa-head-extension}
    There is an OMQ $Q(\bar x)=(\Omc,\Sigma,q) \in (\fEL,\text{CQ})$
    such that $q^+(\bar x^+)$ is acyclic and free\=/connex acyclic, but the
    minimal partial answers to $Q$ cannot be enumerated in \dlc unless
    spBMM is possible in time $O(|M_1| + |M_2| + |M_1M_2|)$. This
    holds both for single wildcards and multi-wildcards.
     % The following holds:
     % \begin{enumerate}
     %     \item $q^+$ is acyclic and free\=/connex acyclic,
     %     \item the set of partial answers to $Q$ cannot be enumerated
     %     with linear preprocessing and constant delay
     %     unless sMM conjecture fails.
     % \end{enumerate}
\end{theorem}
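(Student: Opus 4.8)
The plan is to exhibit a single OMQ that realizes the claimed separation and to reduce spBMM to the enumeration of its minimal partial answers. Concretely, I would take the ontology
$$\Omc=\{\,\func(f),\ \func(f^-),\ \mn{M}\sqsubseteq\exists f.\top\,\},$$
which lies in \fEL as it uses no role inclusions, set $\Sigma=\{R_1,R_2,\mn{M}\}$, and use the query $q(x_1,x_2,w)=R_1(x_1,z)\wedge R_2(x_2,z)\wedge f(z,w)$, where $z$ is the only quantified variable. Since $\Omc\models\func(f^-)$, the variable $z$ is reachable from the answer variable $w$ along the functional edge $f(z,w)$ read as $f^-(w,z)$, so $\bar x^+=(x_1,x_2,w,z)\neq\bar x$; this inverse-functionality assertion is exactly what forces $z$ into $\bar x^+$.

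First I would verify the two acyclicity claims. The FA\=/extension is $q^+(\bar x^+)=R_1'(x_1,z,w)\wedge R_2'(x_2,z,w)\wedge f'(z,w)$, where the successor $w$ of $z$ (via $\func(f)$) is appended to the two matrix atoms. As all four variables are answer variables and the atom $f'(z,w)$ is contained in $R_1'(x_1,z,w)$, the GYO procedure together with the head atom $H(x_1,x_2,w,z)$ shows that $q^+(\bar x^+)$ is acyclic and free\=/connex acyclic; by Theorem~\ref{thm:firstupper} the complete answers are then enumerable in \cdlin (in fact they will be empty, see below). In contrast, $q^+(\bar x)$ projects $z$ away and leaves the shared-variable pattern $R_1'(x_1,z,w),R_2'(x_2,z,w)$; this query is acyclic but, as a further GYO check with $H(x_1,x_2,w)$ reveals, it is not free\=/connex acyclic, so Theorem~\ref{thm:mpaupper} does not apply.

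Next I would set up the reduction. Given Boolean matrices $M_1,M_2$, I build \Dmc on three disjoint copies of the index set, with one element per row, column and middle index, marking every middle with $\mn{M}$ and adding $R_1(a,c)$ for $M_1(a,c)=1$ and $R_2(b,c)$ for $M_2(c,b)=1$; no $f$\=/fact is added, so \Dmc is satisfiable w.r.t.\ \Omc and is produced in time $O(|M_1|+|M_2|)$. In the query-directed universal model $\Umc_{\Dmc,Q}$ each middle has a fresh forced $f$\=/successor, so no complete answer can place a constant in the third position and $Q(\Dmc)=\emptyset$, whereas a tuple $(a,b,\ast)$ with $a,b$ constants is a partial answer precisely when some middle $c$ satisfies $R_1(a,c)\wedge R_2(b,c)$, i.e.\ precisely when $(M_1M_2)(a,b)=1$. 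The crux is to show these are exactly the minimal partial answers: any tuple with a wildcard in the first or second position is, when a partial answer at all, dominated by such an $(a,b,\ast)$, and no tuple with a constant in the third position is a partial answer. Since only the third position ever carries a wildcard, the single- and multi-wildcard notions coincide here, so the same characterization holds for $Q(\Dmc)^\ast$ and $Q(\Dmc)^\Wmc$. Hence the minimal partial answers are in bijection with the nonzero entries of $M_1M_2$, and a \dlc enumeration would list them in total time $O(|M_1|+|M_2|+|M_1M_2|)$, yielding spBMM in that bound.

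The main obstacle is precisely this exact characterization of the minimal partial answers -- ruling out spurious wildcard tuples and justifying that no constant is ever forced in the third position -- which requires reasoning about all models of \Omc and \Dmc through $\Umc_{\Dmc,Q}$ and its forced (null) elements, and checking that the domination and minimality bookkeeping behaves identically in both wildcard regimes. By comparison, the two acyclicity computations, the linear-time construction of \Dmc, and the standard conversion from constant delay with linear preprocessing to total time $O(\mathit{input}+\mathit{output})$ are routine.
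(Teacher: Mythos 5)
Your proposal is correct and follows essentially the same strategy as the paper: a direct reduction from spBMM in which the two matrices are encoded by two binary relations joined on a shared middle constant, the ontology forces an anonymous element into one answer position so that every minimal partial answer has the form $(a,b,\ast)$ with $(a,b)\in M_1M_2$, and the exact characterization of $Q(\Dmc)^\ast$ is justified via the query-directed universal model. The only difference is the concrete witness OMQ --- the paper uses an anonymous $f$-\emph{predecessor} of the join constant (ontology $A\sqsubseteq\exists f^-.\top$ with $\func(f)$ and a duplicated $f$-atom placing the wildcard in the middle position), whereas you use an anonymous $f$-\emph{successor} with $\func(f^-)$ pulling the quantified join variable into $\bar x^+$ --- and both witnesses satisfy the required acyclicity conditions.
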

\begin{proof}
  Let $Q(\bar x)=(\Omc,\Sigma,q)$ where
  $$
  \begin{array}{r@{\;}c@{\;}l}
        \Omc &=& \{A \sqsubseteq \exists f^-.\top, \func(f)\}\\[1mm]
        \Sigma &=& \{A, r_1,r_2,f\}\\[1mm]
        q(x,z,y) &=& r_1(x,u_1) \wedge f(z,u_1) \wedge f(z,u_2) \wedge r_2(u_2,y).
  \end{array}
  $$
  It is easy to see that $q^+(\bar x^+)$ is just
  $q$, except that now all variables are answer variables. Thus,
  $q^+(\bar x^+)$ is acyclic and free-connex acyclic, as required.

  Assume to the contrary of what is to be shown that there is an
  algorithm that given a database $\Dmc$, enumerates
  $Q(\Dmc)^{\ast}$ in \dlc (the case of multi-wildcards is identical).
  Then this algorithm can be used, given two Boolean
  matrices $M_1M_2$ in list representation, to compute
  $M_1M_2$ in time $O(|M_1| + |M_2| +
  |M_1M_2|)$. This is done as follows.

  Given $M_1$ and $M_2$, we construct a database $\Dmc$ by adding
  facts $r_1(a,c)$ and $A(c)$ for every $(a,c)$ with $M_1(a,c)=1$ and
  facts $r_2(c,b)$ and $A(c)$ for every $(c,b)$ with
  \mbox{$M_2(c,b)=1$}.  It is easy to verify that
  $Q(\Dmc)^{\ast} = \{(a,\ast,b) \mid (a,b)\in M_1M_2\}$.
%
    % \cMP{Let $h$ be a homomorphism from $q$ to $\Umc_{\Dmc, \Omc}$. By construction of
    % $\Umc_{\Dmc, \Omc}$ we have that $h(z) \notin \dom(\Dmc)$ and $h(v) \in \dom(\Dmc)$ for the remaining variables in $q$.
    % Thus, $r_1(h(x), h(u_1)) \in \Dmc$ and $r_2(h(u_2), h(y)) \in \Dmc$.
    % Hence, $(h(x), h(u_1)) \in M_1$ and $(h(u_2), h(y)) \in M_2$.
    % Since $\func(f^{-1})\in \Omc$, we have that $h(u_1) = h(u_2)$}
    % and, thus, $(h(x), h(y)) \in M_1M_2$.
    %
  We may thus construct a list representation of $M_1M_2$ by
  enumerating $Q(\Dmc)^\ast$. Since $|\Dmc| = |M_1| + |M_2|$ and we
  can enumerate in \dlc, the overall time spent is
  $O(|M_1| + |M_2| + |M_1M_2|)$.
\end{proof}
We now consider lower bounds for Theorems~\ref{thm:firstupper}
and~\ref{thm:mpaupper}. Here, we need two additional algorithmic
conjectures that are closely related, both from fine-grained
complexity theory. Recall that a \emph{$k$-regular
  hypergraph} is a pair $H=(V,E)$ where $V$ is a finite set of
vertices and $E \subseteq 2^V$ contains only sets of cardinality
$k$. Consider the following problems:
\begin{itemize}

\item The \emph{triangle detection problem} is to decide, given an
  undirected graph $G=(V,E)$ as a list of edges, whether $G$ contains
  a 3-clique (a ``triangle'').

\item The \emph{$(k+1,k)$-hyperclique problem}, for $k \geq 3$, is to
  decide whether a given $k$-uniform hypergraph $H$ contains a
  hyperclique of size $k+1$, that is, a set of $k+1$ vertices such
  that each subset of size $k$ forms a hyperedge in~$H$.
\end{itemize}
The \emph{triangle conjecture} states that there is no algorithm
for triangle detection that runs in linear time \cite{abboud-triangle}
and the \emph{hyperclique conjecture} states that every algorithm that
solves the $(k+1,k)$-hyperclique problem, for some $k \geq 3$,
requires running time at least $n^{k+1-o(1)}$ with $n$ the
number of vertices \cite{lincoln-soda-grain-complexity}. Note that triangle
detection is the same as $(k+1,k)$-hyperclique for $k=2$, but the
formulation of the two conjectures differs in that the former
refers to the number of edges and the latter to the number of
nodes. The following theorem summarizes our lower bounds.
\begin{restatable}{theorem}{lemmaenumerationlowerbound}
\label{lemma:enumeration-lower-bound}
Let $Q(\bar x)=(\Omc,\Sigma,q) \in ({\color{\editColor} \ELIF},\text{CQ})$ be non-empty with
$q$ self\=/join free and connected.
\begin{enumerate}

\item If $q^+$ is not acyclic, then enumerating complete
  answers to $Q$ is not in \dlc unless the triangle conjecture fails
  or the hyperclique conjecture fails.

\item If $q^+(\bar x^+)$ is acyclic, but not free-connex acyclic, then
  enumerating complete answers to $Q$ is not in \dlc unless spBMM
  is possible in time $O(|M_1| + |M_2| + |M_1M_2|)$.

\end{enumerate}
The same is true for least partial answers, both with a single
wildcard and with multi-wildcards.
\end{restatable}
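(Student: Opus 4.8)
The plan is to adapt the classical lower-bound constructions for CQ enumeration (from \cite{bagan-enum-cdlin,BraultBaron} for part~1 and from the spBMM-based arguments for part~2) to the ontology-mediated setting, using the ontology to \emph{simulate} functional dependencies on the data without ever building $q^+$ explicitly. The central difficulty, flagged already in the introduction, is that $q^+$ lives over a different signature with relations of arity exceeding two, so we cannot feed $q^+$ directly to the known hardness reductions; instead we must encode the functional structure using only the binary roles available in $\mathcal{ELIF}$ together with functionality assertions. Throughout, since $q$ is self-join free and connected and $Q$ is non-empty, I would first fix a reduced normal form for the situation: identify which roles $R$ in $q$ satisfy $\Omc \models \func(R)$, and record the functional paths that govern the passage from $q$ to $q^+$.

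For part~1, assume $q^+$ is not acyclic. The strategy is to show that a cyclic structure in $q^+$ forces a cyclic pattern that a \dlc enumeration algorithm would have to ``resolve'' faster than triangle detection or $(k{+}1,k)$-hyperclique allows. First I would pin down, in $q^+$, a minimal non-acyclic substructure; by the correspondence between acyclicity and join trees this yields either a triangle-like cycle (reducing to the triangle conjecture) or a higher-arity hyperclique obstruction (reducing to the hyperclique conjecture), mirroring the case split in \cite{bagan-enum-cdlin,BraultBaron}. The key new step is the \emph{database encoding}: given a graph $G$ (resp.\ a $k$-uniform hypergraph $H$), I would build a $\Sigma$-database $\Dmc$ whose functional roles, under the certain-answer semantics enforced by $\Omc$, make the unique functional successors coincide exactly on the witnessed cliques, so that answers to $Q$ on $\Dmc$ are in bijection with triangles (resp.\ hypercliques). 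Enumerating $Q(\Dmc)$ in \dlc would then detect existence of such a clique in linear (resp.\ $n^{k+1-o(1)}$) time, contradicting the conjecture. I must verify that $\Dmc$ is a $\Sigma$-database satisfiable w.r.t.\ \Omc and that the $\func$ assertions genuinely collapse the intended variables; this is the step where the absence of role inclusions in $\mathcal{ELIF}$ is essential, since it keeps the certain answers controllable.

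For part~2, assume $q^+(\bar x^+)$ is acyclic but not free-connex acyclic. Here the target is spBMM, following the template of Theorem~\ref{thm:example-pa-head-extension} but now for \emph{complete} answers. The failure of free-connexity means that, after adding a head atom guarding $\bar x^+$, acyclicity breaks along a path connecting two answer variables through quantified variables --- exactly the configuration that encodes a matrix product. I would locate this obstructing path in $q^+(\bar x^+)$ and use it to route the two input matrices $M_1,M_2$ through the query, with a functional role playing the role of the contracted ``middle'' index (as $f$ does in Theorem~\ref{thm:example-pa-head-extension}). The reduction takes lists $M_1,M_2$, builds $\Dmc$ of size $O(|M_1|+|M_2|)$, and reads off $M_1M_2$ from $Q(\Dmc)$; a \dlc algorithm then computes the product in time $O(|M_1|+|M_2|+|M_1M_2|)$.

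Finally, the extension to least partial answers (single and multi-wildcard) I expect to be comparatively routine: in each reduction the databases can be chosen so that the relevant answers carry constants in precisely the positions of interest, so that complete answers and minimal partial answers coincide on those positions (or the partial-answer wildcard sits only in the contracted middle position, exactly as in Theorem~\ref{thm:example-pa-head-extension}). The main obstacle remains the part~1 database encoding: ensuring that the functionality assertions enforce the right identifications among certain-answer elements while keeping $\Dmc$ both $\Sigma$-typed and satisfiable, and that the resulting answer set is in a clean bijection with cliques without spurious or missing tuples.
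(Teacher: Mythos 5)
Your overall strategy coincides with the paper's: split Point~1 into a non-chordality case (triangle conjecture) and a non-conformality case (hyperclique conjecture), encode the graph/hypergraph/matrices into a $\Sigma$-database whose constants are tagged with partial functions recording values of the ``functionally reachable'' cycle variables, and read the answer off via a \dlc enumeration. However, the proposal stops short of the three ideas that carry the actual proof, and in one place it explicitly underestimates where the difficulty lies.

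First, you do not address how to make symbols of $\Omc$ and $q$ that lie outside $\Sigma$ available in the reduction database. The paper resolves this by attaching to every constant of the core database $\Dmc_0$ a finite tree $\Dmc_{\mn{tree}}$ (an initial piece of the infinite tree $\Dmc_\omega$ over $\Sigma$) chosen so that every non-empty concept name is derived at its root, split into fragments $\Dmc_R$ so as not to violate functionality; this ``derivation completeness'' is what makes the completeness direction (clique $\Rightarrow$ answer) go through. You flag satisfiability and $\Sigma$-typing as ``the main obstacle'' but propose no mechanism. Second, a homomorphism from $q$ into the universal model $\Umc_{\Dmc,\Omc}$ need \emph{not} map the cycle variables $y_i$ into $\mn{adom}(\Dmc_0)$ at all --- they can land in the attached trees or in the anonymous part generated by $\Omc$. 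The paper's soundness argument therefore does not establish your claimed ``clean bijection'' with cliques; it instead locates, for each $y_i$, a surrogate variable $u_i$ with $y_i \in Y_{u_i}$ that \emph{is} mapped into the core (using self-join freeness to show that simple paths and cycles whose endpoints lie in the core stay in the core), and recovers the clique from the tags of the $h(u_i)$. Indeed the paper only proves an equivalence of \emph{existence} (answer exists iff clique exists), which suffices for Point~1 but is weaker than a bijection. Third, and most seriously, you declare the extension to minimal partial answers ``comparatively routine.'' For Point~2 this is where almost all of the work sits: a \dlc enumeration of $Q(\Dmc)^\ast$ only yields spBMM in time $O(|M_1|+|M_2|+|M_1M_2|)$ if one additionally proves that the \emph{total number} of minimal partial answers is $O(|M_1|+|M_2|+|M_1M_2|)$ (the paper's statement {\bf MM3}), which requires a delicate case analysis of ``misguided'' homomorphisms and of which trees the answer variables can be mapped into. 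Without that counting argument the reduction gives no time bound at all, so this step cannot be waved through.
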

Recall that we use different versions of $q^+$, namely $q^+(\bar x^+)$
and $q^+(\bar x)$ in Theorems~\ref{thm:firstupper}
and~\ref{thm:mpaupper}. The difference is moot for Point~1 of
Theorem~\ref{thm:mpaupper} as $q^+(\bar x^+)$ is acyclic if and only if
$q^+(\bar x)$ is.
%
% Note that Point~2 of
% Theorem1 speaks about
% $q^+(\bar x^+)$, just like Theorem that addresses complete answers.
% In contrast, Theorem~\ref{thm:mpaupper} concerned with partial answers
% speaks about $q^+(\bar x)$.  This difference is moot for Point~1 of
% Theorem~\ref{lemma:enumeration-lower-bound} since $q^+(\bar x^+)$ is
% acyclic if and only if $q^+(\bar x)$ is. This, however, is not truefor
% free-connex acyclicity and thus there remains a gap between
% Theorems~\ref{thm:mpaupper}
% and~\ref{lemma:enumeration-lower-bound}. In fact, it can be seen that
% $q^+(\bar x^+)$ cannot be replaced with $q^+(\bar x)$ in Point~2 of
% Theorem~\ref{lemma:enumeration-lower-bound} even in the case of
% partial answers.\footnote{Let $Q(\bar x)=(\Omc,\Sigma,q)$ be the OMQ
%   from the proof of Theorem~\ref{thm:example-pa-head-extension} with
%   atom $f(u_1,z)$ in $q$ replaced by $f'(u_2,z)$. Then $q'(\bar x)$ is
%   acyclic, but not free-connex acyclic. Yet, it can be shown that
%   minimal partial answers to $Q$ can be enumerated in \dlc.}

\smallskip The proof of Theorem~\ref{lemma:enumeration-lower-bound} is
inspired by proofs from
\cite{bagan-enum-cdlin,BraultBaron,carmeli-enum-func} and uses similar
ideas.  However, the presence of ontologies and the fact that we
want to capture minimal partial answers makes our proofs much more
subtle. In particular, the constructions in \cite{carmeli-enum-func} first
transition from $q$ to $q^+$ and then work purely on~$q^+$, but we
cannot do this due to the presence of the ontology, which is
formulated in the signature of $q$, not of $q^+$. We (partially)
present the proof of Point~1 and refer to the {\color{\editColor} appendix} for full
detail.

The proof of Point~1 of Theorem~\ref{lemma:enumeration-lower-bound}
splits into two cases.  Recall that the Gaifman graph of a CQ $q$
is the undirected graph that has the atoms of $q$ as its nodes and
an edge between any two nodes/atoms that share a variable. It is
known that if $q$ is not acyclic, then its Gaifman graph is not
chordal or not conformal~\cite{beeri-acyclic}.  Here, chordal means that every
cycle of length at least~4 has a chord and conformal means that for
every clique $C$ in the Gaifman graph, there is an atom in $q$ that
contains all variables in~$C$. The first case of the proof of Point~1
of Theorem~\ref{lemma:enumeration-lower-bound} is as follows.
\begin{restatable}{lemma}{lemmatrianglelower}
\label{lemma:trianglelower}
Let $Q(\bar x)=(\Omc,\Sigma,q) \in ({\color{\editColor}\ELIF},\text{CQ})$ be non-empty such
that $q$ is self\=/join free and connected and the hypergraph of $q^+$
is not chordal.  Then enumerating complete answers to $Q$ is not in
\dlc unless the triangle conjecture fails.  The same is true for least
partial answers, both with a single wildcard and with multi-wildcards.
\end{restatable}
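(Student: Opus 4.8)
My plan is to reduce triangle detection to enumerating the answers of the fixed OMQ $Q=(\Omc,\Sigma,q)$, so that a \dlc enumeration algorithm would decide in time linear in the input graph whether that graph contains a triangle, contradicting the triangle conjecture~\cite{abboud-triangle}. Only the database will depend on the input graph $G=(V,E)$; the OMQ stays fixed. First I would use that the hypergraph of $q^+$ is not chordal to fix, via~\cite{beeri-acyclic}, a chordless cycle of length $\ell\geq 4$: atoms $\alpha_1,\dots,\alpha_\ell$ of $q^+$ and pairwise distinct variables $z_0,\dots,z_{\ell-1}$ such that each $\alpha_i$ contains $z_{i-1}$ and $z_i$ (indices mod~$\ell$), while no two non-consecutive atoms of the cycle share a variable. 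Since a \dlc algorithm outputs, after linear preprocessing, either a first answer or the signal that none exists within constant time, it lets me decide $Q(\Dmc)\neq\emptyset$ in time $O(\|\Dmc\|)$. It therefore suffices to construct, in time $O(|E|)$, a $\Sigma$-database $\Dmc$ that is satisfiable w.r.t.~$\Omc$ and satisfies $Q(\Dmc)\neq\emptyset$ iff $G$ contains a triangle.

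The intended content lives in the FA-extension $\Dmc^+$ over $\mn{sig}(q^+)$ that is read off from the query-directed universal model $\Umc_{\Dmc,Q}$ exactly as in the proof of Theorem~\ref{thm:firstupper}. There I would realize the classical collapse of a long chordless cycle into a triangle: split the cycle into three consecutive arcs meeting at three corner variables; populate, along each arc's interior, the relations $R_i'$ as the identity on $V$ (forcing the whole arc to map to a single vertex of $G$) and one designated atom per arc with a symmetric copy of the edge relation. A homomorphism of $q^+$ into $\Dmc^+$ then sends the three corners to vertices $a,b,c\in V$ with $\{a,b\},\{b,c\},\{c,a\}$ all edges, i.e.\ to a triangle, and every triangle yields such a homomorphism. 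As $q^+$ is self-join free, the atoms off the cycle use disjoint relation symbols and can be populated to be trivially satisfiable, so every cycle homomorphism extends to all of $q^+$; chordlessness is precisely what keeps this assignment consistent, since a chord would impose an extra atom tying two arcs together and spoil the free choice of the three triangle vertices.

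The crux, and the point where I cannot simply follow~\cite{carmeli-enum-func}, is that $\Dmc$ must be a $\Sigma$-database over $\mn{sig}(q)\cup\mn{sig}(\Omc)$, whereas the relations $R_i'$ live over the disjoint $q^+$-signature, so I cannot populate them directly. Instead I would define facts over $\mn{sig}(q)$ and appeal to Proposition~\ref{prop:japanesenoodles} and Lemma~\ref{lem:samehoms}: homomorphisms $q^+\to\Dmc^+$ coincide with homomorphisms $q\to\Umc_{\Dmc,Q}$, and $Q(\Dmc)=q(\Umc_{\Dmc,Q})\cap\mn{adom}(\Dmc)^{|\bar x|}$. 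The task is thus to arrange the $q$-atoms together with the existential and functional consequences that the fixed $\Omc$ forces in $\Umc_{\Dmc,Q}$ so that $\Dmc^+$ exhibits the identity-plus-edge pattern above; here functionality is both hindrance and help, since a functional path in $q$ forces the very element identifications recorded by the FA-closure, which I would exploit to implement the arc-collapsing identity edges while letting non-functional atoms carry the copies of the edge relation. I would also anchor the answer variables of $q$ to constants of $\Dmc$, so that a triangle produces a genuine complete answer rather than one using chase elements, and use non-emptiness and connectedness of $q$ to ensure the intended homomorphism is attainable. I expect the main obstacle to be the converse direction: proving that $\Umc_{\Dmc,Q}$ admits \emph{no} homomorphism of $q$ when $G$ is triangle-free, i.e.\ that the ontology's existential expansions and functional merges create no spurious homomorphic image of $q^+$ in $\Dmc^+$.

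Finally, the same database handles least partial answers with single or multiple wildcards. If $G$ has a triangle, the construction yields a complete answer over constants of $\Dmc$, which by $Q(\Dmc)\subseteq Q(\Dmc)^{\ast}$ is also a minimal partial answer; if $G$ is triangle-free, then $q^+$ has no homomorphism into $\Dmc^+$ at all, so there is neither a complete nor a partial answer. Hence $Q(\Dmc)^{\ast}\neq\emptyset$ iff $G$ contains a triangle, and enumerating $Q(\Dmc)^{\ast}$ in \dlc again decides triangle existence in time $O(|E|)$. In every case a \dlc enumeration algorithm would refute the triangle conjecture.
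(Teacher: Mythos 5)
Your high-level strategy is the right one and matches the paper's: reduce triangle detection to deciding $Q(\Dmc)\neq\emptyset$, exploit a chordless cycle $y_0,\dots,y_k$ of $q^+$ obtained from non-chordality, and recognize that the relations of $q^+$ cannot be populated directly because $\Dmc$ must be a $\Sigma$-database in the signature of $q$ and $\Omc$. However, the proposal stops exactly where the actual proof begins, and the gap is not a routine one. You say the task is ``to arrange the $q$-atoms together with the existential and functional consequences that the fixed $\Omc$ forces in $\Umc_{\Dmc,Q}$ so that $\Dmc^+$ exhibits the identity-plus-edge pattern'' --- but $\Dmc^+$ is \emph{determined} by $\Dmc$ and $\Omc$, so one cannot prescribe the pattern there and reverse-engineer $\Dmc$; the construction must be carried out on the $\Sigma$-side. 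The paper's device is to use constants $\langle x, f^w_x\rangle$ where $f^w_x$ records a vertex for \emph{every} cycle variable in $Y_x$ (those reachable from $x$ by a functional path), and to add, for each atom $r(x,y)$ with $r\in\Sigma$, facts according to a three-way case split on whether $y_0$ or $y_k$ lies in $Y_x\cup Y_y$ (words $ab^k$, $a^kb$, $b^{k+1}$). This simultaneously collapses the cycle interior to one vertex, places the three triangle edges, and --- crucially --- guarantees that all functionality assertions of $\Omc$ hold in $\Dmc_0$ (because whenever $\mathrm{func}(R)$ holds for an atom $R(x,y)$, one has $Y_y\subseteq Y_x$ and the second component of the target is determined). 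Nothing in your sketch substitutes for this; ``functionality is both hindrance and help'' does not yet yield a database that satisfies $\Omc$'s functionality assertions.

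Two further pieces are missing entirely. First, $q$ and $\Omc$ may use symbols outside $\Sigma$, and these must be derivable at every constant of $\Dmc_0$; the paper achieves this by attaching, per role, finite derivation-complete trees $\Dmc_R$ cut from the infinite tree $\Dmc_\omega$, glued so as not to violate functionality. Your proposal never addresses how non-$\Sigma$ atoms of $q$ get satisfied. Second, and most seriously, you explicitly defer the converse direction (``no spurious homomorphisms when $G$ is triangle-free'') as an expected obstacle without resolving it. That direction is the bulk of the paper's argument: a homomorphism $h$ from $q$ into $\Umc_{\Dmc,\Omc}$ need not map the cycle variables $y_i$ into $\mathrm{adom}(\Dmc_0)$ at all, so one must (i) show via self-join freeness and the tree structure of the attached parts that simple paths and cycles with endpoints in $\mathrm{adom}(\Dmc_0)$ stay inside it, (ii) extract surrogate variables $u_i$ with $y_i\in Y_{u_i}$ and $h(u_i)\in\mathrm{adom}(\Dmc_0)$ from a simple cycle built out of the functional paths witnessing the $q^+$-edges, and (iii) prove that the partial functions read off from the $h(u_i)$ cohere into a single function selecting a triangle. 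Without these steps the reduction is only proved in the easy direction, so the claimed contradiction with the triangle conjecture is not established.
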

The second case is formulated similarly, but refers to
non-conformality and the hyperclique conjecture.  We give the proof
of Lemma~\ref{lemma:trianglelower}.

\smallskip

Let $Q(\bar x)=(\Omc,\Sigma,q) \in ({\color{\editColor}\ELIF},\text{CQ})$ be as in
Lemma~\ref{lemma:trianglelower}, and let $y_0,\dots,y_k$ be a
chordless cycle in the Gaifman graph of $q^+$ that has length at
least~4.  Let $Y = \{y_0,\dots, y_k\}$ and for easier reference let
$y_{k+1}=y_0$. For every variable $x$ in $q$, we use $Y_x$ to denote
the set of variables $y \in Y$ such that $q$ contains a functional (possibly
empty) path from $x$ to~$y$.

Let $G = (V,E)$ be an undirected graph. % with $V = [n] := \{1,\dots,n\}$.
We may assume w.l.o.g.\ that $G$ does not contain isolated
vertices.
Our aim is to construct a database \Dmc, in time linear in $|E|$, such
that $G$ contains a triangle if and only if $Q(\Dmc) \neq
\emptyset$. Clearly, a \dlc enumeration algorithm for $Q$ lets us
decide the latter in linear time and thus we have found an algorithm
for triangle detection that runs in time linear in $|E|$, refuting the
triangle conjecture.

The construction proceeds in two steps. In the first step, we define a
database $\Dmc_0$ that encodes the graph~$G$. The constants in
$\Dmc_0$ are pairs $\langle x, f \rangle$ with $x\in \mn{var}(q)$ and
$f$ a partial function from $Y$ to $V$. For every
variable $x$ in $q$ and word \mbox{$w = a_0 \dots a_{k} \in V^{\ast}$}
we use $f^w_x$ denote the function that maps each variable
$y_i \in Y_x$ to $a_i$ and is undefined on all other variables. We may
treat $E$ as a symmetric (directed) relation, writing e.g.\
$(a,b) \in E$ and $(b,a) \in E$ if $\{ a,b \} \in E$. For every atom
$r(x,y)$ in $q$ with $r \in \Sigma$, add the following facts to
$\Dmc_0$:
\begin{enumerate}
            \item if $y_0 \in Y_x \cup Y_y$: %, then add
              $r(\langle x,f_x^{ab^k}\rangle,\langle y,f_y^{ab^k}\rangle)$ % and $r(f_y^{ab^k},
              % f_x^{ab^k})$
              for all $(a,b) \in E$,
            \item if $y_k \in Y_x \cup Y_y$: %, then add % facts
              $r(\langle x,f_x^{a^kb}\rangle,\langle y,f_y^{a^kb}\rangle)$ for all $(a,b) \in E$,
            \item if neither is true: %, then add
              $r(\langle x,f_x^{b^{k+1}}\rangle,\langle y,f_y^{b^{k+1}}\rangle)$ for all $a \in V$.
% otherwise, add facts
%               $r(f_x^{b^{k+1}},f_y^{b^{k+1}})$ for all $(a,b) \in E$
%               and all $(b,c) \in E$.
\end{enumerate}
In addition, we add the fact $A(c)$ for every concept name $A \in
\Sigma$ and every constant $c$ introduced above.

%
%We now give an intuition of the reduction. % , oversimplying the situation
 %for better comprehensibility.
%
To provide an intuition for the reduction, let us start with a
description that is relatively simple, but inaccurate. Consider a
homomorphism $h$ from $q$ to $\Dmc_0$. It can be shown that $h$ must
map every variable $y_i$ to a constant of the form
$\langle y_i,f_{y_i} \rangle$ and that the domain
of the function $f_{y_i}$ is $\{y_i\}$. Since $f_{y_i}(y_i)$ is a node
from $G$, the homomorphism $h$ thus identifies a sequence of nodes
$a_0,\dots,a_k$ from $G$, with $a_i=f_{y_i}(y_i)$. The construction of
$\Dmc_0$ ensures that $a_1=\cdots=a_{k-1}$ and $a_0,a_1,a_k$ forms a
triangle in $G$. Conversely, every triangle in $G$ gives rise to a
homomorphism from $q$ to $\Dmc_0$ of the described form. For other
variables $x$ from $q$, the use of the function $f_x$ in constants
$\langle x, f_x \rangle$ serves the purpose of ensuring that all
functionality assertions in \Omc are satisfied in $\Dmc_0$.  A
concrete example for the construction of $\Dmc_0$ is provided
%in Figure~\ref{fig:ex-cyc}
in the {\color{\editColor} appendix}.

The above description is inaccurate for several reasons. First,
instead of homomorphisms into $\Dmc_0$, we need to consider
homomorphisms into the universal model $\Umc_{\Dmc_0,\Omc}$ (defined
in the {\color{\editColor} appendix}).  Then variables $y_i$ need not be mapped to a
constant $\langle y_i,f_{y_i} \rangle$, but can also be mapped to
elements outside of $\mn{adom}(\Dmc_0)$. This does not break the
reduction but complicates the correctness proof. Another difficulty
arises from the fact that \Omc and $q$ may use symbols that do not
occur in $\Sigma$ and and we need these to be derived by \Omc at the
relevant points in~$\Dmc_0$. This is achieved in the second step of
the construction of $\Dmc_0$, described next.

Informally, we want \Omc to derive, at every constant
$c \in \mn{adom}(\Dmc_0)$, anything that it could possibly derive at
any constant in any database. This is achieved by attaching certain
tree-shaped databases to every constant in $\Dmc_0$. We next make this
precise.  Let $\Rmc_\Sigma$  be the set of all role names from $\Sigma$ and
their inverses. The infinite tree-shaped $\Sigma$-database~$D_\omega$
has as its active domain $\mn{adom}(\Dmc_\omega)$ the set of all
(finite) words over alphabet $\Rmc_\Sigma$ and contains the following facts:
\begin{itemize}

\item $A(w)$ for all $w \in \mn{adom}(\Dmc_\omega)$ and concept
  names $A \in \Sigma$;

\item $r(w,w')$ for all $w,w' \in \mn{adom}(\Dmc_\omega)$ with $w'=wr$;

\item $r(w',w)$ for all $w,w' \in \mn{adom}(\Dmc_\omega)$ with
  $w'=wr^-$.

\end{itemize}
We cannot directly use $\Dmc_\omega$ in the construction of \Dmc since
it is infinite. Consider all concept names $A$ such that
$\Dmc_\omega,\Omc \models A(\varepsilon)$. We prove in the {\color{\editColor} appendix}
that these are precisely the concept names $A$ that are
\emph{non-empty}, that is, $\Dmc,\Omc \models A(c)$ for some database
$\Dmc$ and some $c \in \mn{adom}(\Dmc)$. Clearly the number of such
concept names $A$ is finite. By compactness, there is thus a finite
database $\Dtree \subseteq \Dmc_\omega$ such that
$\Dtree,\Omc \models A(\varepsilon)$ for all non-empty concept names
$A$. We may w.l.o.g.\ assume that $\Dtree$ is the initial piece of
$\Dmc_\omega$ of some finite depth $k \geq 1$.

In principle, we would like to attach a copy of $\Dtree$ at every
constant in $\Dmc_0$. This, however, might violate functionality
assertions in \Omc and thus we have to be a bit more careful.  For a
role $R \in \{ r, r^-\}$ with $r \in \Sigma$, let
$\Dmc_R \subseteq \Dmc_{\mn{tree}}$ be the database that consists of
the fact $R(\varepsilon,R)$ and the subtree in $\Dmc_{\mn{tree}}$
rooted at $R$.  Now, the final database~\Dmc used in the reduction is
obtained from $\Dmc_0$ as follows: for every $c \in \mn{adom}(\Dmc_0)$
and every role $R \in \{ r, r^-\}$ with $r \in \Sigma$ such that there
is no fact $R(c, c') \in \Dmc_0$, add a disjoint copy of
$\Dmc_{R}$, glueing the copy of $\varepsilon$ to $c$.

It is easy to see that \Dmc can be computed in time $O(||E||)$. In
particular, the database $\Dtree$ can be constructed (in time
independent of \Dmc) by generating initial pieces of $\Dmc_\omega$ of
increasing depth and checking whether all non-empty concept names are
implied at $\varepsilon$.  In the {\color{\editColor} appendix}, we show that \Dmc
satisfies all functionality assertions in~$\Omc$ and is derivation
complete at $\mn{adom}(\Dmc_0)$.  We then use a rather subtle
analysis to prove the following.
    \begin{restatable}{lemma}{lemcyclecorr}
        \label{claim:lower-bound-partial-answers-and-cycle}
        ~\\[-4mm]
        \begin{description}
        \item[TD1] If there is a minimal partial answer to $Q$ on
          $\Dmc$ (with a single wildcard or with multiple wildcards),
          then there is a triangle in $G$.
            \item[TD2] If there is a triangle in $G$ then there is a complete answer to $Q$ on $\Dmc$.
        \end{description}
    \end{restatable}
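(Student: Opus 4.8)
I would prove the two implications separately, starting with the more direct direction \textbf{TD2}.

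\textbf{TD2.} Suppose $G$ contains a triangle on vertices $\alpha,\beta,\gamma$, so $\{\alpha,\beta\},\{\beta,\gamma\},\{\alpha,\gamma\}\in E$. The plan is to exhibit one homomorphism $h$ from $q$ into $\Umc_{\Dmc,\Omc}$ whose image lies in $\mn{adom}(\Dmc_0)$; this immediately yields $h(\bar x)\in Q(\Dmc)$. Fix the global word $w=\alpha\beta^{k-1}\gamma\in V^{k+1}$ and set $h(x)=\langle x,f^w_x\rangle$ for every variable $x$ of $q$, so that $f^w_x$ maps $y_0\mapsto\alpha$, $y_i\mapsto\beta$ for $1\le i\le k-1$, and $y_k\mapsto\gamma$ on the variables in $Y_x$. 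The decisive structural fact is that, since $y_0,\dots,y_k$ is a \emph{chordless} cycle of length at least four, no three of the $y_i$ are pairwise adjacent; hence for every atom $r(x,y)$ of $q$ the set $Y_x\cup Y_y$ of cycle variables occurring in the corresponding atom of $q^+$ is empty, a singleton, or a pair of \emph{consecutive} cycle variables. One then checks, for each of the three cases in the definition of $\Dmc_0$, that the per\=/atom word ($ab^k$, $a^kb$, or $b^{k+1}$) can be chosen to agree with $w$ on exactly these positions: the consecutive pairs $\{y_0,y_1\}$, $\{y_{k-1},y_k\}$ and $\{y_k,y_0\}$ are realized using the triangle edges $\{\alpha,\beta\}$, $\{\beta,\gamma\}$ and $\{\alpha,\gamma\}$, while a purely interior pair $\{y_i,y_{i+1}\}$ maps to the single vertex $\beta$ and needs no edge. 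Thus $r(\langle x,f^w_x\rangle,\langle y,f^w_y\rangle)\in\Dmc_0$ for every role atom, and concept atoms hold because $\Dmc$ is derivation complete at $\mn{adom}(\Dmc_0)$ and every concept name of $q$ is non\=/empty (as $Q$ is non\=/empty). Hence $Q(\Dmc)\neq\emptyset$.

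\textbf{TD1.} For the forward direction I would first pass to the universal model. As $\Dmc$ is satisfiable w.r.t.\ $\Omc$ and $\Umc_{\Dmc,\Omc}$ is universal, the existence of a minimal partial answer to $Q$ on $\Dmc$ --- with a single wildcard or with multi\=/wildcards --- is equivalent to $q(\Umc_{\Dmc,\Omc})\neq\emptyset$, i.e.\ to the existence of a homomorphism $h\colon q\to\Umc_{\Dmc,\Omc}$: a partial answer exists iff every model of $\Dmc$ and $\Omc$ admits a match of $q$, and by universality this is already witnessed in $\Umc_{\Dmc,\Omc}$. It therefore suffices to show that every such $h$ forces a triangle. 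The intended argument inspects the images $h(y_0),\dots,h(y_k)$ together with the functional paths that witness the $q^+$\=/edges between consecutive $y_i$. Since following a functional role in $\Umc_{\Dmc,\Omc}$ is deterministic, each $h(y_i)$ is determined by the image of a base variable of its witnessing atom, and reading off the $V$\=/component stored in the constants $\langle\cdot,f\rangle$ of $\mn{adom}(\Dmc_0)$ yields vertices $a_0,\dots,a_k$ of $G$. The interior pairs then force $a_1=\cdots=a_{k-1}$ (case $3$), while the three boundary pairs force $a_0,a_1,a_k$ to be pairwise adjacent (cases $1$ and $2$), so that $\{a_0,a_1,a_k\}$ is a triangle; pairwise adjacency also guarantees these three vertices are distinct since $G$ has no self\=/loops.

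The principal obstacle is the inaccuracy flagged before the lemma: $h$ need not send the $y_i$ into $\mn{adom}(\Dmc_0)$ but may route some of them through an attached tree $\Dmc_R$ or into the anonymous part of $\Umc_{\Dmc,\Omc}$. The hard part is to argue that this cannot destroy the triangle. The leverage is that everything in $\Umc_{\Dmc,\Omc}$ outside the graph\=/encoding core $\Dmc_0$ is tree\=/shaped and hence acyclic, whereas the cycle $y_0,\dots,y_k$, once its $q^+$\=/edges are unfolded into functional paths in $q$, induces a closed walk that cannot be embedded into an acyclic region without collapsing; the functionality assertions, which make the relevant successors unique, are precisely what prevent such a collapse from yielding a spurious match. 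Making this rigorous requires a careful bookkeeping of which stretches of the cycle may enter the derivation\=/completeness trees (local matching there is possible, which is exactly why they were attached) and which must remain in $\Dmc_0$, followed by verifying that the $V$\=/labels collected along $\Dmc_0$ are mutually adjacent. This case analysis, rather than any single conceptual step, is the main difficulty, and it applies verbatim to both wildcard regimes because in each the existence of a minimal partial answer is equivalent to $q(\Umc_{\Dmc,\Omc})\neq\emptyset$.
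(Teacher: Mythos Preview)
Your sketch has the right overall shape, but there is a genuine gap in \textbf{TD2} and a missing key device in \textbf{TD1}.

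\medskip
\textbf{TD2.} You set $h(x)=\langle x,f^w_x\rangle$ for \emph{every} variable of $q$ and then argue that every role atom $r(x,y)$ is satisfied because the corresponding fact lies in $\Dmc_0$. But $\Dmc_0$ was built only from atoms $r(x,y)$ with $r\in\Sigma$; if $q$ contains a role atom with $r\notin\Sigma$ (which is perfectly possible, since $\Sigma$ is an arbitrary data schema), your map need not be a homomorphism---and in $\ELIF$ the universal model does not manufacture such role facts between two database constants. The paper closes this gap by exploiting non-emptiness of $Q$ more heavily than you do: it takes a witnessing homomorphism $h'$ into some $\Umc_{\Dmc',\Omc}$, lets $W_0$ be the variables that $h'$ sends into $\mn{adom}(\Dmc')$, maps only those to $\langle x,f^{ab^{k-1}c}_x\rangle$, and observes that any role atom between two $W_0$-variables must already use a $\Sigma$-role (since $\Umc_{\Dmc',\Omc}$ adds no binary facts between database constants). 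The remaining variables fall into tree-shaped pieces $q_c$, and derivation completeness is used to embed each such piece at the image of its root. Your use of derivation completeness for concept atoms is correct, but it does not help with the non-$\Sigma$ role atoms; those require the $W_0/W_c$ split.

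\medskip
\textbf{TD1.} You correctly identify the obstacle---that $h$ may send some $y_i$ outside $\mn{adom}(\Dmc_0)$---and correctly point to the tree shape of the non-core part as the reason this cannot happen ``globally''. What is missing is the concrete mechanism: one does not read the vertex $a_i$ off $h(y_i)$, but off $h(u_i)$ for a suitable \emph{auxiliary} variable $u_i$ with $y_i\in Y_{u_i}$ and $h(u_i)\in\mn{adom}(\Dmc_0)$. The existence of such $u_i$ is the crux (the paper's Claim~\ref{claim:triangleui}): one unfolds the $q^+$-cycle into a walk in $q$, extracts a \emph{simple} cycle through a carefully chosen $u_i$, and then invokes the fact that any simple cycle in $q$ must be mapped entirely into $\mn{adom}(\Dmc_0)$ (since the rest of $\Umc_{\Dmc,\Omc}$ is a forest without multi-edges and $q$ is self-join free). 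Only after these $u_i$ are in hand can one define $f=\bigcup_i f_{u_i}$, show it is a function via the ``functional paths propagate $f$-values'' argument, and then locate, for each relevant edge $\{y_i,y_{i+1}\}$, a role atom $R(z_0,z_1)$ in $q$ with $y_i\in Y_{z_0}$, $y_{i+1}\in Y_{z_1}$, and $h(z_0),h(z_1)\in\mn{adom}(\Dmc_0)$, whose image in $\Dmc_0$ was produced by Case~1 or Case~2 and hence certifies an edge of $G$. Your sketch gestures at this but does not supply the simple-cycle extraction or the $u_i$ device, which is where the actual work lies.
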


\section{Combined Complexity of Single-Testing}
\label{sect:combined}

The results on enumeration provide a (mild) indication that partial
answers can be computationally more challenging than complete ones:
the condition used in Theorem~\ref{thm:firstupper} is weaker than that
in Theorem~\ref{thm:mpaupper}, % as it refers to $q^+(\bar x^+)$ rather
% than $q^+(\bar x)$,
and Theorem~\ref{thm:firstupper} achieves \cdlin while
Theorem~\ref{thm:mpaupper} achieves only \dlc.  Other cases in point
may be found in \cite{ourPODS}. This situation prompts us to study
the effect of answer partiality on the combined complexity of
single-testing.

We concentrate on the fragments \EL and $\mathcal{ELH}$ of
\ourEL\xspace that do not admit inverse roles and functionality
assertions and, in the case of \EL, also no role inclusions. These DLs
bear special importance as single-testing complete answers to OMQs
$Q=(\Omc,\Sigma,q) \in (\mathcal{ELH},\text{CQ})$ is in \PTime if $q$
acyclic and \NPclass-complete otherwise, both in combined complexity,
and thus no harder than without ontologies
\cite{DBLP:conf/semweb/KrotzschRH07,DBLP:conf/ijcai/BienvenuOSX13}. We
will show that making answers partial may have an adverse effect on
these complexities, starting, however, with a positive result. It is
proved by a Turing-reduction to single-testing complete answers.
\begin{restatable}{theorem}{propcombinedptime}
  \label{prop:combinedptime}
  For OMQs $Q=(\Omc,\Sigma,q) \in (\mathcal{ELH},\text{CQ})$ with $q$ acyclic,
  single-testing minimal partial answers with a single-wildcard is in
  \PTime in combined complexity.
\end{restatable}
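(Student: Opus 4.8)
The plan is to reduce single-testing of a minimal partial answer to polynomially many instances of single-testing complete answers, each solvable in \PTime. Two observations drive the reduction.

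First, partial answers are upward closed under the information order: if $\bar c$ is a partial answer and $\bar c \preceq \bar c'$, then $\bar c'$ is a partial answer as well, since for every model $\Imc$ the witnessing tuple $\bar d \in q(\Imc)$ with $\bar d \preceq \bar c$ also satisfies $\bar d \preceq \bar c'$ by transitivity of $\preceq$. The minimal partial answers are therefore exactly the minimal elements of this up-set, and minimality can be tested locally. Concretely, a partial answer $\bar c$ fails to be minimal iff replacing a single wildcard of $\bar c$ by some constant $a \in \mn{adom}(\Dmc)$ again yields a partial answer: if $\bar c' \prec \bar c$ is a partial answer, choose a position $i$ at which $c'_i$ is a constant but $c_i = \ast$; the tuple obtained from $\bar c$ by replacing only that $\ast$ with $c'_i$ lies $\succeq \bar c'$ and hence is a partial answer by upward closure. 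Thus it suffices to test the partial-answer property for $\bar c$ itself and for the $O(|\bar x|\cdot|\mn{adom}(\Dmc)|)$ tuples obtained from $\bar c$ by a single wildcard-to-constant replacement.

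Second, testing whether a wildcard tuple $\bar c$ is a partial answer reduces to a complete-answer test. Let $P = \{ i \mid c_i \neq \ast\}$ be the set of constant positions, let $\bar a$ be the projection of $\bar c$ to $P$, and let $q'$ be the CQ with the same atoms as $q$ but whose answer variables are only $\{ x_i \mid i \in P\}$, the former answer variables in wildcard positions becoming existentially quantified. Unfolding the definitions, $\bar c$ is a partial answer to $Q$ on $\Dmc$ iff in every model $\Imc$ of $\Omc$ and $\Dmc$ there is a homomorphism $h$ from $q$ with $h(x_i) = c_i$ for all $i \in P$, which is precisely the statement $\bar a \in Q'(\Dmc)$ for the OMQ $Q' = (\Omc,\Sigma,q')$. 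Since $q'$ has the same atoms as $q$, it is acyclic whenever $q$ is (acyclicity is insensitive to the choice of answer variables), and $Q' \in (\mathcal{ELH},\text{CQ})$. Hence each such test is in \PTime in combined complexity by the known bound for complete answers to acyclic CQs over $\mathcal{ELH}$ \cite{DBLP:conf/semweb/KrotzschRH07,DBLP:conf/ijcai/BienvenuOSX13}.

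Putting the pieces together, the algorithm first checks that $\bar c$ is a partial answer and then checks that none of the polynomially many single-position replacements is a partial answer; $\bar c$ is a minimal partial answer exactly when the first check succeeds and all replacement checks fail. As only polynomially many \PTime subtests are invoked, the whole procedure runs in polynomial time. The one genuinely delicate point is the localisation of minimality to single-position replacements; this is where both upward closure and the single-wildcard assumption are essential. With multiple wildcards one would additionally have to reason about equating distinct wildcards, which is exactly where tractability is expected to fail, matching the \NPclass- and {\sc DP}-hardness announced for the other cases.
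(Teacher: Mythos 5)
Your proposal is correct and follows essentially the same route as the paper: it Turing-reduces partial-answer testing to complete-answer testing by existentially quantifying the wildcard positions of $q$ (which preserves acyclicity), and then tests minimality by checking the polynomially many tuples obtained from $\bar a^\ast$ by replacing a single occurrence of `$\ast$' with a constant from $\mn{adom}(\Dmc)$. The only difference is that you spell out the upward-closure and single-replacement-localisation arguments that the paper leaves implicit, which is a welcome addition rather than a deviation.
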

Partial answers with multi-wildcards are less well-behaved. The
lower bound in the next result is
proved by a reduction from 1-in-3-SAT and only needs a very
simple ontology that consists of a single CI of the form
$A \sqsubseteq \exists r . \top$.
\begin{restatable}{theorem}{thmsinglemulti}
  \label{thm:singlemulti}
  For OMQs $Q=(\Omc,\Sigma,q) \in (\EL,\text{CQ})$ with $q$
  acyclic, single-testing minimal partial answers with
  multi-wildcards is \NPclass-complete in combined complexity.
  The same is true in $(\mathcal{ELH},\text{CQ})$
\end{restatable}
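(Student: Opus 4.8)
To prove Theorem~\ref{thm:singlemulti}, I would establish the upper and lower bounds separately. For the upper bound, membership in \NPclass is routine: given an OMQ $Q=(\Omc,\Sigma,q)\in(\EL,\text{CQ})$ with $q$ acyclic, a $\Sigma$-database \Dmc, and a candidate multi-wildcard tuple $\bar c$, I would guess a homomorphism $h$ from $q$ into the (polynomial-size portion of the) universal model relevant to the candidate, verify that the induced wildcard tuple is $\preceq$-below $\bar c$, and then verify minimality. Since \EL has a universal model whose relevant part is of polynomial size and single-testing complete answers is already known to be tractable for acyclic $q$, the guess-and-check can be carried out in \NPclass; one must also confirm that no strictly smaller partial answer exists, which can itself be checked by polynomially many complete-answer tests (cf.\ the Turing reduction underlying Theorem~\ref{prop:combinedptime}).

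\textbf{The lower bound.}
The substance is the matching \NPclass-hardness, which the excerpt signals is by reduction from 1-in-3-SAT using only a single CI $A\sqsubseteq \exists r.\top$. The plan is as follows. Given a 1-in-3-SAT instance with variables $p_1,\dots,p_n$ and clauses $C_1,\dots,C_m$, each $C_j = \{\ell_{j,1},\ell_{j,2},\ell_{j,3}\}$, I would build a database \Dmc and an acyclic CQ $q$ together with a candidate multi-wildcard tuple $\bar c$ such that $\bar c$ is a \emph{minimal} partial answer if and only if the instance has a satisfying assignment setting exactly one literal true per clause. The key mechanism exploits the semantics of multi-wildcards: distinct occurrences of the same wildcard $\ast_i$ must be witnessed by the \emph{same} unknown constant in every model. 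I would use the CI $A\sqsubseteq\exists r.\top$ to force the existence of anonymous $r$-successors at constants marked with $A$, so that positions of $\bar c$ filled by the same wildcard encode a choice that must be globally consistent. Truth values of variables would be encoded by whether certain answer positions collapse to a shared wildcard or remain distinct, and the clause gadgets in $q$ would be designed so that the $\preceq$-minimality requirement enforces the ``exactly one'' condition.

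\textbf{Correctness and the main obstacle.}
For correctness I would argue both directions: a 1-in-3 satisfying assignment yields a model in which $\bar c$ is a partial answer with no strictly smaller $\bar c'$, and conversely any witnessing partial answer of minimal $\preceq$-rank forces a consistent exactly-one selection across all clauses. The \emph{hard part} is the minimality direction of the reduction. Unlike complete answers, where one merely checks existence of a homomorphism, partial-answer minimality is a nonmonotone, $\prec$-extremal condition: I must ensure that the gadget makes $\bar c$ minimal \emph{precisely} when exactly one literal per clause is true, ruling out both ``zero true'' configurations (which should allow a strictly more informative partial answer, breaking minimality) and ``two-or-three true'' configurations (which should prevent $\bar c$ from being a partial answer at all, or again admit a smaller one). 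Engineering a single CI and an acyclic $q$ so that the interplay between wildcard sharing and $\prec$-minimality captures 1-in-3 semantics exactly—while keeping $q$ acyclic—is the delicate core of the construction. Finally, I would observe that the identical construction works in $(\mathcal{ELH},\text{CQ})$ since it uses no role inclusions, and that the \NPclass upper bound transfers to $\mathcal{ELH}$ as well, yielding \NPclass-completeness in both logics.
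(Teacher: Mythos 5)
Your upper bound follows essentially the same route as the paper (test partial\-/answer\-/hood by existentially quantifying the wildcard positions, then test minimality against the polynomially many one-step refinements of the candidate tuple), so the issue lies entirely in the lower bound, which is where the substance of the theorem is. There the proposal stops at a plan: the gadget is never constructed, and you yourself flag its ``delicate core'' as unresolved. Moreover, the mechanism you propose would steer you wrong in two ways. First, the sharing pattern of the candidate tuple $\bar a^\Wmc$ is part of the \emph{input} to single-testing --- it is fixed by the reduction, not chosen by a model --- so truth values cannot be ``encoded by whether certain answer positions collapse to a shared wildcard or remain distinct''. Second, you want $\prec$-minimality to enforce the exactly-one condition; in the construction that actually works, minimality does no combinatorial work at all, and trying to make it carry the 1-in-3 semantics is precisely the nonmonotone reasoning you were unable to discharge.

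The paper's gadget separates the two concerns. Take $\Omc=\{V\sqsubseteq \exists r.\top\}$ and, for each clause $c_i$, the tree-shaped component $v_1(y_i,y_{i,1})\wedge v_2(y_i,y_{i,2})\wedge v_3(y_i,y_{i,3})\wedge\bigwedge_{j} r(y_{i,j},x_{i,j})$ with the $x_{i,j}$ as answer variables; the candidate tuple puts the \emph{same} wildcard $\ast_\ell$ at $x_{i,j}$ and $x_{i',j'}$ iff both positions are occurrences of the same propositional variable $z_\ell$, so wildcard sharing only enforces cross-clause consistency of truth values. The database contains $V(a_T),V(a_F)$ and constants $a_1,a_2,a_3$ with $v_j(a_k,a_T)$ if $j=k$ and $v_j(a_k,a_F)$ otherwise --- ``exactly one true per clause'' is thus hard-wired into the rows of the database rather than extracted from minimality, which is why your worry about ruling out two-or-three-true configurations never arises. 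Since $\Dmc$ contains no $r$-facts, every answer variable must be witnessed by one of the two anonymous $r$-successors that \Omc generates below $a_T$ and below $a_F$; hence every partial answer is all-wildcards, each $y_i$ must be mapped to some $a_k$ (selecting the $k$-th literal of $c_i$ as the true one), and realizing the prescribed sharing pattern amounts exactly to a consistent choice, for each $z_\ell$, of whether all of its occurrences sit below $a_T$ or all below $a_F$. This yields $\bar a^\Wmc\in Q(\Dmc)^\Wmc$ iff $\vp$ has a 1-in-3 assignment, with minimality coming along essentially for free.
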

We now move from acyclic to unrestricted CQs. This makes the
complexity increase further, and the difference between single
and multi-wildcards vanishes.
\begin{restatable}{theorem}{thmDP}
  \label{thm:DP}
  For OMQs $Q=(\Omc,\Sigma,q) \in (\EL,\text{CQ})$ single-testing
  minimal partial answers is \DPclass-complete in combined
  complexity. This is true both for single wildcards and
  multi-wildcards, and the same holds also in
  $(\mathcal{ELH},\text{CQ})$.
\end{restatable}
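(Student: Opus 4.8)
\medskip

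The plan is to prove membership in \DPclass and \DPclass\=/hardness separately. Since $\mathcal{ELH}$ extends \EL, it suffices to show the upper bound for $\mathcal{ELH}$ and the lower bound already for \EL. Both wildcard versions are treated by essentially the same arguments; the differences are minor and I flag them where they occur.

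For membership I would first record a monotonicity property: the partial answers to $Q$ on \Dmc\ form an up\=/set under `$\preceq$', i.e.\ if $\bar c' \preceq \bar c$ and $\bar c'$ is a partial answer then so is $\bar c$, because any witness $\bar d \preceq \bar c'$ in a model also satisfies $\bar d \preceq \bar c$. Consequently a partial answer $\bar c$ is minimal iff none of its immediate predecessors (w.r.t.\ `$\prec$') is a partial answer, where an immediate predecessor arises from a single atomic refinement: replacing one wildcard by a constant of $\mn{adom}(\Dmc)$, or (only in the multi\=/wildcard case) identifying two wildcards. There are only polynomially many such predecessors. Next I would show that testing whether a wildcard tuple $\bar c$ is a partial answer is in \NPclass in combined complexity: as complete\=/answer testing over $\mathcal{ELH}$ is already in \NPclass, one guesses a homomorphism from $q$ into the universal model (whose relevant part has depth $\le ||q||$, represented succinctly) witnessing an answer $\bar a$ with $\bar a^\ast_\Nbf \preceq \bar c$, and verifies it in polynomial time. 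Writing $P$ for ``$\bar c$ is a partial answer'', the predicate $P$ is in \NPclass, and ``some immediate predecessor of $\bar c$ is a partial answer'' is also in \NPclass (guess the predecessor, then a homomorphism witness). Since $\bar c$ is a minimal partial answer iff $P$ holds and the latter predicate fails, the problem is the intersection of an \NPclass language with the complement of an \NPclass language, hence in \DPclass.

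For the lower bound I would reduce from \textsc{Sat}\=/\textsc{Unsat} (given 3\=/CNF formulas $\varphi_1,\varphi_2$, is $\varphi_1$ satisfiable and $\varphi_2$ unsatisfiable?), which is \DPclass\=/complete. I use a query with a single answer variable $x_0$ and the candidate $\bar c=(\ast)$; then minimality of $(\ast)$ means that no constant is a complete answer, equivalently that every homomorphism from $q$ into the universal model \Umc\ sends $x_0$ to a null. Abbreviating $P$ for ``$(\ast)$ is a partial answer'' (equivalently, $q$ has some homomorphism into \Umc) and $M$ for minimality, the up\=/set property yields $\neg P \Rightarrow M$, i.e.\ $P \vee M$ is always true. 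I would therefore engineer $P \equiv [\varphi_1\text{ sat}]$ and $M \equiv \neg([\varphi_1\text{ sat}] \wedge [\varphi_2\text{ sat}])$, so that $P \wedge M \equiv [\varphi_1\text{ sat}] \wedge [\varphi_2\text{ unsat}]$, exactly \textsc{Sat}\=/\textsc{Unsat}. The query is taken as the disjoint union of two components over disjoint signatures, and \Omc\ consists of the single \EL inclusion $A \sqsubseteq \exists e.\top$, used only to supply anonymous $e$\=/successors.

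The component $q_1$ is the standard cyclic CQ encoding $\varphi_1$ over signature symbols not occurring in \Omc, so that it has a homomorphism into \Umc\ (equivalently into \Dmc) iff $\varphi_1$ is satisfiable; it has no answer variable and drives $P$. The main obstacle, which I expect to be the crux of the proof, is the second component $q_2$ containing $x_0$: I must make $x_0$ reachable on a constant exactly when $\varphi_2$ is satisfiable, while keeping $q_2$ satisfiable with $x_0$ on a null no matter what $\varphi_2$ is. This decoupling is delicate because the anonymous part of an \EL universal model is a forest, so a cyclic gadget can never be matched there and cannot simply be ``switched off'' on the null side. I would resolve it by letting $q_2$ be a circuit\=/evaluation gadget whose homomorphisms into \Dmc\ correspond to \emph{all} truth assignments to $\varphi_2$ (so $q_2$ always has a homomorphism into \Dmc), with a designated output variable $z$ that can be matched to a distinguished constant $\mathsf{sat}$ iff the chosen assignment satisfies $\varphi_2$; in \Dmc, $\mathsf{sat}$ is the only constant carrying a constant $e$\=/successor $\mathsf{out}$, whereas every value constant carries an anonymous $e$\=/successor via \Omc. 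Adding the atom $e(z,x_0)$ then yields precisely the desired behaviour: $x_0$ may always be sent to a null (so $P$ depends only on $q_1$, i.e.\ on $\varphi_1$), while $x_0$ reaches the constant $\mathsf{out}$ iff $z$ reaches $\mathsf{sat}$ iff $\varphi_2$ is satisfiable. I would conclude by verifying both directions of correctness and observing that, as $\bar c$ carries a single wildcard, no identification refinements exist, so the same instance proves hardness for multi\=/wildcards; and since \Omc\ is an \EL (hence $\mathcal{ELH}$) ontology, both description logics are covered.
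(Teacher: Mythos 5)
Your proof is correct and follows the same overall architecture as the paper's: for membership, the paper likewise tests ``is a partial answer'' in \NPclass and intersects with the \coNP check that none of the polynomially many one-step refinements is a partial answer, exploiting the same up-set property of `$\preceq$'; for hardness, the paper also uses a single answer variable, the candidate $(\ast)$, a Boolean component governing partial-answerhood, and a second component in which the answer variable reaches a constant iff the second instance is a yes-instance, with an existential CI supplying the anonymous fallback. The only genuine divergence is in the gadget you correctly identify as the crux. The paper reduces from 3COL--no3COL rather than \textsc{Sat}--\textsc{Unsat} and resolves the decoupling more cheaply: it adds a single constant $e$ with a self-loop $s(e,e)$ and $B(e)$, so the entire (cyclic) $G_2$-encoding can always collapse onto $e$, whose ontology-generated anonymous successor then absorbs the answer variable; reaching the distinguished constant instead requires mapping the $G_2$-part into the 3-colorability template. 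Your alternative --- making the second component a deterministic evaluation gadget that is total on assignments, so that only the position of the output variable distinguishes satisfiable from unsatisfiable --- also works and needs no collapse trick, at the price of a somewhat larger query (binary gate encodings). Both constructions establish exactly the equivalence $P\wedge M \equiv$ (first instance yes) $\wedge$ (second instance no), and both transfer verbatim to multi-wildcards since a single wildcard position admits no identification refinements.
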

For most other OMQ languages, we %  are complete in combined complexity for
% large complexity classes closed under complement such as \ExpTime, we
do not expect a difference in complexity between single-testing
complete answers and single-testing partial answers.  As an example,
we consider \ourEL\xspace where single-testing complete answers is
\ExpTime-complete \cite{DBLP:conf/jelia/EiterGOS08}.
\begin{restatable}{theorem}{thmourelcomb}
  In $(\ourEL,\text{CQ})$, single-testing minimal partial answers is
  \ExpTime-complete in combined complexity, both with single wildcards
  and multi-wildcards.
\end{restatable}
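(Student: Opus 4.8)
The plan is to handle the two bounds separately: hardness is inherited from single-testing of complete answers, while membership is obtained by reasoning over the query-directed universal model $\Umc_{\Dmc,Q}$ of Proposition~\ref{prop:japanesenoodles}.

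\textbf{Lower bound.} I would get \ExpTime-hardness essentially for free from the \ExpTime-completeness of single-testing complete answers in $(\ourEL,\text{CQ})$ \cite{DBLP:conf/jelia/EiterGOS08}. Given such an instance $(Q,\Dmc,\bar c)$ with $\bar c \in \mn{adom}(\Dmc)^{|\bar x|}$ a tuple of constants, note that $\bar c$ carries no wildcard and is therefore $\prec$-minimal, since no wildcard tuple $\bar c'$ satisfies $\bar c' \prec \bar c$. Moreover, for a constant tuple the requirement ``$\bar c' \preceq \bar c$ for some $\bar c' \in q(\Imc)$'' forces $\bar c' = \bar c$, so $\bar c$ is a partial answer iff it is a complete answer. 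Hence $\bar c \in Q(\Dmc)$ iff $\bar c$ is a minimal partial answer, and the identity map is a correct reduction. As a wildcard-free tuple is maximally informative in either information order, this argument is insensitive to the wildcard flavour and yields hardness for both single and multi-wildcards.

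\textbf{Upper bound, setup.} First I would decide whether $\Dmc$ is satisfiable w.r.t.\ $\Omc$, which is in \ExpTime in \ourEL. If $\Dmc$ is unsatisfiable there are no models, every wildcard tuple is vacuously a partial answer, and the minimal ones are exactly the wildcard-free tuples (as $\mn{adom}(\Dmc)$ is nonempty); so $\bar c$ is a minimal partial answer iff it contains no wildcard, which is immediate. In the satisfiable case I would build $\Umc_{\Dmc,Q}$ via Proposition~\ref{prop:japanesenoodles} in time $2^{\mn{poly}(||Q||)}\cdot O(||\Dmc||)$, which is within \ExpTime in combined complexity, and use its universality for partial answers, $Q(\Dmc)^\ast = q(\Umc_{\Dmc,Q})^\ast_\Nbf$ (and the analogous statement for $Q(\Dmc)^\Wmc$), to reduce everything to reasoning over $\Umc_{\Dmc,Q}$, reading $N = \mn{adom}(\Umc_{\Dmc,Q})\setminus\mn{adom}(\Dmc)$ as nulls.

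\textbf{Partial-answer test and minimality.} By universality, $\bar c$ is a partial answer iff there is $\bar a \in q(\Umc_{\Dmc,Q})$ with $\bar a^\ast_\Nbf \preceq \bar c$, equivalently iff there is a homomorphism $h\colon q \to \Umc_{\Dmc,Q}$ with $h(x_i)=c_i$ at every answer variable $x_i$ where $\bar c$ carries a constant (wildcard positions imposing no constraint). This is a conjunctive-query evaluation over a database of exponential size, solvable in time $|\Umc_{\Dmc,Q}|^{O(|q|)}$, hence in \ExpTime. For minimality I would use that the partial answers are upward closed under $\preceq$: if $\bar c' \preceq \bar c$ and $\bar c'$ is a partial answer, then so is $\bar c$, by transitivity of $\preceq$. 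Consequently $\bar c$ is minimal iff it is a partial answer and no \emph{immediate} refinement — obtained by replacing one wildcard position by a constant from $\mn{adom}(\Dmc)$ — is a partial answer: any $\bar c'' \prec \bar c$ lies below some immediate refinement $\bar c'$ of $\bar c$, which is then itself a partial answer by upward closure. There are only polynomially many immediate refinements, each tested in \ExpTime as above, so the whole procedure runs in \ExpTime.

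\textbf{Multi-wildcards and the main obstacle.} The multi-wildcard case follows the same template through the corresponding universality statement for $Q(\Dmc)^\Wmc$; here the homomorphism test must additionally respect the equality pattern of the wildcards (positions sharing a symbol mapped to the same null, constant positions to their constants), and minimality is checked against the immediate refinements in the multi-wildcard information order. The main obstacle is making this last step precise: pinning down the immediate refinements of a multi-wildcard tuple and verifying the matching upward-closure property. This is where the analysis is most delicate, but the \ExpTime bound is robust, since even an only exponentially bounded collection of refinement candidates still yields an \ExpTime procedure when each candidate is tested in \ExpTime.
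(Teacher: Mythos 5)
Your proof is correct, and its skeleton --- test whether the candidate is a partial answer, then test the polynomially many immediate refinements and appeal to upward closure of partial answers under $\preceq$ --- is exactly the strategy the paper reuses from Theorems~\ref{prop:combinedptime} and~\ref{thm:singlemulti}. The one place you diverge is in how each individual partial-answer test is realized: the paper quantifies the wildcard positions of $q$ (identifying answer variables that share a multi-wildcard) to obtain an OMQ $Q'$ and invokes the known \ExpTime procedure for single-testing complete answers of $Q'$ as a black box, whereas you materialize the query-directed universal model $\Umc_{\Dmc,Q}$ and search for a homomorphism subject to the induced constraints. Both run in \ExpTime; the paper's route is shorter because it needs no new correctness argument, while yours needs one small supplement: Lemma~\ref{prop:restrictedchaseworks} only equates the sets of \emph{minimal} partial answers, so to justify your homomorphism criterion for arbitrary (possibly non-minimal) candidates you must combine parts (a) and (b) of its proof with upward closure and the existence of a minimal partial answer below any partial answer --- which you do invoke, if only implicitly. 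Your explicit lower-bound argument (a wildcard-free tuple is $\prec$-minimal and is a partial answer iff it is a complete answer, so the identity map reduces complete-answer single-testing to the problem at hand, for either wildcard flavour) is the intended one, which the paper leaves unstated.
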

We remark that the \emph{data} complexity of single-testing minimal
partial answers in $(\ourEL,\text{CQ})$ is in \PTime, both with a
single wildcard and with multi-wildcards. This can be shown by using
essentially the same arguments as in the proof of
Theorem~\ref{prop:combinedptime}.

\section{Conclusion}

It would be interesting to extend our results to
\ourEL\xspace with local functionality assertions, that is, with
concepts of the form $\nrleq 1 R$ or even $\qnrleq 1 R C$. This is
non-trivial as it is unclear how to define the CQ extension~$q^+$. It
would also be interesting and non-trivial to get rid of self-join
freeness in the lower bounds, see
\cite{berkholz-enum-tutorial,DBLP:journals/corr/abs-2206-04988}. Another
natural question is whether answers can be enumerated in some given
order, see e.g.\ \cite{DBLP:conf/pods/CarmeliTGKR21}.
Note that it was observed in \cite{ourPODS} that when enumerating $Q(\Dmc)^\ast$
or $Q(\Dmc)^\Wmc$, it is possible to enumerate the complete answers
before the truely partial ones.

\section*{Acknowledgements}

The authors were supported by the DFG project LU \mbox{1417/3-1} QTEC.

\bibliography{enum}

%\end{document}

\cleardoublepage

\appendix

\section{Additional Details for Section~\ref{sect:prelims}}

We formally define the notion of partial answers with multi-wildcards.
Fix a countably infinite set of \emph{multi-wildcards}
$\Wmc = \{ \ast_1,\ast_2,\dots \}$, disjoint from \NK. A
\emph{multi-wildcard tuple} for a database \Dmc is a tuple
$(c_1,\dots,c_n) \in (\mn{adom}(\Dmc) \cup \Wmc)^n$, $n \geq 0$, such
that if $c_i = \ast_j$ with $j>1$, then there is an $i' < i$ with
$c_{i'}=\ast_{j-1}$. For example, $(a,\ast_1,b,\ast_2,\ast_1,a)$ is a
multi-wildcard tuple. Occurrences of the same wildcard represent
occurrences of the same unknown constant while different wildcards
represent constants that may or may not be
different. % In this sense, multi-wildcard extend
% wildcard-tuples with equality, but not with inequality. We disallow
% tuples such as $(\ast_2,\ast_1)$ to avoid redundancy, as they
% represent the same situation as the (valid) tuple $(\ast_1,\ast_2)$.
%
For multi-wildcard tuples $\bar c = (c_{1},\dots,c_{n})$ and
$\bar c' = (c'_{1},\dots,c'_{n})$, we write $\bar c \preceq \bar c'$ if
\begin{enumerate}

\item $c_{i}=c'_{i}$ or % $c_i \in \mn{adom}(\Dmc)$ and
  $c'_{i} \in \Wmc$ for
  $1 \leq i \leq n$ and

\item %$c_{i},c_{j} \in \Wmc$ and
  $c'_{i} = c'_{j}$
  implies $c_{i} = c_{j}$ for \mbox{$1 \leq i,j \leq n$}.

% \item one of the following is true:
%   %
%   \begin{itemize}
    %
    %   \item $c_{i,1} \notin \Wmc$ and $c_{i,2} \in \Wmc$ for some $i$;
    %
    %   \item $c_{i,1}=c_{j,1} \in \Wmc$ and $c_{i,1} \neq c_{j,2}$ for some $i,j$.
    %
    %   \end{itemize}
%   %
\end{enumerate}
Then, $\bar c \prec \bar c'$ if $\bar c \preceq \bar c'$ and
$\bar c \neq \bar c'$. For example, $(\ast_1,a) \prec (\ast_1,\ast_2)$
and $(a,\ast_1,\ast_2,\ast_1) \prec (a,\ast_1,\ast_2,\ast_3)$.
Informally, $\bar c \prec \bar c'$ means that $\bar c$ is
strictly more informative than $\bar c'$.

A \emph{partial answer with multi-wildcards} to OMQ
$Q(\bar x)=(\Omc,\Sigma,q)$ on a $\Sigma$-database $\Dmc$ is a
multi-wildcard tuple $\bar c$ for $\Dmc$ of length $|\bar x|$ such
that for each model $\Imc$ of $\Omc$ and \Dmc, there is a
$\bar c' \in q(\Imc)$ such that
% {\color{blue}$\bar c$ can be obtained from $\bar c'$ by replacing every constant
    % from $\mn{adom}(I) \setminus \mn{adom}(D)$ with `$\ast$'.}
$\bar c' \preceq \bar c$. % Note that some positions in $\bar c'$ may
% contain elements of $\Delta^\Imc \setminus \mn{adom}(\Dmc)$, and that the
% corresponding position in $\bar c$ must then have a wildcard.
%{\color{red} Are we sure $c$ and $c'$ do not have to agree on domain
    %  of $D$?}
%since $\bar c$ is a wildcard tuple for $D$.
%
A partial answer with multi-wildcards $\bar c$ to $Q$ on a
$\Sigma$-database $\Dmc$ is a \emph{minimal partial answer} if there
is no partial answer with multi-wildcards $\bar c'$ to $Q$ on $\Dmc$
with $\bar c' \prec \bar c$.  The {\em minimal partial answer
      evaluation of $Q(\bar x)$ on~$\Dmc$ with multi-wildcards}, denoted
$Q(\Dmc)^{\ast}$, is the set of all minimal partial answers to $Q$
on~$\Dmc$ with multi-wildcards.  Note that
$Q(\Dmc) \subseteq Q(\Dmc)^{\Wmc}$.

\medskip

Throughout the paper, we assume that given an \ourEL\xspace ontology
\Omc and roles $R,S$, it is decidable whether
$\Omc \models R \sqsubseteq S$ and whether
$\Omc \models \mn{func}(R)$. It is in fact not hard to see that these
problems are \ExpTime-complete. We describe the upper bounds.

\smallskip

To decide whether $\Omc \models R \sqsubseteq S$, one constructs
the database $\Dmc = \{ R(a,b) \}$ and then checks whether $\Dmc,\Omc
\models S(x,y)$, which is possible in \ExpTime
\cite{DBLP:conf/jelia/EiterGOS08}.

\smallskip

To decide whether $\Omc \models \mn{func}(R)$, one constructs the
database $\Dmc = \{ R(a,b_1),R(a,b_2) \}$ and then checks whether
\Dmc is satisfiable w.r.t.\ \Omc which is the case if and only if
$\Dmc,\Omc \not\models \exists x \, A(x)$ with $A$ a fresh concept
name.

\medskip Throughout the appendix, we say that a CQ $q$ is a
\emph{tree} if the undirected graph $G_{\Dmc_q}$ is a tree and $q$
contains no reflexive atoms $r(x,y)$.  Note that multi-edges
$r(x,y),s(x,y) \in q$, with $r \neq s$, are admitted.

\section{Simulations}

We introduce the notion of a simulation, which is closely linked to
the expressive power of \ELI and will be used throughout the appendix.
Let %$\Sigma$ be a schema and let
$\Imc$ and $\Jmc$ be interpretations.
% $\Sigma$\=/
A \emph{simulation} from $\Imc$ to $\Jmc$ is a relation
$S \subseteq \mn{adom}(\Imc) \times \mn{adom}(\Jmc)$ such that
\begin{description}

    \item[Sim1] $A(c) \in \Imc$ and $(c,c') \in S$ implies $A(c') \in \Jmc$ and

    \item[Sim2] $R(c_1,c_2) \in \Imc$  and $(c_1,c'_1) \in S$ implies that
    there is a $c'_2 \in \mn{adom}(\Jmc)$ such that $R(c'_1,c'_2) \in \Jmc$
    and \mbox{$(c_2,c'_2) \in S$}.

%    \item[Sim3] and $R(c_2,c_1) \in I$  and $(c_1,c'_1) \in S$ implies that
%    there is a $c'_2 \in \mn{adom}(J)$ such that $R(c'_2,c'_1) \in J$
%    and $(c_2,c'_2) \in S$.

\end{description}
If there is a simulation $S$  from $\Imc$ to $\Jmc$ such that $(c,c') \in S$, then
we write $(\Imc,c) \preceq (\Jmc,c')$.

\section{Universal Models}
\label{sect:univmod}

We show how to construct a universal model $\Umc_{\Dmc,\Omc}$ of a
database \Dmc and \ourEL-ontology \Omc, using a somewhat unusual
combination of chase and `direct definition' that is tailored towards
the needs of our subsequent proofs and constructions.

In a nutshell, we use chase-like rule applications for the database
part of the universal model and a `direct definition' for the parts of
the universal model generated by existential quantifiers in \Omc.  One
advantage over a pure chase approach is that this avoids ever having
to identify generated objects. To
see why such identifications might be necessary, consider an ontology
that includes
$$
\begin{array}{rcl}
 \Omc &=& \{ A_1 \sqsubseteq \exists r_1 . B_1, A_2 \sqsubseteq \exists
          r_2 . B_2, A \sqsubseteq \exists s . B \\[1mm]
  && \ \ \  r_1 \sqsubseteq s, r_2 \sqsubseteq s, \mn{func}(s) \}
\end{array}
$$
and a database that includes $A_2(c)$. Chasing would generate
$r_1(c,c_1), B_1(c_1)$ and $r_2(c,c_2), B_2(c_2)$ with with $c_1,c_2$
fresh constants. If now other parts of the chase generate $A(c)$, then
$c_2$ has to be merged into $c_1$ and $s(c,c_1)$ has
to be added.

Let \Omc be an \ourEL-ontology in normal form and $\Dmc$ a database
that is satisfiable w.r.t.~\Omc.
% We write  $S \sqsubseteq_\Omc^* R$ if \Omc contains role
% inclusions $R_1 \sqsubseteq R_2, R_2 \sqsubseteq R_3, \dots, R_{n-1}
% \sqsubseteq R_n$ with $R_1=S$ and $R_n=R$.
For each non-empty set $\rho$ of
roles and set $M$ of concept names, define a CQ
$$
q^1_{\rho,M}(x)= \exists y \bigwedge\{ R(x,y)
\mid R \in \rho \} \wedge \bigwedge \{A(y) \mid A \in M\}
$$
and let
$q^2_{\rho,M}(x,y)$ denote the same CQ, but with $y$ an additional answer
variable rather than a quantified variable.
For the chase part of our construction, we use the following chase
rules:
\newcommand{\chaserules}{
    \begin{description}

    \item[R1] If $A_1(c),\dots,A_n(c) \in \Dmc$,
      $\Omc \models A_1 \sqcap \cdots \sqcap A_n \sqsubseteq A$, and
      $A(c) \notin \Dmc$, then add $A(c)$ to \Dmc;

    \item[R2] If $A(c_1),R(c_2,c_1) \in \Dmc$, $\exists R . A
      \sqsubseteq B \in \Omc$, and \mbox{$B(c_2) \notin \Dmc$}, then
      add $B(c_2)$ to \Dmc;

    \item[R3] If $R(c_1,c_2) \in \Dmc$, $R \sqsubseteq S \in \Omc$, and
      $S(c_1,c_2) \notin \Dmc$, then add $S(c_1,c_2)$ to \Dmc;

    % \item[R4] If $A(c_1),R(c_1,c_2) \in \Dmc$,
    %   $A \sqsubseteq \exists S . B \in \Omc$,  $S \sqsubseteq_\Omc^* R$,
    %   $\mn{func}(R) \in \Omc$, and $S(c_1,c_2) \notin \Dmc$, then add
    %   $S(c_1,c_2)$ and $B(c_2)$ to~\Dmc.

    \item[R4] If $A_1(c_1),\dots,A_n(c_1), R(c_1,c_2) \in \Dmc$,
      \mbox{$\mn{func}(R) \in \Omc$},
      $\{A_1(c_1),\dots,A_n(c_n)\},\Omc \models q^1_{\rho,M}(c_1)$,
      $S,R \in \rho$, \mbox{$A \in M$}, and $S(c_1,c_2) \notin \Dmc$ or $A(c_2)
      \notin \Dmc$, then add
      $S(c_1,c_2)$ and $A(c_2)$ to~\Dmc.

    \end{description}
    }
\chaserules
In rule {\bf R1}, we may have $n=0$ to capture CIs of the form
$\top \sqsubseteq A$.  Rule {\bf R4} is best understood in the
light of the example above. For a database \Dmc and ontology \Omc, we use
$\mn{ch}_\Omc(\Dmc)$ to denote the result of exhaustively applying the
above rules to~\Dmc. Note that no rule introduces fresh constants, and
thus rule application terminates. It is also easy to see that the
final result does not depend on the order in which the rules are
applied.

To construct the universal model $\Umc_{\Dmc,\Omc}$, we first build
the chase $\mn{ch}_\Omc(\Dmc)$ and then proceed with the `semantic'
part of the construction as follows.
Let $c \in \mn{adom}(\Dmc)$. We use $M_c$ to denote the set of concept
names $A$ with $A(c) \in \mn{ch}_\Omc(\Dmc)$. Moreover, we use $\Dmc_M$
to denote the database $\{ A(\widehat c) \mid A \in M \}$ where
$\widehat c$ is an
arbitrary, but fixed constant. If $M=M_c$, then we take $\widehat c=c$.
 For a non-empty set
$\rho$ of roles and set $M$ of concept names, we write
$c\rightsquigarrow_{\Dmc,\Omc}^\rho M$ if
\begin{enumerate}

\item %$\{A(c) \mid A(c) \in \mn{ch}_\Omc(\Dmc)\},
  $\Dmc_{M_c},\Omc\models q^1_{\rho,M}(c)$,

\item $\rho$ and $M$ are maximal with this property, that is, there
  are no $\rho' \supseteq \rho$ and $M' \supseteq M$ with
  $(\rho',M') \neq (\rho,M)$ such that
  $\Dmc_{M_c},\Omc\models q^1_{\rho',M'}(c)$, and

% \item there are no $R \in \rho$ and $R(c,c') \in \mn{ch}_\Omc(\Dmc)$
%   such that
%   $S \sqsubseteq^*_\Omc R$ and
%   $\mn{func}(R) \in \Omc$.

  \item there is no $R(c,c') \in \mn{ch}_\Omc(\Dmc)$
  such that $R \in \rho$ and
  $\mn{func}(R) \in \Omc$.

% \item there is no $c' \in \mn{adom}(\Dmc)$ with
%   $\Dmc,\Omc\models q^2_{\rho,M}(c,c')$.

% \item $\Dmc_{M_i},\Omc \models q^1_{\rho_{i+1},M_{i+1}}(\widehat c)$ for
%   $1 \leq i < n$.

\end{enumerate}
Similarly, for a non-empty set $\rho$ of roles  and sets $M_1,M_2$ of
concept names, we write $M_1 \rightsquigarrow_{\Dmc,\Omc}^\rho M_2$
if for $1 \leq i < n$:
\begin{enumerate}

\item[4.] $\Dmc_{M_i},\Omc \models q^1_{\rho_{i+1},M_{i+1}}(\widehat
  c)$ and

\item[5.] $\rho_{i+1}$ and $M_{i+1}$ are maximal with this property.

\end{enumerate}
A \emph{trace for $\Dmc$ and} $\Omc$ is a sequence
$$t = c\rho_1M_1\rho_2M_2\dots \rho_nM_n,\ n \geq 0,$$
where $c \in \mn{adom}(\Dmc)$, $\rho_1,\dots,\rho_n$ are sets of roles
that occur in \Omc, and $M_1,\dots,M_n$ are sets of concept names that
occur in \Omc such that for $1 \leq i < n$:
\begin{enumerate}

\item[6.] $c\rightsquigarrow_{\Dmc,\Omc}^{\rho_1}M_1$,

\item[7.]
  $M_i \rightsquigarrow_{\Omc}^{\rho_{i+1}}M_{i+1}$,

\item[8.] there is no role name $R$ with $R^- \in \rho_i$,
  $R \in \rho_{i+1}$, and $\mn{func}(R) \in \Omc$, for
  $1 \leq i < n$.

% \item[8.] if   $A_1 \sqsubseteq \qnrleq 1 r {A_2} \in \Omc$, $A_1 \in
%   M_1$, $\Dmc,\Omc \models A_2(c)$, and $A_2 \in M_2$,
%   then $R^- \notin \rho_1$ or $R \notin \rho_2$, and

% \item[9.] if $A_1 \sqsubseteq \qnrleq 1 r {A_2} \in \Omc$, $A_1 \in
%   M_i$ with $1 < i < n$, and $A_2 \in M_{i-1} \cap M_{i+1}$,
%   then $R^- \notin \rho_{i}$ or $R \notin \rho_{i+1}$.

\end{enumerate}
Let $\Tbf$ denote the set of all traces for $\Dmc$
and $\Omc$. Then, the \emph{universal model} of \Dmc and \Omc is the
following database with active domain $\Tbf$, viewed as an
interpretation:
\begin{align*}
    \Umc_{\Dmc, \Omc} = {} & \mn{ch}_\Omc(\Dmc) \cup {}\\
%                           & \{ S(c,c') \mid R(c,c') \in \Dmc \text{ and
    %			   } \Dmc,\Omc \models S(c,c')\}\cup {}\\
                           & \{A(t\rho M) \mid t\rho M \in \Tbf \text{ and }
                             A \in M\}
   \cup {}\\
   &\{R(t, t\rho M) \mid t\rho M \in \Tbf \text{ and } R \in \rho \}.
\end{align*}
The following lemma summarizes the most important properties of
universal models.
\begin{lemma}
  \label{lem:univbasic}
  ~\\[-4mm]
  \begin{enumerate}

      \item $\Umc_{\Dmc, \Omc}$ is a model of \Dmc and \Omc.

      \item If \Imc is a model of \Dmc and \Omc, then there is
        a homomorphism from $\Umc_{\Dmc, \Omc}$ to \Imc that
        is the identity on $\mn{adom}(\Dmc)$.

      \item $Q(\Dmc) = q(\Umc_{\Dmc, \Omc}) \cap \mn{adom}(\Dmc)^{|x|}$.

      \end{enumerate}
\end{lemma}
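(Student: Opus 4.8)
The plan is to prove the three parts of Lemma~\ref{lem:univbasic} in order, since Part~3 depends on Parts~1 and~2. Throughout, the key conceptual tool is that \ELI-concepts are preserved under simulations (as defined in the Simulations section), and that the traces $\Tbf$ exhaustively encode every existential obligation that \Omc can impose, subject to the functionality constraints.

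\paragraph{Part 1.}
First I would verify that $\Umc_{\Dmc,\Omc}$ is a model of \Dmc and \Omc. That $\Dmc \subseteq \Umc_{\Dmc,\Omc}$ (and hence all facts of \Dmc hold) is immediate, since $\mn{ch}_\Omc(\Dmc) \supseteq \Dmc$. For the CIs in normal form, I would check each shape. CIs of the form $\top \sqsubseteq A$, $A_1 \sqcap A_2 \sqsubseteq A$, and $\exists R.A_1 \sqsubseteq A_2$ are handled by the saturation rules {\bf R1}--{\bf R2} on the chase part, but I must also confirm they hold at trace elements $t\rho M$, where concept membership is governed by $M$ and role successors by $\rho$; here I would use that the conditions $c \rightsquigarrow^\rho_{\Dmc,\Omc} M$ and $M_i \rightsquigarrow^\rho_\Omc M_{i+1}$ demand \emph{maximality}, so any concept forced by $\Omc$ is already present in $M$. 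The existential CIs $A_1 \sqsubseteq \exists R.A_2$ are exactly what the traces provide: whenever $A_1$ holds at an element but no suitable $R$-successor exists in the chase, the definition of $\rightsquigarrow$ (via $q^1_{\rho,M}$) forces a trace extension $t\rho M$ supplying $R(t,t\rho M)$ and $A_2(t\rho M)$. Finally, for functionality assertions $\mn{func}(R) \in \Omc$, I would invoke conditions~3 and~8 in the construction, which are designed precisely to forbid a second $R$-successor: condition~3 blocks generating a trace successor along a functional role already saturated in the chase, and condition~8 blocks two consecutive functional steps along $R^-$ then $R$. This is the most delicate bookkeeping in Part~1.

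\paragraph{Part 2.}
Next I would show that any model \Imc of \Dmc and \Omc admits a homomorphism $h$ from $\Umc_{\Dmc,\Omc}$ that is the identity on $\mn{adom}(\Dmc)$. The natural approach is to build $h$ by induction on the length of traces. On $\mn{adom}(\Dmc)$, set $h$ to the identity; since \Imc satisfies \Omc and \Dmc, and the chase only derives consequences entailed by \Omc, the chase part maps correctly. For a trace element $t\rho M$ with predecessor mapped to some $d = h(t) \in \Delta^\Imc$, I would use that $\Dmc_{M'},\Omc \models q^1_{\rho,M}$ (where $M'$ is the concept type at the predecessor) together with the fact that \Imc realizes this type at $d$; since \Imc is a model of \Omc, it must contain a witness $d'$ for the existential query $q^1_{\rho,M}$, and I set $h(t\rho M) = d'$. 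Verifying that $h$ respects all role and concept atoms of $\Umc_{\Dmc,\Omc}$ then reduces to the entailment $\Dmc_{M'},\Omc \models q^1_{\rho,M}$ and the homomorphism property of query answers.

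\paragraph{Part 3.}
Finally, Part~3 follows formally from Parts~1 and~2 together with the definition of certain answers. For the inclusion $q(\Umc_{\Dmc,\Omc}) \cap \mn{adom}(\Dmc)^{|x|} \subseteq Q(\Dmc)$: given a homomorphism $h$ witnessing $\bar a \in q(\Umc_{\Dmc,\Omc})$ with $\bar a \subseteq \mn{adom}(\Dmc)$, and any model \Imc of \Dmc and \Omc, I compose $h$ with the homomorphism $g$ from Part~2 (which is the identity on $\mn{adom}(\Dmc)$); then $g \circ h$ witnesses $\bar a \in q(\Imc)$, and since \Imc was arbitrary, $\bar a \in Q(\Dmc)$. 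For the converse, if $\bar a \in Q(\Dmc)$ then $\bar a \in q(\Imc)$ for every model \Imc; applying this to the particular model $\Umc_{\Dmc,\Omc}$ from Part~1 gives $\bar a \in q(\Umc_{\Dmc,\Omc})$, and $\bar a \subseteq \mn{adom}(\Dmc)$ by assumption. I expect the main obstacle to lie in Part~1, specifically in the simultaneous verification that existential CIs are satisfied (requiring trace successors to exist) while functionality assertions are \emph{not} violated (requiring trace successors to be suppressed along functional roles); reconciling these through conditions~3 and~8 and the maximality requirements is where the construction's subtlety is concentrated.
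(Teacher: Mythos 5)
Your proposal is correct and follows essentially the same route as the paper: Part~1 by a case analysis on the normal-form CI shapes using the chase rules on $\mn{ch}_\Omc(\Dmc)$ and the maximality conditions plus Conditions~3 and~8 on traces (with rule {\bf R4} absorbing existentials along functional roles that already have a database successor), Part~2 by the identity on the chase part followed by an induction on trace length using the entailments $\Dmc_{M},\Omc \models q^1_{\rho,M'}$ to pick witnesses in \Imc, and Part~3 by composing homomorphisms. The only cosmetic difference is that the paper factors the trace-level induction of Part~2 into a separate, slightly more general lemma (Lemma~\ref{lem:newintermediate}) for reuse elsewhere, whereas you carry it out inline.
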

To prove Lemma~\ref{lem:univbasic}, let us first give two
technical lemmas that will also be useful later on.
\begin{lemma}
  \label{lem:sim} Let $\Omc$ be an \ELIF ontology in normal form,
  $\Dmc_1,\Dmc_2$ be databases that are satisfiable w.r.t.\ \Omc, and
  $c_i \in \mn{adom}(\Dmc_i)$ for $i \in \{1,2\}$. Then
  $(\Dmc_1,c_1) \preceq (\Dmc_2,c_2)$ and
  $\Dmc_1, \Omc \models A(c_1)$ implies $\Dmc_2, \Omc \models A(c_2)$
  for every concept name $A$.
\end{lemma}
%
%It is enough to show that if $(\Dmc_1,c_1) \preceq (\Dmc_2,c_2)$
% then for every concept $C$ in \Omc,
% $C(c_1) \in \Umc_{\Dmc_1, \Omc}$ implies $C(c_2) \in \Umc_{\Dmc_2, \Omc}$.
%
% By construction of the universal model, since $c_1 \in \dom(\Dmc_1)$, then either $C(c_1) \in \Dmc_1$
% or $C(c_1)$ was added by the chase. Thus, $C(c_1) \in \mn{ch}_{\Omc}(\Dmc_1)$.
%
\begin{proof}
Consider the chase procedure used as the first part of the
construction of universal models.
Started on $\Dmc_1$, it constructs a finite sequence
$$\Dmc_1 = I_0 \subseteq I_1 \subseteq \cdots \subseteq I_k =
\mn{ch}_\Omc(\Dmc_1).$$
Since $(\Dmc_1,c_1) \preceq (\Dmc_2,c_2)$, there is a simulation $S$
from $\Dmc_1$ to $\Dmc_2$ with $(c_1,c_2) \in S$. We prove by induction
on $i$ that
\begin{itemize}
    \item [$(\dagger)$] %for $0 \leq i \leq k$, $S_i \subseteq
    % \dom(I_i) \times \dom(J_i)$
    $S$
    is a simulation from $I_i$ to $\mn{ch}_\Omc(\Dmc_2)$.
\end{itemize}
The induction start is trivial. For the induction step assume that we
have already shown that $S$ is a simulation from $I_i$ to
$\mn{ch}_\Omc(\Dmc_2)$. We need to verify that Conditions \textbf{Sim1}
and \textbf{Sim2} are also satisfied for $S$ as a simulation from
$I_{i+1}$ to $\mn{ch}_\Omc(\Dmc_2)$.

Recall that $I_{i+1}$ was obtained from $I_i$ by application of one of
the rules \textbf{R1} to \textbf{R4} from Section~\ref{sect:univmod}.
Since \Omc is formulated in \ELIF (and thus contains no role
inclusions) %, however,
rule \textbf{R3} cannot be used. %This also means that
Similarly,
no new binary facts $R(c,c')$ are added by the rules and, thus,
\textbf{Sim2} holds by inductive assumption.

For \textbf{Sim1}, let $(c,c') \in S$ and $A(c) \in I_{i+1}$.  If
$A(c) \in I_i$ then $A(c') \in \Umc_{\Dmc_2, \Omc}$ by the inductive
assumption.  Otherwise, $A(c)$ was added by one of the above rules.
Thus, $A(c)$ is a consequence of some facts
$S = \{A_1(c), \dots, A_\ell(c), R_1(c,c_1), \dots, R_k(c,c_k)\}
\subseteq I_i$ and the ontology, i.e. $S, \Omc \models A(c)$.  By
inductive assumption, there are constants $c'_i$, for $1\leq j \leq k$
such that
$S'= \{A_1(c'), \dots, A_\ell(c), R_1(c,c'_1), \dots, R_k(c,c'_k)\}
\subseteq \Umc_{\Dmc_2, \Omc}$.  Clearly, we have that
$S', \Omc \models A(c')$.  Since $\Umc_{\Dmc_2, \Omc}$ is a model of
$\Omc$, we infer that $A(c') \in \Umc_{\Dmc_2, \Omc}$.  This finishes
the proof of~($\dagger$).

\smallskip Now assume that $\Dmc_1, \Omc \models A(c_1)$.  Then
Point~3 of Lemma~\ref{lem:univbasic} yields
$A(c_1) \in \Umc_{\Dmc_1,\Omc}$. Since the construction of
$\Umc_{\Dmc_1,\Omc}$ from $\mn{ch}_\Omc(\Dmc_1)$ adds no
facts $A(c)$ with $c \in \mn{adom}(\Dmc_1)$, this yields $A(c_1) \in
\mn{ch}_\Omc(\Dmc_1)$. Now ($\dagger$) yields $A(c_2) \in
\mn{ch}_\Omc(\Dmc_2)$, thus $A(c_2) \in \Umc_{\Dmc_2,\Omc}$,
implying $\Dmc_2, \Omc \models A(c_2)$.
\end{proof}
For
$c \in \dom(\Dmc)$, we use $\Umc_{\Dmc, \Omc}^{\downarrow c}$ to
denote the restriction of $\Umc_{\Dmc, \Omc}$ to the domain that
consists of all traces starting with $c$. Note that, by construction
of $\Umc_{\Dmc, \Omc}$, $\Umc_{\Dmc, \Omc}^{\downarrow c}$ takes the
shape of a tree without multi-edges and reflexive loops. We also refer
to $\Umc_{\Dmc, \Omc}^{\downarrow c}$ as \emph{the tree in
      $\Umc_{\Dmc, \Omc}$ rooted in $c$}.
\begin{lemma}
  \label{lem:newintermediate}
  Let \Omc be an \ourEL-ontology and let $\Dmc_1,\Dmc_2$ be databases
  that are satisfiable w.r.t.\ \Omc and $c_1,c_2$ be constants such that
  $\Dmc_1, \Omc \models A(c_1)$ implies $\Dmc_2, \Omc \models A(c_2)$
  for every concept name $A$.  Further, let \Jmc be a model of $\Dmc_2$
  and~\Omc.  Then %, $(\Dmc_1,c_1) \preceq (\Dmc_2,c_2)$ implies that
  there is a homomorphism $h$ from $\Umc^{\downarrow c_1}_{\Dmc_1,\Omc}$
  to \Jmc with $h(c_1)=c_2$.
\end{lemma}
\begin{proof}
  % We will build the homomorphism inductively with respect to the the
  % length $n$ of the traces in the tree
  % $\Umc^{\downarrow c_1}_{\Dmc_1,\Omc}$.
%
  For brevity, let $\Imc = \Umc^{\downarrow c_1}_{\Dmc_1,\Omc}$.
  Let $\Tbf_n$, $n \geq 0$, be the set of traces in $\Imc$
  of length $n$. In particular $\Tbf_0 = \{c_1\}$.

  For all $n \geq 0$ we construct a homomorphism $h_n$ from
  $\Imc|_{\Tbf_n}$ to $\Jmc$ such that $h_n(c_1) =c_2$. The desired
  homomorphism $h$ is then obtained in the limit, that is, $h=
  \bigcup_{n \geq 0} h_n$.

  The homomorphism $h_0$ is simply the identity on $c_1$.  Since
  $\Dmc_1, \Omc \models A(c_1)$ implies $\Dmc_2, \Omc \models A(c_2)$
  for every concept name $A$, $h_0$ is a homomorphism.

  For the inductive step,
  let us assume that we have already defined $h_n$.
  We show how to use it to define $h_{n+1}$.

  We first take $h_{n+1}= h_n$. Now, let $w\rho M \in \Tbf_{n+1} \setminus \Tbf_n$ be a trace in the next layer of the tree.
  Since $w \in \Tbf_n$, $h_{n+1}(w)$ is already defined.
  Let $N = \{A \mid A(w) \in \Imc\}$ and $N' = \{A \mid A(w) \in \Jmc\}$. Since $h_n$
  is a homomorphism, $N \subseteq N'$.
  By construction of the universal model $\Umc_{\Dmc_1,\Omc}$ we have
  $M = \{A \mid A(w\rho M) \in \Imc \}$.
  Moreover, $D_N, \Omc \models q^1_{\rho,M}(w)$.
  Since $\Jmc$ is a model of $\Dmc_2$ and $\Omc$, and $N \subseteq N'$, we infer that
  $D_{N'}, \Omc \models q^1_{\rho,M}(h(w))$.
  Therefore, there is a constant $d \in \dom(\Jmc)$
  such that $\Jmc, \Omc \models q^2_{\rho,M}(h(w), d)$. We set
  \begin{itemize}
          \item[($\dagger$)] $h_{n+1}(w\rho M) = d$.
      \end{itemize}

  We now show that $h_{n+1}$ is a homomorphism from $\Imc|_{\Tbf_{n+1}}$ to $\Jmc$.
  For $w, w' \in \Tbf_n$,  we have $h_{n+1}(w) = h_n(w)$ and $h_{n+1}(w') = h_n(w')$.
  Thus, by inductive assumption, $A(w) \in \Imc$ implies $A(h_{n+1}(w)) \in \Jmc$
  and $R(w,w') \in \Imc$ implies $R(h_{n+1}(w), h_{n+1}(w')) \in \Jmc$.

  Thus, let $w\in \Tbf_{n+1} \setminus \Tbf_n$. By definition of $\Umc_{\Dmc_1, \Omc}$ there is a unique $w' \in \Tbf_n$ such that $w = w'\rho M$
  where $M = \{A \mid A(w) \in \Umc_{\Dmc_1, \Omc}\}$, and $\rho = \{R \mid R(w',w) \in \Umc_{\Dmc_1, \Omc}\}$.
  Moreover, since $\Imc$ is a tree, we have that for all $w'' \in \Tbf_{n+1}$ if $R(w'',w)$ for some role name $R$
  then $w''=w'$.

  Now, if $A(w) \in \Imc$ then $A \in M$ and $A(h_{n+1}(w)) \in \Imc$ by the choice ($\dagger$) of $h_{n+1}(w)$.
  Similarly, if $R(w,w'') \in \Imc$ for some $w'' \in \Tbf_{n+1}$ then $w''=w'$ and $R(h_{n+1}(w),h_{n+1}(w'')) \in \Imc$
  by, again, the choice ($\dagger$) of $h_{n+1}(w)$.
%
  % This ends the proof that $h_{n+1}$ is a homomorphism from
  % $\Imc|_{\Tbf_{n+1}}$ to $\Jmc$.
%
  % Finally, we define $h$ as $h = \bigcup_{n\geq 0} h_{n}$. It is easy to check that $h$ is
  % the desired homomorphism from $\Umc^{\downarrow c_1}_{\Dmc_1,\Omc}$ to $\Jmc$.
%
\end{proof}
We now give the proof of Lemma~\ref{lem:univbasic}.
\begin{proof}[Proof of Lemma~\ref{lem:univbasic}]
  We start with Point~1. It is clear by construction that
  $\Umc_{\Dmc, \Omc}$ is a model of \Dmc.
  %
  % It is straightforward
  % to show the following, essentially by induction on the number
  % of rule applications used to construct $\mn{ch}_\Omc(\Dmc)$:
  % %
  % \begin{itemize}
      % \item[(i)] $c \in A^{\mn{ch}_\Omc(\Dmc)}$ implies
      %   $\Dmc,\Omc \models A(c)$;

      % \item[(ii)] $(c,c') \in R^{\mn{ch}_\Omc(\Dmc)}$ implies
      %   $\Dmc,\Omc \models R(c,c')$.

      % \end{itemize}
  % %
  % Note that we might replace the precondition with
  % $c \in A^{\Umc_{\Dmc,\Omc}}$ in (i) since due to Rule~{\bf R1} we
  % have $c \in A^{\mn{ch}_\Omc(\Dmc)}$ iff
  % $c \in A^{\Umc_{\Dmc,\Omc}}$.  Likewise, the precondition in
  % (ii)~might be replaced with $(c,c') \in R^{\Umc_{\Dmc,\Omc}}$.

  %  We next note that for all
  % $c,c' \in \mn{adom}(\Dmc)$, concept names $A$ and roles $R$,
  % %
  % \begin{itemize}
      % \item[(i)] $c \in A^{\Umc_{\Dmc, \Omc}}$ iff
      %   $\Dmc,\Omc \models A(c)$;
    %
      % \item[(ii)] $(c,c') \in R^{\Umc_{\Dmc, \Omc}}$ iff
      %   $\Dmc,\Omc \models R(c,c')$.
    %
      % \end{itemize}
  % %
%
  We next show that $\Umc_{\Dmc, \Omc}$ also satisfies all CIs in
  \Omc, making a case distinction according to the form of the CI:
  \begin{itemize}

      \item $\top \sqsubseteq A \in \Omc$.

        For all $c \in \mn{adom}(\Dmc)$, Rule {\bf R1} ensures that
        $A(c) \in \mn{ch}_\Omc(\Dmc) \subseteq \Umc_{\Dmc, \Omc}$.  For
        traces $t=t'\rho M$, by construction of $\Umc_{\Dmc, \Omc}$ it
        suffices to show that $A \in M$. But this is a consequence of the
        definition of traces, in particular of Conditions~2 and~5 above
        which from now on we refer to as the `maximality conditions'.

      \item $A_1 \sqcap A_2 \sqsubseteq A \in \Omc$.

        Assume that
        $t \in A_1^{\Umc_{\Dmc, \Omc}} \cap A_2^{\Umc_{\Dmc, \Omc}}$. If
        $t=c \in \mn{adom}(\Dmc)$, then
        $t \in A_1^{\mn{ch}_\Omc(\Dmc)} \cap A_2^{\mn{ch}_\Omc(\Dmc)}$ and
        rule {\bf R1} ensures that $A(c) \in \Umc_{\Dmc,\Omc}$. If
        $t=c \rho M$ or $t=t' \rho M$, then
        $t \in A_1^{\Umc_{\Dmc, \Omc}} \cap A_2^{\Umc_{\Dmc, \Omc}}$
        implies $A_1,A_2 \in M$, thus also $A \in M$ by the maximality
        conditions, which yields $t \in A^{\Umc_{\Dmc, \Omc}}$ by
        construction of $\Umc_{\Dmc, \Omc}$.

      \item $A_1 \sqsubseteq \exists R . A_2 \in \Omc$.

        Let $t \in A_1^{\Umc_{\Dmc, \Omc}}$. First assume that
        $t=c \in \mn{adom}(\Dmc)$.  Further assume that
        \begin{itemize}

            \item[($*$)] there are $S(c,c_2) \in \Dmc$ and $\rho,M$ such
              that \mbox{$\mn{func}(S) \in \Omc$},
              $\Dmc_{M_{c}},\Omc \models q^1_{\rho,M}(c)$,
            $S,R \in \rho$, and $A_2 \in M$.

            % there is some $S(c,c') \in\Dmc$ with
            % $R \sqsubseteq^*_\Omc S$ and $\mn{func}(S) \in \Omc$.
          \end{itemize}
      Then Rule {\bf R4} makes sure that $R(c,c_2)$ and $A_2(c_2)$ are in
      $\Umc_{\Dmc,\Omc}$, and thus
      $c \in (\exists R . A_2)^{\Umc_{\Dmc,\Omc}}$ as required.  Now
      assume that $(*)$ does not hold. Clearly
      $t \in A_1^{\Umc_{\Dmc, \Omc}}$ implies
      $A_1(t) \in \mn{ch}_\Omc(\Dmc)$.  Thus $A_1 \in M_c$ and
      consequently there must be some set of roles $\rho$ and of concept
      names $M$ such that $R \in \rho$, $A_2 \in M$, and
      $\Dmc_{M_c},\Omc \models q^1_{\rho,M}(c)$. We may assume w.l.o.g.\
      that $\rho$ and $M$ are maximal with this property. Thus
      Conditions~1 and~2 of $c \rightsquigarrow^\rho_{\Dmc,\Omc} M$ are
      satisfied, and it is not hard to show that Condition~3 must also be
      satisfied. If, in fact, Condition~3 is violated then there is an
      $R'(c,c') \in \mn{ch}_\Omc(\Dmc)$ with $R \in \rho$ and
      $\mn{func}(R) \in \Omc$. But then ($*$) is satisfied for $\rho,M$
      and \mbox{$S=R$}.

        \smallskip

        Now assume that $t=c \rho M$.  Then $A_1 \in M$ and
        $\Dmc_{M_c},\Omc \models q^1_{\rho,M}(c)$. If
        $\Dmc_{M_c} \models A_2(c)$ and $R^- \in \rho$, then an easy
        semantic argument shows $\Dmc_{M_c},\Omc \models A_2(c)$ and thus
        rule {\bf R1} ensures that $A_2(c) \in \Umc_{\Dmc,\Omc}$ and the
        construction of $\Umc_{\Dmc,\Omc}$ yields
        $(t,c) \in R^{\Umc_{\Dmc, \Omc}}$, thus we are done. Now assume
        that $\Dmc_{M_c} \not\models A_2(c)$ or $R^- \notin \rho$. Since
        $A_1 \in M$, there must be some set of roles $\rho'$ and of
        concept names $M'$ such that $R \in \rho$, $A_2 \in M$, and
        $\Dmc_{M},\Omc \models q^1_{\rho',M'}(\widehat c)$.  We may assume
        w.l.o.g.\ that $\rho'$ and $M'$ are maximal with this
        property. Thus $M \rightsquigarrow^{\rho'}_{\Dmc,\Omc} M'$. Using
        the fact that
        $\Dmc_{M_c},\Omc \not\models A_2(c)$ or $R^- \notin \rho$ one
        can
        prove that Condition~8 of traces is satisfied for $t'=t \rho' M'$,
        based on a straightforward model-theoretic argument. Since all
        other conditions of traces are trivially satisfied for $t'$, we
        obtain $t' \in \Tbf$ and the construction of $\Umc_{\Dmc, \Omc}$
        yields $(t,t') \in R^{\Umc_{\Dmc, \Omc}}$ and
        $t' \in A_2^{\Umc_{\Dmc, \Omc}}$.

        \smallskip

        Now assume that $t=t' \rho M$ with $t' = t'' \rho' M'$.  Then
        $A_1 \in M$.  If $A_2 \in M'$ and $R^- \in \rho$, then
        $(t,t') \in R^{\Umc_{\Dmc, \Omc}}$ and
        $t' \in A_2^{\Umc_{\Dmc, \Omc}}$ and we are done. Now assume that
        $A_2 \notin M'$ or $R^- \notin \rho$. Since $A_1 \in M$, there
        must be some set of roles $\widehat \rho$ and of concept names
        $\widehat M$ such that $R \in \widehat \rho$,
        $A_2 \in \widehat M$, and
        $\Dmc_{M},\Omc \models q^1_{\widehat \rho,\widehat M}(\widehat
        c)$.  We may assume w.l.o.g.\ that $\widehat \rho$ and
        $\widehat M$ are maximal with this property. Thus
        $M \rightsquigarrow^{\widehat \rho}_{\Dmc,\Omc} \widehat M$. The
        fact that $A_2 \notin M'$ or $R^- \notin \rho$
        can be used to prove model-theoretically that Condition~8 of
        traces is satisfied for $\widehat t=t \widehat\rho \widehat
        M$. Since all other conditions of traces are trivially satisfied
        for $\widehat t$, we obtain $\widehat t \in \Tbf$ and the
        construction of $\Umc_{\Dmc, \Omc}$ yields
        $(t,\widehat t) \in R^{\Umc_{\Dmc, \Omc}}$ and
        $\widehat t \in A_2^{\Umc_{\Dmc, \Omc}}$.

      \item $\exists R . A_1 \sqsubseteq A_2 \in \Omc$.  Let
        $(t,t') \in R^{\Umc_{\Dmc, \Omc}}$ and
        $t' \in A_1^{\Umc_{\Dmc, \Omc}}$. We distinguish several cases.

        \smallskip \emph{$t,t'$ are both from $\mn{adom}(\Dmc)$}, then
        Rule~{\bf R2} yields $t \in A_2^{\Umc_{\Dmc, \Omc}}$, as required.

        \smallskip \emph{$t$ is from $\mn{adom}(\Dmc)$ and
              $t' = t \rho M$.} Then $A_1 \in M$, $R \in \rho$, and
        $\Dmc_{M_c},\Omc \models q^1_{\rho,M} (t')$. Consequently,
        $\Dmc_{M_c},\Omc \models A_2(t)$, implying $t \in A_2^{\Umc_{\Dmc, \Omc}}$.

        \smallskip \emph{$t'$ is from $\mn{adom}(\Dmc)$ and
              $t= t' \rho M$.}  Then $\Dmc,\Omc \models A_1(t')$,
        $R^- \in \rho$, and $\rho$ and $M$ are maximal with
        $\Dmc,\Omc \models q^1_{\rho,M} (t)$.  Thus clearly $A_2 \in M$,
        implying $t \in A_2^{\Umc_{\Dmc, \Omc}}$ due to Rule ${\bf R1}$.

        \smallskip \emph{$t,t'$ are both not from $\mn{adom}(\Dmc)$ and
              $t'=t \rho' M'$.} Let $t=c \rho M$ or $t=M_0 \rho M$. Then
        $A_1 \in M'$, $R \in \rho'$, and
        $\Dmc_{M},\Omc \models q^1_{\rho',M'} (\widehat c)$. It is easy to
        see that this implies
    $\Dmc_M,\Omc \models A_2(\widehat c)$.  Since $M$ is maximal
        with $\Dmc_{M_c},\Omc \models q^1_{\rho,M}(c)$ or
        $\Dmc_{M_0} \models q^1_{\rho,M}(\widehat c)$, it follows that
        $A_2 \in M$. This implies that $t \in A_2^{\Umc_{\Dmc, \Omc}}$.

        \smallskip \emph{$t,t'$ are both not from $\mn{adom}(\Dmc)$ and
              $t = t' \rho M$.} Let $t'=c \rho' M'$ or $t'=M_0 \rho' M'$.
        Then $A_1 \in M'$, $R^- \in \rho$, and $M$ is maximal with
        $\Dmc_{M'},\Omc \models q^1_{\rho,M} (\widehat c)$. Consequently,

        $A_2 \in M$, implying $t \in A_2^{\Umc_{\Dmc, \Omc}}$.

      \end{itemize}
  We next show that the RIs in \Omc are satisfied. Thus let
  $R \sqsubseteq S \in \Omc$. If $R(c,c') \in \Umc_{\Dmc, \Omc}$ with
  $c,c' \in \mn{adom}(\Dmc)$, then $R(c,c') \in \mn{ch}_\Omc(\Dmc)$
  and Rule~{\bf R3} yields $(c,c') \in S^{\Umc_{\Dmc, \Omc}}$. It
  remains to deal with the cases
  $(c,c{\rho}M) \in R^{\Umc_{\Dmc, \Omc}}$,
  $(t,t{\rho}M) \in R^{\Umc_{\Dmc, \Omc}}$, and
  $(t{\rho}M,t) \in R^{\Umc_{\Dmc, \Omc}}$. In the first two cases, we
  must have $R \in \rho$, implying $S \in \rho$ by the maximality
  conditions, and thus also $(c,c{\rho}M) \in S^{\Umc_{\Dmc, \Omc}}$
  and $(t,t{\rho}M) \in S^{\Umc_{\Dmc, \Omc}}$, respectively. In the
  last case, we must have $R^- \in \rho$. Clearly
  $R\sqsubseteq S \in \Omc$ implies
  $\Omc \models R^- \sqsubseteq S^-$. We can thus argue similarly that
  $(t{\rho}M,t) \in S^{\Umc_{\Dmc, \Omc}}$.

  It remains to show that all functionality assertions are
  satisfied. We first note that it is straightforward to show the
  following, essentially by induction on the number of rule
  applications used to construct $\mn{ch}_\Omc(\Dmc)$:
  \begin{itemize}
      \item[(i)] $c \in A^{\mn{ch}_\Omc(\Dmc)}$ implies
        $\Dmc,\Omc \models A(c)$;

      \item[(ii)] $(c,c') \in R^{\mn{ch}_\Omc(\Dmc)}$ implies
        $\Dmc,\Omc \models R(c,c')$.

      \end{itemize}
  We may replace the precondition in (ii) with the equivalent
  condition $(c,c') \in R^{\Umc_{\Dmc,\Omc}}$ (and likewise for (i),
  but this is not going to be important).

  Assume to the contrary of what we want to show that there are
  $R(t,t_1),R(t,t_2) \in \Umc_{\Dmc,\Omc}$ with $\mn{func}(R) \in \Omc$ and
  \mbox{$t_1 \neq t_2$}.  We distinguish several cases:

  \smallskip
  $t, t_1, t_2$ are all from $\mn{adom}(\Dmc)$. It then follows
  from (ii) (with precondition $(c,c') \in R^{\Umc_{\Dmc,\Omc}}$)
  that \Dmc is not satisfiable w.r.t.\ \Omc, a
  contradiction.

  \smallskip $t, t_1$ are from $\mn{adom}(\Dmc)$ and
  $t_2=t\rho_2M_2$. Then
  \mbox{$t \rightsquigarrow^{\rho_2}_{\Dmc,\Omc} M_2$} and
  $R \in \rho_2$. Since $R(t,t_1) \in \Umc_{\Dmc,\Omc}$, also
  $R(t,t_1) \in \mn{ch}_\Omc(\Dmc)$.  Together
  with \mbox{$\mn{func}(R) \in \Omc$}, we thus obtain a contradiction
  against Condition~3 of
  $t \rightsquigarrow^{\rho_2}_{\Dmc,\Omc} M_2$.

  \smallskip $t_1=t\rho_1M_1$ , and $t_2=t\rho_2M_2$.  Then
  $t \rightsquigarrow^{\rho_1}_{\Dmc,\Omc} M_1$ and
  $t \rightsquigarrow^{\rho_2}_{\Dmc,\Omc} M_2$. Moreover,
  $R \in \rho_1 \cap \rho_2$. The maximality conditions thus yield
  $\rho_1 = \rho_2$ and $M_1 = M_2$, in contradiction to
  $t_1 \neq t_2$.

  % \smallskip $t \notin \mn{adom}(\Dmc)$, $t_1=t\rho_1M_1$ , and $t_2=t\rho_2M_2$.
  % As previous case, but using maximality condition~5.

  \smallskip $t=t_1\rho_1M_1$ , and $t_2=t\rho_2M_2$. Then
  $R^- \in \rho_1$ and $R \in \rho_2$, contradicting Condition~8
  of paths for $t_2$.

  %   \smallskip $t_1 \in \mn{adom}(\Dmc)$, $t=t_1\rho_1M_1$ , and
  % $t_2=t\rho_2M_2$. Then
  % $t_1 \rightsquigarrow^{\rho_1}_{\Dmc,\Omc} M_1$ and
  % $t \rightsquigarrow^{\rho_2}_{\Dmc,\Omc} M_2$. Moreover,
  % $R^- \in \rho_1$ and $R \in \rho_2$. An easy semantic argumenbt and
  % maximality condition~2 then yield $M_2 \subseteq M_1$.

  \medskip
  For Point~2, let \Imc be a model of \Dmc and \Omc.
%  Let $\Jmc = \Umc_{\Dmc, \Omc}$.
  We construct a homomorphism from $\Umc_{\Dmc, \Omc}$ to \Imc that is
  the identity on $\mn{adom}(\Dmc)$ in two steps.

  We first show that $\mn{ch}_{\Omc}(\Dmc) \subseteq \Imc$.  This
  gives us a homomorphism $h_0$ from
  $\Umc_{\Dmc, \Omc}|_{\mn{adom}(\Dmc)} = \mn{ch}_{\Omc}(\Dmc)$
  to \Imc that is the identity on $\mn{adom}(\Dmc)$. Then, we extend $h_0$
  to a homomorphism from $\Umc_{\Dmc, \Omc}$ to \Imc using
  Lemma~\ref{lem:newintermediate}.

  Let $\Dmc = \Nmc_0, \Nmc_1, \dots, \Nmc_\ell = \mn{ch}_{\Omc}(\Dmc)$
  be the databases obtained by applying rules
  \textbf{R1}\=/\textbf{R4} in the construction of the chase.  To show
  that $\mn{ch}_{\Omc}(\Dmc) \subseteq \Imc$, it suffices to show that
  $\Nmc_i \subseteq \Imc$ for all $0 \leq i \leq \ell$.  The base
  case, $i=0$, is trivial as $\Imc$ is a model of $\Dmc= \Nmc_0$.

  For the inductive step, let us assume that $\Nmc_i \subseteq \Imc$, $i < \ell$. We show that $\Nmc_{i+1} \subseteq \Imc$.
  By definition, $\Nmc_{i+1}$ was constructed from $\Nmc_i$ by applying one of the rules \textbf{R1}\=/\textbf{R4}.

  Let $\textbf{R1}$ be the applied rule. It has added a single new fact $A(c)$.
  Since the rule has been used, there are facts $A_1(c),\dots,A_n(c) \in \Nmc_i$.
  By the inductive assumption we have that $A_1(c),\dots,A_n(c) \in \Imc$.
  Since $\Imc$ is a model of $\Omc$ and $\Omc \models A_1 \sqcap \cdots \sqcap A_n \sqsubseteq A$
  we infer that $A(c) \in \Imc$. This implies that $\Nmc_{i+1} = \Nmc_i \cup \{A(c)\} \subseteq \Imc$
  and ends the inductive step for rule \textbf{R1}.
%
  % The inductive step for the remaining rules is proven in the same manner.
%
  % We have thus shown that for all $0 \leq i \leq \ell$ we have that $\Nmc_i \subseteq \Imc$.
  % Since $\Nmc_\ell = \mn{ch}_{\Omc}(\Dmc)$ we infer that $\mn{ch}_{\Omc}(\Dmc) \subseteq \Imc$
  % which
 This ends the first step of the proof. Let $h_0$ be the identity homomorphism from $\mn{ch}_{\Omc}(\Dmc)$
  to $\Imc$.

\smallskip

For the second step note that $\mn{ch}_\Omc(\Dmc) \subseteq \Imc$
implies that $(\mn{ch}_\Omc(\Dmc),c) \preceq (\Imc,c)$ for every
$c \in \dom(\Dmc)$. It thus follows from Lemmas~\ref{lem:sim}
and~\ref{lem:newintermediate} that there is a homomorphism $h_c$ from
$\Umc_{\Dmc, \Omc}^{\downarrow c}$ to $\Imc$.  We show that
$h = h_0 \cup \bigcup_{c \in \dom(\Dmc)} h_c$ is the desired
homomorphism from $\Umc_{\Dmc, \Omc}$ to \Imc.

  The function $h$ is well defined as $h_0$ is identity on $\mn{adom}(\Dmc)$ and for all
  $c \in \mn{adom}(\Dmc)$ we have that $h_c(c)=c$. This also implies that
  $h$ is identity on $\mn{adom}(\Dmc)$.

  To show that $h$ is a homomorphism observe the following.
  For concept names, if $A(d) \in \Umc_{\Dmc, \Omc}$ then $h(d) = h_0(d)$
  or $h(d) = h_c(d)$ for some $c \in \dom(\Dmc)$.
  Thus, $A(h(c)) \in \Imc$ as $h_0$ and $h_c$ are homomorphisms.

  Similarly, for role names, if $R(c,d)$ is a fact in $\Umc_{\Dmc, \Omc}$
  then either $c,d \in \dom(\Dmc)$ and $h(c)=h_0(c),h(d)=h_0(d)$
  or there is $c' \in \dom(\Dmc)$ such that $c$ and $d$ are traces in $\Umc_{\Dmc, \Omc}^{\downarrow c}$.
  In the former case we have that $R(h(c), h(d)) \in \Imc$ as $h_0$ is a homomorphism from $\mn{ch}_\Omc(\Dmc)$ to $\Imc$,
  and in the latter we have that $R(h(c), h(d)) \in \Imc$ as $h_c$ is a homomorphism from $\Umc_{\Dmc, \Omc}^{\downarrow c}$ to $\Imc$.

  \medskip Finally, Point~3 is a consequence of Point~2 and the
  definitions of $Q(\Dmc)$ and $q(\Umc_{\Dmc,\Omc})$.
\end{proof}

\subsection{Query-Directed Universal Models}
\label{sect:univext}

Let $Q=(\Omc,\Sigma,q) \in (\ourEL,\text{CQ})$ with \Omc in normal
form, and let $n=|\mn{var}(q)|$. We use $\mn{cl}(Q)$ to denote the set of
all Boolean tree CQs that use at most $n$ variables, only symbols
from~$\mn{sig}(\Omc) \cup
\mn{sig}(q)$, % and {\color{orange} no constants},
and satisfy all functionality assertions in~\Omc. We may assume that
the CQs in $\mn{cl}(Q)$ use only variables from a fixed set of size $n$, and
thus $\mn{cl}(Q)$ is finite.

Let $\Dmc$ be a $\Sigma$-database that is satisfiable w.r.t.\ \Omc.
The \emph{$Q$-universal extension of \Dmc}, denoted $\Umc_{\Dmc,Q}$ is
constructed in three steps as follows.

\smallskip {\bf Step~1.} Start with $\mn{ch}_\Omc(\Dmc)$.

\smallskip {\bf Step~2.} Extend $\mn{ch}_\Omc(\Dmc)$ to the
restriction of $\Umc_{\Dmc,\Omc}$ to traces of length $n$.
Informally, this attaches trees of depth $n$ to constants
$c \in \mn{adom}(\Dmc)$. This can be done following exactly the
definition of traces, and it involves deciding questions of the form
$\Dmc_M,\Omc \models q^1_{\rho,M'}(c)$.

\smallskip {\bf Step~3.}  For every $c \in \mn{adom}(\Dmc)$, let again
$M_c$ denote the set of concept names $A$ with
$A(c) \in \mn{ch}_\Omc(\Dmc)$.  Then include, for every $p \in \mn{cl}(Q)$
such that $\Dmc_{M_c},\Omc \models p$ for some
$c \in \mn{adom}(\Dmc)$, a copy of $\Dmc_p$ that uses only fresh
constants.

\smallskip
It should be clear that $\Umc_{\Dmc,Q}$ satisfies all functionality
assertions in \Omc.
\begin{lemma}
  \label{lem:pseudochase}
  Let $Q(\bar x)=(\Omc,\Sigma,q) \in (\mathcal{ELIHF},\text{CQ})$ and
  let $\Dmc$ be a $\Sigma$-database. Then
  $Q(\Dmc)=q(\Umc_{\Dmc,Q}) \cap \mn{adom}(\Dmc)^{|\bar x|}$.  What is
  more, for every $Q'(\bar x')=(\Omc,\Sigma,q')$ where
  $|\mn{var}(q')| \leq |\mn{var}(q)|$, we have
  $Q'(\Dmc)=q'(\Umc_{\Dmc,Q}) \cap \mn{adom}(\Dmc)^{|\bar x|}$.
\end{lemma}
\begin{proof}
  % Let $Q(\bar x)=(\Omc,\Sigma,q)$, $D$, $q'$, and $\bar c$ be as in
  % Lemma~\ref{lem:pseudochase}. We have to show that $D \cup \Omc
  % \models q'(\bar c)$ iff $\bar c \in q'(\mn{ch}^q_\Omc(D))$.
  It clearly suffices to prove the `what is more' part of the lemma.

  \medskip

  The `$\supseteq$' direction is simple. Assume that
  $\bar c \in q'(\Umc_{\Dmc,Q})  \cap \mn{adom}(\Dmc)^{|\bar x'|}$.
  Using the construction of $\Umc_{\Dmc,Q}$, it is not hard to see
  that there is a homomorphism from $\Umc_{\Dmc,Q}$ to
  $\Umc_{\Dmc,\Omc}$ that is the identity on $\mn{adom}(\Dmc)$. Thus
  $\bar c \in q'(\Umc_{\Dmc,\Omc})$ and it follows from Point~3 of
  Lemma~\ref{lem:univbasic} that $\bar c \in Q'(\Dmc)$.

  \smallskip For the `$\subseteq$' direction, assume that
  \mbox{$\bar c \in Q'(\Dmc)$}. Then there is a homomorphism
  $h$ from $q'$ to $\Umc_{\Dmc,\Omc}$ such that
  \mbox{$h(\bar x')=\bar c$}. Let $q_1,\dots,q_k$ be the maximal
  connected components of $q'$ such that $h(y) \notin \mn{dom}(\Dmc)$
  for all variables $y$ that occur in them. Note that by definition,
  $q_1,\dots,q_k$ are Boolean.  Further let $q_0$ be the disjoint
  union of all remaining maximal connected components of $q'$.

  Let $n=|\mn{var}(q)|$ and let $\Umc_{\Dmc,\Omc,n}$ denote the
  restriction of $\Umc_{\Dmc,\Omc}$ to traces of length at most
  $n$. It is not hard to verify that
  $h(y) \in \mn{adom}(\Umc_{\Dmc,\Omc,n})$ for all
  $y \in \mn{var}(q_0)$, and thus the restriction $h'$ of $h$ to the
  variables in $q_0$ is a homomorphism from $q_0$ to
  $\Umc_{\Dmc,\Omc,n} \subseteq \Umc_{\Dmc,Q}$. To show that
  $\bar c \in q'(\Umc_{\Dmc,Q}) \cap \mn{adom}(\Dmc)^{|\bar x|}$, it
  remains to extend $h'$ to $q_1,\dots,q_k$.

  Let $1 \leq i \leq k$ and let $q'_i$ be the CQ obtained from $q_i$
  by identifying variables $x,y \in \mn{var}(q)$ if
  $h(x)=h(y)$. Clearly, it suffices to show that we can extend $h'$ to
  $q'_i$ in place of $q_i$. Since $h(y) \notin \mn{adom}(\Dmc)$ for
  all $y \in \mn{var}(q_i)$ and due to the shape of
  $\Umc_{\Dmc,\Omc}$, the CQ $q'_i$ must be a tree CQ. Consequently,
  it is in $\mn{cl}(Q)$.

  Since the restriction $h_i$ of $h$ to the variables in $q'_i$ is a
  homomorphism from $q'_i$ to $\Umc_{\Dmc,\Omc}$ that does not have
  any constants from $\mn{adom}(\Dmc)$ in its range, there is an
  $e_i \in \mn{adom}(\Dmc)$ such that the range of $h_i$ consists only
  of traces that start with~$e_i$. The construction of
  $\Umc_{\Dmc,\Omc}$ then yields $\Dmc_{M_{e_i}},\Omc \models
  q_i()$. Consequently, a copy $\Dmc'_{q_i}$ of $\Dmc_{q_i}$ was added
  during the construction of $\Umc_{\Dmc,Q}$.  It is thus easy to
  extend $h'$ to $q'_i$, as desired.
  % now easy to construct a
  % homomorphism $h'$ from $q$ to $\Umc_{\Dmc,Q}$ with
  % $h'(\bar x)=\bar c$, showing that
  % $\bar c \in q(\Umc_{\Dmc,Q}) \cap \mn{adom}(\Dmc)^{|\bar x|}$ as
  % required.
\end{proof}

We now show that the $Q$-universal database extension $\Umc_{\Dmc,Q}$
is also universal for answers with wildcards when we view all elements
of $\mn{adom}(\Umc_{\Dmc,Q}) \setminus \mn{adom}(\Dmc)$ as nulls.
\begin{restatable}{lemma}{proprestrictedchaseworks}
  \label{prop:restrictedchaseworks}
%  Let $q(\bar x)$ be a CQ with at most $n$ variables and
 % $Q=(\Omc,\Sigma,q)$.  Then
 % $Q(\Dmc)=q(\Umc_{\Omc,n}) \cap \mn{adom}(\Dmc)^{|\bar x|}$,
  ~\\[-4mm]
  \begin{enumerate}
      \item
      $Q(\Dmc)^{\ast}= q(\Umc_{\Dmc,Q})^{\ast}_\Nbf$ and
      \item
      $Q(\Dmc)^{\Wmc}= q(\Umc_{\Dmc,Q})^{\Wmc}_\Nbf$.
      \end{enumerate}
\end{restatable}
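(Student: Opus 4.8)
The plan is to treat the single\=/wildcard part~(1) and the multi\=/wildcard part~(2) uniformly. Write $N=\mn{adom}(\Umc_{\Dmc,Q})\setminus\mn{adom}(\Dmc)$ for the nulls, let $\mathcal{P}$ be the set of all (single\- resp.\ multi\-)wildcard tuples that are partial answers to $Q$ on $\Dmc$, and let $\mathcal{P}'=\{\bar a^\ast_\Nbf\mid \bar a\in q(\Umc_{\Dmc,Q})\}$ (resp.\ with $\Wmc$ in place of $\ast$). By definition $Q(\Dmc)^\ast=\min_{\preceq}\mathcal{P}$ and $q(\Umc_{\Dmc,Q})^\ast_\Nbf=\min_{\preceq}\mathcal{P}'$, and likewise in the multi\-wildcard case. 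Since $\preceq$ is a partial order, it suffices to establish (a) $\mathcal{P}'\subseteq\mathcal{P}$ and (b) every $\bar c\in\mathcal{P}$ admits some $\bar p\in\mathcal{P}'$ with $\bar p\preceq\bar c$. A short order\-theoretic argument then gives $\min_{\preceq}\mathcal{P}=\min_{\preceq}\mathcal{P}'$: if $\bar c$ is minimal in $\mathcal{P}$, then (b) yields a $\preceq$\-smaller $\bar p\in\mathcal{P}'\subseteq\mathcal{P}$, forcing $\bar p=\bar c\in\mathcal{P}'$, and minimality in $\mathcal{P}'$ is inherited; the converse inclusion is symmetric, using (b) once more to rule out a strictly smaller element. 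This reduction is verbatim the same for single and multiple wildcards.

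For (a), I would fix $\bar a\in q(\Umc_{\Dmc,Q})$ witnessed by a homomorphism $f\colon q\to\Umc_{\Dmc,Q}$ with $f(\bar x)=\bar a$, and let $\Imc$ be any model of $\Dmc$ and $\Omc$. The proof of Lemma~\ref{lem:pseudochase} supplies a homomorphism $g\colon\Umc_{\Dmc,Q}\to\Umc_{\Dmc,\Omc}$ that is the identity on $\mn{adom}(\Dmc)$, and Point~2 of Lemma~\ref{lem:univbasic} supplies a homomorphism $h\colon\Umc_{\Dmc,\Omc}\to\Imc$ that is the identity on $\mn{adom}(\Dmc)$. Then $h\circ g\circ f$ is a homomorphism from $q$ to $\Imc$, so $\bar c'':=(h\circ g)(\bar a)\in q(\Imc)$. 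Since $h\circ g$ fixes $\mn{adom}(\Dmc)$, we get $c''_i=a_i$ whenever $a_i\in\mn{adom}(\Dmc)$, while the corresponding position of $\bar a^\ast_\Nbf$ (resp.\ $\bar a^\Wmc_\Nbf$) carries a wildcard exactly when $a_i\in N$. A direct check of the clauses of $\preceq$ — using for the multi\-wildcard clause that equal wildcards in $\bar a^\Wmc_\Nbf$ arise from equal nulls, which $h\circ g$ sends to equal images — yields $\bar c''\preceq\bar a^\ast_\Nbf$ (resp.\ $\bar c''\preceq\bar a^\Wmc_\Nbf$), so $\bar a^\ast_\Nbf\in\mathcal{P}$ and (a) holds.

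For (b), I would fix $\bar c\in\mathcal{P}$ and instantiate the definition of partial answer at the model $\Imc=\Umc_{\Dmc,\Omc}$, obtaining $\bar c'\in q(\Umc_{\Dmc,\Omc})$ with $\bar c'\preceq\bar c$. Replacing the non\-$\Dmc$ entries of $\bar c'$ by wildcards gives $(\bar c')^\ast_\Nbf$ (resp.\ $(\bar c')^\Wmc_\Nbf$), which one checks still satisfies $(\bar c')^\ast_\Nbf\preceq\bar c$. It then remains to prove the key claim $(\star)$: for every $\bar c'\in q(\Umc_{\Dmc,\Omc})$ there is $\bar a\in q(\Umc_{\Dmc,Q})$ with $\bar a^\ast_\Nbf=(\bar c')^\ast_\Nbf$ (resp.\ $\bar a^\Wmc_\Nbf=(\bar c')^\Wmc_\Nbf$); then $\bar p:=\bar a^\ast_\Nbf\in\mathcal{P}'$ satisfies $\bar p\preceq\bar c$, as required. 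To prove $(\star)$ I would re\-run the decomposition from the proof of Lemma~\ref{lem:pseudochase}: given a witness $f\colon q\to\Umc_{\Dmc,\Omc}$, split $q$ into the union $q_0$ of the components having a variable mapped into $\mn{adom}(\Dmc)$ and the components mapped entirely outside. On $q_0$ the image lies in $\Umc_{\Dmc,\Omc,n}\subseteq\Umc_{\Dmc,Q}$, so I keep $f$ unchanged there; on the remaining components I re\-map into the fresh copies of tree CQs from $\mn{cl}(Q)$ added to $\Umc_{\Dmc,Q}$.

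The main obstacle — and the only place where the multi\-wildcard case is more than bookkeeping — is to carry out this re\-mapping so that the coreference pattern of $f$ on the answer variables is preserved \emph{exactly}, which is what forces $\bar a^\Wmc_\Nbf=(\bar c')^\Wmc_\Nbf$ rather than merely $\bar a^\Wmc_\Nbf\preceq(\bar c')^\Wmc_\Nbf$. The subtlety is that two distinct components may be sent by $f$ to overlapping subtrees of $\Umc_{\Dmc,\Omc}$ (i.e.\ may share an image), so naively mapping each to its own fresh copy would split a wildcard that ought to be shared. I would resolve this by grouping the outside\-mapped components under the relation ``share an $f$\-image'' and gluing each group, along the variables with equal images, into a single connected CQ $\hat q$; since its image is a rooted subtree of $\Umc_{\Dmc,\Omc}$, $\hat q$ is a tree CQ on at most $n$ variables satisfying all functionality assertions, hence $\hat q\in\mn{cl}(Q)$, and the argument of Lemma~\ref{lem:pseudochase} shows that a fresh copy of $\Dmc_{\hat q}$ is present in $\Umc_{\Dmc,Q}$. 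Mapping all of $\hat q$ canonically into that one copy then identifies two answer variables precisely when $f$ does, so the resulting $\bar a$ has exactly the same constant\-and\-coreference pattern as $\bar c'$, giving $(\star)$. For single wildcards this refinement is unnecessary, since coreference is forgotten, and the plain component decomposition of Lemma~\ref{lem:pseudochase} already delivers $\bar a^\ast_\Nbf=(\bar c')^\ast_\Nbf$.
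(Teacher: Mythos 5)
Your reduction to the two conditions (a) $\mathcal{P}'\subseteq\mathcal{P}$ and (b) domination is sound, and your proof of (a) by composing a witnessing homomorphism with $\Umc_{\Dmc,Q}\to\Umc_{\Dmc,\Omc}\to\Imc$ (both the identity on $\mn{adom}(\Dmc)$) is correct and in fact more direct than the paper's route for that direction. The genuine gap is in the transfer claim $(\star)$. Your re-mapping keeps $f$ unchanged on the components of $q$ that touch $\mn{adom}(\Dmc)$ and glues only the \emph{outside-mapped} components before sending them to fresh copies of tree CQs from $\mn{cl}(Q)$. This loses coreference in the following situation: an answer variable $x_i$ in a component touching $\mn{adom}(\Dmc)$ and an answer variable $x_j$ in a component mapped entirely outside may satisfy $f(x_i)=f(x_j)=t$ for a trace $t$ of positive depth; your construction sends $x_i$ to $t$ but $x_j$ to a fresh constant in a copy of $\Dmc_{\hat q}$. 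Then $\bar a$ assigns distinct nulls to positions $i,j$ where $\bar c'$ assigned equal ones, so $\bar a^\Wmc_\Nbf$ carries distinct wildcards at $i$ and $j$. If the given partial answer $\bar c$ has the same wildcard at $i$ and $j$ (consistent with, indeed forced by, $\bar c'\preceq\bar c$ whenever $c_i=c_j$), then condition~2 of $\preceq$ fails and $\bar a^\Wmc_\Nbf\not\preceq\bar c$ --- so not only the exact equality in $(\star)$ but also the domination property (b) breaks. (By contrast, the \emph{extra} coreference caused by two unrelated outside groups landing in the single copy added in Step~3 is harmless, since adding coreference only strengthens $\preceq$.)

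The fix is to quotient the \emph{whole} query by the kernel of $f$ before decomposing into connected components: a component of the quotient that touches $\mn{adom}(\Dmc)$ then has all images at depth below $|\mn{var}(q)|$ and stays inside $\Umc_{\Dmc,\Omc,n}\subseteq\Umc_{\Dmc,Q}$, the remaining components are tree CQs in $\mn{cl}(Q)$, and no coreference can cross components after quotienting. The paper avoids this bookkeeping altogether by encoding the coreference pattern syntactically: it merges all answer variables carrying the same wildcard into a single quantified variable, obtaining a query $q'$ with $|\mn{var}(q')|\le|\mn{var}(q)|$, and then invokes the ``what is more'' part of Lemma~\ref{lem:pseudochase} for $Q'=(\Omc,\Sigma,q')$; the shared variable $z_j$ forces any homomorphism into $\Umc_{\Dmc,Q}$ to respect coreference automatically. (The paper also only establishes its analogue of your (b) for \emph{minimal} partial answers, using minimality to push the $z_j$ outside $\mn{adom}(\Dmc)$, rather than your stronger domination statement.)
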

\begin{proof}
  We only consider Point~2. The proof of Point~1 is similar. When we
  speak of partial answers, we thus generally mean partial answers
  with multi-wildcards. We first observe that it suffices to prove the
  following:
  \begin{itemize}

      \item[(a)] every minimal partial answer to $Q$
        on \Dmc is a partial answer to $q$ on $\Umc_{\Dmc,Q}$;

      \item[(b)] every partial answer to $q$ on $\Umc_{\Dmc,Q}$
        if a partial answer to $Q$ on \Dmc.

      \end{itemize}
  For assume that (a) and (b) have been shown. First let
  $\bar a^\Wmc \in Q(\Dmc)^\Wmc$. By~(a), $\bar a^\Wmc$ is a partial
  answer to $q$ on $\Umc_{\Dmc,Q}$. Assume to the contrary of what we
  have to show that $\bar a^\Wmc$ is not minimal, that is, there is a
  partial answer $\bar b^\Wmc$ to $q$ on $\Umc_{\Dmc,Q}$ with
  $\bar b^\Wmc \prec \bar a^\Wmc$.  By~(b), $\bar b^\Wmc$ is
  a partial answer to $Q$ on \Dmc, contradicting minimality of $\bar
  a^\Wmc$.

  Conversely, let $\bar a^\Wmc \in q(\Umc_{\Dmc,Q})^{\Wmc}_\Nbf$.
  By~(b), $\bar a^\Wmc$ is a partial answer to $Q$ on \Dmc. Assume to
  the contrary of what we have to show that $\bar a^\Wmc$ is not
  minimal, that is, there is a partial answer $\bar b^\Wmc$ to $Q$ on
  $\Dmc$ with $\bar b^\Wmc \prec \bar a^\Wmc$. We may assume w.l.o.g.\
  that $\bar b^\Wmc$ is minimal. Thus~(a) implies that   $\bar b^\Wmc$ is
  a partial answer to $q$ on $\Umc_{\Dmc,Q}$, contradicting minimality of $\bar
  a^\Wmc$.

  \medskip

  It thus remains to prove (a) and (b). We start with the former.
  Assume that $\bar a^\Wmc \in Q(\Dmc)^\Wmc$.  Let
  $\bar x = x_1,\dots,x_n$, let $\bar a^\Wmc = a_1,\dots,a_n$, and let
  the wildcards from \Wmc that occur in $\bar a^\Wmc$ be
  $\ast_1,\dots,\ast_\ell$. Consider the CQ $q'(\bar x')$ obtained
  from $q$ in the following way:
  \begin{itemize}

      \item introduce fresh quantified variables $z_1,\dots,z_\ell$;

      \item if $a_i=\ast_j$, then replace in $q'$ the answer variable
        $x_i$ with quantified variable $z_j$.

      \end{itemize}
  Further let $\bar a$ be obtained from $\bar a^\Wmc$ by removing all
  wildcards.

  By a simple semantic argument, it follows from the fact that
  $\bar a^\Wmc$ is a partial answer to $Q$ on \Dmc that
  $\bar a \in Q'(\Dmc)$ where $Q'(\bar x')=(\Omc,\Sigma,q')$. Note
  that $|\mn{var}(q')| \leq |\mn{var}(q)|$. Applying Lemma~\ref{lem:pseudochase} thus
  yields
  $\bar a \in q'(\Umc_{\Dmc,Q}) \cap \mn{adom}(\Dmc)^{|\bar x'|}$.
  Consequently, there is a homomorphism $h'$ from $q'$ to
  $\Umc_{\Dmc,Q}$ such that $h'(\bar x')=\bar a$. We observe the
  following:
  \begin{description}

      \item[($*$)] $h'(z_i) \notin \mn{adom}(\Dmc)$ for $1 \leq i \leq \ell$.

      \end{description}
  For assume to the contrary that there is a $z_p$ with
  $h'(z_p) \in \mn{adom}(\Dmc)$. % Let $d_1,\dots,d_g$ be all
  % elements of $\mn{adom}(\Umc_{\Dmc,Q}) \setminus  \mn{adom}(\Dmc)$
  % that are in the range of $h$.
  Consider
  \begin{itemize}

      \item the CQ $q''(\bar x'')$ obtained from $q$ in exactly the same
        way as $q'$ except that if $a_i=\ast_p$, then $x_i$ is \emph{not}
        replaced with~$z_p$,

      \item the tuple $\bar b^\Wmc$ obtained from $\bar a^\Wmc$ in the
        following way: if $a_i = \ast_p$, % , and thus $x_i$ was replaced with
        % $z_p$ during the construction of $q'$
        then replace $a_i$ with
        $h'(z_p)$, and

      \item the tuple $\bar b$ obtained from $\bar b^\Wmc$ by removing
        all wildcards.

      \end{itemize}
  The homomorphism $h'$ witnesses that
  $\bar b \in q''(\Umc_{\Dmc,Q}) \cap \mn{adom}(\Dmc)^{|\bar x''|}$.
  From Lemma~\ref{lem:pseudochase} we thus obtain
  $\bar b \in Q''(\Dmc)$ where $Q''(\bar x'')=(\Omc,\Sigma,q'')$.
  An easy semantic argument shows that, consequently, $\bar b^\Wmc$
  is a partial answer to $Q$ on \Dmc, contradicting the minimality of
  $\bar a^\Wmc$.
%
  % the tuple
  % $\bar b^\Wmc \in (\mn{adom}(\Dmc) \cup \Wmc)^{|\bar x|}$ obtained
  % from $\bar a^\Wmc$ in the following way: if $a_i = \ast_j$, and thus
  % $x_i$ was replaced with $z_j$ during the construction of $q'$, then:
  % %
  % \begin{itemize}
    %
      % \item if $h'(z_j) \in \mn{adom}(\Dmc)$, then replace $a_i$ with
      %   $h'(z_j)$;
    %
      % \item if $h'(z_j) = d_f \notin \mn{adom}(\Dmc)$, then replace $a_i$
      %   with $\ast_f$.
    %
      % \end{itemize}
  % %
  % Finally, rename the wildcards so that they occur in $\bar b^\Wmc$ in
  % ascending order from left to right. It is easy to see that
  % $\bar b^\Wmc \preceq \bar a^\Wmc$ and in fact,
  % $h'(z_p) \in \mn{adom}(\Dmc)$ and the construction of $\bar b^\Wmc$
  % implies $\bar b^\Wmc \prec \bar a^\Wmc$. Moreover, we may use
  % $h'$ to show that $\bar b^\Wmc$ it can be
  % This contradicts the
  % minimality of $\bar a^\Wmc$ and
  Thus ($*$) is shown.

  \smallskip We may obtain from $h'$ a homomorphism $h$ from $q$ to
  $\Umc_{\Dmc,Q}$ by setting $h(x_i)=z_j$ whenever $x_i$ was replaced
  by $z_j$ in the construction of $q'$. Using $h$ and ($*$), it is now
  easy to show that $\bar a^\Wmc$ is a partial answer to $q$ on
  $\Umc_{\Dmc,Q}$, as required.

  \medskip For~(b), let $\bar a^\Wmc$ be a partial answer to $q$ on
  $\Umc_{\Dmc,\Omc}$. Let $q'$ and $\bar a$ be constructed exactly as
  above. Then
  $\bar a \in q'(\Umc_{\Dmc,Q}) \cap \mn{adom}(\Dmc)^{|\bar x'|}$.
  Lemma~\ref{lem:pseudochase} yields $\bar a \in Q'(\Dmc)$ where again
  $Q'(\bar x')=(\Omc,\Sigma,q')$. An easy semantic argument shows
  that, consequently, $\bar a^\Wmc$ is a partial answer to $Q$ on \Dmc,
  as required.
\end{proof}

\subsection{Query-Directed Universal Models in Linear Time}

\begin{restatable}{proposition}{propchaseinlineartime}
  \label{prop:chaseinlineartime} Let
  $Q=(\Omc,\Sigma,q) \in (\ourEL,\text{CQ})$ and
  $\Dmc$ a
  $\Sigma$-database that is satisfiable w.r.t.\ \Omc. Then the
  $Q$-universal extension
  $\Umc_{\Dmc,Q}$ can be computed in time linear in
  $||\Dmc||$, more precisely in time
$2^{\mn{poly}(||Q||)}\cdot
||\Dmc||$.
%$2^{O(||\Omc||^{n})} \cdot ||\Dmc||$.
\end{restatable}
To prove Proposition~\ref{prop:chaseinlineartime}, we make use of the
fact that minimal models for propositional Horn formulas can be
computed in linear time~\cite{dowling-gallier-horn}. More precisely,
we derive a satisfiable propositional Horn formula $\theta$ from
$\Omc$ and \Dmc, compute a minimal model of $\theta$ in linear
time, and read off $\mn{ch}_\Omc(\Dmc)$ from that model. It then
remains to construct $\Umc_{\Dmc,Q}$ from $\mn{ch}_\Omc(\Dmc)$,
exactly as described in the previous section.

% Let $\Qmc(\Omc,n)$ be the class of CQs defined
% during the definition of $\Umc_{\Dmc,\Omc,n}$.
An \emph{edge} in \Dmc is a pair $(c_1,c_2)$ such that
$R(c_1,c_2) \in \Dmc$ for some role $R$. Note that if $(c_1,c_2)$ is
an edge in \Dmc, them so is $(c_2,c_1)$. We introduce the following
propositional variables:
\begin{itemize}

% \item $x_{p()}$ for every Boolean $p \in \Qmc(\Omc,n)$;

% \item $x_{p(c)}$ for every unary $p \in \Qmc(\Omc,n)$ and every
%   $c \in \mn{adom}(\Dmc)$;

% \item $x_{R(c,c')}$ for every edge $\{c,c'\}$  in \Dmc and every role
%   $R$ in $\Sigma \cup \mn{sig}(\Omc)$.

%  \item $x_{p()}$ for every Boolean $p \in \Qmc(\Omc,n)$;

\item $x_{A(c)}$ for every concept name $A \in \mn{sig}(\Omc)$ and
  every $c \in \mn{adom}(\Dmc)$;

\item $x_{r(c,c')}$ for every edge $(c,c')$  in \Dmc and every role
  name $r \in \mn{sig}(\Omc)$.

\end{itemize}
We may write $x_{r^-(c',c)}$ on place of $x_{r(c,c')}$.  Clearly, the
number of introduced variables is bounded by $||\Dmc||\cdot||\Omc||$.
% Recall that
% the CQs in $\Qmc(\Omc,n)$ are tree CQs of degree at most $||\Omc||$
% and depth at most $n$ that use only symbols from~\Omc. There are at
% most $2^{O(||\Omc||^{n})}$ such CQs, and thus the cardinality of
% $\Qmc(\Omc,n)$ is bounded by this number. It follows that the number
% of variables introduced is bounded by
% $2^{O(||\Omc||^{n})} \cdot ||\Dmc||$.

The propositional Horn formula $\theta$ consists of the following conjuncts:
\begin{enumerate}

\item $x_{A(c)}$ for every $A(c) \in \Dmc$ and $x_{r(c,c')}$ for every
  \mbox{$r(c,c') \in \Dmc$};

\item
  $x_{A_1(c)} \wedge \cdots \wedge x_{A_n(c)} \rightarrow x_{A(c)}$
  for all $c \in \mn{adom}(\Dmc)$ and all concept names $A_1,\dots,A_n,A$
  such that $\Omc \models A_1 \sqcap \cdots \sqcap A_n \sqsubseteq A$;

\item $x_{A(c_1)} \wedge x_{R(c_2,c_1)} \rightarrow x_{B(c_2)}$
  for all edges $(c_1,c_2)$ of \Dmc and all
  $\exists R . A \sqsubseteq B \in \Omc$;

\item $x_{R(c_1,c_2)} \rightarrow x_{S(c_1,c_2)}$ for all edges
$(c_1,c_2)$ of \Dmc and all $R \sqsubseteq S \in \Omc$;

\item $x_{A(c_1)} \wedge x_{R(c_1,c_2)} \rightarrow x_{S(c_1,c_2)}$
  and
  $x_{A(c_1)} \wedge x_{R(c_1,c_2)} \rightarrow x_{B(c_2)}$
  for all $A \sqsubseteq \exists S . B \in \Omc$ such that
  $S \sqsubseteq_\Omc^* R$ and $\mn{func}(R) \in \Omc$.

% \item $\bigwedge_{B(e) \in \Dmc'} x_{B(e)} \rightarrow A(c)$ for all
%   databases $\Dmc'$, concept names $A$, and constants
%   $c \in \mn{adom}(\Dmc')$ such that
%   %
%   \begin{enumerate}

    %   \item  $\Dmc'$ uses only symbols from
    %     \Omc,

    %   \item $|\mn{adom}(\Dmc')| \leq 2$, and

    %   \item $\Dmc',\Omc \models A(c)$.
    %   \end{enumerate}

% \item {\color{blue}also need to compute roles!!}

% \item
%   $\bigwedge_{S(\bar d) \in D'} x_{S(\bar d)} \rightarrow x_{p(\bar
    %     c)}$ for every $\mn{sig}(\Omc)$-database $D'$, every CQ
%   $p(\bar y) \in \mn{cl}(Q)$, and every $\bar c \in
%   \mn{adom}(D')^{|\bar y|}$ such that $D' \cup \Omc \models p(\bar c)$
%   and
%   $\mn{adom}(D')$ is a guarded set $S$ in $D$.

\end{enumerate}
The size of $\theta$ is bounded by
$2^{O(||\Omc||)} \cdot ||\Dmc||$ and $\theta$ can be constructed
in time \mbox{$2^{O(||\Omc||)} \cdot ||\Dmc||$}.

Since $\theta$ contains no negative literals, it is clearly
satisfiable and thus has a unique minimal model. Let $V$ be the truth
assignment that represents this minimal model. We construct the
database $\Dmc_\theta$ by including all $A(c)$ such that
$V(x_{A(c)})=1$ and all $r(c_1,c_2)$ such that
$V(x_{r(c_1,c_2)})=1$. It is clear that the construction of
$\Dmc_\theta$ is possible in time $2^{O(||\Omc||)} \cdot
||\Dmc||$. Moreover, it is easy to verify that
$\Dmc_\theta = \mn{ch}_\Omc(\Dmc)$ given that the implications in
$\theta$ exactly parallel the rules of the chase given in
Section~\ref{sect:univmod}.

We may now construct $\Umc_{\Dmc,Q}$ from $\mn{ch}_\Omc(\Dmc)$ exactly
as in the previous section, using Steps~2 and~3 described there. In
Step~2, we may iterate over all $c \in \mn{adom}(\Dmc)$ and then add
the fragment of $\Umc_{\Dmc,\Omc}$ that consists of traces of length
at most $n$ starting with $c$, following exactly the definition of
such traces, that is, proceeding in the order of increasing trace
length. It is easy to see that the number of traces that start with
$c$ is bounded by $2^{\mn{poly}(||Q||)}$. This involves deciding
questions of the form $\Dmc_M,\Omc \models q^1_{\rho,M'}(c)$
which is possible in time $2^{\mn{poly}(||Q||)}$ \cite{DBLP:conf/jelia/EiterGOS08}.
Thus, Step~2 can be implemented in time $2^{\mn{poly}(||Q||)} \cdot ||\Dmc||$.

For Step~3, first note that the number of CQs in $\Qmc$ is bounded by
$2^{\mn{poly}(||Q||)}$. We may iterated over all $c \in \mn{adom}(D)$
and all $p \in \Qmc$ and check whether $\Dmc_{M_c},\Omc \models p$,
and if so, add a disjoint copy of $p$. Again the described check is
possible in time $2^{\mn{poly}(||Q||)}$
\cite{DBLP:conf/jelia/EiterGOS08} and the overall time needed is
$2^{\mn{poly}(||Q||)} \cdot ||\Dmc||$.

\section{Proofs for Section~\ref{sect:upperbounds}}

\lemsamehoms*
\begin{proof}
  % We first prove that $q(\Dmc_0)=q^+(\Dmc_0^+)|_{\bar x}$.
  % It suffices to show that every homomorphism from $q$ to
  % $\Dmc_0$ is also a homomorphism from $q^+$ to $\Dmc_0^+$
  % and vice versa.
%
  First assume that $h$ is a homomorphism from $q_0$ to $\Dmc_0$.  Let
  $R'(\bar y^+)$ be an atom in $q_0^+$, and let $R(\bar y)$ be the
  atom in $q_0$ that gave rise to it during the construction of
  $q_0^+$. The restriction $h|_{\bar y^+}$ of $h$ to the variables in
  $\bar y^+$ is a homomorphism from $q_0|_{\bar y^+}$ to $\Dmc_0$. By
  definition, $\Dmc^+_0$ thus contains the atom $R'(h(\bar y^+))$, as
  required.

  Conversely, assume that $h$ is a homomorphism from $q_0^+$ to
  $\Dmc_0^+$ and let $R(\bar y)$ be an atom in $q_0$. Then $q_0^+$
  contains a corresponding atom $R'(\bar y^+)$ and
  $R'(h(\bar y^+)) \in \Dmc_0^+$. By construction of $\Dmc^+_0$, there
  is a homomorphism $g$ from $q_0|_{\bar y^+}$ to~$\Dmc_0$ that is
  identical to the restriction of $h$ to the variables in $\bar y^+$.
  It follows that $R(h(\bar y)) \in \Dmc_0$, as required.

  \smallskip We sketch the computation of $||\Dmc_0^+||$ in linear
  time. We first construct a lookup table that allows us to find in
  constant time, given a $c \in \mn{adom}(\Dmc_0)$ and a role $R$ with
  $\mn{func}(R) \in \Omc$, the unique $c' \in \mn{adom}(\Dmc_0)$ with
  $R(c,c') \in \Dmc_0$ (if existant). This relies on the use of the
  RAM model and can be achieved by a single
  scan of $\Dmc_0$. We then iterate over every atom $R(\bar y)$ of
  $q_0$.  If $R(\bar y)$ is replaced with $R'(\bar y^+)$ in $q_0^+$, then
  we need to add to $\Dmc_0^+$ a fact $R'(h(\bar y^+))$ for every
  homomorphism $h$ from $q_0|_{\bar y^+}$ to $\Dmc_0$. To find these
  homomorphisms $h$, we consider all facts $R(\bar c) \in \Dmc_0$ as
  candidates for $R(h(\bar y))$. We can find them by a single scan of
  $\Dmc_0$. Every variable $z$ in $\bar y^+$ is reachable from a
  variable $y$ in $\bar y$ by a functional path in $q_0$. Following this
  functional paths in $\Dmc_0$, starting at $h(y)$ and using the
  lookup table, we can find the unique possible target $h(z)$ (if
  existant). It then remains to check whether the constructed $h$ is
  really a homomorphism. This is possible in constant time if we have
  previously created lookup tables that tell us for every $k$-ary
  $R \in \Sigma$ and $\bar c \in \mn{adom}(\Dmc_0)^k$ whether
  $R(\bar c) \in \Dmc_0$.
\end{proof}
We next establish Theorem~\ref{thm:mpaupper} for the multi-wildcard
case.  Let \Dmc be a database that may use nulls in place of
constants, and let $q(\bar x)$ be a CQ.  For an answer
$\bar a \in q(\Dmc)$, we use $\bar a^{\Wmc}_\Nbf$ to denote the
(unique) multi-wildcard tuple for $\Dmc$ obtained from $\bar a$ by
consistently replacing nulls from $\Nbf$ with wildcards using exactly
a prefix of the ordered set $\Wmc = \{ \ast_1, \ast_2,\dots \}$ and
respecting the order, that is, if the first occurrence of a null $c_1$
in $\bar a$ is before the first occurrence of a null $c_2$, $c_1$ is
replaced with $\ast_i$, and $c_2$ with $\ast_j$, then $i < j$.  With
$q(D)^{\Wmc}_N$. We call such an $\bar a^\Wmc_\Nbf$ a \emph{partial
      answer} to $q$ on~\Dmc (with multi-wildcards) and say that it is
\emph{minimal} if there is no $\bar a \in q(\Dmc)$ with
$\bar b^\Wmc_\Nbf \prec \bar a^\Wmc_\Nbf$.
We use $q(D)^{\Wmc}_N$ to denote the set of minimal partial answers
with multi-wildcards to $q$ on $D$.

As already mentioned, In the query-directed universal models
$\Umc_{\Dmc,Q}$ from Proposition~\ref{prop:japanesenoodles}, we may
view all elements of
$N= \mn{adom}(\Umc_{\Dmc,Q}) \setminus \mn{adom}(\Dmc)$ as nulls.  We
then have $Q(\Dmc)^\Wmc=q(\Umc_{\Dmc,Q})^\Wmc_\Nbf$.

To prove Theorem~\ref{thm:mpaupper} in the multi-wildcard case, we
replace Theorem~\ref{prop:enummultiwildcards} with the following
result, also from \cite{DBLP:journals/corr/abs-2203-09288} (Proposition~F.1).
\begin{theorem}
  \label{thm:arxivesecond}
  For every CQ $q(\bar x)$ that is acyclic and free\=/connex ayclic,
  enumerating the answers $q(\Dmc)^\Wmc_\Nbf$ is in \dlc for databases
  $\Dmc$ and sets of nulls $N \subseteq \mn{adom}(\Dmc)$ such that
  $\Dmc$ is chase-like with witness $\Dmc_{1},\dots,\Dmc_{n}$ where
  $|\mn{adom}(\Dmc_{i})|$ does not depend on $\Dmc$ for
  $1 \leq i \leq n$.
\end{theorem}
We now proceed as in the single-wildcard case. Let
$Q(\bar x)=(\Omc,\Sigma,q) \in (\ourEL,\text{CQ})$ with $q^+(\bar x)$
acyclic and free-connex acyclic, and let \Dmc be a $\Sigma$-database.
By Lemma~\ref{prop:restrictedchaseworks}, the query-directed universal
model $\Umc_{\Dmc,Q}$ is also universal for partial answers with
multi-wildcards in the sense that
$Q(\Dmc)^\Wmc=q(\Umc_{\Dmc,Q})^\Wmc_\Nbf$. We thus first replace \Dmc
with $\Umc_{\Dmc,Q}$, aiming to enumerate
$q(\Umc_{\Dmc,Q})^\Wmc_\Nbf$. We next replace $q(\bar x)$
with $q^+(\bar x)$ and $\Dmc_0=\Umc_{\Dmc,Q}$ with $\Dmc^+_0$.  It
follows from Lemma~\ref{lem:samehoms} that
$q(\Dmc_0)^\Wmc_\Nbf=q^+(\Dmc^+_0) ^\Wmc_\Nbf$ and thus we can apply
the algorithm promised by Theorem~\ref{thm:arxivesecond} to enumerate
$q^+(\Dmc^+_0) ^\Wmc_\Nbf$.

% The proof of this theorem is given (almost) completely in the main
% body of the paper. We only add here a remark on the discrepancy
% between our definition of FA-extension and the definition of
% FD-extension from \cite{carmeli-enum-func}. In fact, their definition
% of $q^+$ and ours are not identical. In a nutshell, the definition of
% $q^+$ in \cite{carmeli-enum-func} does not introduce fresh relation
% symbols, but

% This is related to a technical
% glitch in \cite{carmeli-enum-func} rectified in \cite{Erratum}.

% \subsection{Proof of Proposition~\ref{prop:CKwithwildcards}}

%\input{cq-fc}

\section{Proofs for Section~\ref{sect:lower}}

\newcommand{\DGamma}{\mn{adom}(\Dmc_0)}

We provide proofs for Section~\ref{sect:lower}, starting with several
preliminaries.

%%%%%%%%%%%%%Derivation

\subsection{Derivation completeness}
\label{app:sect:dercompl}

Recall that we have defined in the main part of the paper a database
$\Dmc_\omega$ and a finite subset $\Dmc_{\mn{tree}}$. We will
use the latter in all of our lower bound proofs. In this section, we
estbalish some fundamental properties related to it.

We say that a $\Sigma$-database \Dmc is \emph{derivation complete at}
$c \in \mn{adom}(\Dmc)$ if for every $\Sigma$-database $\Dmc'$ that is
satisfiable w.r.t.~\Omc and every $c' \in \mn{adom}(\Dmc')$, there is
a homomorphism $h$ from $\Umc^{\downarrow c'}_{\Dmc',\Omc}$ to
$\Umc_{\Dmc,\Omc}$ with $h(c')=c$. Informally, derivation completeness
at $c$ means that \Omc derives at $c$ anything that it could possibly
derive at any constant in any database. For a set
$\Gamma \subseteq \mn{adom}(\Dmc)$, we say that \Dmc is
\emph{derivation complete} at $\Gamma$ if it is derivation complete at
every $c \in \Gamma$.

%Let \Omc be an \ELIF-ontology in normal form and let $\Sigma$ be a
%signature.
%
% We say that a $\Sigma$-database \Dmc is \emph{derivation complete at}
% $c \in \mn{adom}(\Dmc)$ if for every $\Sigma$-database $\Dmc'$ that is
% satisfiable w.r.t.~\Omc and every $c' \in \mn{adom}(\Dmc')$, there is
% a homomorphism $h$ from $\Umc^{\downarrow c'}_{\Dmc',\Omc}$ to
% $\Umc_{\Dmc,\Omc}$ with $h(c')=c$. Informally, derivation
% completeness at $c$ means that \Omc derives at $c$ anything that it
% could possibly derive at any constant in any database. For a set
% $\Gamma \subseteq \mn{adom}(\Dmc)$, we say that \Dmc is
% \emph{derivation complete} at $\Gamma$ if it
% is derivation complete at every $c \in \Gamma$.
%

The databases $\Dmc_\omega$ and $\Dmc_{\mn{tree}}$ are both derivation
complete. Moreover, in all three reductions the construction of the
final databases \Dmc makes sure that for every
$c \in \mn{adom}(\Dmc_0)$ there is a simulation $S$ from
$\Dmc_{\mn{tree}}$ to \Dmc such that $(\varepsilon,c) \in S$. This
guarantees that $\Dmc$ is derivation complete at $c$.

To show all of this formally, let us define a weaker version of
derivation completeness that only pertains to concept names, c.f.\ the
construction of $\Dtree$.  A concept name~$A$ is \emph{non-empty} if
there is a $\Sigma$-database $\Dmc$ and a $c \in \mn{adom}(\Dmc)$ such
that $\Dmc,\Omc \models A(c)$.  Now, a $\Sigma$-database \Dmc is
\emph{derivation complete for concept names at}
$c \in \mn{adom}(\Dmc)$ if $\Dmc,\Omc \models A(c)$ for every
non-empty concept name $A$. % The following observation relies on the
% fact that \Omc is in normal form.

% \begin{lemma}
%   Let $\Dmc$ be a $\Sigma$-database and $c \in \mn{adom}(\Dmc)$.  If
%   $\Dmc$ is derivation complete for concept names at
%   $c$, then $\Dmc$ is derivation complete at $c$.
% \end{lemma}
% %
% \begin{proof}
%   Analyze definition of universal models; simple and short.
% \end{proof}

% We next provide a way to make a database derivation complete.
% Let \Rmc be the set of all roles $r,r^-$ with $r$ a role name from
% $\Sigma$.  The infinite tree-shaped $\Sigma$-database~$D_\omega$ has
% as its active domain $\mn{adom}(\Dmc_\omega)$ the set of all (finite)
% words over alphabet $\Rmc$ and contains the following facts:
% %
% \begin{itemize}

% \item $A(a)$ for all $w \in \mn{adom}(\Dmc_\omega)$ and concept
%   names $A \in \Sigma$;

% \item $r(w,w')$ for all $w,w' \in \mn{adom}(\Dmc_\omega)$ with $w'=wr$;

% \item $r(w',w)$ for all $w,w' \in \mn{adom}(\Dmc_\omega)$ with
%   $w'=wr^-$.

% \end{itemize}
%
\begin{lemma}
    \label{lem:derivation-complete-tree}
  $\Dmc_\omega$ is derivation complete for concept names
  at~$\varepsilon$.
\end{lemma}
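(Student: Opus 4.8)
The plan is to reduce the statement to the transfer of concept-name entailments along simulations, as captured by Lemma~\ref{lem:sim}, exploiting that $\Dmc_\omega$ is a maximally rich $\Sigma$-tree: every node carries \emph{every} concept name of $\Sigma$ and has, for \emph{every} role $R\in\Rmc_\Sigma$, an $R$-successor. Concretely, fix a non-empty concept name $A$, witnessed by a $\Sigma$-database $\Dmc$ and a $c\in\mn{adom}(\Dmc)$ with $\Dmc,\Omc\models A(c)$; the goal is $\Dmc_\omega,\Omc\models A(\varepsilon)$. The heart of the argument is that $\Dmc_\omega$ simulates $\Dmc$ from $\varepsilon$, i.e.\ $(\Dmc,c)\preceq(\Dmc_\omega,\varepsilon)$.

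For this I would simply take the full cross product $S=\mn{adom}(\Dmc)\times\mn{adom}(\Dmc_\omega)$ and verify it is a simulation from $\Dmc$ to $\Dmc_\omega$. Condition \textbf{Sim1} is immediate: if $A'(d)\in\Dmc$ then $A'\in\Sigma$ (as $\Dmc$ is a $\Sigma$-database), so $A'(w)\in\Dmc_\omega$ at every word $w$ by the first defining clause of $\Dmc_\omega$. For \textbf{Sim2}, let $R(d_1,d_2)\in\Dmc$ with $R\in\Rmc_\Sigma$, and let $(d_1,w)\in S$. If $R=r$ is a role name, then $w'=wr$ gives $r(w,w')\in\Dmc_\omega$; if $R=r^-$, then $w'=wr^-$ gives $r(w',w)\in\Dmc_\omega$, i.e.\ $r^-(w,w')\in\Dmc_\omega$. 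In either case $(d_2,w')\in S$ since $S$ is the full product, so \textbf{Sim2} holds. As $(c,\varepsilon)\in S$, we obtain $(\Dmc,c)\preceq(\Dmc_\omega,\varepsilon)$, and since \Omc is an \ELIF ontology (which we may assume in normal form), Lemma~\ref{lem:sim} together with $\Dmc,\Omc\models A(c)$ yields $\Dmc_\omega,\Omc\models A(\varepsilon)$, as required.

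The main obstacle is that Lemma~\ref{lem:sim} requires both databases to be satisfiable w.r.t.\ \Omc, whereas $\Dmc_\omega$ (and possibly the witness $\Dmc$) may be unsatisfiable when \Omc imposes functionality on a $\Sigma$-role. I would dispose of this by a case distinction. If $\Dmc_\omega$ is unsatisfiable w.r.t.\ \Omc, then $\Dmc_\omega,\Omc\models A(\varepsilon)$ holds vacuously for \emph{every} concept name $A$, and the lemma is trivially true. If $\Dmc_\omega$ is satisfiable, it remains to argue that $A$ admits a \emph{satisfiable} witness $\Dmc$: should every witness be unsatisfiable, there is an unsatisfiable $\Sigma$-database, and using the characterization of unsatisfiability in \ELIF under the standard names assumption (a named element forced to have two distinct named successors of a functional role) one sees that the corresponding clash is already realized inside $\Dmc_\omega$ via the cross-product simulation, contradicting satisfiability of $\Dmc_\omega$. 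Hence in the satisfiable case a satisfiable witness exists and the simulation argument applies. I expect verifying this satisfiability bookkeeping to be the only delicate point; the simulation construction itself is routine.
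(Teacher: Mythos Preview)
Your approach is the paper's approach: exhibit a simulation $(\Dmc,c)\preceq(\Dmc_\omega,\varepsilon)$ and invoke Lemma~\ref{lem:sim}. The paper's proof is three lines and neither spells out the simulation nor discusses the satisfiability hypothesis of Lemma~\ref{lem:sim}; your explicit cross-product construction and your case split on the satisfiability of $\Dmc_\omega$ are refinements the paper leaves implicit.

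One small correction to your bookkeeping in the ``$\Dmc_\omega$ satisfiable'' case: the cross-product simulation does \emph{not} transfer a functionality clash from $\Dmc$ to $\Dmc_\omega$, since two distinct $R$-successors in $\Dmc$ may be simulated by a single element of $\Dmc_\omega$. The argument you want is more direct: by construction, every node $wR^-$ of $\Dmc_\omega$ has the two distinct $R$-successors $w$ and $wR^-R$, so $\Dmc_\omega$ violates $\mn{func}(R)$ for \emph{every} role $R$ whose role name lies in $\Sigma$. Hence if $\Dmc_\omega$ is satisfiable w.r.t.\ \Omc, then \Omc contains no functionality assertion on a $\Sigma$-role, and in $\mathcal{ELIF}$ this already forces every $\Sigma$-database to be satisfiable; in particular your witness $\Dmc$ is satisfiable and Lemma~\ref{lem:sim} applies as written.
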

\begin{proof}
  Let $A$ be a non-empty concept name.  Then there is a
  $\Sigma$-database \Dmc and a $c \in \dom(\Dmc)$ with
  $\Dmc,\Omc \models A(c)$.  Clearly, we have
  $(\Dmc,c) \preceq (\Dmc_\omega,\varepsilon)$.  Hence,
  Lemma~\ref{lem:sim} implies
  $\Dmc_\omega,\Omc\models A(\varepsilon)$, as required.
\end{proof}
It follows that, by construction, $\Dtree$ is also derivation complete
for concept names. By the following lemma, $\Dtree$ is derivation
complete in general and what is more, simulation from
$(\Dtree,\varepsilon)$ guarantees derivation completeness.
\begin{lemma}
  \label{lem:treesim}
  Let $\Dmc$ be a $\Sigma$-database that is satisfiable w.r.t.\ \Omc
  and $c \in \mn{adom}(\Dmc)$ such that
  $(\Dtree,\varepsilon) \preceq (\Dmc,c)$.
  Then $\Dmc$ is derivation complete at $c$.
  % Let $\Dmc$ be a $\Sigma$-database that is satisfiable w.r.t.\ \Omc
  % and let $\Gamma \subseteq \mn{adom}(\Dmc)$ such that
  % $(\Dtree,\varepsilon) \preceq (\Dmc,c)$ for all $c \in \Gamma$.
  % Then $\Dmc$ is derivation complete at $\Gamma$.
\end{lemma}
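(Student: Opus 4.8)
The plan is to reduce the full derivation completeness of $\Dmc$ at $c$ to the already-established derivation completeness \emph{for concept names} of $\Dtree$ at $\varepsilon$: I would use the simulation $(\Dtree,\varepsilon)\preceq(\Dmc,c)$ to transport entailed concept names from $\varepsilon$ to $c$, and then invoke Lemma~\ref{lem:newintermediate} to upgrade this to a full homomorphism between universal models. Concretely, unfolding the definition, I fix an arbitrary $\Sigma$-database $\Dmc'$ that is satisfiable w.r.t.\ $\Omc$ together with an arbitrary $c'\in\mn{adom}(\Dmc')$, and aim to construct a homomorphism $h$ from $\Umc^{\downarrow c'}_{\Dmc',\Omc}$ to $\Umc_{\Dmc,\Omc}$ with $h(c')=c$.

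The heart of the argument is the claim that $\Dmc',\Omc\models A(c')$ implies $\Dmc,\Omc\models A(c)$ for every concept name $A$. I would prove this by the following chain: if $\Dmc',\Omc\models A(c')$, then $A$ is non-empty (witnessed by $\Dmc'$ and $c'$), so since $\Dtree$ is derivation complete for concept names at $\varepsilon$ we obtain $\Dtree,\Omc\models A(\varepsilon)$; feeding the hypothesis $(\Dtree,\varepsilon)\preceq(\Dmc,c)$ into Lemma~\ref{lem:sim} (instantiated with $\Dmc_1=\Dtree$, $c_1=\varepsilon$, $\Dmc_2=\Dmc$, $c_2=c$) then yields $\Dmc,\Omc\models A(c)$.

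With the claim established, I would apply Lemma~\ref{lem:newintermediate} with $\Dmc_1=\Dmc'$, $c_1=c'$, $\Dmc_2=\Dmc$, $c_2=c$, and $\Jmc=\Umc_{\Dmc,\Omc}$, noting that the latter is a model of $\Dmc$ and $\Omc$ by Point~1 of Lemma~\ref{lem:univbasic}. The hypothesis of that lemma is exactly the claim just proved, so it hands back the desired homomorphism $h$ from $\Umc^{\downarrow c'}_{\Dmc',\Omc}$ to $\Umc_{\Dmc,\Omc}$ with $h(c')=c$. Since $\Dmc'$ and $c'$ were arbitrary, this is precisely derivation completeness of $\Dmc$ at $c$.

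The one conceptual point worth isolating is that derivation completeness \emph{for concept names} at the root is already enough to feed Lemma~\ref{lem:newintermediate}: since $\Umc^{\downarrow c'}_{\Dmc',\Omc}$ is tree-shaped and every $\mathcal{ELI}$ existential is witnessed in any model of $\Omc$, dominating only the root's concept types suffices to map the whole tree. The main technical obstacle I expect is checking the side conditions of Lemma~\ref{lem:sim}, in particular that $\Dtree$ is satisfiable w.r.t.\ $\Omc$ (so that the lemma applies with $\Dtree$ as source); this should follow from $\Dtree$ being the initial depth-$k$ piece of $\Dmc_\omega$, and is implicitly needed already in the proof of Lemma~\ref{lem:derivation-complete-tree}.
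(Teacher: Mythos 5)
Your proposal is correct and follows essentially the same route as the paper's own proof: establish that every concept name entailed at $c'$ in $\Dmc'$ is non-empty, hence entailed at $\varepsilon$ in $\Dtree$ by derivation completeness for concept names, transfer it to $c$ in $\Dmc$ via the simulation and Lemma~\ref{lem:sim}, and then invoke Lemma~\ref{lem:newintermediate} to obtain the homomorphism from $\Umc^{\downarrow c'}_{\Dmc',\Omc}$ into $\Umc_{\Dmc,\Omc}$. Your closing remark about checking satisfiability of $\Dtree$ w.r.t.\ $\Omc$ is a side condition the paper leaves implicit, but it does not change the argument.
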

\begin{proof}
  Let $\Dmc$ and $c$ be as in the lemma. Further let $\Dmc'$ be any
  $\Sigma$-database and $c' \in \mn{adom}(\Dmc')$. We have to show
  that there is a homomorphism $h$ from
  $\Umc^{\downarrow c'}_{\Dmc',\Omc}$ to $\Umc_{\Dmc,\Omc}$ with
  \mbox{$h(c')=c$}. From the fact that $\Dmc_{\mn{tree}}$ is
  derivation complete for concept names at $\varepsilon$, it follows
  that $\Dmc',\Omc \models A(c')$ implies
  $\Dmc_{\mn{tree}},\Omc \models A(\varepsilon)$. From
  $(\Dtree,\varepsilon) \preceq (\Dmc,c)$ and Lemma~\ref{lem:sim}, we
  obtain $\Dmc,\Omc \models A(c)$. Lemma~\ref{lem:newintermediate}
  thus yields the desired homomorphism $h$.
\end{proof}

\newcommand{\mypath}{\pi}
\newcommand{\len}{\text{len}}
\subsection{Additional Preliminaries}
A \emph{path} $\mypath$ in a database $\Dmc$ is a sequence of database constants
$a_0, \dots, a_k$ such that for $0 \leq i < k$ there is an edge
$\{a_i,a_{i+1}\} $ in the Gaifman graph $G_{\Dmc}$.
%By $\len(\mypath)$ we denote the lenght of the path
%and we access the elements of the path by writing $\mypath[\cdot]$.
%Here, $\len(\mypath)=k+1$ and for $0 \leq i \leq k$
%$\mypath[i] = a_i$ and $\mypath[i]$ is undefined for all other values of $i$.
%there is a fact $R(a_i,a_{i+1})$\

%We say that a path $a_0, \dots, a_k$
%\emph{crosses} $a$ if $a = a_i$ for some $0 \leq i \leq k$,
%and that two paths \emph{cross} if there is $a$
%that both path cross.

We say that a path is \emph{simple} if no constant repeats, that it is
\emph{chordless} if there is no edge $\{a_i,a_j\}$ such that $|i-j|>1$
except for $\{i,j\}=\{1,k\}$, and that it is a \emph{cycle} if
$k \geq 1$ and $\{a_0,a_k\}$ is an edge or if $k=1$ and \Dmc contains
two distinct facts that both mention $a_0$ and $a_1$.  We say that
constants $c,c' \in \mn{adom}(\Dmc)$ are \emph{connected in \Dmc} if
\Dmc contains a path that starts at $c$ and ends at $c'$.

A \emph{path} in a CQ $q$ is a sequence $y_0, \dots, y_k$ of
variables from $q$ that forms a path in $\Dmc_q$. The path is
\emph{functional} if for $0\leq i <k$ there is an atom
$r(y_i,y_{i+1}) \in q $ such that $\func(r) \in \Omc$.  % It is a

\subsection{Cyclic Queries - The Case of Non-Chordality}

We finish the proof of Lemma~\ref{lemma:trianglelower} that we have
started in the main body of the paper. First, however, we give an
example of the construction of $\Dmc_0$ %, shown
in~Figure~\ref{fig:ex-cyc}. We use the query $q$ and graph $G$ shown in
the top part of the figure and assume that the role names $f_1$ and
$f_2$ are declared functional by \Omc.

\begin{figure*}[t]
    %\begin{wrapfigure}{r}{0.45\textwidth}
    \centering
    \begin{tikzpicture}[scale=.8]

    %delimiters
%%%%%%%%%%%%%%%grid
\draw[dashed,gray] (10,-1.5)--(10,2);
\draw[dashed,gray] (0,-1.5)--(20,-1.5);

%%%%%%%%%%%%query
    \node (q) at (5,2) {query $q$};

    \node (y0) at (5,-1) {$y_0$};
    \node (z1) at (8 ,-0) {$z_1$};
    \node (y1) at (5, 1) {$y_1$};
    \node (y2) at (2, 1) {$y_2$};
    \node (y3) at (2, -1) {$y_3$};

    \draw[-latex] (y0)--(z1) node[midway, below] {$f_0$};
    \draw[-latex] (z1)--(y1) node[midway, above] {$f_1$};
    \draw[-latex] (y1)--(y2) node[midway, above] {$R_{12}$};
    \draw[-latex] (y2)--(y3) node[midway, left] {$R_{23}$};
    \draw[-latex] (y3)--(y0) node[midway, above] {$R_{30}$};

%%%%%%%%%%%graph
    \node (g) at (15,2) {graph $G$};

    \node (a) at (13,-1) {$a$};
    \node (b) at (17,-1) {$b$};
    \node (c) at (17, 1) {$c$};
    \node (d) at (13, 1) {$d$};

    \draw (a)--(b)--(c)--(a)--(d);

%%%%%%%%%%%database

    \node (db) at (10,-2) {database $\Dmc_0$};

    \node (r12) at (2,-3) {$R_{12}$};
    \node (r23) at (6,-3) {$R_{23}$};
    \node (r30) at (10,-3) {$R_{30}$};
    \node (f0)  at (14,-3) {$f_0$};
    \node (f1)  at (18,-3) {$f_1$};

    \node (a1) at (0,-4) {$y_1{\mapsto} a$};
    \node (b1) at (0,-6) {$y_1{\mapsto} b$};
    \node (c1) at (0,-8) {$y_1{\mapsto} c$};
    \node (d1) at (0,-10) {$y_1{\mapsto} d$ };

    \node (a2) at (4,-4) {$y_2{\mapsto} a$};
    \node (b2) at (4,-6) {$y_2{\mapsto} b$};
    \node (c2) at (4,-8) {$y_2{\mapsto} c$};
    \node (d2) at (4,-10) {$y_2{\mapsto} d$};

    \node (a3) at (8,-4) {$y_3{\mapsto} a$};
    \node (b3) at (8,-6) {$y_3{\mapsto} b$};
    \node (c3) at (8,-8) {$y_3{\mapsto} c$};
    \node (d3) at (8,-10) {$y_3{\mapsto} d$};

    \node (az) at (16,-4) {$y_0y_1{\mapsto} ab$};
    \node (bz) at (16,-5) {$y_0y_1{\mapsto} ac$};
    \node (cz) at (16,-6) {$y_0y_1{\mapsto} ad$};
    \node (dz) at (16,-7) {$y_0y_1{\mapsto} ba$};
    \node (av) at (16,-8) {$y_0y_1{\mapsto} bc$};
    \node (bv) at (16,-9) {$y_0y_1{\mapsto} ca$};
    \node (cv) at (16,-10) {$y_0y_1{\mapsto} cb$};
    \node (dv) at (16,-11) {$y_0y_1{\mapsto} da$};

    \node (a0) at (12,-4) {$y_0{\mapsto} a$};
    \node (b0) at (12,-6) {$y_0{\mapsto} b$};
    \node (c0) at (12,-8) {$y_0{\mapsto} c$};
    \node (d0) at (12,-10) {$y_0{\mapsto} d$};

    \node (a11) at (20,-4) {$y_1{\mapsto} a$};
    \node (b11) at (20,-6) {$y_1{\mapsto} b$};
    \node (c11) at (20,-8) {$y_1{\mapsto} c$};
    \node (d11) at (20,-10) {$y_1{\mapsto} d$};

    \draw[black,-latex] (a1.east)--(a2.west);
    \draw[black,-latex] (a2.east)--(b3.west);
    \draw[black,-latex] (a3.east)--(b0.west);

    \draw[black,-latex] (b1.east)--(b2.west);
    \draw[black,-latex] (b2.east)--(a3.west);
    \draw[black,-latex] (b3.east)--(a0.west);

    \draw[black,-latex] (b2.east)--(c3.west);
    \draw[black,-latex] (b3.east)--(c0.west);

    \draw[black,-latex] (c1.east)--(c2.west);
    \draw[black,-latex] (c2.east)--(b3.west);
    \draw[black,-latex] (c3.east)--(b0.west);

%    (c1)--(a2)
    \draw[black,-latex] (c2.east)--(a3.west);
    \draw[black,-latex] (c3.east)--(a0.west);

%    (a1)--(c2)
    \draw[black,-latex] (a2.east)--(c3.west);
    \draw[black,-latex] (a3.east)--(c0.west);

%    (a1)--(d2)
    \draw[black,-latex] (d1.east)--(d2.west);
    \draw[black,-latex] (a2.east)--(d3.west);
    \draw[black,-latex] (d2.east)--(a3.west);
    \draw[black,-latex] (a3.east)--(d0.west);
    \draw[black,-latex] (d3.east)--(a0.west);

    \draw[black,-latex] (a0.east)--(az.west);
    \draw[black,-latex] (a0.east)--(bz.west);
    \draw[black,-latex] (a0.east)--(cz.west);

    \draw[black,-latex] (b0.east)--(dz.west);
    \draw[black,-latex] (b0.east)--(av.west);

    \draw[black,-latex] (c0.east)--(bv.west);
    \draw[black,-latex] (c0.east)--(cv.west);

    \draw[black,-latex] (d0.east)--(dv.west);

    \draw[black,-latex] (dz.east)--(a11.west);
    \draw[black,-latex] (bv.east)--(a11.west);
    \draw[black,-latex] (dv.east)--(a11.west);

    \draw[black,-latex] (az.east)--(b11.west);
    \draw[black,-latex] (cv.east)--(b11.west);

    \draw[black,-latex] (bz.east)--(c11.west);
    \draw[black,-latex] (av.east)--(c11.west);

    \draw[black,-latex] (cz.east)--(d11.west);

%    \node[quantified] (u2) at (4,0) {};

    \end{tikzpicture}
   \caption{Example for the construction of $\Dmc_0$. The role name
     $R_{01}$ is interpreted as the identity in $\Dmc_0$ (not shown in
     picture).
     % {\color{red} Note that elements of form $w {\mapsto} v$ are constants in the database and not facts.}
    }

    \label{fig:ex-cyc}
\end{figure*}
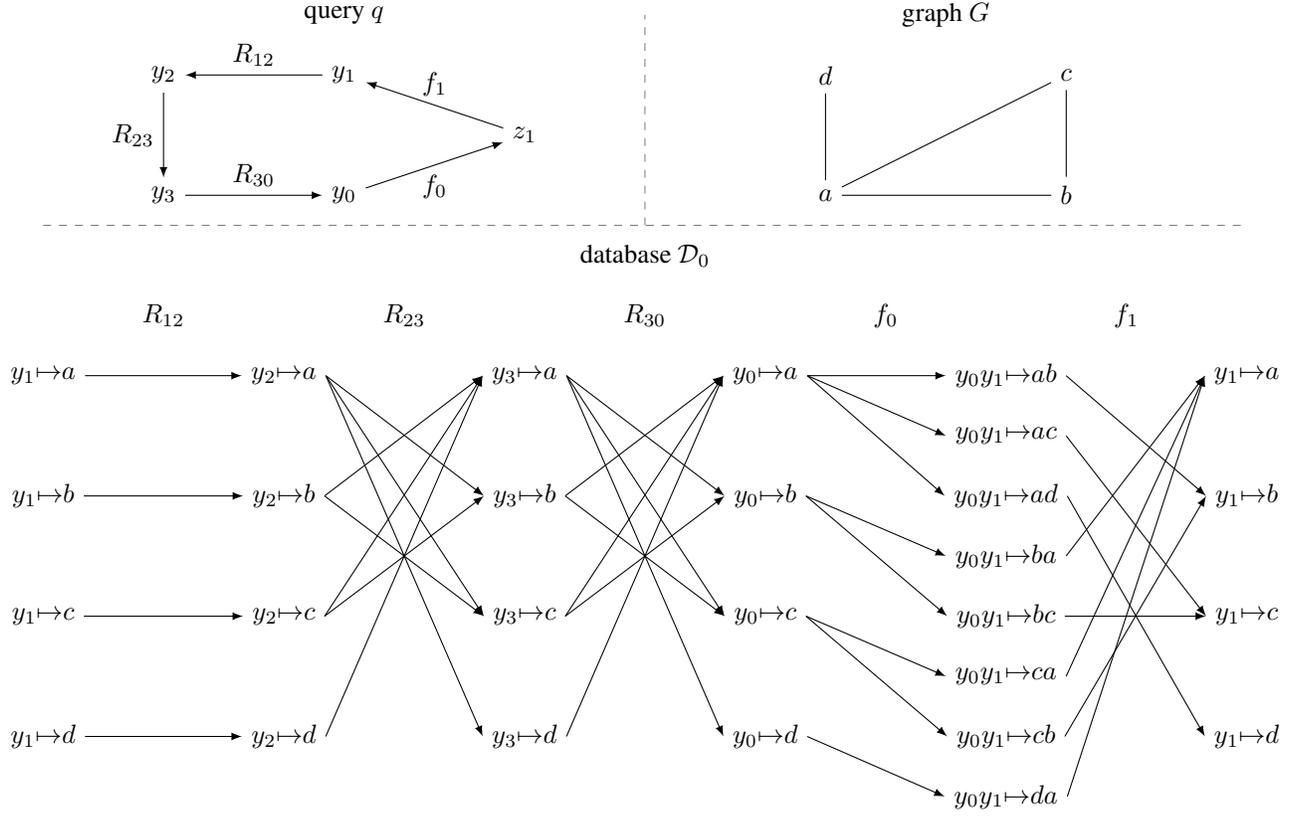

\medskip

Now back to the proof of Lemma~\ref{lemma:trianglelower}.
We first observe some basic properties of
$\Dmc_0$. % There, we started from an OMQ
% $Q(\bar x)=(\Omc,\Sigma,q) \in (\ELI,\text{CQ})$ as in
% Lemma~\ref{lemma:trianglelower}, a chordless cycle $y_0,\dots,y_k$,
% and an undirected graph $G=(V,E)$, and we have constructed a database
% $\Dmc_0$. We first provide some intuitive explanation of the
% construction of $\Dmc_0$ and then extend $\Dmc_0$ to obtain the
% database \Dmc used in the reduction.
%
% The database $\Dmc_0$ may be seen as a variation of the direct product
% of the query $q$ and the graph $G$. To make this more precise, we
% observe the following.
%
\begin{claim}
        \label{claim:cycle-small-y}
        % For every atom $R(z,z')$ in $q$
        ~\\[-4mm]
        \begin{enumerate}
        \item for every variable $x$ in
          $q$, %for all $0 \leq i,j \leq k$ we have that
%        if $y_i,y_j \in V_x$ then $|i-j|\leq 1$ or $\{i,j\} =
%        \{1,k\}$.
        we have $Y_x \subseteq \{y_i,y_{i+1}\}$ for some $0\leq i \leq k$.
%        we have $Y_z \cup Y_{z'} \subseteq \{y_i,y_{i+1}\}$ for some
%        $0\leq i \leq k$.% or $Y_x \subseteq \{y_k,y_{0}\}$

        \item $Y_{y_i}= \{ y_i\}$ for $0 \leq i \leq k$.

        \end{enumerate}
    \end{claim}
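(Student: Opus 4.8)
The plan is to isolate one observation about functional reachability and then read off both points from chordlessness. Throughout I write $y_{k+1}=y_0$, I use that two variables are adjacent in the Gaifman graph of $q^+$ precisely when they occur together in a common atom of $q^+$, and I use that chordlessness means no two \emph{non\-/consecutive} variables among $y_0,\dots,y_k$ occur together in a $q^+$\-/atom. The key observation I would prove first is:
\begin{quote}
$(\ast)$ if a variable $x$ occurs in some atom of $q$, then any two distinct $u,v\in Y_x$ occur together in a common atom of $q^+$.
\end{quote}
To see $(\ast)$, fix an atom $R(\bar w)$ of $q$ with $x\in\bar w$. By definition of $Y_x$, each of $u,v$ is either equal to $x$ or reachable from $x$ on a functional path; in the first case it lies in $\bar w$, in the second it is reachable from $x\in\bar w$ and hence lies in $\bar w^+$. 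So $u,v\in\bar w^+$, and both therefore occur in the atom $R'(\bar w^+)$ of $q^+$ obtained from $R(\bar w)$. (If $x$ occurs in no atom of $q$, then no variable is reachable from $x$ on a non\-/empty functional path, so $Y_x\subseteq\{x\}$ and both points are immediate.)

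For Point~1, observe that $(\ast)$ says that $Y_x$ is a clique in the restriction of the Gaifman graph of $q^+$ to $Y$. Since $y_0,\dots,y_k$ is chordless of length at least~$4$, the only edges inside $Y$ join consecutive variables; in particular $Y$ contains no triangle, and every edge inside $Y$ has the form $\{y_i,y_{i+1}\}$. Hence a clique inside $Y$ has at most two vertices, and if it has two they are consecutive. This yields $Y_x\subseteq\{y_i,y_{i+1}\}$ for some $i$, the cases $|Y_x|\le 1$ being trivial.

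For Point~2, I first note $y_i\in Y_{y_i}$ via the empty path. Suppose towards a contradiction that $y_j\in Y_{y_i}$ for some $j\neq i$. Since $y_i$ lies on the cycle it occurs in an atom of $q$, so $(\ast)$ makes $y_i$ and $y_j$ adjacent, and chordlessness forces $y_j$ to be a cycle neighbour of $y_i$; assume $y_j=y_{i+1}$ (the case $y_j=y_{i-1}$ is symmetric, using the incident edge $\{y_i,y_{i+1}\}$ in place of $\{y_{i-1},y_i\}$). Fix a functional path from $y_i$ to $y_{i+1}$ and consider the cycle edge $\{y_{i-1},y_i\}$, which stems from an atom $R(\bar w)$ of $q$ with $y_{i-1},y_i\in\bar w^+$. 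Now either $y_i\in\bar w$, in which case $y_{i+1}$, being reachable from $y_i$ on a functional path, also lies in $\bar w^+$; or $y_i\in\bar w^+\setminus\bar w$, meaning $y_i$ is reachable on a functional path from some $w\in\bar w$, and concatenating this path with the one from $y_i$ to $y_{i+1}$ shows $y_{i+1}$ is reachable from $w$, so again $y_{i+1}\in\bar w^+$. Either way $y_{i-1}$ and $y_{i+1}$ both occur in $R'(\bar w^+)$ and are thus adjacent; as they are non\-/consecutive on a cycle of length at least~$4$, this is a chord, contradicting the choice of the cycle.

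The routine parts are the observations that membership in $\bar w^+$ is closed under following functional paths, and that cyclic distance~$2$ gives non\-/adjacency once the cycle has length at least~$4$. The step requiring care is Point~2: functional reachability is \emph{directed}, so one cannot obtain the contradiction by walking forward out of $y_{i+1}$; instead one must look back at the incident cycle edge $\{y_{i-1},y_i\}$ and argue that the presence of $y_i$ in $\bar w^+$ (whether as a base variable of $\bar w$ or via a functional path leading into it) already pulls $y_{i+1}$ into the same $q^+$\-/atom. I expect this directedness bookkeeping to be the main obstacle.
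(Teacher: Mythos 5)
Your proof is correct and follows essentially the same route as the paper: Point~1 is obtained by showing any two members of $Y_x$ land in a common atom of $q^+$ and invoking chordlessness, and Point~2 derives the chord $\{y_{i-1},y_{i+1}\}$ by composing the functional path witnessing the cycle edge incident to $y_i$ with the assumed functional path from $y_i$ to its cycle neighbour. Isolating the observation $(\ast)$ and the careful case split on whether $y_i\in\bar w$ or $y_i\in\bar w^+\setminus\bar w$ are only organizational refinements of the paper's argument, which treats both cases uniformly via a ``possibly empty'' functional path.
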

%    \cMP{do we state this for every variable?}
    %
%    Proof of this claim follows the same logic as the proof of Claim\ref{claim:MM-small-Y}.
    \begin{proof}
      Both points follow from the chordlessness of the cycle
      $y_0,\dots,y_k$.

      For Point~1, let $x$ be a variable in $q$ and assume that
      $Y_x \supseteq \{y_j,y_i\}$ and
      $y_j \notin \{ y_{i-1},y_{i+1}\}$ with $y_{-1}=y_k$. By
      definition of $Y_x$, there is then a functional path in $q$ from
      $x$ to $y_j$ and from $x$ to $y_i$. Thus, by construction of
      $q^+$ the Gaifman graph of $q^+$ contains the edge
      $\{y_j,y_i\}$, in contradiction to $y_0,\dots,y_k$ being
      chordless.

      For Point~2, first note that $y_i \in Y_{y_i}$ by definition of
      $Y_{y_i}$. By Point~1, this leaves as the only possible options
      $Y_{y_i}= \{ y_i\}$, $Y_{y_i}=\{y_i,y_{i-1}\}$ (with
      $y_{-1}=y_k$), and $Y_{y_i}=\{ y_i,y_{i+1}\}$. The latter two
      options, however, are not possible. We show this exemplarily for
      the last option.  By construction of $q^+$, the fact that
      $\{ y_i,y_{i-1}\}$ is an edge in the Gaifman graph of $q^+$
      implies that $q$ contains an atom $\alpha$ that contains
      variables $u,v$ such that $y_{i-1}$ is reachable on a functional
      path from $u$ and $y_{i}$ is reachable on a functional path from
      $y_i$. Since $y_{i+1} \in Y_{y_i}$, there is also a functional
      path from $y_i$ to $y_{i+1}$ and thus we find a functional path
      from $v$ to $y_{i+1}$. By construction of $q^+$, the Gaifman
      graph of $q^+$ contains the edge $\{ y_{i-1},y_{i+1}\}$, in
      contradiction to the chordfreeness of $y_0,\dots,y_k$.
\end{proof}

    \begin{lemma}
        \label{claim:cycle-database-is-ok}
        The database $\Dmc$
        \begin{enumerate}
            \item can be computed in time $O(||E||)$.
            \item satisfies all functionality
              assertions in~$\Omc$.
            \item is derivation complete at $\mn{adom}(\Dmc_0)$.
        \end{enumerate}
    \end{lemma}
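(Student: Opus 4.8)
The plan is to prove the three items independently, with item~2 being the real substance. Item~1 is routine bookkeeping: the database $\Dtree$ depends only on $Q$, not on $G$, so it is precomputed in constant time (as already noted). To build $\Dmc_0$ I would iterate over the constantly many atoms $r(x,y)$ of $q$; each of the three cases adds one fact per edge of $E$ (cases~1 and~2) or per vertex of $V$ (case~3), and since $G$ has no isolated vertices $|V| \le 2|E|$, so $\Dmc_0$ has $O(|E|)$ facts over $O(|E|)$ constants, together with $O(|E|)$ concept facts $A(c)$. The attachment step then ranges over the $O(|E|)$ constants of $\Dmc_0$ and the constantly many roles $R$, gluing a constant-size copy of $\Dmc_R$ each time; in the RAM model, the test ``is there an $R(c,c') \in \Dmc_0$'' is a constant-time lookup after one linear scan. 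Hence $\Dmc$ is computable in $O(||E||)$.

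For item~2 the key point is that the function labels $f^w_x$ exist precisely to make functional successors deterministic in $\Dmc_0$. Fix a role $R$ with $\Omc \models \func(R)$. Since $q$ is self-join free, every $R$-labeled fact of $\Dmc_0$ stems from a single atom $r(x,y)$, so any two $R$-successors of a constant $c$ have the form $\langle y, f^{w}_y\rangle$ and $\langle y, f^{w'}_y\rangle$ with $c = \langle x, f^{w}_x\rangle = \langle x, f^{w'}_x\rangle$. The crucial observation is that a functional edge $r(x,y)$ forces $Y_y \subseteq Y_x$ (any functional path out of $y$ extends one out of $x$), whence $f^{w}_y$ is just the restriction of $f^{w}_x$ to $Y_y$; from $f^{w}_x = f^{w'}_x$ we conclude $f^{w}_y = f^{w'}_y$, so the two successors coincide. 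The symmetric argument using $Y_x \subseteq Y_y$ settles the case $\Omc \models \func(r^-)$. For the tree part, $\Dtree$ is tree-shaped and each node is incident to at most one edge per role (its parent edge carries a single role, its children carry pairwise distinct roles), so each $\Dmc_R$ is functional. Finally, the gluing introduces no clash: a copy of $\Dmc_R$ is attached at $c$ only when $\Dmc_0$ has no $R$-edge out of $c$, distinct copies meet only in the shared node $c$, and each contributes exactly one fresh $R$-edge out of $c$; hence at $c$ every role label occurs at most once, and in the interior of a copy functionality is inherited from the tree.

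For item~3, I would first note that $\Dmc$ is satisfiable w.r.t.\ $\Omc$: in $\mathcal{ELIF}$ the only possible source of unsatisfiability is a functionality clash, which item~2 rules out. By Lemma~\ref{lem:treesim}, derivation completeness at $c$ then follows once we exhibit a simulation $S$ from $\Dtree$ to $\Dmc$ with $(\varepsilon,c) \in S$. I would take $S$ to relate (i) every node of $\Dtree$ to every constant of $\mn{adom}(\Dmc_0)$, and (ii) every node $v$ of $\Dtree$ to its copy node in each attached copy of $\Dmc_R$ (under the gluing isomorphism, which sends $\varepsilon$ to the respective constant of $\mn{adom}(\Dmc_0)$). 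Condition \textbf{Sim1} is immediate, as every constant of $\mn{adom}(\Dmc_0)$ and every tree-copy node carries \emph{all} $\Sigma$-concept names, exactly like every node of $\Dtree$. For \textbf{Sim2}, a tree node mapped to some $c' \in \mn{adom}(\Dmc_0)$ finds, for each role $R$, a matching successor: $c'$ has an $R$-edge either inside $\Dmc_0$ (staying in part~(i)) or into an attached copy of $\Dmc_R$ (moving to part~(ii)); inside a copy the isomorphism preserves successors, and the copy's one external edge leads back to a constant of $\mn{adom}(\Dmc_0)$, again covered by part~(i). The only bookkeeping is that $\Dtree$ has finite depth $k$, so the simulation never needs more depth than the attached copies supply; I would make part~(ii) precise by relating a tree node to a copy node whenever their outgoing role-signatures agree, which by the homogeneity of $\Dtree$ holds automatically up to depth $k$.

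The main obstacle is item~2, and within it the verification that the function labels render $\Dmc_0$ functional, since this is exactly the mechanism by which the reduction encodes a triangle as a consistent assignment along a functional cycle; items~1 and~3 are comparatively mechanical once Lemma~\ref{lem:treesim} and the simulation machinery of Lemma~\ref{lem:sim} are in hand.
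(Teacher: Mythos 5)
Your proof follows essentially the same route as the paper's: Point~2 for the core part $\Dmc_0$ is argued identically (self\-/join freeness pins both $R$-facts to the same atom $R(x,y)$, functionality of $R$ gives $Y_y \subseteq Y_x$, and the shared word $w$ then forces the two successors to coincide), and Point~3 is obtained, as in the paper, by exhibiting a simulation from $(\Dtree,\varepsilon)$ and invoking Lemma~\ref{lem:treesim}. The only divergence is that you spell out details the paper leaves implicit — the simulation itself, and functionality inside the attached tree copies, where your claim that each node of $\Dtree$ is incident to at most one edge per role is not literally accurate for nodes of the form $wr^-$ (which carry an outgoing $r$-edge both to their parent and to their $r$-child) — but the paper dispatches exactly this case with an equally unargued ``clearly''.
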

    \begin{proof}
      Point~1 is clear from the construction. % , the second follows the
      % same proof structure as the second bullet in
      % Claim~\ref{claim:MM-database-is-ok}.

      For Point~2, take a role $R$ such that $\func(R) \in \Omc$ and
      facts $R(c,d),R(c,d') \in \Dmc$. We show that $d=d'$.
      If one of $c$, $d$ or $d'$ belongs to a copy of $\Dtree$ then
      clearly $d =
      d'$. % Now, if $d$ (or $d'$) belongs to a copy of a tree, again,
           % clearly $d = d'$.
      Thus, let us assume that $c,d,d' \in \mn{adom}(\Dmc_0)$.  Then
      the facts $R(c,d)$ and $R(c,d')$ were added to $\Dmc_0$ because
      of an atom $R(x,y)$ in $q$ and, due to the self-join freeness of
      $q$ we have $c = \langle x, f_c \rangle $,
      $d = \langle y, f_d \rangle$, and
      $d' = \langle y, f_{d'} \rangle$ for functions $f_c,f_d,f_{d'}$.
      From $\func(R) \in \Omc$ we obtain $Y_y \subseteq Y_x$.  No
      matter whether fact $R(c,d)$ was added in Step~1, 2, or~3 of the
      construction of $\Dmc_0$, there is a word $w$ such that
      $f_c=f^w_x$ and $f_d=f^w_y$. It thus follows from
      $Y_y \subseteq Y_x$ that for all $y_i \in Y_y$, we have
      $f_d(y_i) = f_c(y_i)$.  The same is true for the fact $R(c,d')$,
      and thus we obtain $f_{d} = f_{d'}$, which in turn implies $d=d'$.

      By construction of \Dmc, it is easy to see that
      $(\Dtree,\varepsilon) \preceq (\Dmc,c)$ for all
      $c \in \mn{adom}(\Dmc_0)$.  By Lemma~\ref{lem:treesim}, it
      follows that $\Dmc$ is derivation complete at $\mn{adom}(\Dmc_0)$.
  % Let $\Dmc$ be a $\Sigma$-database that is satisfiable w.r.t.\ \Omc
  % and let $\Gamma \subseteq \mn{adom}(\Dmc)$ such that
  % $(\Dtree,\varepsilon) \preceq (\Dmc,c)$ for all $c \in \Gamma$.
  % Then $\Dmc$ is derivation complete at $\Gamma$.
    \end{proof}
    %
    % Since $\Dmc$ satisfies functional assertions in $\Omc$ and can be
    % constructed in time linear in the size of $G$, we can use it as
    % input for the enumeration algorithm promised in the lemma.  We now
    % describe the relation between answers to $Q$ on $\Dmc$ and
    % triangles in $G$.

    \begin{claim}
        \label{claim:paths-in-core-triangle}
        Let $h$ be a homomorphism from $q$ to $\Umc_{\Dmc,
          \Omc}$. Then
        \begin{enumerate}

        \item for every simple path $u_0, \dots, u_m$ in $q$, $m>0$,
        %        {\color{violet} If $h(u_0) \in \dom(\Dmc)$ and $h(u_m) \in \dom(\Dmc)$, then $h(u_i) \in \dom(\Dmc)$ for
            %            $0<i\leq m$. Moreover, if}
          $h(u_0) \in \DGamma$ and $h(u_m) \in \DGamma$ implies
          $h(u_i) \in \DGamma$ for $0\leq i\leq m$;

        \item for every cycle $u_0,\dots,u_m$ in $q$, we have
          $h(u_i) \in \DGamma$ for $0 \leq i \leq m$.

        \end{enumerate}
    \end{claim}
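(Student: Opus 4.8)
The plan is to exploit the fact that, because $\Omc$ is formulated in \fEL (no role inclusions), the universal model $\Umc_{\Dmc,\Omc}$ has a rigid shape: it is the ``core'' $\Dmc_0$ together with a collection of tree-shaped gadgets, each attached to $\DGamma$ through a single core constant. First I would record this structural property precisely. The constants of $\Dmc$ outside $\DGamma$ come exactly from the disjoint copies of the databases $\Dmc_R \subseteq \Dtree$, each glued to $\DGamma$ only at its root. Passing to $\mn{ch}_\Omc(\Dmc)$ changes nothing about connectivity: in \fEL the chase rule \textbf{R3} never fires, and \textbf{R4} only ever adds facts to pairs $(c_1,c_2)$ that are already edges and, since the side condition $S \sqsubseteq^*_\Omc R$ forces $S=R$ in the absence of role inclusions, it adds no second role name to an edge. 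Hence the chase introduces neither new vertex pairs nor distinct-role multi-edges. Finally the trace-extensions completing $\Umc_{\Dmc,\Omc}$ are tree-shaped and multi-edge free by construction, as recorded after Lemma~\ref{lem:univbasic}. Summing up: every gadget is a multi-edge-free tree joined to $\DGamma$ at a single constant, so every edge of $G_{\Umc_{\Dmc,\Omc}}$ incident to a non-core element is a bridge, and every non-trivial $2$-edge-connected block of $G_{\Umc_{\Dmc,\Omc}}$ lies inside $\DGamma$.

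The second ingredient is that self-join-freeness makes the $h$-image of a path or cycle \emph{edge-non-backtracking}. Along $u_0,\dots,u_m$, consecutive atoms of $q$ are distinct atoms and hence carry distinct relation symbols; therefore the facts realising the steps $h(u_{i-1}) \to h(u_i)$ and $h(u_i) \to h(u_{i+1})$ are distinct facts, so the image walk never traverses the same edge twice in a row. I would then project the image walk to the bridge tree, i.e.\ contract every block of $G_{\Umc_{\Dmc,\Omc}}$ to a point. Since bridges are single edges, edge-non-backtracking is preserved by this projection, and a non-backtracking walk in a forest is a simple path.

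With these tools the two statements fall out. For Point~1, if some $h(u_j)\notin\DGamma$, then $h(u_j)$ sits on the far side of the attachment bridge $\beta$ of its gadget, whose near side contains all of $\DGamma$; since $h(u_0),h(u_m)\in\DGamma$, the projected simple path would have to cross $\beta$ twice, which is impossible, so all $h(u_i)\in\DGamma$. For Point~2, the image of a cycle is an edge-non-backtracking \emph{closed} walk, so its projection is a closed non-backtracking walk in a forest, hence a single point; thus every edge of the walk lies in one block, and as the walk has positive length that block is non-trivial and therefore contained in $\DGamma$, giving $h(u_i)\in\DGamma$ for all $i$. The degenerate length-$2$ (multi-edge) cycle is handled directly: its two atoms carry distinct role names and can only be realised on a genuine multi-edge, which in \fEL exists only inside $\DGamma$.

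The main obstacle I anticipate is the first paragraph: pinning down the structural property rigorously. One must verify that neither the chase nor the trace-extension step of the construction can (i) connect two previously unrelated constants, (ii) create a distinct-role multi-edge outside $\DGamma$, or (iii) attach a gadget to two distinct core constants, and all three are precisely where the absence of role inclusions in \fEL is essential. The graph-theoretic steps are then routine once this picture is in place; the only care needed is that the bridge-tree projection preserves non-backtracking at cut vertices, which again uses that bridges are single edges.
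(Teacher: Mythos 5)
Your proposal rests on the same two pillars as the paper's proof: (i) after deleting the edges with both endpoints in $\DGamma$, what remains of $\Umc_{\Dmc,\Omc}$ is a disjoint union of multi-edge-free trees, each meeting $\DGamma$ only in its root (the copies of $\Dmc_R$ and the trees $\Umc_{\Dmc,\Omc}^{\downarrow c}$), and (ii) self-join freeness forces consecutive atoms of the path to be realized by distinct facts. The paper then simply takes a minimal excursion $u_{j-1},\dots,u_\ell$ outside $\DGamma$ and observes that its image is a closed walk from the root of one such tree back to that root, which must traverse some tree edge in two consecutive steps; absence of multi-edges makes those two steps the same fact, contradicting distinctness of the relation symbols. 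Your extra layer of machinery introduces one step that is false as stated: contracting the $2$-edge-connected blocks does \emph{not} preserve edge-non-backtracking in general, since a walk may enter a block at a cut vertex, wander inside it, and leave through the same bridge, producing a backtrack in the projection that is absent in the original walk. Consequently "the projected walk is a simple path" does not follow for the whole of $h(u_0),\dots,h(u_m)$. The repair is to localize exactly as the paper does: restrict attention to the maximal subwalk outside $\DGamma$, where every vertex is its own block, the projection is the identity, and a positive-length closed walk in a tree necessarily backtracks. With that localization your argument is sound, and your direct treatment of the degenerate length-$2$ cycle (a genuine multi-edge, which exists only inside $\DGamma$) matches the paper's definition of cycles. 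I would also note that your verification that the chase cannot create multi-edges or new adjacencies in the tree parts (rules \textbf{R3} and \textbf{R4} in \fEL) is more explicit than the paper's, which asserts the structural property "by construction"; that added care is welcome rather than a divergence.
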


    \begin{proof}
      We only prove Point~1, the proof of Point~2 is similar.
        Assume towards a contradiction that $h(u_j) \notin \DGamma$ for
        some $j$ with $0 \leq j \leq m$ and additionally let $j$ be
        smallest with this property. Moreover, let $\ell>j$ be smallest
        such that $h(u_\ell)\in \DGamma$. Note that $0 < j < \ell < m$.
        By definition of a path, $q$ must contain atoms
        $R_j(u_{j-1},u_j),\dots,R_\ell(u_{\ell-1},u_\ell)$. Since $q$ is
        self-join free, the relation symbols  $R_j,\dots,R_\ell$ are
        all distinct.

        Next observe that by construction of \Dmc and by definition of
        universal models, the interpretation \Imc obtained from
        $\Umc_{\Dmc, \Omc}$ by removing all edges $r(a,b)$ with
        $a,b \in \DGamma$ is a collection of trees without multi-edges.
        More precisely, these are the trees added in the second step in
        the construction of \Dmc and the trees $\Umc_{\Dmc,
            \Omc}^{\downarrow c}$
        of $\Umc_{\Dmc, \Omc}$.
        We view as the roots of these trees the elements of $\DGamma$.

        Then $h$ maps $R_j(u_{j-1},u_j),\dots,R_\ell(u_{\ell-1},u_\ell)$
        to some tree of this collection, and both $u_{j-1}$ and $u_\ell$
        are mapped to the root of the tree. This, however, clearly
        contradicts the fact that the relation symbols
        $R_j,\dots,R_\ell$ are all distinct and the tree has no multi-edges.
    \end{proof}
    The next lemma states soundness and completeness of the reduction.
    \lemcyclecorr*

%     \paragraph{The algorithm -- triangle detection}
%     Before we prove the claim, let us show how we use it to prove the lemma.
% %
% %
% %    We now describe the algorithm that detects a triangle in a graph and uses the assumed enumeration algorithm.
% %

%     Given graph $G$, we construct the database $\Dmc$ as above. Then we run the enumeration algorithm.
%     If it returns an answer, we immediately stop the algorithm and return \emph{true}.
%     If it returns no answers, we return \emph{false}.

%     Since $\Dmc$ is constructed in time $O(||G||)$, we have that $||\Dmc||$ is $O(||G||)$.
%     Hence, the preprocessing phase of the enumeration algorithm is done in time $O(||G||)$
%     and, thus, the first answer, if exists, is produced in time $O(||G||)$.

%     The correctness of the algorithm follows immediately from Claim~\ref{claim:lower-bound-partial-answers-and-cycle}.
%     This ends the proof of Lemma~\ref{lemma:lower-bound-enum-cycle}

% \medskip

    % The following observation will be important for the proof of
    % Claim~\ref{claim:lower-bound-partial-answers-and-cycle}.
    %
%
We now prove
Claim~\ref{claim:lower-bound-partial-answers-and-cycle}, which
finishes the lower bound proof.  To show
\textbf{TD2}, let $a,b,c$ be a triangle in $G$. To show that there is
a complete answer to $Q$ on \Dmc, it suffices to exhibit a
homomorphism $h$ from $q$ to $\Umc_{\Dmc, \Omc}$ such that
$h(x) \in \mn{adom}(\Dmc_0)$ for every answer variable~$x$.

% We first fix a partial mapping $h$ from $q$ to $\Umc_{\Dmc, \Omc}$
% such that $h(y_0) = \{(y_0, a)\}$, $h(y_i) = \{(y_i, b)\}$ for
% $0 < i < k$, and $h(y_k) = \{(y_k, c)\}$.  Following the same steps as
% in proof of \textbf{MM2}, we can show that this mapping can be
% extended to a homomorphism from $q$ to $\Umc_{\Dmc, \Omc}$. This
% proves \textbf{CD2}.

    Since $Q$ is non\=/empty, there is a database $\Dmc'$ that is
    satisfiable w.r.t.\ \Omc and a homomorphism $h'$ from $q$ to
    $\Umc_{\Dmc', \Omc}$ such that $h'(\bar{x})$ contains only
    elements of $\mn{adom}(\Dmc')$.  We use $h'$ to guide the
    construction of $h$.

    Let $W_0$ be the set of variables $x \in \mn{var}(q)$ with
    $h'(x) \in \mn{adom}(\Dmc')$ and for each
    $c \in \mn{adom}(\Dmc')$, let $W_c$ be the set of variables
    $x \in \mn{var}(q)$ such that $h'(x)$ is a trace that starts
    with~$c$, that is, $h'(x)$ is $c$ or located in the tree
    $\Umc_{\Dmc', \Omc}^{\downarrow c}$ in $\Umc_{\Dmc', \Omc}$ that is
    rooted in $c$.
    % More precisely, this
    % tree is the subinterpretation of $\Umc_{\Dmc', \Omc}$ that
    % consists of all traces starting with $c$ and we denote it with
    % $\Umc_c$.

    We first note that for all $c \in \mn{adom}(\Dmc')$, the
    restriction $q_c$ of $q$ to the variables in $W_c$ is a disjoint
    union of trees (without multi-edges and reflexive loops) in the
    sense that the Gaifman graph of $q_c$ is a tree. Let $p$ be a
    connected component of $q_c$. Clearly, $p$ has no multi-edges and
    reflexive loops since the same is true for
    $\Umc_{\Dmc', \Omc}^{\downarrow c}$. It remains to show that $p$
    is a tree. Assume to the contrary that $p$ contains atoms
    $R_1(x_0,x_1),\dots,R_m(x_{m-1},x_m)$ with $x_0=x_m$.  Since
    $\Umc_{\Dmc', \Omc}^{\downarrow c}$ is a tree, $h'$ must map two
    distinct atoms from this list to the same fact in
    $\Umc_{\Dmc', \Omc}^{\downarrow c}$. This contradicts the
    self-join freeness of $q$, which implies that $R_1,\dots,R_m$ are
    pairwise distinct.

    We start the construction of $h$ by setting
    $h(x)=\langle x,f_{x}^{ab^{k{-}1}c}\rangle$ for all $x \in
    W_0$. Let $q_0$ be the restriction of $q$ to the variables in
    $W_0$. It can be verified that $h$ is indeed a homomorphism from
    $q_0$ to $\Umc_{\Dmc, \Omc}$. In fact, binary atoms $R(x,x')$ in
    $q_0$ are preserved: since $h'(x),h'(x') \in \mn{adom}(\Dmc')$ and
    the construction of $\Umc_{\Dmc',\Omc}$ from $\Dmc'$ does not add
    any binary facts $R(a,a')$ with $a,a' \in \mn{adom}(\Dmc')$, we
    must have $R \in \Sigma$; we thus obtain
    $R(h(x),h(x')) \in \Dmc_0$ from the construction of $\Dmc_0$ and
    the fact that $a,b,c$ is a triangle in $G$. Unary atoms $A(x)$
    with $A \in \Sigma$ are trivially preserved by construction of
    $\Dmc$. And unary atoms $A(x)$ with $A \notin \Sigma$ are
    preserved since $\Dmc$ is derivation complete at
    $\mn{adom}(\Dmc_0)$. More precisely, if $A(x)$ is an atom in
    $q_0$, then non-emptiness of $q$ implies that $A$ is non-empty,
    and since derivation completeness (trivially) implies derivation
    completeness for concept names it follows that
    $\Omc,\Dmc \models A(c)$ for all $c \in \mn{adom}(\Dmc_0)$.

    To extend $h$ to the remaining variables in~$q$, consider each set
    $W_c$ separately (clearly, those sets are disjoint and contain all
    remaining variables).  As established above, the restriction $q_c$
    of $q$ to the variables in $W_c$ is a disjoint union of trees. Let
    $p$ be any such tree. Since $q$ is connected, $p$ contains a
    (unique) variable $x$ with $h'(x)=c$. Since \Dmc is derivation
    complete at $c$, there is a homomorphism $h_c$ from
    $\Umc_{\Dmc', \Omc}^{\downarrow c}$ to $\Umc_{\Dmc,\Omc}$ with
    $h_c(c)=h(x)$. Then set $h(y)=h_c(h'(y))$ for every
    $y \in \mn{var}(p)$. This finishes the construction of $h$.

    %
    % Using the construction of \Dmc,  it is now straightforward to
    % verify that $h$ is indeed a homomorphism from $q$ to
    % $\Umc_{\Dmc, \Omc}$ such that
    % $h(x_1)= \langle x_1, f_{x_1}\rangle$ where $f_{x_1}$ is a
    % function that satisfies $f_{x_1}(y_0) = a$ and
    % $h(x_2)= \langle x_2, f_{x_2}\rangle$ where $f_{x_2}$ is a
    % function that satisfies $f_{x_2}(y_k) = b$.

    \medskip

    To show \textbf{TD1}, assume that there is a minimap partial
    answer to $Q$ on \Dmc. Then there is a homomorphism $h$ from $q$ to
    $\Umc_{\Dmc, \Omc}$. Ideally, we would like to show that for
    $i \in \{0,1,k\}$, $h(y_i)$ is of the form
    $\langle y_i, f_i \rangle$ and $f_0(y_0), f_1(y_1), f_k(y_k)$
    forms a triangle in~$G$.  Unfortunately, it may be the case that
    $h(y_i) \notin \mn{adom}(\Dmc_0)$ and thus we need to argue in a slightly
    more complex way.
    We first show the following.
    \begin{claim}
      \label{claim:triangleui}
        There is a variable $u_i$
        such that $h(u_i) \in \mn{adom}(\Dmc_0)$ and $y_i \in Y_{u_i}$,
        for $0\leq i \leq k$.
    \end{claim}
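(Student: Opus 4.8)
The plan is to lift the chordless cycle $y_0,\dots,y_k$ of the Gaifman graph of $q^+$ to a cyclic structure living inside $q$ itself, and then exploit the fact that cycles in $q$ are forced into $\mn{adom}(\Dmc_0)$ by Claim~\ref{claim:paths-in-core-triangle}. The witness $u_i$ will be a variable on this cyclic structure, not $y_i$ itself, since $h(y_i)$ may well lie outside the core.

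First I would unfold each cycle edge. Since $\{y_i,y_{i+1}\}$ is an edge in the Gaifman graph of $q^+$, the definition of $q^+$ supplies an atom $\alpha_i=R_i(s_i,t_i)$ of $q$ together with variables $w_i,w_i'\in\{s_i,t_i\}$ (possibly equal) and functional paths $w_i\rightsquigarrow y_i$ and $w_i'\rightsquigarrow y_{i+1}$ in $q$; in particular $y_i\in Y_{w_i}$ and $y_{i+1}\in Y_{w_i'}$. Reading consecutive edges together, both $w_i$ (from edge $i$) and $w_{i-1}'$ (from edge $i-1$) have functional paths to $y_i$. Concatenating the atoms $\alpha_i$ with these functional paths yields a closed walk $W$ in the Gaifman graph of $q$ that runs through all witnesses $w_i,w_i'$ and through each $y_i$. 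Claim~\ref{claim:cycle-small-y} keeps this unfolding tight: every witness variable $x$ satisfies $Y_x\subseteq\{y_j,y_{j+1}\}$, so no witness secretly reaches far-apart cycle vertices.

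The heart of the argument is then to show that $W$ is non-degenerate, i.e.\ that it contains a genuine simple cycle $C$ of $q$, and that every $y_i$ is functionally reachable from a variable lying on the core part of $W$. I would prove non-degeneracy by contradiction: if the subgraph of $q$'s Gaifman graph traced by $W$ were a forest, then traversing the distinct vertices $y_0,\dots,y_k$ in their cyclic order along tree paths would force two non-consecutive cycle vertices $y_a,y_b$ to be functionally reachable from the variables of one common atom of $q$; by the construction of $q^+$ this produces the chord $\{y_a,y_b\}$, contradicting chordlessness. Hence a cycle $C$ exists, and Point~2 of Claim~\ref{claim:paths-in-core-triangle} places every variable on $C$ into $\mn{adom}(\Dmc_0)$ under $h$. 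Propagating core-membership along the connecting (functional) subpaths of $W$ by Point~1 of the same claim, I would, for each $i$, locate a variable $u_i$ on this core part with a functional path to $y_i$, giving $h(u_i)\in\mn{adom}(\Dmc_0)$ and $y_i\in Y_{u_i}$.

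I expect the non-degeneracy-plus-propagation step to be the main obstacle. The difficulty is twofold: because $h(y_i)$ need not lie in $\mn{adom}(\Dmc_0)$ — the $y_i$ may sit at the tips of functional tails hanging off the core cycle — the witness $u_i$ must be recovered on the cyclic backbone rather than taken to be $y_i$; and converting ``chordless cycle in $q^+$'' into ``genuine cycle in $q$ mapping to the core'' demands a careful tree-versus-chord analysis that simultaneously tracks the functional paths and invokes both points of Claim~\ref{claim:paths-in-core-triangle}. I would organise the bookkeeping by working in the forest obtained from $\Umc_{\Dmc,\Omc}$ by deleting all edges internal to $\mn{adom}(\Dmc_0)$, exactly as in the proof of Claim~\ref{claim:paths-in-core-triangle}, and by matching the pairwise-distinct relation symbols of the self-join-free $q$ against the absence of multi-edges in that forest.
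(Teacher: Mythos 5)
Your overall blueprint matches the paper's: unfold each edge $\{y_j,y_{j+1}\}$ of the $q^+$-cycle into a witness atom of $q$ plus two functional paths, concatenate these into a closed walk $W$, extract a simple cycle, and invoke Claim~\ref{claim:paths-in-core-triangle}. The gap is in the last step, where you locate the witnesses $u_i$. You extract \emph{one} simple cycle $C$ from $W$ and then want to ``propagate core-membership along the connecting (functional) subpaths of $W$ by Point~1'' to reach a variable with $y_i\in Y_{u_i}$ for every $i$. This does not go through for two reasons. First, a single simple cycle extracted from $W$ need not come anywhere near the functional paths into a given $y_i$: the cycle you find might live entirely inside the portion of $W$ associated with, say, $y_3$ and $y_4$, in which case it contains no variable $u_0$ with $y_0\in Y_{u_0}$. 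Second, Point~1 of Claim~\ref{claim:paths-in-core-triangle} only transfers membership in $\mn{adom}(\Dmc_0)$ along a simple path \emph{both} of whose endpoints are already known to map into $\mn{adom}(\Dmc_0)$; the functional tail leading to $y_i$ has only one endpoint anchored in the core (the other is $y_i$ itself, which, as you note, may be mapped outside $\mn{adom}(\Dmc_0)$), so Point~1 cannot push core-membership out along that tail to produce $u_i$.

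The paper avoids this by working per $i$: it takes the two functional paths arriving at $y_i$ (one coming from the witness of the edge $\{y_{i-1},y_i\}$, one from the witness of $\{y_i,y_{i+1}\}$), defines $u_i$ as the branch point where they diverge, uses Claim~\ref{claim:cycle-small-y} to show this branch point still lies on the portions of both paths whose variables reach $y_i$ (hence $y_i\in Y_{u_i}$), and then cuts the closed walk $W$ down to a simple cycle \emph{through that particular $u_i$} — a different cycle for each $i$ — so that Point~2 of Claim~\ref{claim:paths-in-core-triangle} applies directly to $u_i$. To repair your argument you would need exactly this per-$i$ anchoring: identify the branch point toward $y_i$ explicitly and show it survives the surgery that makes the cycle simple. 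Separately, your global ``forest implies chord'' contradiction is not needed once you argue per $i$ (the branch point has two genuinely divergent continuations toward $y_{i-1}$ and $y_{i+1}$, which is what rules out degeneracy), and as stated it is not obviously correct — you would still have to extract the offending common atom from the tree structure, which you have not done.
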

    \begin{proof}
    % To prove the claim observe that for every $0 \leq i \leq k$,
    % there is a variable $u_i$ such that $y_i \in Y_{u_i}$
    % and $u_i$ lies on some simple cycle.
%
    Let $0\leq i \leq k$. %  Since $y_0, \dots, y_k$ is a cycle in
    % $q^+$, %by Lemma~\ref{lemma:paths-from-q-plus-to-q},
    Recall that $y_0,\dots,y_k$ is a cycle in $q^+$ and let
    $y_{k+1}=y_0$.  By construction of $q^+$, this means that for
    $0 \leq j \leq k$, $q$ contains an atom $\alpha_j$ that contains
    (possibly identical) variables $u_j,v_j$ such that $q$ contains a
    functional path $\pi_j$ from $u_j$ to $y_j$ and a functional path
    $\pi'_j$ from $v_j$ to $y_{j+1}$.\footnote{One possible and
      particularly simple case is that $\alpha_j$ takes the form
      $R(y_j,y_{j+1})$, $\pi_j=y_j$, and $\pi'_j=y_{j+1}$.}  The
   concatenation $\pi^-_0\pi_0 \cdots \pi^-_k\pi_k$ is a cycle
    $C=z_0, \dots, z_\ell$ in $q$ where $\cdot^-$ denotes path
    reversal.  It may be that $C$ is not a simple cycle, but we can
    use it to find a simple cycle in $q$ that contains a variable
    $u_i$ with $y_i \in Y_{u_i}$.

    Let $u_i$ be the last variable on the longest common prefix of the
    paths ${\pi'}_{i-1}^-\pi_{i-1}$ and $\pi_i^-\pi'_i$. This $u_i$
    must be in the $({\pi'}_{i-1})^-$ prefix of the first path and in
    the $\pi_i^-$ prefix of the second path because all variables $z$
    in the $\pi_{i-1}$ suffix of the first path satisfy
    $y_{i-1} \in Y_z$ and all variables $z$ in the $\pi'_{i}$ suffix
    of the second path satisfy $y_{i+1} \in Y_z$, c.f.\
    Claim~\ref{claim:cycle-small-y}.  This clearly implies that
    $y_i \in Y_{u_i}$.  We can now obtain a cycle $C'$ by starting
    with $C$ and
    \begin{enumerate}

    \item cutting out the prefixes of ${\pi'}^-_{i-1}$ and $\pi_i^-$
      up to (excluding) $u_i$ and

    \item removing subpaths that end in the same variable (and do not
      contain $u_i$) from the remaining
      cycle to make it simple.

    \end{enumerate}
    By Point~2 of Claim~\ref{claim:paths-in-core-triangle},
    $h(u_i) \in \mn{adom}(\Dmc_0)$.
    \end{proof}

    % \medskip

    % Now, since $|Y| \geq 3$ and for every variable $z$ in $q$ we have that
    % $Y_z \subseteq \{y_i, y_{i+1}\}$ for some $0 \leq i \leq k$,
    % there is an atom $R(x,x')$ in $q$ such that $x \neq x'$ and $h(x), h(x') \in \DGamma$.
    % This atom allows us to reason about the values of $h(x)(y_i)$, if they are defined.

    \begin{claim}
        \label{claim:cycle-well-guided-hom}
        Let $y,y'$ be variables in $q$ such that %$y \neq y'$ and
        $h(y),h(y') \in \mn{adom}(\Dmc_0)$.  Then,
        $h(y) = \langle y, f_y\rangle$ and
        $h(y) = \langle y', f_{y'}\rangle$ for some functions $f_y$
        and $f_{y'}$.  Moreover, if $y_i \in Y_{y} \cap Y_{y'}$ then
        $f_y(y_i) = f_{y'}(y_i)$.
    \end{claim}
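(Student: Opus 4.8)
The plan is to exploit that every constant of $\Dmc_0$ records its generating query variable in the first component, and that among constants of $\Dmc_0$ the binary facts are exactly those inserted by the construction of $\Dmc_0$, unambiguously tagged thanks to self-join freeness. Concretely, I would first record the auxiliary observation that every fact $r(a,b)\in\Umc_{\Dmc,\Omc}$ with $a,b\in\DGamma$ already lies in $\Dmc_0$. Indeed, the copies of $\Dtree$ and the trace part of $\Umc_{\Dmc,\Omc}$ only attach fresh constants (this is the forest structure used in the proof of Claim~\ref{claim:paths-in-core-triangle}), and, as observed in the proof of Lemma~\ref{lem:sim}, the chase of an \ELIF ontology introduces no binary facts; hence no binary fact between two $\DGamma$-constants is created beyond those of $\Dmc_0$. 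Since $q$ is self-join free, the role name $r$ occurs in a unique atom $r(x_1,x_2)$ of $q$, and the construction of $\Dmc_0$ only ever produces $r$-facts of the form $r(\langle x_1,\cdot\rangle,\langle x_2,\cdot\rangle)$; thus $a=\langle x_1,\cdot\rangle$ and $b=\langle x_2,\cdot\rangle$.

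For the first assertion I would argue by connectivity. As $q$ is connected, there is a simple path $y=u_0,\dots,u_m=y'$ in $q$, and since $h(y),h(y')\in\DGamma$, Point~1 of Claim~\ref{claim:paths-in-core-triangle} gives $h(u_j)\in\DGamma$ for every $j$. Applying the auxiliary observation to the atom realizing each edge $\{u_j,u_{j+1}\}$ then forces $h(u_j)=\langle u_j,\cdot\rangle$; in particular $h(y)=\langle y,f_y\rangle$ and $h(y')=\langle y',f_{y'}\rangle$. (When $y=y'$, any atom incident to $y$, which exists since $q$ is connected and $G$ has no isolated vertices, anchors the first component.)

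For the ``moreover'' part, fix $y_i\in Y_y\cap Y_{y'}$ and a functional path $y=w_0,\dots,w_p=y_i$ in $q$ witnessing $y_i\in Y_y$. The plan is to show by induction on $j$ that $h(w_j)=\langle w_j,f_j\rangle\in\DGamma$ with $f_j(y_i)=f_y(y_i)$, the base case being $h(y)=\langle y,f_y\rangle$. For the inductive step I would use the functional atom $r(w_j,w_{j+1})$ with $\Omc\models\func(r)$: I locate in $\Dmc_0$ the outgoing $r$-edge at the constant $\langle w_j,f_j\rangle$, show it leads to $\langle w_{j+1},f_{j+1}\rangle$ with $f_{j+1}=f_j|_{Y_{w_{j+1}}}$ and hence $f_{j+1}(y_i)=f_j(y_i)$ (using $Y_{w_{j+1}}\subseteq Y_{w_j}$, exactly as in the proof of Lemma~\ref{claim:cycle-database-is-ok}), and then invoke satisfaction of $\func(r)$ in $\Umc_{\Dmc,\Omc}$ to conclude that $h(w_{j+1})$ equals this unique successor. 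Running the same induction along a functional path $y'\rightsquigarrow y_i$ yields $h(y_i)=\langle y_i,g\rangle$, where $Y_{y_i}=\{y_i\}$ by Claim~\ref{claim:cycle-small-y}, with $g(y_i)=f_{y'}(y_i)$; since both inductions terminate at the same element $h(y_i)$, we obtain $f_y(y_i)=g(y_i)=f_{y'}(y_i)$, as required.

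I expect the inductive step just described to be the main obstacle: one must verify that the constant $\langle w_j,f_j\rangle$, which we only know to lie in $\DGamma$, actually carries the required functional successor in $\Dmc_0$ with the $y_i$-coordinate preserved. This forces a case analysis over the three ways $\Dmc_0$ assigns words ($ab^k$, $a^kb$, $b^{k+1}$) to check that the function attached to any $\Dmc_0$-constant is coherent with all functional atoms incident to its variable; and when the functional role is not in $\Sigma$, the corresponding edge is not placed directly in $\Dmc_0$, so one additionally appeals to derivation completeness (Lemma~\ref{lem:treesim}) together with the determinism enforced by functionality in the universal model.
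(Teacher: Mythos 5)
The first half of your argument (connectivity, Claim~\ref{claim:paths-in-core-triangle}, self-join freeness forcing the first component) is exactly the paper's proof and is fine. The second half, however, takes a different route that has a genuine gap. You propose to induct along a functional path $y=w_0,\dots,w_p=y_i$, at each step locating the outgoing functional edge of $\langle w_j,f_j\rangle$ \emph{inside} $\Dmc_0$ and invoking functionality of $\Umc_{\Dmc,\Omc}$ to force $h(w_{j+1})$ onto it. This presupposes that every constant of $\mn{adom}(\Dmc_0)$ carries its functional successor within $\Dmc_0$, and that is false in general: if the role $r$ of the functional atom $r(w_j,w_{j+1})$ is not in the data schema $\Sigma$, the construction of $\Dmc_0$ places no $r$-facts at all, and the (unique) $r$-successor of $h(w_j)$ in $\Umc_{\Dmc,\Omc}$ lies in the anonymous part. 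Derivation completeness (Lemma~\ref{lem:treesim}) does not repair this --- it guarantees that everything derivable is derived at $\mn{adom}(\Dmc_0)$, but says nothing about whether a functional successor is realized inside $\Dmc_0$ rather than in an attached tree or a trace. Indeed, if your induction were sound it would show $h(y_i)\in\mn{adom}(\Dmc_0)$, which the paper explicitly warns need not hold (``it may be the case that $h(y_i)\notin\mn{adom}(\Dmc_0)$''); the whole point of passing to the auxiliary variables $u_i$ is to avoid exactly this conclusion.

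The paper's proof of the ``moreover'' part avoids functionality of the universal model entirely. It concatenates the two simple functional paths, from $y$ to $y_i$ and from $y'$ to $y_i$, into a single simple path from $y$ to $y'$ every variable $u$ of which satisfies $y_i\in Y_u$. Since \emph{both} endpoints of this path are mapped to $\mn{adom}(\Dmc_0)$, Point~1 of Claim~\ref{claim:paths-in-core-triangle} confines the image of the entire path (including the junction) to $\mn{adom}(\Dmc_0)$; this is the step your one-sided induction cannot replicate, because it only has one anchored endpoint at a time. One then chains the purely structural fact about $\Dmc_0$ that adjacent constants agree on shared $Y$-coordinates (Claim~\ref{claim:MM-functional-database}) along the consecutive facts of the path to get $f_y(y_i)=f_{y'}(y_i)$. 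So the fix is: use the two anchors simultaneously via a simple $y$--$y'$ path and the path-confinement claim, rather than pushing forward from each anchor with functionality.
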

    \begin{proof}
      Since $q$ is connected, there is a simple path
      $z_0, \dots, z_\ell$ in $q$ from $z_0=y$ to $z_\ell=y'$.  Since
      $h(y), h(y') \in \mn{adom}(\Dmc_0)$ and by Point~1 of
      Claim~\ref{claim:paths-in-core-triangle},
      all variables on this path are mapped
      to $\mn{adom}(\Dmc_0)$. Thus, there are atoms $R(y, z_1)$ and
      $S(z_{\ell-1}, y')$ such that $R(h(y), h(z_1)) \in \Dmc_0$
      and $S(h(z_{\ell-1}), h(y'))\in \Dmc_0$. It now follows from
      the construction of $\Dmc_0$ and the self-join freeness
      of $q$ that
      $h(y) = \langle y, f_y\rangle$ and
      $h(y) = \langle y', f_{y'}\rangle$.

        For the second part of the claim observe that if $y_i \in Y_{y} \cap Y_{y'}$
        then there are a simple functional path from $y$ to $y_i$
        and a simple functional path from $y'$ to $y_i$.
        Hence, there is a simple path from $y$ to $y'$
        such that for every variable $u$ on this path we have $y_i \in Y_u$.
        By Claim~\ref{claim:MM-functional-database}, this implies $f_y(y_i) = f_{y'}(y_i)$.
        \end{proof}
    Let $f = \bigcup_{0 \leq i \leq k} h(u_i)$. By
    Claim~\ref{claim:cycle-well-guided-hom}, $f$ is a function. We
    argue that $a=f(y_0)$, $b=f(y_1)$, and $c=f(y_k)$ form a triangle
    in $G$.

    First for the edge $\{a,b\}$. We start with observing that $q$
    contains a path $\pi$ from $u_0$ to $u_1$ such that
    $Y_z \cap \{ y_0,y_1\} \neq \emptyset$ for all variables $z$ on
    $\pi$. We can find $\pi$ by concatenating the functional path from
    $u_0$ to $y_0$ with a path from $y_0$ to $y_1$ that contains only
    variables $t$ with $Y_z \cap \{ y_0,y_1\} \neq \emptyset$ (which
    exists since $\{y_0,y_1\}$ is an edge in $G_{q^+}$ and then again
    with the reverse of the functional path from $u_1$ to $y_1$. We
    can convert $\pi$ into a simple path by cutting out subpaths.
    Since $h(u_0), h(u_1) \in \mn{adom}(\Dmc_0)$ and by Point~1
    of Claim~\ref{claim:paths-in-core-triangle}, all variables
    on this path are mapped to $\mn{adom}(\Dmc_0)$. It now follows from
    $y_0 \in Y_{u_0}$, $y_1 \in Y_{u_1}$, and Claim~\ref{claim:cycle-small-y}
    that there are two consecutive variables $z_0,z_1$ on $\pi$ such
    that $y_0 \in Y_{z_0}$ and $y_1 \in Y_{z_1}$. In summary, $q$
    contains an atom $R(z_0,z_1)$ with $y_0 \in Y_{z_0}$,
    $y_1 \in Y_{z_1}$, and $h(z_0),h(z_1) \in \mn{adom}(\Dmc_0)$.
    Then $R(h(z_0),h(z_1)) \in \Dmc_0$ and
    Claim~\ref{claim:cycle-well-guided-hom} yields
    $h(z_0)= \langle z_0,f_{z_0} \rangle$ with $f_{z_0}(y_0)=f(y_0)$
    and $h(z_1)=\langle z_1,f_{z_1} \rangle$ with
    $f_{z_1}(y_1)=f(y_1)$.  Since $q$ is self-join free, the fact
    $R(h(z_0),h(z_1))$ must have been added to $\Dmc_0$ due to the
    atom $R(z_0,z_1) \in q$.  Since
    $Y_{z_1} \cup Y_{z_2} = \{y_0,y_1\}$, the fact must have been
    added in Case~1. Thus $\{a,b\} \in E$, as required.

    The case of the edge $\{ b,c \}$ is analogous, with $u_0,y_0$
    replaced by $u_{k-1}$,$y_{k-1}$ and $u_1,y_1$ by
    $u_k,y_{k}$. Moreover, it uses Case~2 of the construction of
    $\Dmc_0$ in place of Case~1.

    The case of the edge $\{c,b\}$ is also analogous, with $u_0,y_0$
    replaced by $u_k,y_k$ and $u_1,y_1$ by $u_0,y_0$. The fact
    $R(h(z_0),h(z_1))$ was then added by Case~1 and Case~2
    simultaneously (they simply produce the same fact), which
    yields $\{c,b\} \in E$.

    \bigskip

\subsection{Cyclic Queries - The Case of Non-Conformity}

%%%%%%%%%%%%%%%%%%%%%%%%%%%%%%%%%%%%%%%%%%%%%%%%%%%%%%%%%%%%%%%%%%%%%%%%%%
We now consider the second case in the proof of Point~1 of
Theorem~\ref{lemma:enumeration-lower-bound}. The basic
idea of our reduction goes back to \cite{BraultBaron}, but
the details differ. Recall that a CQ~$q$ is
\emph{conformal} if for every clique $C$ in the Gaifman graph of~$q$,
there is an atom in $q$ that contains all variables in $C$.
\begin{lemma}
    \label{lemma:lower-bound-enum-hyper}
Let $Q(\bar x)=(\Omc,\Sigma,q) \in (\ELI,\text{CQ})$ be non-empty such
that $q$ is self\=/join free and connected and the hypergraph of $q^+$
is non\=/conformal.
Then enumerating complete answers to $Q$ is not in \dlc unless the
hyperclique conjecture fails.  The same is true for least partial
answers, both with a single wildcard and with multi-wildcards.
    % Let $Q(\bar{x}) = (\Omc, \Sigma, q)$ be a non\=/empty connected
    % OMQ such that the Gaifman graph of $q^{+}$ has a clique that is
    % not contained in an edge of $q^{+}$.  If there is an algorithm
    % that given an $\Sigma$\=/database $\Dmc$ after linear
    % preprocessing enumerates the set $Q(\Dmc)$ with constant delay,
    % then there is $k>2$ and an algorithm that given a $k$\=/regular
    % hypergraph $G$ detects a $(k+1)$\=/hyperclique in $G$, i.e. a set
    % of $k{+}1$ vertices where every subset of size $k$ forms a
    % hyperedge, in time $O(n^k)$, where $n$ is the number of vertices
    % in $G$.
\end{lemma}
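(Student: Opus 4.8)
The plan is to mirror the proof of Lemma~\ref{lemma:trianglelower}, replacing the chordless cycle by a minimal non\=/conformal clique and triangle detection by the $(k{+}1,k)$\=/hyperclique problem. Since the hypergraph of $q^+$ is non\=/conformal, I fix a clique $C=\{y_0,\dots,y_k\}$ in the Gaifman graph of $q^+$ that is \emph{not} contained in any atom of $q^+$ and is minimal with this property; then every proper subset of $C$, and in particular every $k$\=/element subset $C\setminus\{y_i\}$, is covered by some atom of $q^+$. We may assume $k\geq 3$: a non\=/conformal clique of size $3$ is treated by the same construction with $k=2$, which is then a reduction from triangle detection and yields the conclusion under the triangle conjecture exactly as in Lemma~\ref{lemma:trianglelower}. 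As before, set $Y_x=\{y\in C\mid q \text{ has a functional path from } x \text{ to } y\}$. The first structural fact I would establish is the analogue of Claim~\ref{claim:cycle-small-y}, namely $Y_x\subsetneq C$ for every variable $x$: if $Y_x=C$, then all of $C$ is functionally reachable from the atom of $q$ containing $x$, so by construction of $q^+$ some atom of $q^+$ covers $C$, contradicting non\=/conformality. Hence $|Y_x|\leq k$, and for an atom $R(\bar z)$ the set $W_R:=\bar z^+\cap C=\bigcup_{z\in\bar z}Y_z$ satisfies $W_R\subsetneq C$; moreover the atom $\beta_i$ covering $C\setminus\{y_i\}$ has $W_{\beta_i}=C\setminus\{y_i\}$ exactly.

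Next I carry out the reduction. Given a $k$\=/uniform hypergraph $H=(V,E)$ with $|V|=n$, I build $\Dmc_0$ whose constants are the pairs $\langle x,f\rangle$ with $x\in\mn{var}(q)$ and $f$ a partial function from $C$ to $V$ of domain $Y_x$. For every atom $R(\bar z)$ of $q$ with $R\in\Sigma$: if $W_R=C\setminus\{y_i\}$, add for every hyperedge $e\in E$ and every bijection $g\colon W_R\to e$ the fact $R(\langle z_1,g|_{Y_{z_1}}\rangle,\dots,\langle z_m,g|_{Y_{z_m}}\rangle)$; if $|W_R|<k$, add the analogous fact for every map $g\colon W_R\to V$. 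Also add $A(c)$ for every $A\in\Sigma$ and every constant $c$. Since $|Y_x|\leq k$, $|E|\leq\binom{n}{k}$, and $q$ is fixed, $\Dmc_0$ has size $O(n^k)$ and is computable in time $O(n^k)$. I then attach copies of $\Dtree$ exactly as in the cyclic reduction to obtain $\Dmc$, so that $\Dmc$ satisfies all functionality assertions in $\Omc$ (self\=/join freeness together with $Y_y\subseteq Y_x$ whenever $R(x,y)$ is functional, as in Lemma~\ref{claim:cycle-database-is-ok}) and is derivation complete at $\mn{adom}(\Dmc_0)$ via $(\Dtree,\varepsilon)\preceq(\Dmc,c)$ and Lemma~\ref{lem:treesim}. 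A \dlc enumeration algorithm would decide $Q(\Dmc)=\emptyset$ in time linear in $\|\Dmc\|=O(n^k)$, solving $(k{+}1,k)$\=/hyperclique in time $O(n^k)$, which is below the bound $n^{k+1-o(1)}$ and refutes the hyperclique conjecture.

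For correctness I would prove the analogue of Lemma~\ref{claim:lower-bound-partial-answers-and-cycle}. For the direction ``hyperclique $\Rightarrow$ complete answer,'' given a hyperclique $\{a_0,\dots,a_k\}$, I proceed as in the proof of \textbf{TD2}: the non\=/emptiness witness for $Q$ splits $\mn{var}(q)$ into the variables mapped to the active domain and those mapped into trees; I map the former to $\langle x,f_x\rangle$ with $f_x(y_j)=a_j$ on $Y_x$, which realizes every $\Sigma$\=/atom since the relevant restriction of the $a_j$ is a hyperedge (an atom with $W_R=C\setminus\{y_i\}$ uses $\{a_j:j\neq i\}\in E$, smaller atoms use an arbitrary map that is present by construction), and extend to the remaining variables through derivation completeness of $\Dmc$ at $\mn{adom}(\Dmc_0)$. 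For the converse, any complete or minimal partial answer yields a homomorphism $h$ from $q$ to $\Umc_{\Dmc,\Omc}$. Following Claims~\ref{claim:paths-in-core-triangle}, \ref{claim:triangleui} and~\ref{claim:cycle-well-guided-hom}, I (i) find for each $i$ a variable $u_i$ with $h(u_i)\in\mn{adom}(\Dmc_0)$ and $y_i\in Y_{u_i}$, defining $a_i:=f(y_i)$ where $h(u_i)=\langle u_i,f\rangle$, and (ii) show the $a_i$ are well defined and consistent across atoms, using that variables on a path between two elements of $\mn{adom}(\Dmc_0)$ all lie in $\mn{adom}(\Dmc_0)$ and carry matching $f$\=/values. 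Fixing $i$, the atom $\beta_i$ covering $C\setminus\{y_i\}$ then has all its variables forced into $\mn{adom}(\Dmc_0)$, so the fact of $\Dmc_0$ realizing $\beta_i$ encodes a genuine hyperedge equal to $\{a_j:j\neq i\}$; thus every $k$\=/subset is a hyperedge. Distinctness is automatic: for $i\neq j$ pick $l\notin\{i,j\}$ (possible as $k+1\geq 3$), so $y_i,y_j\in C\setminus\{y_l\}$, which is realized as a $k$\=/element hyperedge of distinct vertices, whence $a_i\neq a_j$. Hence $\{a_0,\dots,a_k\}$ is a $(k{+}1)$\=/hyperclique, which also covers the partial\=/answer cases since any minimal partial answer already supplies the homomorphism $h$.

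The main obstacle is this second direction, specifically step (ii) together with forcing each $\beta_i$ into $\Dmc_0$: one must rule out that $h$ maps the clique variables or the covering atoms into the attached trees in a way that fabricates a spurious answer, and one must reconcile the $k+1$ covering atoms $\beta_0,\dots,\beta_k$ — which overlap on shared clique vertices — into a single consistent vertex assignment. This is exactly where self\=/join freeness of $q$ (hence of $q^+$), the tree shape of $\Umc_{\Dmc,\Omc}$ outside $\mn{adom}(\Dmc_0)$, and the bound $Y_x\subsetneq C$ do the work. The bookkeeping is heavier than in Lemma~\ref{lemma:trianglelower} because each $\beta_i$ covers $k$ clique vertices at once rather than two; conceptually, however, it is the same argument, and the single fact realizing $\beta_i$ delivers the entire hyperedge directly, so the clique case is in some respects cleaner than the cycle case.
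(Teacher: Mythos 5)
Your reduction itself matches the paper's: fix a minimal non\-/conformal clique $C=\{y_0,\dots,y_k\}$ so that every $k$\-/subset of $C$ is covered by an atom of $q^+$ but $C$ is not, encode a $k$\-/uniform hypergraph into a database over constants $\langle x,f\rangle$ with $f$ defined on $Y_x$, attach copies of $\Dtree$ to obtain functionality and derivation completeness, and prove the two directions of correctness. The differences in the database construction are cosmetic (the paper labels facts uniformly by serializations of hyperedges rather than splitting on whether $W_R=C\setminus\{y_i\}$ or $|W_R|<k$), your completeness direction is the paper's \textbf{HD2} almost verbatim, and your explicit handling of the size\-/$3$ clique case and of distinctness of the extracted vertices are reasonable additions.

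The gap is in the soundness direction, and it sits exactly where you park it as ``the main obstacle.'' First, you obtain the variables $u_i$ with $h(u_i)\in\mn{adom}(\Dmc_0)$ and $y_i\in Y_{u_i}$ by ``following Claim~\ref{claim:triangleui},'' but that claim's proof depends on the chordless cycle $y_0,\dots,y_k$ being given in advance: it concatenates the functional paths along that cycle into a cycle of $q$ and then applies Point~2 of Claim~\ref{claim:paths-in-core-triangle}. In the non\-/conformal case no cycle among the $y_i$ is given; the paper must \emph{manufacture} a simple cycle in $q$ out of three covering atoms via nontrivial path surgery (Claim~\ref{claim:hypercliqueui} and Figure~\ref{fig:ex-hypercyc}), and without some such argument nothing rules out that $h$ maps all of $q$ into the attached trees and the anonymous part of $\Umc_{\Dmc,\Omc}$, fabricating a spurious answer. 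Second, your assertion that the covering atom $\beta_i$ ``has all its variables forced into $\mn{adom}(\Dmc_0)$'' is not justified and is not what the paper proves: the variables of $\beta_i$ may land outside $\mn{adom}(\Dmc_0)$, and the paper instead locates a possibly \emph{different} atom $R(z,z')$ with $Y_I\subseteq Y_z\cup Y_{z'}$ and $h(z),h(z')\in\mn{adom}(\Dmc_0)$, through an inductive construction of variables $w_i$ with $\{y_{i_1},\dots,y_{i_i}\}\subseteq Y_{w_i}$ on simple functional paths, followed by a case distinction on whether the two resulting paths intersect. These two steps are the substantive new content of the non\-/conformal case relative to the non\-/chordal one; naming the relevant tools (self\-/join freeness, the tree shape outside $\mn{adom}(\Dmc_0)$, $Y_x\subsetneq C$) does not substitute for carrying out the cycle construction and the surrogate\-/atom argument. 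Your closing remark that the clique case is ``in some respects cleaner than the cycle case'' has it backwards for precisely this reason.
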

Let $Q(\bar x)=(\Omc,\Sigma,q) \in (\ELI,\text{CQ})$ be as in
Lemma~\ref{lemma:lower-bound-enum-hyper}, and let
$Y = \{y_0, \dots, y_k\}$, $k \geq 2$, be a clique in the Gaifman
graph of $q^+$. Further assume that every proper subset of $Y$ is
contained in an edge in~$q^{+}$.  \cMP{what if k=2?}
% As before, for
% every variable $x$ in $q$ we use $Y_x$ to denote the set of variables
% $y \in Y$ such that $q$ contains a functional (possibly empty) path
% from $x$ to~$y$.

Let $G = (V, E)$ be a $k$-regular hypergraph, that is,
$V= \{v_1, \dots, v_n\}$ is a set of vertices and $E$ is a set of
subsets of $V$ of size $k$ called the hyperedges. Our aim is to
construct a database \Dmc, in time $n^k$, such that $G$ contains a
hyperclique of size $k+1$ if and only if $Q(\Dmc) \neq
\emptyset$. Clearly, a \dlc enumeration algorithm for $Q$ lets us
decide the latter in linear time and thus we have found an algorithm
for the $(k+1,k)$-hyperclique problem that runs in time $O(n^k)$,
refuting the hyperclique conjecture.

The reduction shares a lot of ideas, intuition, and notation with the
previous one. In particular, for every variable $x$ in $q$ the set
$Y_x$ is defined as in the previous reduction and we again define the
desired database \Dmc by starting with a `core part' $\Dmc_0$ that
encodes the hyperclique problem and then adding trees in a second
step.

We start with the construction of $\Dmc_0$, where we use the same
functions $f^w_x$ as in the previous reduction. For every atom
$r(x,y)$ in $q$ with $r \in \Sigma$, add the following facts to
$\Dmc_0$:
\begin{itemize}
\item $r(\langle x, f^{w}_x \rangle , \langle y, f^w_y \rangle)$ for
  every hyperedge $e = \{a_1, \dots, a_k\} \in E$ and every
  serialization $w = a_1 \cdots a_k$ of $e$.
\end{itemize}
% where, as before, $f^w_z$ is a function from the set $Y_z$ to the set $V$ such that for $f^w_z({y_i})=(w_i)$ for $y_i \in Y_z$.
%
%
% \paragraph{Adding trees.}

The construction of \Dmc from $\Dmc_0$ is exactly as in the previous
reduction, based on the database $\Dmc_{\mn{tree}}$ and its fragments
$\Dmc_R$. More precisely, we extend the database $\Dmc_0$ to the
desired database \Dmc as follows: for every $c \in \mn{adom}(\Dmc_0)$
and every role $R \in \{ r, r^-\}$ with $r \in \Sigma$ such that there
is no fact $R(c, c') \in \Dmc_0$, add a disjoint copy of $\Dmc_{R}$,
glueing the copy of $\varepsilon$ to $c$.

% Moreover, for every relation $r \in \Sbf$ and every constant $c$ used in one of the facts added above, we add a fresh copy of $t_K$
%anchored to $c$, i.e. \begin{itemize}
%    \item if there is no fact $r(c, c')$ in $\Dmc$ then we add to $\Dmc$  a copy of $t_K$ and fact $r(c,c')$ where $c'$
%    is the constant $\varepsilon$ in the fresh copy of $t_K$;
%    \item if there is no fact $r(c', c)$ in $\Dmc$ then we add to $\Dmc$ a copy of $t_K^{-}$ and fact $r(c',c)$ where $c'$
%    is the constant $\varepsilon$ in the fresh copy of $t_K^{-}$.
%\end{itemize}

Finally, for every unary relation $A$ symbol in $\Sigma$ and every constant $c$ in the so\=/far constructed database
we add fact $A(c)$.
\begin{lemma}
    \label{claim:hyper-database-is-ok} The database $\Dmc$
    \begin{enumerate}
        \item can be computed in time $O(n^k)$,
        \item satisfies all functional assertions in~$\Omc$,
        \item is derivation complete at $\mn{adom}(\Dmc_0)$.
    \end{enumerate}
\end{lemma}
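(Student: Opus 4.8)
The plan is to mirror, point by point, the proof of Lemma~\ref{claim:cycle-database-is-ok}, which establishes the same three properties for the non-chordal reduction, and to adapt each argument to the present hyperedge-based construction of $\Dmc_0$. Since $\Dmc$ is built from $\Dmc_0$ by the very same tree-attachment procedure as before, Points~1 and~3 transfer almost verbatim, and only the functionality argument of Point~2 needs genuine rethinking.

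For Point~1 I would simply count. There are at most $\binom{n}{k}=O(n^k)$ hyperedges, each admitting $k!=O(1)$ serializations since $k$ is fixed by the query. Iterating over $E$, for each of the constantly many atoms $r(x,y)$ of $q$ with $r \in \Sigma$ and each serialization $w$ we add a constant number of facts, so $\Dmc_0$ is built in time $O(n^k)$ and $|\mn{adom}(\Dmc_0)|=O(n^k)$. The passage from $\Dmc_0$ to $\Dmc$ glues, per constant and per role $R$, one copy of the fixed $\Dmc$-independent subtree $\Dmc_R$; as $\Dtree$ (hence each $\Dmc_R$) can be precomputed in time independent of the input hypergraph and there are $O(1)$ roles, this contributes only $O(n^k)$. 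Adding the unary facts $A(c)$ for $A \in \Sigma$ is likewise $O(n^k)$, giving the claimed bound.

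For Point~2 I would take a role $R$ with $\func(R)\in\Omc$ and facts $R(c,d),R(c,d')\in\Dmc$ and argue $d=d'$. If any of $c,d,d'$ lies in an attached copy of $\Dtree$, then $d=d'$ follows from the construction, which attaches a copy of $\Dmc_R$ only where no $R$-edge already exists and whose trees satisfy functionality. Otherwise $c,d,d'\in\mn{adom}(\Dmc_0)$, and self-join freeness of $q$ forces both facts to arise from one atom $R(x,y)\in q$, so $c=\langle x,f_c\rangle$, $d=\langle y,f_d\rangle$, $d'=\langle y,f_{d'}\rangle$; from $\func(R)\in\Omc$ we get $Y_y\subseteq Y_x$. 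By the single construction rule there are serializations $w,w'$ with $f_c=f^w_x=f^{w'}_x$, $f_d=f^w_y$, and $f_{d'}=f^{w'}_y$. Since $f^w_x$ depends only on the letters of $w$ at positions indexed by $Y_x$, the equality $f^w_x=f^{w'}_x$ forces $w$ and $w'$ to agree on those positions; as $Y_y\subseteq Y_x$ they then agree on the $Y_y$-positions as well, whence $f_d=f^w_y=f^{w'}_y=f_{d'}$ and therefore $d=d'$. For Point~3, because the construction of $\Dmc$ from $\Dmc_0$ is identical to the non-chordal case, I would observe that the attached copies of $\Dmc_R$ provide exactly the witnesses for a simulation showing $(\Dtree,\varepsilon)\preceq(\Dmc,c)$ for every $c\in\mn{adom}(\Dmc_0)$, and then invoke Lemma~\ref{lem:treesim} to conclude derivation completeness at $\mn{adom}(\Dmc_0)$.

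The one place where I expect to have to be careful is the bookkeeping inside Point~2: one must state precisely how $f^w_x$ is a function of only the $Y_x$-indexed letters of $w$, and reconcile the length-$k$ hyperedge serializations used here with the function $f^w_x$ as originally defined on longer words over $V$. I anticipate this is handled by the analogue of Claim~\ref{claim:cycle-small-y} for the present reduction, which constrains each $Y_x$ to sit inside a proper subset of the clique $Y$ so that the relevant positions of a $k$-letter serialization suffice to determine $f^w_x$. Once that indexing is pinned down, the chain $\func(R)\Rightarrow Y_y\subseteq Y_x\Rightarrow f_d=f_{d'}$ closes exactly as in the non-chordal case, and the remaining two points are routine.
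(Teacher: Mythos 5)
Your proof is correct and takes essentially the same route as the paper, whose own argument for this lemma is just a one-line deferral: Point~1 is declared clear from the construction, and Points~2 and~3 are proved "in the same way" as in the non-chordal reduction, i.e.\ the self-join-freeness plus $Y_y \subseteq Y_x$ argument for functionality and the simulation $(\Dtree,\varepsilon) \preceq (\Dmc,c)$ combined with Lemma~\ref{lem:treesim} for derivation completeness. Your explicit counting for Point~1 and your flagging of the indexing mismatch between length-$k$ hyperedge serializations and the original definition of $f^w_x$ are in fact more careful than what the paper writes down.
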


Point~1 is clear from the construction and Points~2 and~3 can be
proved in the same way as the corresponding points in
Claim~\ref{claim:MM-database-is-ok}.

We observe the following counterpart of
Claim~\ref{claim:paths-in-core-triangle}.
The proof is omitted.
    \begin{claim}
        \label{claim:paths-in-core-hyperclique}
        Let $h$ be a homomorphism from $q$ to $\Umc_{\Dmc,
          \Omc}$. Then
        \begin{enumerate}

        \item for every simple path $u_0, \dots, u_m$ in $q$, $m>0$,
        %        {\color{violet} If $h(u_0) \in \dom(\Dmc)$ and $h(u_m) \in \dom(\Dmc)$, then $h(u_i) \in \dom(\Dmc)$ for
            %            $0<i\leq m$. Moreover, if}
          $h(u_0) \in \DGamma$ and $h(u_m) \in \DGamma$ implies
          $h(u_i) \in \DGamma$ for $0\leq i\leq m$;

        \item for every cycle $u_0,\dots,u_m$ in $q$, we have
          $h(u_i) \in \DGamma$ for $0 \leq i \leq m$.

        \end{enumerate}
    \end{claim}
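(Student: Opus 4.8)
The plan is to replay the argument given for Claim~\ref{claim:paths-in-core-triangle} essentially verbatim, since that proof used only the \emph{shape} of the universal model and not the specific way $\Dmc_0$ encoded the triangle problem. The decisive point is that the construction of $\Dmc$ from $\Dmc_0$ in the present reduction is literally the same tree-attachment procedure as before: at each $c \in \DGamma$ and each role $R \in \{r,r^-\}$ with $r \in \Sigma$ for which no fact $R(c,c')$ already lies in $\Dmc_0$, we glue a disjoint copy of $\Dmc_R$ with its root $\varepsilon$ identified with $c$. Hence the first thing I would record is the structural observation underlying both proofs: if we delete from $\Umc_{\Dmc,\Omc}$ every binary fact $r(a,b)$ with $a,b \in \DGamma$, then the resulting interpretation $\Imc$ is a disjoint union of trees with no multi-edges and no reflexive loops, namely the attached copies of the $\Dmc_R$ together with the generated trees $\Umc_{\Dmc,\Omc}^{\downarrow c}$; each such tree has a unique root, and the roots are exactly the constants of $\DGamma$. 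This is inherited directly from the construction and from Lemma~\ref{claim:hyper-database-is-ok}, so no new work is needed for it.

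For Point~1 I would argue by contradiction. Given a simple path $u_0,\dots,u_m$ in $q$ with $h(u_0),h(u_m) \in \DGamma$, suppose some internal image lies outside $\DGamma$, pick $j$ minimal with $h(u_j) \notin \DGamma$ (so $h(u_{j-1}) \in \DGamma$) and $\ell > j$ minimal with $h(u_\ell) \in \DGamma$ (which exists since $h(u_m) \in \DGamma$). The consecutive pairs yield atoms $R_j(u_{j-1},u_j),\dots,R_\ell(u_{\ell-1},u_\ell)$ in $q$, and self-join freeness makes $R_j,\dots,R_\ell$ pairwise distinct. None of these atoms is a core edge, since each has at least one endpoint mapped outside $\DGamma$; hence all of them are mapped into $\Imc$. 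Their images therefore form a walk in $\Imc$ starting and ending at roots (namely $h(u_{j-1})$ and $h(u_\ell)$) but passing through at least one non-root vertex. Since $\Imc$ is a forest, this walk is not a simple path and must retrace some tree edge, so two distinct atoms among $R_j,\dots,R_\ell$ map to the same edge; as $\Imc$ has no multi-edges they would then carry the same relation symbol, contradicting self-join freeness.

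For Point~2 I would reduce to Point~1 and to the acyclicity of $\Imc$. Suppose a cycle $u_0,\dots,u_m$ has some $h(u_i)\notin\DGamma$. If \emph{every} vertex of the cycle maps outside $\DGamma$, then no cycle atom is a core edge, so the cycle maps to a nontrivial closed walk in the forest $\Imc$, which again must repeat an edge and hence a relation symbol, contradicting self-join freeness. Otherwise some vertex maps into $\DGamma$; then, traversing the cycle, one finds an arc that is a simple path in $q$ with both endpoints mapped into $\DGamma$ and at least one strictly internal vertex mapped outside $\DGamma$, directly contradicting Point~1. Either way all $h(u_i) \in \DGamma$. The only step requiring genuine care—and thus the one I would write out most explicitly—is the forest claim about $\Imc$, since everything else is formal; but this is exactly the property already used in the triangle case, and it holds here unchanged because the two reductions differ only in the core $\Dmc_0$, not in the surrounding tree attachment, so I expect no real obstacle.
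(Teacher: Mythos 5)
Your proposal is correct and matches the paper's approach: the paper omits this proof entirely, noting it is the counterpart of Claim~\ref{claim:paths-in-core-triangle}, and the proof of that claim is exactly your argument for Point~1 (minimal $j$ with $h(u_j)\notin\DGamma$, minimal $\ell>j$ returning to $\DGamma$, distinct relation symbols by self-join freeness, and the forest structure of $\Umc_{\Dmc,\Omc}$ after deleting core edges forcing a retraced edge), with Point~2 dismissed as ``similar.'' Your explicit case split for Point~2 (all-outside versus reduction to Point~1 on an arc) is a reasonable filling-in of that omitted step.
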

    To finish the proof of Lemma~\ref{lemma:lower-bound-enum-hyper},
it remains to show the following.
\begin{lemma}

    \label{claim:lower-bound-partial-answers-and-hyper}
    \
    \begin{description}
        \item[HD1] if there is a minimal partial answer to $Q$ on $\Dmc$ then there is a $(k+1)$\=/hyperclique in the Gaifman graph of~$G$,
        \item[HD2] if there is a $(k+1)$\=/hyperclique in the Gaifman graph of $G$ then there is a complete answer to $Q$ on $\Dmc$.
    \end{description}
\end{lemma}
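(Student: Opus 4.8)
The plan is to follow the template of the non\=/chordality case, i.e.\ of Lemma~\ref{claim:lower-bound-partial-answers-and-cycle} (\textbf{TD1}/\textbf{TD2}), replacing the chordless cycle $y_0,\dots,y_k$ by the minimal non\=/conformal clique $Y=\{y_0,\dots,y_k\}$ and replacing triangle detection by detection of a $(k+1)$\=/hyperclique. Throughout I may invoke Lemma~\ref{claim:hyper-database-is-ok} (so that \Dmc satisfies all functionality assertions and is derivation complete at $\DGamma$) and Claim~\ref{claim:paths-in-core-hyperclique}, and I will re\=/prove for the clique the two auxiliary statements that drove the cycle case: an analogue of Claim~\ref{claim:triangleui} producing, for each $i$, a variable $u_i$ with $h(u_i)\in\DGamma$ and $y_i\in Y_{u_i}$, and an analogue of Claim~\ref{claim:cycle-well-guided-hom} letting me read a single function $f\colon Y\to V$ off the constants $h(u_i)$. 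The conceptual shift is that here the hyperedge structure of $G$ is encoded by proper subsets of $Y$: since $Y$ is non\=/conformal, for every atom $R(x,x')$ of $q$ we have $Y_x\cup Y_{x'}\subsetneq Y$, so any atom only ever ``sees'' at most $k$ of the $y_i$, matching the hyperedge size.

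For \textbf{HD2}, suppose $\{b_0,\dots,b_k\}$ is a $(k+1)$\=/hyperclique of $G$. As in the proof of \textbf{TD2}, fix the guide homomorphism $h'$ coming from non\=/emptiness of $Q$, let $W_0$ be the variables mapped by $h'$ into the active domain, and handle the tree components by derivation completeness at $\DGamma$. On $W_0$ I set $h(x)=\langle x,f^w_x\rangle$ where $w$ is the serialization induced by the assignment $y_i\mapsto b_i$. For a binary atom $R(x,x')\in q$ with $x,x'\in W_0$ we must have $R\in\Sigma$ (the construction of $\Umc_{\Dmc',\Omc}$ adds no binary facts among elements of $\mn{adom}(\Dmc')$), and the fact $R(h(x),h(x'))$ depends only on $w$ restricted to $Y_x\cup Y_{x'}\subsetneq Y$. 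Since any at most $k$ vertices of the hyperclique lie inside a common hyperedge, $\{b_i\mid y_i\in Y_x\cup Y_{x'}\}$ is contained in some hyperedge $e\in E$, so the required fact was added to $\Dmc_0$ for a serialization of $e$. Unary atoms over $\Sigma$ hold by construction and unary atoms outside $\Sigma$ hold by derivation completeness, exactly as in \textbf{TD2}; extending $h$ over the tree components via the homomorphisms supplied by derivation completeness then yields a complete answer.

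For \textbf{HD1}, a minimal partial answer yields a homomorphism $h$ from $q$ to $\Umc_{\Dmc,\Omc}$. First I establish the clique analogue of Claim~\ref{claim:triangleui}: because $Y$ is a clique, every pair $\{y_i,y_{i'}\}$ is an edge of $q^+$ and hence, by construction of $q^+$, is witnessed by an atom of $q$ whose variables reach $y_i$ and $y_{i'}$ on functional paths; concatenating these functional paths and atoms around the indices $0,\dots,k$ produces a cycle $C$ in $q$, and Point~2 of Claim~\ref{claim:paths-in-core-hyperclique} forces $h(z)\in\DGamma$ for every variable $z$ on $C$. Choosing, for each $i$, a variable $u_i$ on $C$ with $y_i\in Y_{u_i}$ gives $h(u_i)\in\DGamma$. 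The analogue of Claim~\ref{claim:cycle-well-guided-hom} (proved via connectivity of $q$, Point~1 of Claim~\ref{claim:paths-in-core-hyperclique}, self\=/join freeness, and Claim~\ref{claim:cycle-small-y}) then shows $h(u_i)=\langle u_i,f_{u_i}\rangle$ with the $f_{u_i}$ agreeing on shared indices, so $f=\bigcup_i h(u_i)$ is a well\=/defined map and $b_i:=f(y_i)$ is defined for all $i$.

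It remains to show $\{b_0,\dots,b_k\}$ is a $(k+1)$\=/hyperclique, i.e.\ that for each $j$ the $k$\=/set $\{b_i\mid i\neq j\}$ is a hyperedge. By minimality of $Y$ the proper subset $Y\setminus\{y_j\}$ lies in an atom of $q^+$, hence there is an atom $R(x,x')\in q$ with $R\in\Sigma$ and $Y\setminus\{y_j\}\subseteq Y_x\cup Y_{x'}$ (indeed $=Y_x\cup Y_{x'}$, by non\=/conformality). Using functional paths from $x,x'$ to the relevant $y_i$ together with the $u_i$ already mapped into $\DGamma$ and Point~1 of Claim~\ref{claim:paths-in-core-hyperclique}, I argue $h(x),h(x')\in\DGamma$, whence (as in \textbf{TD1}, since the universal model adds no binary facts among active\=/domain elements) $R(h(x),h(x'))\in\Dmc_0$; by self\=/join freeness this fact was added for the atom $R(x,x')$ and a serialization $w$ of a genuine hyperedge $e\in E$. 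Reading $f^w_x,f^w_{x'}$ on $Y\setminus\{y_j\}$ recovers the distinct vertices of $e$, so $\{b_i\mid i\neq j\}=e\in E$; ranging over $j$ yields the hyperclique (and in particular the $b_i$ are pairwise distinct). I expect the main obstacle to be precisely the step forcing $h(x),h(x')\in\DGamma$ for each witnessing atom, together with keeping the position/index bookkeeping between serializations $w=a_1\cdots a_k$ and the indexed clique $y_0,\dots,y_k$ consistent; assembling the cycle $C$ from several functional paths (rather than from a single chordless cycle as in \textbf{TD1}) and treating the boundary case $k=2$ will also need care.
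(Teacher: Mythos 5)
Your overall architecture matches the paper's: \textbf{HD2} is handled exactly as in the paper (guide homomorphism from non\-/emptiness, $h(x)=\langle x,f^w_x\rangle$ on $W_0$ with $w$ the serialization of the hyperclique, trees via derivation completeness), and for \textbf{HD1} you correctly identify the two pillars, namely an analogue of Claim~\ref{claim:triangleui} and an analogue of Claim~\ref{claim:cycle-well-guided-hom}. However, there is a genuine gap in the first pillar. You propose to concatenate, around the indices $0,\dots,k$, the witnessing atoms and functional paths into a cycle $C$ and then apply Point~2 of Claim~\ref{claim:paths-in-core-hyperclique} to conclude $h(z)\in\DGamma$ for all $z$ on $C$. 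But that claim, as proved (via self\-/join freeness: the relation symbols along any offending segment must be pairwise distinct), only applies to \emph{simple} cycles, and the concatenated $C$ is in general far from simple. In the chordality case the extraction of a simple cycle containing a suitable $u_i$ was rescued by Claim~\ref{claim:cycle-small-y}, i.e.\ by chordlessness forcing $Y_z\subseteq\{y_i,y_{i+1}\}$; here $Y$ is a clique in $q^+$, no such bound holds, and cutting out subcycles can destroy every variable $z$ with $y_i\in Y_z$. The paper's Claim~\ref{claim:hypercliqueui} resolves this with a different construction: it uses only \emph{three} elements $y_1,y_2,y_3$ of $Y$, produces three atoms $R_i(v^{j_1}_i,v^{j_2}_i)$ whose attached functional paths avoid $y_i$ (Points~1--3 there), assembles a six\-/segment cycle, and then uses those avoidance properties to argue that the first repeated variable can be excised without losing $u_i=v^1_3$ and that the result is simple. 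This idea is absent from your proposal, and without it the step ``$h(u_i)\in\DGamma$'' is unsupported.

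A second, smaller instance of the same problem occurs in your final step of \textbf{HD1}: you assert that for the atom $R(z_0,z'_0)$ witnessing $Y_I=Y\setminus\{y_j\}$ one can ``argue $h(z_0),h(z'_0)\in\DGamma$'' using functional paths to the $u_i$ and Point~1 of Claim~\ref{claim:paths-in-core-hyperclique}. Point~1 needs a \emph{simple} path between two variables already known to land in $\DGamma$; since $Y_{z_0}$ may contain several $y_i$'s, the paths from $z_0$ to the various $u_i$ can interleave, and the paper needs a separate induction (the claim constructing the variables $w_i$ with $\{y_{i_1},\dots,y_{i_i}\}\subseteq Y_{w_i}$, plus a case distinction on whether the assembled walk is simple, falling back on Point~2 for a genuine cycle) to produce a single simple path ending in $\DGamma$. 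So the skeleton of your proof is right and \textbf{HD2} is fine, but the two combinatorial lemmas that constitute the actual content of \textbf{HD1} are deferred rather than proved, and the direct route you sketch for them does not go through.
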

To show \textbf{HD2}, let $v_0, \dots, v_k$ be a hyperclique in
$G$. To show that there is a complete answer to $Q$ on \Dmc, it
suffices to exhibit a homomorphism $h$ from $q$ to $\Umc_{\Dmc, \Omc}$
such that $h(x) \in \mn{adom}(\Dmc_0)$ for every answer variable~$x$.
The construction of $h$ is very similar to the proof of the \textbf{TD2} part
of Claim~\ref{claim:lower-bound-partial-answers-and-cycle} and we
provide only an outline.

% We first fix a partial mapping $h$ from $q$ to $\Umc_{\Dmc, \Omc}$
% such that $h(y_0) = \{(y_0, a)\}$, $h(y_i) = \{(y_i, b)\}$ for
% $0 < i < k$, and $h(y_k) = \{(y_k, c)\}$.  Following the same steps as
% in proof of \textbf{MM2}, we can show that this mapping can be
% extended to a homomorphism from $q$ to $\Umc_{\Dmc, \Omc}$. This
% proves \textbf{CD2}.

Since $Q$ is non\=/empty, there is a database $\Dmc'$ that is
satisfiable w.r.t.\ \Omc and a homomorphism $h'$ from $q$ to
$\Umc_{\Dmc', \Omc}$ such that $h'(\bar{x})$ contains only elements of
$\mn{adom}(\Dmc')$.  Let $W_0$ be the set of variables
$x \in \mn{var}(q)$ with $h'(x) \in \mn{adom}(\Dmc')$.

We may construct the desired homomorphism $h$ by setting
$h(x) = \langle x, f^{v_0\cdots v_k}_x\rangle$ for all $x \in W_0$, and then extending
to the remaining variables in $q$ (which are part of tree-shaped
subqueries of $q$) exactly as in the proof of the \textbf{TD2} part
of Claim~\ref{claim:lower-bound-partial-answers-and-cycle}.

\medskip

% \cMP{do we want something like $Y_{R}$ for $\bigcup_{x \in R(\bar{x})} Y_x$ where $R(\bar{x}) \in q$?}

% Before we prove the claim, observe that it is possible that
% no homomorphism matches the variables from $Y$ to $\dom(\Dmc_0)$.
% Moreover, it is possible that for no relational symbol $R \in \Sigma$
% there is a variable $z$ such that $R(y_1,z)$ is an atom in $q$.
% Thus, we cannot ``set'' nor ``access'' the values that any homomorphisms maps $y_1$
% to by simply using atoms containing $y_1$.
% The following observation will allow us to do it implicitly.

To prove \textbf{HD1}, we start with observing an analogue of
Claim~\ref{claim:triangleui}.
\begin{claim}
  \label{claim:hypercliqueui}
        There is a variable $u_i$
        such that $h(u_i) \in \mn{adom}(\Dmc_0)$ and $y_i \in Y_{u_i}$,
        for $0\leq i \leq k$.
 \end{claim}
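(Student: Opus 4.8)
The plan is to adapt the proof of Claim~\ref{claim:triangleui} from the triangle case, the only structural difference being that $Y$ is now a clique rather than a chordless cycle. Fix $i \in \{0,\dots,k\}$. Since $k \geq 2$, the set $Y$ has at least three elements, so I may pick two further indices $j,l$ with $i,j,l$ pairwise distinct. As $Y$ is a clique in the Gaifman graph of $q^+$, each of the three pairs $\{y_i,y_j\}$, $\{y_j,y_l\}$, and $\{y_l,y_i\}$ is an edge of that graph, and I would use this triangle in place of the cycle $y_0,\dots,y_k$ used in the triangle reduction. Note that clique-ness alone suffices here; the stronger minimality assumption on $Y$ is only needed elsewhere in the proof of \textbf{HD1}.

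Next I would translate each of these three $q^+$-edges into connections inside $q$. Being an edge of the Gaifman graph of $q^+$, each pair $\{y_a,y_b\}$ has both variables occurring together in a single atom $R'(\bar z^{+})$ of $q^+$, which by construction of the FA-extension arises from an atom $R(\bar z)$ of $q$. Since $y_a,y_b \in \bar z^{+}$, there are variables $p,p' \in \bar z$ (possibly $p=p'$, e.g.\ when $R$ is unary) together with functional paths in $q$ from $p$ to $y_a$ and from $p'$ to $y_b$; when $p \neq p'$ the atom $R$ moreover supplies the edge $\{p,p'\}$ in $q$. Thus the three chosen $q^+$-edges yield three atoms of $q$ together with functional paths linking their variables to $y_i,y_j,y_l$.

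I would then chain these functional paths and within-atom edges, exactly as in Claim~\ref{claim:triangleui}, into a closed walk $C$ in $q$ that visits $y_i,y_j,y_l$; this walk is non-trivial because the three $y$'s are distinct. Near $y_i$ two functional paths meet, namely the one coming from the edge $\{y_l,y_i\}$ and the one coming from $\{y_i,y_j\}$, and I would take $u_i$ to be the last variable on the longest common prefix of these two paths, both oriented so as to start at $y_i$, precisely as in the triangle proof. By construction there is a functional path from $u_i$ to $y_i$, so $y_i \in Y_{u_i}$, and after cutting out the initial segments before $u_i$ and removing repeated subpaths, $C$ becomes a simple cycle $C'$ through $u_i$. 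Applying Point~2 of Claim~\ref{claim:paths-in-core-hyperclique} to $C'$ then gives $h(u_i) \in \mn{adom}(\Dmc_0)$, which is the claim.

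The main obstacle is the same fiddly bookkeeping as in Claim~\ref{claim:triangleui}: verifying that the extraction of $u_i$ and the simplification to a genuine simple cycle $C'$ go through in all degenerate configurations, namely empty functional paths (when some $y$ is already a variable of its atom), the case $p=p'$ where the two functional paths share their starting variable rather than being joined by a within-atom edge, and the boundary case $k=2$ where $j,l$ are forced. Self-join freeness of $q$ is what prevents $C'$ from collapsing, since the relation symbols along $C'$ are then pairwise distinct, so that $C'$ really is a non-degenerate cycle to which Point~2 of Claim~\ref{claim:paths-in-core-hyperclique} applies. These are exactly the points already settled in the triangle case, so I expect the argument to transfer with essentially only the replacement of the cycle $y_0,\dots,y_k$ by a triangle $y_i,y_j,y_l$ chosen inside the clique.
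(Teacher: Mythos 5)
There is a genuine gap, and it lies exactly where you wave it away: your assertion that ``clique-ness alone suffices here; the stronger minimality assumption on $Y$ is only needed elsewhere'' is the opposite of what the paper does. The paper's proof of this claim hinges on the assumption that every proper subset of $Y$ is contained in an edge of $q^+$ while $Y$ itself is not: this yields, for each $m\in\{1,2,3\}$, an atom $\alpha_m$ of $q^+$ that contains \emph{all} of $Y\setminus\{y_m\}$ and therefore \emph{excludes} $y_m$. The exclusion property is then recorded as Point~2 of the construction ($y_m\notin Y_{v^j_m}$, hence $y_m\notin Y_z$ for every $z$ on the functional path $\pi^j_m$), and it is precisely this non-membership that powers steps (a) and (b) of the simple-cycle extraction: a variable repeated on two different $\pi$-pieces would have to both contain and not contain some $y_\ell$ in its $Y$-set. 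If you pick your three atoms merely as witnesses of the pairwise edges $\{y_i,y_j\}$, $\{y_j,y_l\}$, $\{y_l,y_i\}$, you get no exclusion property at all. Worse, for $k\geq 3$ the sub-triangle $\{y_i,y_j,y_l\}$ is a \emph{proper} subset of $Y$, so by the very minimality assumption it \emph{is} contained in a single atom of $q^+$; your three pair-witnessing atoms may then all be that one atom, the closed walk $C$ collapses onto a union of functional paths hanging off a single atom of $q$, it contains no simple cycle in the sense of the paper's definition, and Point~2 of Claim~\ref{claim:paths-in-core-hyperclique} cannot be applied to conclude $h(u_i)\in\mn{adom}(\Dmc_0)$.

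The second problem is that the triangle-case bookkeeping you hope to import does not transfer. In Claim~\ref{claim:triangleui} the identification of $u_i$ and the simplification of $C$ rest on Claim~\ref{claim:cycle-small-y}, i.e.\ on the fact that $y_0,\dots,y_k$ is a \emph{chordless cycle of length at least 4}, which forces $Y_x\subseteq\{y_j,y_{j+1}\}$ for every variable $x$. A triangle chosen inside a clique satisfies no such constraint: every pair of the chosen $y$'s is an edge of the Gaifman graph of $q^+$, so nothing prevents a single variable $x$ from having all three of $y_i,y_j,y_l$ in $Y_x$, and the ``last variable on the longest common prefix'' argument no longer pins $u_i$ down or guarantees a non-degenerate cycle. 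The fix is not bookkeeping but a different choice of atoms: take $\alpha_m$ containing $Y\setminus\{y_m\}$ (hence excluding $y_m$) as the paper does, and rerun the walk construction and the two-point repetition analysis with that exclusion property in hand.
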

%
%To prove this, it is enough to show that there is a variable $u_i$ in $q$
%such that $y_i \in Y_{u_i}$ and $u_i$ belongs to a simple cycle in $q$.
%Indeed, since $y_i$ belongs to a simple cycle in $q$ and $\Umc_{\Dmc, \Omc} \setminus \Dmc$
%has a forest\=/like structure with no multiedges, this cycle has to be mapped to $\dom(\Dmc_0)$ and, thus,
%$h(u_i) \in \mn{adom}(\Dmc_0)$. In fact, we can also show an analogue of Claim\ref{claim:paths-in-core}.
%
%\begin{claim}
%    \label{claim:paths-in-core-hyper}
%    Let $z_0, \dots, z_\ell$ be a simple path in $q$.
%    If $h(z_0),h(z_\ell) \in \DGamma$ then $h(z_i)\in \DGamma$
%    for all $0 \leq i \leq \ell$.
%\end{claim}
%
%The proof of the claim mirrors that of Claim~\ref{claim:paths-in-core}.
\noindent
\begin{proof}
  We prove the claim for the case $y_i=y_1$, the other cases are
  symmetric. By Point~2 of Claim~\ref{claim:paths-in-core-triangle}, it suffices
  to show that $q$ contains a simple cycle and a variable $u_i$ on the
  cycle with $y_i \in Y_{u_i}$.

%let $y_{i_1} \in Y$. Take any $y_{i_2} \in Y$
% different from $y_{i_1}$.  Since there is no atom in $q^+$ that contains
% all variables in $Y$, there is a variable $y_{i_3} \in Y$ such that there is
% no atom in $q^+$ using all of $y_{i_1}$, $y_{i_2}$, and $y_{i_3}$.  Thus,
% there is a cycle in $q$ using all of $y_{i_1}$, $y_{i_2}$, and $y_{i_3}$.
% Intuitively, we use this cycle to extract a simple cycle crossing a
% variable $u$ such that $y_{i_1} \in Y_{u}$.  The construction of this
% cycle is not difficult, but technical.
%
%
%By symmetry, without loss of generality we assume that $y_{i_1}
%=y_1$, $y_{i_2} =y_2$, and $y_{i_3} =y_3$.
Consider the variables $y_1,y_2,y_3$. We first show that $q$
contains atoms $R_1(v^2_1,v^3_1)$,
$R_2(v^1_2,v^3_2)$, and $R_3(v^1_3,v^2_3)$ such that
for every variable $v^j_i$:
\begin{enumerate}

\item $y_j \in Y_{v^j_i}$ via a functional path $\pi^j_i$ in $q$ from
  $v^j_i$ to $y_j$,

\item $y_{i} \notin Y_{v^j_i}$ (and  thus $y_i \notin Y_x$ for any
  variable $x$ on path~$\pi^j_i$),

\item the concatenated path $(\pi^{j_1}_i)^-\pi^{j_2}_i$ is simple
  where $\{i,j_1,j_2\}=\{1,2,3\}$ and $\cdot^-$ denotes path reversal.

\end{enumerate}
In fact, by choice of $Y$, $q^+$ contains atoms
$\alpha_1,\alpha_2,\alpha_3$ such that for $i,j \in \{1,2,3\}$
$\alpha_i$ contains $y_j$ iff $i \neq j$. This implies in particular
that $y_i \notin Y_z$ for any variable $z$ in $\alpha_i$.  By
construction of $q^+$, we find in $q$ (unary or binary atoms)
$\beta_1,\beta_2,\beta_3$ such that $\beta_i$ contains only variables
from $\alpha_i$ and every variable in $\alpha_i$ is reachable in a
functional path in $q$ from a variable in $\beta_i$.  Thus
$y_i \notin Y_z$ for any variable $z$ in $\beta_i$. Moreover, for
$\{i,j_1,j_2\}=\{1,2,3\}$ the atom $\beta_i$ contains (possibly
identical) variables $w_i,w'_i$ such that $q$ contains a functional
path $\pi$ from $w_i$ to $y_{j_1}$ and $\pi'$ from $w'_i$
to~$y_{j_2}$. If these two paths share no variables, then we can
choose $R_i(v^{j_1}_i,v^{j_2}_i)$ to be $\beta_i$, $\pi^{j_1}_i$ to be
$\pi$, and $\pi^{j_2}_i$ to be $\pi'$. Otherwise, we choose
$v^{j_1}_i$ to be the last variable shared by the two paths,
$R_i(v^{j_1}_i,v^{j_2}_i)$ to be some atom that connects $v^{j_1}_i$
with its successor node on $\pi'$, $\pi^{j_1}_i$ to be the
suffix of $\pi$ starting at $v^{j_1}_i$, and $\pi^{j_2}_i$ to be the
suffix of $\pi'$ starting at $v^{j_2}_i$.

% Note that it follows from Point~2 that $y_{i} \notin Y_{u}$ for every
% variable $u$ on $\pi^j_i$.
Let $z_{1}$ be the first variable shared
by paths $\pi^1_3$ and $\pi^1_2$, $z_{2}$ the first variable on
$\pi^2_1$ and $\pi^2_3$, and $z_{3}$ the first variable on $\pi^3_2$
and $\pi^3_1$. This is illustrated in Figure~\ref{fig:ex-hypercyc}.
\begin{figure*}[t]
    %\begin{wrapfigure}{r}{0.45\textwidth}
    \centering
    \begin{tikzpicture}[scale=.8]

        %delimiters
        %%%%%%%%%%%%%%%grid
%        \draw[dashed,gray] (10,-1.5)--(10,2);
%        \draw[dashed,gray] (0,-1.5)--(20,-1.5);

        %%%%%%%%%%%%query
%        \node (q) at (5,2) {query $q$};

        \node (y1) at (3.5,7) {$y_1$};
        \node (y2) at (0 ,0) {$y_2$};
        \node (y3) at (7, 0) {$y_3$};
        \node (z1) at (3.5,6) {$z_1$};
        \node (z2) at (1 ,1) {$z_2$};
        \node (z3) at (6, 1) {$z_3$};
        \node (v12) at (3, .5) {$v_1^2$};
        \node (v13) at (4, .5) {$v_1^3$};
        \node (v21) at (5, 4) {$v_2^1$};
        \node (v23) at (5.5, 3) {$v_2^3$};
        \node (v31) at (2, 4) {$v_3^1$};
        \node (v32) at (1.5, 3) {$v_3^2$};

        \draw[red, -latex] (z1) to [out=150,in=210] (y1);
        \draw[blue, -latex] (z1) to [out=30,in=330] (y1);

        \draw[-latex] (z2) to [out=150,in=120] (y2);
        \draw[blue, -latex] (z2) to [out=270,in=330] (y2);

        \draw[red, -latex] (z3) to [out=270,in=180] (y3);
        \draw[-latex] (z3) to [out=0,in=60] (y3);

        \draw[red, dashed, -latex] (v21) to  (z1);
        \draw[blue, dashed, -latex] (v31) to  (z1);

        \draw[dotted, -latex] (v12) to  (z2);
        \draw[blue, dotted, -latex] (v32) to  (z2);

        \draw[-latex] (v13) to  (z3);
        \draw[red, -latex] (v23) to  (z3);

        \draw[] (v12)--(v13);
        \draw[red] (v21)--(v23);
        \draw[blue] (v31)--(v32);
%        \draw[-latex] (z1)--(y1) node[midway, below] {$f_0$};
%        \draw[-latex] (z1)--(y1) node[midway, above] {$f_1$};
%        \draw[-latex] (y1)--(y2) node[midway, above] {$R_{12}$};
%        \draw[-latex] (y2)--(y3) node[midway, left] {$R_{23}$};
%        \draw[-latex] (y3)--(y0) node[midway, above] {$R_{30}$};

    \end{tikzpicture}
    \caption{Illustration of the proof of existence of a simple cycle containing $u_i$.
    The directed edges symbolise functional paths between variables.
    Paths composed of the same colour sections ({\color{red} red}, {\color{blue} blue}, {\color{black} black})
    or of the same structure sections (solid, dashed, dotted) are simple.
%    \cMP{there are 3 situations: it can be $z_i=y_i$, the paths from $z_i$ to $y_i$ coincide
%    or the paths are different.}
    }

    \label{fig:ex-hypercyc}
\end{figure*}
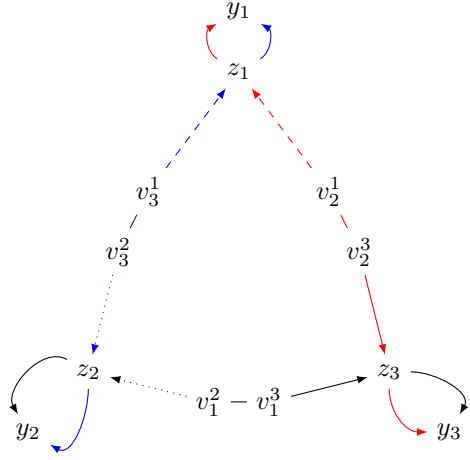 Now consider the following cycle
$C$:
\begin{itemize}
    \item path $\pi_3^1$ from $v_3^1$ to $z_1$, reverse of path $\pi_2^1$ from $z_1$ to $v_2^1$,
    \item path $\pi_3^2$ from $v_2^3$ to $z_3$, reverse of path
      $\pi_1^3$ from $z_3$ to~$v_1^3$,
    \item path $\pi_1^2$ from $v_1^2$ to $z_2$, reverse of path $\pi_3^2$ from $z_2$ to $v_3^2$.
\end{itemize}
If $C$ is simple, then we can choose $u_i=v^1_3$ and are done.
Otherwise, we can extract from $C$ a simple cycle that still
contains $u_i=v^1_3$, as follows.

Let $v$ be the first variable on $C$ (regarding the order in which $C$
is presented above) that occurs more than once on~$C$, let its
first appearance be on (possibly reversed) $\pi_i^j$ and the last
appearance on (possibly reversed)~$\pi_\ell^m$. Note that
\begin{enumerate}

\item[(a)] $\ell \neq j$ because $v$ being on $\pi_i^j$ implies that
  $y_{j} \in Y_v$ while $v$ being on $\pi_\ell^m$ implies
    $y_{\ell} \notin Y_v$ by Point~2 above;

\item[(b)] $\ell \neq i$ due to Point~3 above.

\end{enumerate}
Points~(a) and~(b) imply that the first appearance of $v$ on $C$ is
before~$z_3$. It also implies that if the first appearance is
before~$z_3$, then any other appearance (including the last) is after
$z_3$. We can thus remove the subcycle between the first and last
occurrence without removing
$u_i=v^1_3$. % Repeating this argument yields a simple cycle.
% \cMP{no repeat needed!}
% and finishes the proof of \textbf{HSC}.

The resulting cycle is simple.  To see this, assume to the contrary
that there is a variable $w$ with repeated appearance. Then $w \neq v$
and the first appearance of $w$ cannot be before the (only) appearance
of $v$ by choice of $v$. It can also not be after the appearance of
$v$ since then all appearances of $w$ would be after $z_3$ in the
original cycle $C$ and Points~(a) and~(b) above imply that this is
impossible.
\end{proof}

Now we can prove \textbf{HD1}. Let $h$ be a homomorphism from $q$ to
$\Umc_{\Dmc, \Omc}$. By Claim~\ref{claim:hypercliqueui} for
$0\leq i \leq k$ there is a variable $u_i$ in $q$ such that
$y_i \in Y_{u_i}$ and $h(u_i) \in \DGamma$. The following claim
is proved in exactly the same way as
Claim~\ref{claim:cycle-well-guided-hom},
details are omitted.
The proof is identical and omitted.
    \begin{claim}
        \label{claim:cycle-well-guided-hom-hyperclique}
        Let $y,y'$ be variables in $q$ such that %$y \neq y'$ and
        $h(y),h(y') \in \mn{adom}(\Dmc_0)$.  Then,
        $h(y) = \langle y, f_y\rangle$ and
        $h(y) = \langle y', f_{y'}\rangle$ for some functions $f_y$
        and $f_{y'}$.  Moreover, if $y_i \in Y_{y} \cap Y_{y'}$ then
        $f_y(y_i) = f_{y'}(y_i)$.
    \end{claim}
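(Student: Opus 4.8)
The plan is to reproduce the argument for Claim~\ref{claim:cycle-well-guided-hom} almost verbatim, swapping in the hyperclique auxiliary results. First I would pin down the shape of $h(y)$ and $h(y')$. Since $q$ is connected, fix a simple path $z_0,\dots,z_\ell$ in $q$ with $z_0=y$ and $z_\ell=y'$. Because $h(y),h(y')\in\mn{adom}(\Dmc_0)$, Point~1 of Claim~\ref{claim:paths-in-core-hyperclique} forces every image $h(z_j)$ into $\mn{adom}(\Dmc_0)$. In particular the first and last edges of the path are atoms $R(y,z_1)$ and $S(z_{\ell-1},y')$ whose images are facts of $\Dmc_0$; by construction every binary fact of $\Dmc_0$ has the form $r(\langle x,f^w_x\rangle,\langle x',f^w_{x'}\rangle)$ and, since $q$ is self-join free, is produced by a unique atom of $q$. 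Matching that atom to $R(y,z_1)$ (respectively $S(z_{\ell-1},y')$) pins down the first coordinate, giving $h(y)=\langle y,f_y\rangle$ and $h(y')=\langle y',f_{y'}\rangle$.

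For the consistency statement, suppose $y_i\in Y_y\cap Y_{y'}$. Then $q$ contains a functional path from $y$ to $y_i$ and one from $y'$ to $y_i$; reversing the latter and concatenating, then cutting out repetitions, yields a simple path $\pi$ from $y$ to $y'$ every variable $u$ of which satisfies $y_i\in Y_u$. Applying Point~1 of Claim~\ref{claim:paths-in-core-hyperclique} once more, all images of variables on $\pi$ lie in $\mn{adom}(\Dmc_0)$, so by the first part each carries the form $h(u)=\langle u,f_u\rangle$.

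The hard part is to propagate the value $f_u(y_i)$ unchanged along $\pi$; this is exactly where the hyperclique analogue of the functional-database claim is needed. For adjacent $u,u'$ on $\pi$ there is an atom $R(u,u')$ (or $R(u',u)$), and $h$ sends it to a fact of $\Dmc_0$ coming from a single serialization $w=a_1\cdots a_k$ of some hyperedge, so $f_u=f^w_u$ and $f_{u'}=f^w_{u'}$ for the \emph{same} $w$. As $y_i\in Y_u\cap Y_{u'}$, the definition of $f^w$ yields $f_u(y_i)=a_i=f_{u'}(y_i)$, and chaining these equalities along $\pi$ gives $f_y(y_i)=f_{y'}(y_i)$, as required. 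I expect the only genuine work beyond citing Claim~\ref{claim:cycle-well-guided-hom} to be checking that the single-serialization structure of $\Dmc_0$ in this reduction indeed forces this coordinatewise agreement; the connectivity and self-join-freeness steps transfer without change.
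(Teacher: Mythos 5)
Your proposal is correct and follows essentially the same route as the paper, which simply declares this claim's proof identical to that of Claim~\ref{claim:cycle-well-guided-hom}: connectivity plus Point~1 of Claim~\ref{claim:paths-in-core-hyperclique} to force the whole path into $\mn{adom}(\Dmc_0)$, self-join freeness to pin down the form $\langle y, f_y\rangle$, and propagation of $f_u(y_i)$ along a simple path on which $y_i \in Y_u$ throughout. The only difference is that you verify the edgewise agreement $f_u(y_i)=f_{u'}(y_i)$ directly from the single-serialization structure of the hyperclique $\Dmc_0$, whereas the paper delegates this to (the analogue of) Claim~\ref{claim:MM-functional-database}; your inline check is exactly the content of that auxiliary claim.
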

    Let $f = \bigcup_{0 \leq i \leq k} f_{u_i}$. It follows from
    Claim~\ref{claim:cycle-well-guided-hom-hyperclique} that $f$ is a
    function from $Y$ to $V$.  For brevity, let $v_i = f(y_i)$ for
    $0 \leq i \leq k$. To end the proof it suffices to show that for
    every set $I \subseteq \{0, \dots, k\}$ of size $k$, the set
    $V_I = \{v_i \in V \mid i \in I\}$ is a hyperedge in $G$, i.e.\
    $V_I \in E$.
%$0\leq i < j \leq k$
%we have that $\{v_i,v_j\} \subseteq e$ for some hyperedge $e\in G$. This will end the proof.

    Since $Y_I = \{y_i \in Y \mid i \in I\} \subseteq Y$, there is an
    atom $R^{+}_I(\bar{z})$ in $q^+$ that contains all variables in
    $Y_I$.  We show that $q$ contains an atom $R(z,z')$ such that
    $Y_I \subseteq Y_{z} \cup Y_{z'}$ and $h(z), h(z') \in \DGamma$.
    Before we prove the existence of this atom, we show how this
    implies that $V_I \in E$.  It follows from
    $h(z), h(z') \in \DGamma$ that $R(h(z), h(z')) \in \Dmc_0$.  By
    Claim~\ref{claim:cycle-well-guided-hom-hyperclique} and the
    definition of $\Dmc_0$, there is thus a word $w=b_1 \cdots b_k$
    such that $h(z) = \langle z , f_z^w\rangle$,
    $h(z') = \langle z',f_{z'}^w\rangle$, and $w$ is a permutation of
    vertices in an edge in $G$.  The definition of $f^w_z$ and
    $f^w_{z'}$ together with the fact that
    $Y_I \subseteq Y_{z} \cup Y_{z'}$ yields $v_i = f(y_i) = b_i$ for
    all $i \in I$.  Thus,
    $V_I = \{v_i\}_{i \in I} = \{b_1 ,\dots,b_k\}$.  This proves that
    $V_I$ is an edge in $G$, as desired.

\smallskip

Now we prove the existence of the atom $R(z,z')$.  By the construction
of $q^+$, $q$ contains a (unary or binary) atom $\alpha$ and two (not
necessarily distinct) variables $w,w'$ in $\alpha$ such that for every
$y_i \in Y_I$, there is a functional path in $q$ from $w$ to $y_i$ or
from $w'$ to $y_i$.  Since $q$ is connected, every variable occurs in
a binary atom, and thus we may actually assume that $\alpha$ is of the
form $R'(z_0,z'_0)$. Then
$Y_I \subseteq Y_{z_0} \cup Y_{z'_0}$, that is, for
every variable $y_i \in Y_I$ there is a simple functional path $\pi_i$
in $q$ from $z_0$ to $y_i$ or there is a simple functional path
$\pi'_i$ in $q$ from $z'_0$ to $y_i$.

We next show that there is a variable $v$ such that
% belongs to all paths
% $\pi_i$ and satisfies
$h(v) \in \DGamma$ and $q$ contains a %simple
path from $z_0$ to $v$
such that for all variables $x$ on the path, $Y_{z_0} \subseteq Y_x$.
Likewise, there is a variable $v'$ such that $h(v') \in \DGamma$ and
$q$ contains a %simple
path from $z'_0$ to $v'$ such that for all
variables $x$ on the path, $Y_{z'_0} \subseteq Y_x$.  We only do the
case of $v$ explicitly, the case of $v'$ is symmetric.

If $Y_{z_0}=\{y_i\}$ is a singleton, we choose $v=u_i$ and are done.
Note that there is a functional path $\pi$ from $z_0$ to $y_i$ and a
functional path $\pi'$ from $u_i$ to $y_i$, and the required path from
$z_0$ to $v$ is $\pi{\pi'}^-$.

Otherwise, let $Y_{z_0}=\{y_{i_1},\dots,y_{i_m}\}$, $j \geq 2$. We
prove the following by induction on $i$.

\begin{claim}
  For $2 \leq i \leq m$, there is variable $w_i$ in $q$ such that
  $\{y_{i_1},\dots,y_{i_i} \} \subseteq Y_{w_i}$,
  $h(w_i) \in \mn{adom}(\Dmc_0)$, and there is a simple functional
  path $\pi_i$ from $z_0$ to $w_i$.
\end{claim}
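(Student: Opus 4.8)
The plan is to prove the inner claim by induction on $i$, at each step absorbing one more variable $y_{i_i}$ into the functionally-reachable set while staying anchored to variables whose $h$-image lies in $\mn{adom}(\Dmc_0)$. For the base case $i=2$ I would fix simple functional paths $\sigma_1$ from $z_0$ to $y_{i_1}$ and $\sigma_2$ from $z_0$ to $y_{i_2}$ (these exist since $y_{i_1},y_{i_2}\in Y_{z_0}$, and any functional path can be shortened to a simple one without destroying functionality), and let $w_2$ be the last variable on their common prefix. The common prefix is then a simple functional path $\pi_2$ from $z_0$ to $w_2$, and the two suffixes witness $\{y_{i_1},y_{i_2}\}\subseteq Y_{w_2}$. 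For the inductive step I would perform the analogous construction with the path $\pi_{i-1}$ supplied by the induction hypothesis in place of $\sigma_1$ and a simple functional path $\sigma$ from $z_0$ to $y_{i_i}$ in place of $\sigma_2$, taking $w_i$ to be their divergence point. Then $w_i$ functionally reaches $w_{i-1}$ (hence reaches $\{y_{i_1},\dots,y_{i_{i-1}}\}$ by hypothesis) and functionally reaches $y_{i_i}$, so $\{y_{i_1},\dots,y_{i_i}\}\subseteq Y_{w_i}$, and the common prefix provides the required simple functional path $\pi_i$.

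The only non-routine obligation, and the main obstacle, is to establish $h(w_i)\in\mn{adom}(\Dmc_0)$. For this I would isolate the following auxiliary observation: if from a variable $w$ there are two simple paths in $q$ with distinct first edges, each ending in a variable whose $h$-image lies in $\mn{adom}(\Dmc_0)$, then $h(w)\in\mn{adom}(\Dmc_0)$. This is proved by comparing the two paths. If they meet only at $w$, their concatenation is a simple path through $w$ whose endpoints lie in $\mn{adom}(\Dmc_0)$, so Point~1 of Claim~\ref{claim:paths-in-core-hyperclique} applies; if they first meet again at some variable $d\neq w$, the two prefixes up to $d$ form a simple cycle through $w$, so Point~2 of Claim~\ref{claim:paths-in-core-hyperclique} applies. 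In either case $w$ is forced into $\mn{adom}(\Dmc_0)$.

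To feed this observation in the inductive step I take the first path to be the suffix $\alpha$ of $\pi_{i-1}$ from $w_i$ to $w_{i-1}$, which is simple and ends in the core variable $w_{i-1}$. For the second path I start from the walk $\beta\cdot\tau^-$, where $\beta$ is the suffix of $\sigma$ from $w_i$ to $y_{i_i}$ and $\tau$ is a simple functional path from $u_{i_i}$ to $y_{i_i}$ (which exists since $y_{i_i}\in Y_{u_{i_i}}$, the variable $u_{i_i}$ being supplied by Claim~\ref{claim:hypercliqueui}, together with $h(u_{i_i})\in\mn{adom}(\Dmc_0)$). If $w_i$ recurs on this walk then $w_i$ already lies on a cycle and Point~2 of Claim~\ref{claim:paths-in-core-hyperclique} finishes the argument directly; otherwise the walk reduces to a simple path from $w_i$ to the core variable $u_{i_i}$ whose first edge is that of $\beta$, and this differs from the first edge of $\alpha$ precisely because $\pi_{i-1}$ and $\sigma$ diverge at $w_i$. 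The auxiliary observation then yields $h(w_i)\in\mn{adom}(\Dmc_0)$. The base case is identical, with $w_{i-1}$ and $u_{i_i}$ replaced by $u_{i_1}$ and $u_{i_2}$.

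A few degenerate configurations still have to be checked separately, for instance when one of $\sigma$, $\pi_{i-1}$ is a prefix of the other, or when $w_i$ coincides with $y_{i_i}$ or with $w_{i-1}$; in each of these $w_i$ inherits membership in $\mn{adom}(\Dmc_0)$ directly from the induction hypothesis or from $u_{i_i}$, so they are easily dispatched. Finally, instantiating the claim at $i=m$ produces the variable $v=w_m$ needed in the surrounding argument: since $\pi_m$ is a simple functional path from $z_0$ to $w_m$, every variable $x$ on $\pi_m$ functionally reaches $w_m$ and hence reaches all of $Y_{z_0}\subseteq Y_{w_m}$, giving $Y_{z_0}\subseteq Y_x$ as required.
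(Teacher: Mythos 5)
Your proof is correct and follows essentially the same route as the paper's: induction on $i$ with $w_i$ taken as the divergence point of $\pi_{i-1}$ and a simple functional path to $y_{i_i}$, and membership of $h(w_i)$ in $\mn{adom}(\Dmc_0)$ established by splicing the two outgoing branches into a path (or, if they re-meet, a cycle) between the core-anchored variables $w_{i-1}$ and $u_{i_i}$ (resp.\ $u_{i_1}$ and $u_{i_2}$ in the base case) and invoking Points~1 and~2 of Claim~\ref{claim:paths-in-core-hyperclique}. Your ``auxiliary observation'' is just a clean repackaging of the paper's case split on whether the combined path $\mu$ is simple, and the degenerate configurations you defer are glossed in the paper as well.
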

We may then choose
$v=w_m$.
\begin{proof}
  The induction start is $i=2$. Since $y_{i_1},y_{i_2} \in Y_{z_0}$,
  we find a simple functional path $\tau_1$ from $z_0$ to $y_{i_1}$
  and a simple functional path $\tau_2$ from $z_0$ to
  $y_{i_2}$. Clearly, all variables $x$ on $\tau_1$ satisfy
  $y_{i_1} \in Y_x$ and all variables $x$ on $\tau_2$ satisfy
  $y_{i_2} \in Y_x$.  We choose $w_2$ to be the last variable on the
  longest common prefix of $\tau_1$ and $\tau_2$ (note that these
  paths end in $y_{i_1}$ and $y_{i_2}$, which are distinct), and
  $\pi_2$ to be this prefix. Clearly,
  $\{y_{i_1},y_{i_2} \} \subseteq Y_{w_2}$ and thus it remains to show
  that $h(w_2) \in \mn{adom}(\Dmc_0)$.

  Let $\omega_1,\omega_2$ be the parts of $\tau_1$ and $\tau_2$ that
  remain after the common prefix $\pi_2$. Further recall that there is
  a simple functional path $\rho_1$ from $u_{i_1}$ to $y_{i_1}$ and a
  simple functional path $\rho_2$ from $u_{i_2}$ to $y_{i_2}$. Now
  consider the path $\mu= \rho_1^-\omega_1^-\omega_2\rho_2$ from
  $u_{i_1}$ to $u_{i_2}$.  We can make the $\rho_1^-\omega_1^-$ prefix
  simple by removing subpaths and we can make the $\tau_2\omega_2$
  suffix simple in the same way.

  First assume that the resulting path $\mu'$, which must still
  contain $w_2$, is simple. Since
  $h(u_{i_1}), h(u_{i_2}) \in \mn{adom}(\Dmc_0)$, Point~1 of
  Claim~\ref{claim:paths-in-core-hyperclique} yields
  $w_2 \in \mn{adom}(\Dmc_0)$ and we are done.

  Now assume that $\mu'$ is not simple. Then some variable
  $x \neq w_2$ must occur on the path $\omega_1\rho_1$ and on the path
  $\omega_2\rho_2$. This gives rise to a cycle $C$ in $q$, starting
  from $w_2$ on $\omega_1\rho_1$ to $x$ and back on $\rho_2\omega_2$
  to $w_2$. Note that this must indeed be a cycle since it contains
  $w_2$ as well as the successors of $w_2$ on the paths $\tau_1$ and
  $\tau_2$, and these three variables must be pairwise distinct by
  choice of $w_2$ and since $\tau_1,\tau_2$ are simple
  paths.\footnote{If the alledged cycle had only length two, then we
    would need more than one atom between the two involved variables,
    c.f.\ the definition of cycles.} % We may now drop subpaths from $C$
  % to obtain a simple cycle $C'$ that still contains $w_2$.
Point~2 of
  Claim~\ref{claim:paths-in-core-hyperclique} yields
  $w_2 \in \mn{adom}(\Dmc_0)$ and we are done.

  The induction step, where $i > 2$, is essentially identical to the
  induction start except that we now start from the simple functional
  path $\tau_1=\pi_{i-1}$ from $z_0$ to $w_{i-1}$ in place of the path
  $\tau_1$ from $z_0$ to $y_{i_1}$. All the remaining arguments are identical.
\end{proof}
Hence, $q$ contains a simple path $\pi$ from $z_0$ to $v$ such that
$h(v) \in \mn{adom}(\Dmc_0)$ and all variables $x$ on the path satisfy
$Y_{z_0} \subseteq Y_x$, and a simple path $\pi'$ from $z'_0$ to $v'$
such that $h(v') \in \mn{adom}(\Dmc_0)$ and all variables $x$ on the
path satisfy $Y_{z'_0} \subseteq Y_x$.  First assume that $\pi$ and
$\pi'$ share a variable and let $z$ be the first variable on $\pi^-$
shared with $\pi'$. Consider the path $\tau$ obtained by traveling on
$\pi^-$ from $v$ to $z$ and then on $\pi$ from $z$ to $v'$.  Clearly,
$\tau$ is simple and thus Point~1 of
Claim~\ref{claim:paths-in-core-hyperclique} together with the fact
that $h(v),h(v') \in \mn{adom}(\Dmc_0)$ implies that
$h(z) \in \DGamma$. Moreover we have $Y_I \subseteq h(z)$. Let $z'$ be
the successor of $z$ on $\tau$. Then $q$ must contain an atom
$R(z,z')$ which is the atom that we have been looking for.  Now assume
that $\pi$ and $\pi'$ are disjoint. Then $\pi^-\pi'$ is a simple path
from $v$ to $v'$ that contains both $z_0$ and $z'_0$. Point~1 of
Claim~\ref{claim:paths-in-core-hyperclique} yields
$R(h(z_0), h(z'_0)) \in \Dmc_0$ and $R(z_0,z'_0)$ is the promised
atom.

\subsection{Acyclic Queries that are not Free-Connex Acyclic}

We now prove Point~2 of Theorem~\ref{lemma:enumeration-lower-bound}.

\begin{lemma}
    \label{lemma:lower-bound-enum-MM}
    Let $Q(\bar{x}) = (\Omc, \Sigma, q) \in (\fEL,\text{CQ})$ be a
    self-join free, non\=/empty, connected OMQ such that $q^{+}$ is
    acyclic but not free\=/connex acyclic. Then enumerating the
    complete
    answers to $Q$
    is not in \dlc unless
    %
    % If there is an algorithm
    % that given an $\Sbf$\=/database after linear preprocessing
    % enumerates the set $Q(D)$ with constant delay then
    one can multiply Boolean matrices $M_1,M_2$ in time
    $O(|M_1| + |M_2| + |M_1M_2|)$.  The same is true for least partial
    answers, both with a single wildcard and with multi-wildcards.
\end{lemma}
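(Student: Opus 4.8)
The plan is to follow the template of the triangle and hyperclique reductions (Lemmas~\ref{lemma:trianglelower} and~\ref{lemma:lower-bound-enum-hyper}), but to reduce from spBMM and to exploit a \emph{free path} instead of a chordless cycle or a non\=/conformal clique. Since $q^+$ is acyclic but not free\=/connex acyclic, the characterization of \cite{bagan-enum-cdlin} yields a free path $y_0,\dots,y_m$ in the Gaifman graph of $q^+$ with $m \geq 2$: the endpoints $y_0,y_m$ are answer variables of $q^+$ (hence in $\bar x^+$), the interior variables $y_1,\dots,y_{m-1}$ are quantified, and the path is chordless. Writing $Y_x$ for the set of $y_i$ reachable from $x$ on a functional path in $q$, chordlessness gives the analogue of Claim~\ref{claim:cycle-small-y}: $Y_x \subseteq \{y_i,y_{i+1}\}$ for some $i$, so every atom of $q$ touches at most one edge of the free path.

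Given Boolean $n\times n$ matrices $M_1,M_2$ in list form, I would build a $\Sigma$\=/database \Dmc in time $O(|M_1|+|M_2|)$, parallel to the earlier constructions. The core $\Dmc_0$ has constants $\langle x,f\rangle$ with $f$ a partial function from $\{y_0,\dots,y_m\}$ to $[n]$, built from the functions $f^w_x$. For every atom $r(x,y)$ of $q$ with $r \in \Sigma$ I classify it by the free\=/path edge touched by $Y_x\cup Y_y$ and add: for the first edge $\{y_0,y_1\}$, one fact for each $(a,c)$ with $M_1(a,c)=1$ (coding $y_0\mapsto a$, interior $\mapsto c$); for the last edge $\{y_{m-1},y_m\}$, one fact for each $(c,b)$ with $M_2(c,b)=1$ (coding interior $\mapsto c$, $y_m\mapsto b$); for every interior edge, \emph{identity} facts forcing $y_i=y_{i+1}$, one for each $c \in [n]$; and for off\=/path edges, facts ranging over all values so that homomorphisms may freely enter the core. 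The second step attaches to every $c\in\DGamma$ copies of the tree gadgets $\Dmc_R \subseteq \Dmc_{\mn{tree}}$ exactly as before, so that \Dmc satisfies all functionality assertions and is derivation complete at $\DGamma$ (the analogue of Lemma~\ref{claim:cycle-database-is-ok}, proved via Lemma~\ref{lem:treesim}).

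Correctness then mirrors the \textbf{TD1}/\textbf{TD2} dichotomy of Claim~\ref{claim:lower-bound-partial-answers-and-cycle}. For ``answer $\Rightarrow$ product'' I would take a homomorphism $h$ from $q$ into $\Umc_{\Dmc,\Omc}$, use the analogue of Claim~\ref{claim:triangleui} to find, for each $i$, a variable $u_i$ with $h(u_i)\in\DGamma$ and $y_i\in Y_{u_i}$, and then combine the analogue of Claim~\ref{claim:cycle-well-guided-hom} (functional consistency of the function components along paths) with the interior identity facts and chordlessness to conclude that all interior $y_i$ carry one common value $c$ while $y_0,y_m$ carry values $a,b$ satisfying $M_1(a,c)=M_2(c,b)=1$, i.e.\ $(a,b)\in M_1M_2$. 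Conversely, from $(a,b)\in M_1M_2$ with witness $c$ I would read off the evident homomorphism on the free\=/path atoms and extend it to the rest of $q$ using the non\=/emptiness witness $h'$ and derivation completeness, exactly as in \textbf{TD2}. Because the core constants are pairs $\langle x,f\rangle$, the values $a,b$ can be recovered in constant time from the function components of the answer constants at the original answer variables functionally connected to $y_0$ and $y_m$; hence a \dlc algorithm for $Q(\Dmc)$ produces, after $O(|M_1|+|M_2|)$ preprocessing and with constant delay, a list of all pairs of $M_1M_2$, contradicting the assumed hardness of spBMM.

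The main obstacle is the signature mismatch already flagged for Theorem~\ref{lemma:enumeration-lower-bound}: the free path lives in $q^+$, whereas \Omc and \Dmc must be written in the signature of $q$, so we can never work with $q^+$ directly but must argue about homomorphisms of $q$ into $\Umc_{\Dmc,\Omc}$, where the free\=/path structure surfaces only after FA\=/extension. Two points need the most care. First, forcing the interior free\=/path variables to \emph{collapse} to the single matrix\=/product summation index $c$: this is the analogue of the triangle reduction forcing $a_1=\cdots=a_{k-1}$, and it relies jointly on chordlessness, self\=/join freeness (so each core fact traces back to a unique atom), and the functional\=/consistency claim. Second, the passage to minimal partial answers (single and multi\=/wildcard): here I would argue, using derivation completeness and the core facts together with the analogue of Claim~\ref{claim:paths-in-core-triangle} that pins the relevant variables into $\DGamma$, that no spurious wildcard\=/bearing minimal partial answer arises at the output positions, so that the minimal partial answers coincide with $Q(\Dmc)$ and the same reduction applies. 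Controlling homomorphisms that stray into the tree gadgets is exactly what makes both arguments delicate.
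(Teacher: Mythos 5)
Your construction and the soundness/completeness arguments follow the paper's own route closely: the same bad (free) path $y_0,\dots,y_k$ in $q^+$, the same pair-constants $\langle x, f^w_x\rangle$ with $M_1$-facts on the first edge, $M_2$-facts on the last edge, and constant-word facts collapsing the interior variables to the summation index, the same tree gadgets for derivation completeness, and the same two-directional correctness claims (the paper's \textbf{MM1} and \textbf{MM2}). Up to that point the proposal is sound.

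The genuine gap is the missing \emph{counting argument}. A \dlc enumerator gives total running time $O(||\Dmc||) + O(\#\text{answers})$, so the reduction only yields spBMM in time $O(|M_1|+|M_2|+|M_1M_2|)$ if you additionally prove that the number of enumerated (complete or minimal partial) answers is itself $O(|M_1|+|M_2|+|M_1M_2|)$; soundness of the extraction map is not enough, because the enumeration could emit many answers that extract to nothing. Your proposal does not address this at all for complete answers, and for partial answers it asserts that ``the minimal partial answers coincide with $Q(\Dmc)$,'' which is not what holds: answer variables can be mapped to constants inside the attached copies of $\Dmc_{\mn{tree}}$, to anonymous traces of $\Umc_{\Dmc,\Omc}$ (yielding wildcards), or be ``misguided'' to constants $\langle x',f\rangle$ with $x'\neq x$, and such answers are in general genuinely minimal and not dominated by complete ones. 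The paper's statement \textbf{MM3} and its proof (the bound on misguided answers, Claims~\ref{claim:few-trees} and~\ref{claim:MM-matching-avars-to-trees}, and the four-case analysis of where the homomorphism places the answer variables relative to $\mn{adom}(\Dmc_0)$ and the trees) exist precisely to bound the number of these extra answers by $O(|M_1|+|M_2|+|M_1M_2|)$; this is the most technical part of the whole lemma and cannot be waved away. Without it, the reduction does not establish the claimed time bound for spBMM.
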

Let $Q(\bar{x}) = (\Omc, \Sigma, q)$ be as in
Lemma~\ref{lemma:lower-bound-enum-MM}.  It is well known that if a CQ
is acyclic but not free\=/connex acyclic, then it contains a bad path,
that is, a path $y_0,\dots,y_k$ that is simple, chordless, and not a
cycle, and such that $y_0,y_k$ are answer variables while
$y_1, \dots, y_{k-1}$ are quantified variables.  See e.g.\
\cite{berkholz-enum-tutorial} for a proof. Thus $q^+$ contains a bad path
$y_0,\dots,y_k$. Note that $y_0$ and $y_k$ are answer variables in
$q^+$, but not necessarily in $q$.  By construction of~$q^+$, however,
there is an answer variable in $q$ that has a functional path to $y_0$
and a different one that has a~functional path to~$y_k$. We can assume
w.l.o.g.\ that these answer variables in $q$ are the first two answer
variables $x_1$ and $x_2$ in the tuple $\bar x$.

Let $M_1,M_2$ be $n {\times} n$-matrices. Our aim is to construct a
database \Dmc, in time $O(|M_1|+|M_2|)$, such that a \dlc algorithm
for enumerating the (complete or partial) answers to $Q$ on \Dmc
allows us to compute $M_1M_2$ in time $O(|M_1| + |M_2| + |M_1M_2|)$.
The reduction shares ideas, intuition, and notation with the previous
two reductions. In particular, for every variable $x$ in $q$ the set
$Y_x$ is defined as in the previous reduction and we again define the
desired database \Dmc by starting with a `core part' $\Dmc_0$ that
encodes matrix multiplication and then add trees in a second step.
Before we give details, we observe the following properties of the
sets $Y_x$.
% Let the CQ $q'$ be
% obtained from $q$ by making $y_0$ and $y_n$ answer variables. Due
% to the following lemma, we can replace $q$ with $q'$ in the remaining
% proof, that is, we can simply assume that $y_0$ and $y_n$ are answer
% variables in $q$
% %
% \begin{lemma}
%   If $Q(\Dmc)$ can be enumerated in \dlc, then so can $Q'(\Dmc)$.
% {\color{blue}This fails for partial answers. :(}
% \end{lemma}
% %
% \begin{proof}
%   Since $y_0,\dots,y_n$ is a bad path in $q^+$, it follows from
%   the construction of $q^+$ that
% %
%   \begin{enumerate}

% \item $q$ contains a path from $y_0$ to $y_n$;

% \item $q$ contains a functional path from some answer variable $x_0$
%   to $y_0$;

% \item $q$ contains a functional path from some answer variable $x_1$
%   to $y_n$.

% \end{enumerate}
% %
% {\color{blue}Needs to be fleshed out:} Thus $q$ contains a path $p$
% from $x_0$ to $x_1$ that goes through $y_0$ and $y_n$.  Let \Dmc be an
% \Sbf-database. Since $q$ is self-join free, every homomorphism from
% $q$ to $\Umc_{\Dmc,\Omc}$ must map all variables on $p$, including
% $y_0$ and $y_n$, to database constants. This can be used to show that
% there is a one-to-one correspondence between the answers in $Q(\Dmc)$
% and $Q'(\Dmc)$ where $Q'=(\Omc,\Sbf,q')$.
% \end{proof}
% %
% For every variable $x$ in $q$, we use $V_x$ to denote the set of
% variables $y$ such that $q$ contains a functional (possibly empty) path from $x$
% to~$y$. Further, let $Y = \{y_0,\dots, y_k\}$ and set
% $Y_x = Y \cap V_x$.

%    By the construction of $q^{+}$ we observe the following.
    \begin{claim}
      \label{claim:MM-small-Y}
      ~\\[-4mm]
      \begin{enumerate}
      \item For every variable $x$ in $q$,
        $Y_x \subseteq \{y_{i}, y_{i+1}\}$ for some $i$ with
        $0\leq i < k$.  .
      \item For every atom $R(x,y)$ in $q$, $Y_x \cup Y_y \subseteq
        \{y_{i}, y_{i+1}\}$
        for some $i$ with $0\leq i < k$.
      \item $Y_{y_i} = \{y_i\}$ for $0\leq i \leq k$.
      \item For every answer variable $x'$ in $q$,
        $Y_{x'} \in \{ \emptyset, \{y_0\}, \{y_k\} \}$.

      \end{enumerate}

    \end{claim}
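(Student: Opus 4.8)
The plan is to mirror the argument for the cyclic case in Claim~\ref{claim:cycle-small-y}, exploiting the chordlessness of the bad path $y_0,\dots,y_k$, while treating its two endpoints separately since --- unlike in a cycle --- they are not symmetric to the interior. The fact I would isolate first is the following edge characterisation of $q^+$, immediate from the construction of the FA-extension: $\{y_i,y_j\}$ is an edge in the Gaifman graph of $q^+$ if and only if $q$ contains an atom $R(\bar w)$ together with (possibly equal) variables $z,z' \in \bar w$ such that $y_i \in Y_z$ and $y_j \in Y_{z'}$. Indeed, the atom $R(\bar w)$ gives rise to the atom $R'(\bar w^+)$ in $q^+$, and $\bar w^+$ consists exactly of $\bar w$ together with every variable reachable from $\bar w$ on a functional path, so a path variable $y_i$ lies in $\bar w^+$ precisely when $y_i \in Y_z$ for some $z \in \bar w$. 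Since $y_0,\dots,y_k$ is chordless and not a cycle, there is no edge $\{y_i,y_j\}$ with $|i-j| \geq 2$, whereas every consecutive pair $\{y_i,y_{i+1}\}$ is an edge.

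For Points~1 and~2 I would argue directly. For Point~1, let $x$ be a variable of $q$; since $q$ is connected, $x$ occurs in some atom $R(\bar w)$, and then $R'(\bar w^+)$ contains every path variable in $Y_x$. If $Y_x$ contained $y_i,y_j$ with $|i-j| \geq 2$, then $\{y_i,y_j\}$ would be a chord, a contradiction; hence $Y_x$ consists of at most two path variables, which must be consecutive, i.e.\ $Y_x \subseteq \{y_i,y_{i+1}\}$ for some $i$. Point~2 is the same argument applied to the single atom $R(x,y)$, whose image $R'(\bar w^+)$ with $\bar w=(x,y)$ contains all of $Y_x \cup Y_y$; any two non-consecutive path variables in this union would again yield a chord.

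For Point~3, note $y_i \in Y_{y_i}$ via the empty path, so by Point~1 I only need to exclude the two ``consecutive'' possibilities $y_{i+1} \in Y_{y_i}$ and $y_{i-1} \in Y_{y_i}$. For an interior index $0 < i < k$ I would use the chord argument exactly as in Claim~\ref{claim:cycle-small-y}: if $y_{i+1} \in Y_{y_i}$, take the edge $\{y_{i-1},y_i\}$, obtain an atom with variables $z,z'$ and $y_{i-1} \in Y_z$, $y_i \in Y_{z'}$, concatenate the functional path $z' \rightsquigarrow y_i$ with $y_i \rightsquigarrow y_{i+1}$ to get $y_{i+1} \in Y_{z'}$, and conclude that $\{y_{i-1},y_{i+1}\}$ is a chord; the case $y_{i-1} \in Y_{y_i}$ is symmetric via the edge $\{y_i,y_{i+1}\}$. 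The endpoints require a different idea, and this is the main obstacle: at $y_0$ there is no $y_{-1}$, so no chord can be manufactured. Here I would instead use that $y_0$ is an answer variable of $q^+$, i.e.\ $y_0 \in \bar x^+$, so there is a functional path in $q$ from some answer variable of $q$ (or from $y_0$ itself) to $y_0$. If $y_1 \in Y_{y_0}$, concatenating with the functional path $y_0 \rightsquigarrow y_1$ would place $y_1$ in $\bar x^+$, making $y_1$ an answer variable of $q^+$; but $y_1$ is quantified in $q^+$, a contradiction (this uses $k \geq 2$, which holds for a genuine bad path). The endpoint $y_k$ is handled symmetrically.

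Finally, Point~4 combines Point~1 with the same promotion argument. Let $x'$ be an answer variable of $q$. If some interior $y_i$ (with $0 < i < k$) lay in $Y_{x'}$, the functional path $x' \rightsquigarrow y_i$ from an answer variable of $q$ would give $y_i \in \bar x^+$, contradicting that $y_i$ is quantified in $q^+$; hence $Y_{x'} \subseteq \{y_0,y_k\}$. Since $k \geq 2$, $y_0$ and $y_k$ are non-consecutive, so Point~1 forbids $Y_{x'}$ from containing both, leaving $Y_{x'} \in \{\emptyset,\{y_0\},\{y_k\}\}$. Throughout, the delicate point to keep straight is the distinction between answer variables of $q$ (the tuple $\bar x$) and of $q^+$ (the tuple $\bar x^+$), together with the observation that a functional edge leaving an endpoint would illegitimately promote an interior path variable into $\bar x^+$.
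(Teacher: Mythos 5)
Your proof is correct and, for Points~1, 2, 4 and the interior indices of Point~3, it follows the same chordlessness/answer-variable argument as the paper. Where you genuinely diverge is at the endpoints in Point~3, and your instinct there is right: the paper excludes $Y_{y_i}=\{y_i,y_{i+1}\}$ by manufacturing the chord $\{y_{i-1},y_{i+1}\}$ from the Gaifman edge $\{y_{i-1},y_i\}$, and this edge simply does not exist when $i=0$ (the bad path is not a cycle, so there is no edge $\{y_k,y_0\}$ to play the role of $\{y_{-1},y_0\}$); the paper's write-up carries the convention $y_{-1}=y_k$ over from the cycle case and leaves the endpoint unaddressed. Your replacement argument --- $y_0\in\bar x^+$, so $y_1\in Y_{y_0}$ would promote $y_1$ into $\bar x^+$ by concatenating functional paths, contradicting that $y_1$ is quantified in $q^+(\bar x^+)$ --- is exactly the missing ingredient, and it is the same mechanism the paper itself deploys for Point~4, so it costs nothing extra. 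Two small remarks: connectedness of $q$ is not needed in Point~1 (every variable occurs in some atom by definition), and in Point~1 you should make sure the exclusion of $\{y_0,y_k\}$ is drawn from ``not a cycle'' rather than from chordlessness alone, which you do correctly by combining both into ``no edge $\{y_i,y_j\}$ with $|i-j|\geq 2$''.
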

    \begin{proof}
     All points follow from the fact that the bad path
      $y_0,\dots,y_k$ is chordless and not a cycle.

      For Point~1, let $x$ be a variable in $q$ and assume that
      $Y_x \supseteq \{y_j,y_i\}$ and
      $y_j \notin \{ y_{i-1},y_{i+1}\}$ with $y_{-1}=y_k$. By
      definition of $Y_x$, there is then a functional path in $q$ from
      $x$ to $y_j$ and from $x$ to $y_i$. Thus, by construction of
      $q^+$ the Gaifman graph of $q^+$ contains the edge
      $\{y_j,y_i\}$, in contradiction to $y_0,\dots,y_k$ being
      chordless.

      The proof of Point~2 is similar.

      For Point~3, first note that $y_i \in Y_{y_i}$ by definition of
      $Y_{y_i}$. By Point~1, this leaves as the only possible options
      $Y_{y_i}= \{ y_i\}$, $Y_{y_i}=\{y_i,y_{i-1}\}$ (with
      $y_{-1}=y_k$), and $Y_{y_i}=\{ y_i,y_{i+1}\}$. The latter two
      options, however, are not possible. We show this exemplarily for
      the last option.  By construction of $q^+$, the fact that
      $\{ y_i,y_{i-1}\}$ is an edge in the Gaifman graph of $q^+$
      implies that $q$ contains an atom $\alpha$ that contains
      variables $u,v$ such that $y_{i-1}$ is reachable on a functional
      path from $u$ and $y_{i}$ is reachable on a functional path from
      $y_i$. Since $y_{i+1} \in Y_{y_i}$, there is also a functional
      path from $y_i$ to $y_{i+1}$ and thus we find a functional path
      from $v$ to $y_{i+1}$. By construction of $q^+$, the Gaifman
      graph of $q^+$ contains the edge $\{ y_{i-1},y_{i+1}\}$, in
      contradiction to the chordfreeness of $y_0,\dots,y_k$.

         For Point~4 observe that the variables
         $y_1,\dots,y_{k-1}$ are quantified. Thus, by construction of
         $q^+$  (which is
         $q^+(\bar x^+)$),
         $q$ contains no functional path from an answer variable in
         $q$ to any of these variables. Consequently, if $x'$ is an
         answer variable in $q$, then $Y_{x'} \subseteq
         \{y_0,y_k\}$. But by construction of $q^+$,
         $Y_{x'} = \{y_0,y_k\}$ would imply that the Gaifman graph of
         $q^+$ contains the edge $\{y_0,y_k\}$, in contradiction to
         $y_1,\dots,y_k$ not being a cycle in~$q^+$.  Thus
         $Y_{x'} \subsetneq \{y_0,y_k\}$ and we are done.
%    	If there was a functional path from $y_0$ (or from $y_k$) to any of $y_i$, $0<i<k$, then by the assumption
%    	in Claim~\ref{claim:well-formed-query} $y_i$ would be an answer variable which contradicts the requirements of a bad path.
%
%    	Now, let $0 \leq i  \leq k$ be an index. If there was a functional path to $y_i$ from any of $y_j$, where $j \neq i$, then by construction of $q^{+}$,
%    	$y_i$ and $y_j$ would be neighbours in the Gaifman graph of $q^{+}$ and, thus, the claim follows.
%
%        For the last part of the claim fix a variable $x$ and observe that if $y_i, y_j \in V_x$. Then, by definition, there is a functional path from $x$ to $y_i$
%        and a functional path from $x$ to $y_j$. Thus, there is an atom in $q^{+}$ that contains both $y_i$ and $y_j$ and, thus, they have to be two consecutive variables
%        on the bad path.
    \end{proof}

    % We will use the bad path $y_0, \dots, y_k$ to encode the matrix multiplication.
    %

    To give an intuition of the reduction, consider the CQ
    $q_0(y_0,y_2)= \exists y_1 \, R_1(y_0,y_1) \wedge R_2(y_1,y_2)$,
    which represents a `paradigmatic' bad path, and assume that the
    ontology is empty. To compute the product $M_1 M_2$, we can use
    the database
$$
\begin{array}{rcl}
\Dmc &=& \{ R_1(\langle y_0,a\rangle,\langle y_1,b\rangle) \mid (a,b) \in M_1 \}
  \; \cup\\[1mm]
  && \{ R_2(\langle y_1,b\rangle,\langle y_2, c\rangle) \mid (b,c) \in M_2 \}.
\end{array}
$$
Clearly, the answers to $q_0$ on \Dmc, projected to the second
component, are exactly the 1-entries in $M_1 M_2$.  The above database
\Dmc may be viewed as the direct product of $q_0$ with the database
$\{ R_i(a,b) \mid (a,b) \in M_i, \ i \in \{1,2\} \}$.

Also in general (and as in the previous two reductions), the
constructed database \Dmc may be viewed as a variation of the direct
product. A homomorphism $h$ from $q$ to $\Dmc$ that maps every
$y_i \in Y$ to a constant of the form $\langle y_i,f_{y_i} \rangle$
gives rise to a pair $(a,b) \in M_i$ as follows. By Point~3 of
Claim~\ref{claim:MM-small-Y}, the domain of the function $f_{y_i}$ is
$\{y_i\}$, and by construction of \Dmc $f_{y_i}(y_i)$ will be an
element of $[n]$. The homomorphism $h$ thus identifies a sequence of
elements $a_0,\dots,a_k \in [n]$, with $a_i=f_{y_i}(y_i)$. The
construction of $\Dmc$ ensures that then $(a_0,a_1) \in M_1$,
$a_1=\cdots=a_{k-1}$, and $(a_{k-1},a_k) \in M_2$, and thus
$(a_0,a_k) \in M_1M_2$. As in the previous two reductions, however, we
cannot ensure that $h$ maps every $y_i$ to a constant of the form
$\langle y_i,f_{y_i} \rangle$. This leads to similar technical
complications as in the previous reductions. Conversely, all pairs
$(a,c) \in M_1$ and $(c,b) \in M_2$ give rise to a homomorphism
from $q$ to \Dmc.

\smallskip

We now construct the database \Dmc.

   \paragraph{Step~1: Encoding the product of $M_1$ and $M_2$.}

   We use the same functions $f^w_x$ % In this step, we use constants of the form $\langle x,f \rangle$
   % where $x$ is a variable and $f$ is a function from $Y_x$ to
   % $[n]$. For every variable $x$ in $q$ and word
   % \mbox{$w = a_0a_1 \dots a_{k} \in [n]^{\ast}$}, let
   also employed in the two previous reductions.
   % $f^w_x$ denote the
   % function that maps each variable $y_i \in Y_x$ to $a_i$.
   The database $\Dmc_0$ contains the following facts,
   %
    % The domain $\dom(\Dmc)$ of the database
    % \Dmc consists of pairs $\langle x,f \rangle$ where $x$ is a
    % variable and $f$ is a function from $Y_x$ to $[n]$.partial
    % functions from the set $Y$ to the set $\{1, \dots, n\}$; and some
    % fresh database constants that will be used as a ``padding'' that
    % assures that for every relation symbol in the schema we have that
    % every constant has both an ingoing and an outgoing edge labelled
    % by that symbol.
%
    % We first add the facts, with relation symbols in the schema, that encode the matrix multiplication
    % and satisfy the functional dependencies enforced by the ontology.
   %
   for every atom $r(x,y)$ in $q$ with $r\in \Sigma$:
     \begin{description}
                \item[MR1] if $y_0 \in Y_x \cup Y_y$, the fact
                  $r(\langle x, f^{ab^k}_x\rangle, \langle y ,
                  f^{ab^k}_y \rangle)$ for all $(a,b) \in M_1$,
                \item[MR2] if $y_k \in Y_x \cup Y_y$, the fact $r(\langle x, f^{b^kc}_x\rangle, \langle y ,
                  f^{b^kc}_y \rangle)$ for all $(b,c) \in M_2$,
                \item[MR3] otherwise, the fact $r(\langle x, f^{b^{k+1}}_x\rangle, \langle y ,
                  f^{b^{k+1}}_y \rangle)$ for all $b$ such that $(a,b) \in M_1$ or $(b,c) \in M_2$.
    \end{description}
    It also contains $A(c)$ for every concept name $A \in \Sigma$ and
    every constant $c \in \mn{adom}(\Dmc_0)$. %  For easy reference, we
    % use $\Gamma$ to denote the set of constants in $\Dmc_0$.
    % Before
    % we proceed, let us give an example of the construction.
    % \begin{example}

    %  \end{example}

     % The database $\Dmc_0$ constructed in Step~1 may not deliver any
     % answers to $q_0$ because it contains only $\Sigma$-facts whereas
     % $q_0$ may also use symbols from outside of $\Sigma$. While we are
     % not allowed to include symbols from outside of $\Sigma$ in \Dmc,
     % we can extend \Dmc to make sure that `as many non-$\Sigma$-facts
     % as possible' are derived via \Omc.  This is exactly what is
     % achieved
     % by derivation completeneness.

%    If the schema is not full, we need to augment our definition of the database.
%    We say that a constant in a database is \emph{relevant} if it is used in some fact.
%    Let $p_r^K$ be an $\Sbf$\=/database, called a~\emph{rod}, of domain $[K]$ with the following set of facts
%    \begin{itemize}
%        \item $r(i,i+1)$  for $i \in [K-1]$,
%        \item $s(i,i)$ for $s \in \Sbf \setminus \{r\}$ and $i \in [K]$.
%    \end{itemize}
%
%    For every relation $r \in \Sbf$ and every constant $c$ used in one of the facts added above, we add a fresh copy of $p_r^K$
%    anchored to $c$, i.e. \begin{itemize}
%        \item if there is no fact $r(c, c')$ in $\Dmc$ then we add to $\Dmc$  a copy of $p_r^K$ and fact $r(c,c')$ where $c'$
%        is the constant $1$ in the fresh copy of $p_r^K$;
%        \item if there is no fact $r(c', c)$ in $\Dmc$ then we add to $\Dmc$ a copy of $p_r^K$ and fact $r(c,c')$ where $c'$
%        is the constant $K$ in the fresh copy of $p_r^K$.
%    \end{itemize}.

    \paragraph{Step~2: Adding trees.}
    This step is the same as in the previous two reductions.  We
    extend the database $\Dmc_0$ to the desired database \Dmc as
    follows: for every $c \in \mn{adom}(\Dmc_0)$ and every role
    $R \in \{ r, r^-\}$ with $r \in \Sigma$ such that there is no fact
    $R(c, c') \in \Dmc_0$, add a disjoint copy of $\Dmc_{R}$, glueing
    the copy of $\varepsilon$ to $c$.
    %
    % We further add, for every constant  $c \in \mn{adom}(\Dmc)$,
    %
    % set $\Gamma = \mn{adom}(\Dmc)$.
    % Let $\Gamma$ be the set of database constants added in the above construction.
    % For every unary relation symbol $A$ in $\Sbf$ and every constant $c \in \Gamma$ we add fact $A(c)$.

    % Intuitively, we add trees only to database constants $(x, f)$ where $f$ is not a function from $Y_x$ to $[n]$
    % or to constants that create ``bridges'' between answer variables $y_0, y_k$ and the remaining variables in $Y$.

%    The database restricted to constants in $\Gamma$ satisfies the
%    functionalities.
    The following is immediate from the construction of $\Dmc_0$, by a
    straightforward analysis of the Cases~1-3 in the construction of
    $\Dmc_0$ that may introduce an atom $R(a,b)$.
    \begin{claim}
      \label{claim:MM-functional-database}
%       Let $R(x,y) \in q$ be such that % an atom that uses a relation symbol from
%       % $\Sigma$ and satisfies
%       $Y_x \subseteq Y_y$ and let
%       \mbox{$R(a,b) \in
%       \Dmc_0$},  % set $F = \{(a,b) \in \Gamma^2 \mid R(a,b) \in D\}$ is a function.
%       Then $a= \langle x,f_x\rangle $ and
%       $b=\langle y,f_y \rangle$ where $f_x,f_y$ are functions
%       $\fun{f_x}{Y_x}{[n]}$, $\fun{f_y}{Y_y}{[n]}$ such that
%       $f_x = {f_y}|_{Y_x}$.
% %        Moreover, if $Y_x = Y_y$ then $F \subseteq \{(a,a) \mid a \in \Gamma\}$.
%
%       Let $R(x,y) \in q$ and \mbox{$R(a,b) \in
%         \Dmc_0$}, % set $F = \{(a,b) \in \Gamma^2 \mid R(a,b) \in D\}$ is a function.
%       Then $a= \langle x,f_x\rangle $ and $b=\langle y,f_y \rangle$
%       for functions $f_x,f_y$ such that $f_x(y_i)=f_y(y_i)$ for all
%       $y_i \in Y_x \cap Y_y$.
%
      Let $R(a,b) \in \Dmc_0$ with $a= \langle x,f_x\rangle $ and
      $b=\langle y,f_y \rangle$.  Then $f_x(y_i)=f_y(y_i)$ for all
      $y_i \in Y_x \cap Y_y$.
    \end{claim}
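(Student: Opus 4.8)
The plan is to trace the given fact $R(a,b)$ back to the step of the construction of $\Dmc_0$ that introduced it and then read the claim off from the definition of the functions $f^w_z$. First I would observe that the only role facts in $\Dmc_0$ are those added, for some atom $r(x',y') \in q$ with $r \in \Sigma$, by one of the Cases \textbf{MR1}--\textbf{MR3}; since the concept facts $A(c)$ are unary, the given binary fact $R(a,b)$ must come from one of these cases. Because $q$ is self-join free, the relation symbol $R$ occurs in a unique atom of $q$, which is therefore $R(x,y)$, so the variable-components of $a$ and $b$ are exactly the variables $x$ and $y$ of this atom.

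The key observation, and really the entire content of the claim, is that in each of the three cases the fact added to $\Dmc_0$ has the form $R(\langle x, f^w_x\rangle, \langle y, f^w_y\rangle)$ for one common word $w = w_0 \cdots w_k$: namely $w = ab^k$ in \textbf{MR1}, $w = b^k c$ in \textbf{MR2}, and $w = b^{k+1}$ in \textbf{MR3} (each of length $k+1$). Hence $f_x = f^w_x$ and $f_y = f^w_y$ for one and the same $w$; note that even if the same fact could be produced from several words, any two such words would have to agree on $Y_x$ and on $Y_y$, hence on $Y_x \cap Y_y$, so the argument is unaffected.

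It then remains to invoke the definition of $f^w_z$, which maps every $y_i \in Y_z$ to the $i$-th letter $w_i$ and is undefined elsewhere. For $y_i \in Y_x \cap Y_y$ this yields $f_x(y_i) = f^w_x(y_i) = w_i = f^w_y(y_i) = f_y(y_i)$, which is exactly the claim. I do not expect any genuine obstacle here: the proof is immediate once the common-word observation is made. The only points requiring a little care are using self-join freeness to pin down the unique generating atom (so that $x$ and $y$ are well defined from $R$), and checking that the index $i$ of $y_i$ lines up with the $i$-th letter of $w$ in all three words, which is clear since each word has length $k+1$.
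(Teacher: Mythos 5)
Your proof is correct and follows exactly the route the paper intends: the paper states the claim is ``immediate from the construction of $\Dmc_0$, by a straightforward analysis of the Cases~1--3,'' and your argument is precisely that analysis spelled out, with the common-word observation and the definition of $f^w_z$ doing all the work. The remarks on self-join freeness pinning down the generating atom and on multiple generating words are appropriate and do not change the conclusion.
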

    We next summarize some important properties of the database \Dmc.
    \begin{lemma}
    \label{claim:MM-database-is-ok}
    % The following hold.
    The database \Dmc
    \begin{enumerate}
        \item can be computed in time $O(|M_1| + |M_2|)$.
%        \item Thus, $|\Dmc|$ is $O(|M_1| + |M_2|)$.
        \item satisfies the functional assertions in~$\Omc$.
        \item is derivation complete at $\mn{adom}(\Dmc_0)$.

        \end{enumerate}
\end{lemma}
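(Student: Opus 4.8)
The plan is to prove the three points in turn, closely mirroring the proof of Lemma~\ref{claim:cycle-database-is-ok} for the cyclic case and exploiting the two facts already in place: Claim~\ref{claim:MM-functional-database} (agreement of the coordinate functions on shared $Y$-variables) and Lemma~\ref{lem:treesim} (a simulation from $(\Dtree,\varepsilon)$ guarantees derivation completeness). For Point~1 I would simply read the bound off the construction. Since $q$ is fixed, there are only constantly many atoms $r(x,y)$ with $r\in\Sigma$, and for each of them Cases \textbf{MR1}--\textbf{MR3} add one fact per relevant entry of $M_1$ or $M_2$ (respectively per value $b$ occurring in $M_1$ or $M_2$). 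Hence both $|\mn{adom}(\Dmc_0)|$ and the number of facts in $\Dmc_0$ are $O(|M_1|+|M_2|)$, and the unary facts $A(c)$ for $A\in\Sigma$ add only a constant factor. Step~2 attaches to each such constant a copy of some $\Dmc_R$ whose size depends only on $Q$, so the total size stays $O(|M_1|+|M_2|)$; on a RAM this is also the running time, using lookup tables to test for already-present $R$-successors.

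For Point~2 I would take a role $R$ with $\func(R)\in\Omc$ and two facts $R(c,d),R(c,d')\in\Dmc$ and argue $d=d'$. If any of $c,d,d'$ lies in an attached copy of $\Dtree$, the claim follows exactly as in Lemma~\ref{claim:cycle-database-is-ok}, since trees are glued only where $\Dmc_0$ supplies no $R$-successor and each contributes a unique one. Otherwise $c,d,d'\in\mn{adom}(\Dmc_0)$, and by self-join freeness both facts stem from a single atom $R(x,y)\in q$, so $c=\langle x,f_c\rangle$, $d=\langle y,f_d\rangle$, and $d'=\langle y,f_{d'}\rangle$. Because $R(x,y)$ is a functional edge, every functional path out of $y$ extends one out of $x$, whence $Y_y\subseteq Y_x$ and therefore $Y_x\cap Y_y=Y_y$. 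Claim~\ref{claim:MM-functional-database} now gives $f_c(y_i)=f_d(y_i)$ and $f_c(y_i)=f_{d'}(y_i)$ for every $y_i\in Y_y$. Since by definition of $f^w_y$ the functions $f_d,f_{d'}$ are undefined outside $Y_y$, each is exactly the restriction of $f_c$ to $Y_y$; hence $f_d=f_{d'}$ and $d=d'$.

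For Point~3 I would reduce to a simulation, as in the cyclic case: it suffices to exhibit, for every $c\in\mn{adom}(\Dmc_0)$, a simulation $S$ from $\Dtree$ to $\Dmc$ with $(\varepsilon,c)\in S$, and then apply Lemma~\ref{lem:treesim}. Condition \textbf{Sim1} holds because Step~1 asserts $A(c)$ for every $A\in\Sigma$ and every $c\in\mn{adom}(\Dmc_0)$, and the tree copies, being fragments of $\Dmc_\omega$, carry all $\Sigma$-concept names as well. Condition \textbf{Sim2} holds because for each role $R$ with $r\in\Sigma$ the constant $c$ has an $R$-successor in $\Dmc$: either one already present in $\Dmc_0$, which again lies in $\mn{adom}(\Dmc_0)$ and is handled in the same way, or the root of a freshly attached copy of $\Dmc_R$, from which the remaining subtree of $\Dtree$ embeds identically. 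As $\Dtree$ has finite depth, iterating this defines $S$ on all of its nodes.

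I expect the only genuinely delicate step to be the $\Dmc_0$-internal part of Point~2. There one must simultaneously invoke self-join freeness (to know both conflicting facts come from one atom $R(x,y)$), extract $Y_y\subseteq Y_x$ from $\func(R)\in\Omc$, and observe that $f_d$ and $f_{d'}$ have domain exactly $Y_y$, so that coincidence with $f_c$ on $Y_y$ pins them down completely; only with all three ingredients together does $d=d'$ follow. The remaining work is bookkeeping that runs in parallel with the cyclic-case argument.
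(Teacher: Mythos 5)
Your proof is correct and follows essentially the same route as the paper, which itself just defers Points~2 and~3 to the corresponding argument for the cycle case (Lemma~\ref{claim:cycle-database-is-ok}): self-join freeness plus $Y_y\subseteq Y_x$ plus the agreement of the coordinate functions for Point~2, and a simulation from $(\Dtree,\varepsilon)$ combined with Lemma~\ref{lem:treesim} for Point~3. The only cosmetic difference is that you route the agreement step through Claim~\ref{claim:MM-functional-database} rather than re-extracting the common word $w$ from Cases \textbf{MR1}--\textbf{MR3}, which is the same content packaged once.
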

%
%\begin{proof}
Point~1 should be clear by construction of \Dmc and Points~2 and~3 can
be proved in the same way as the corresponding points in
Claim~\ref{claim:MM-database-is-ok}.
  % , observe that $|\Dmc_0|$ is
  % $O(|\Sigma| \cdot (|M_1| + |M_2|))$, thus $|\dom(\Dmc_0)|$ is
  % $O(|\Sigma| \cdot (|M_1| + |M_2|))$.  Moreover, since $\Dtree$
  % depends only on the OMQ and in Step~2 we add no more than
  % $|\Sigma|\cdot |\Dmc_0|$ copies of $|\Dtree|$, the overall
  % construction can be done in the claimed time.

%   For Point~2, take a role name $r$ such that $\func(r) \in \Omc$, and
%   take a constant $c$ such that there are $d$ and $d'$ such that
%   $r(c,d) \in \Dmc$ and $r(c,d') \in \Dmc$. We show that $d=d'$.

%   If one of $c$, $d$ or $d'$ belongs to a copy of $\Dtree$ then
%   clearly $d =
%   d'$. % Now, if $d$ (or $d'$) belongs to a copy of a tree, again, clearly $d = d'$.
%   Thus, let us assume that $c,d,d' \in \DGamma$.  Then, there is an
%   atom $r(x,y)$ in $q$, for some variables $x$ and $y$.  Moreover, by
%   construction of $\Dmc_0$ there are functions $f_c,f_d,f_{d'}$ such
%   that $c = \langle x, f_c \rangle $, $d = \langle y, f_d \rangle$,
%   and $d' = \langle y, f_{d'} \rangle$.

%   Since $\func(r) \in \Omc$ we have that $Y_y \subseteq Y_x$ .  Thus,
%   by Claim~\ref{claim:MM-small-Y} for all $y_i \in Y_y$ we have that
%   $f_d(y_i) = f_c(y_i) = f_{d'}(y_i)$.  This implies that
%   $f_{d} = f_{d'}$ and, thus, $d=d'$.

%   The case where $\func(r^{-}) \in \Omc$ is proven analogously.
% \end{proof}
%
The following lemma makes precise the connection between answers to
$Q$ and the matrix product $M_1M_2$. Recall that $x_1$
and $x_2$ are the first two answer variables of $q$. For any complete
or partial answer $\bar a$, we use $a_1$ and $a_2$ to denote the first
two constants in $\bar a$.
    \begin{lemma}
        \label{lemma:lower-bound-partial-answers-and-matrix}
        % The following holds.
        ~\\[-4mm]
        \begin{description}
            \item[MM1] For every minimal partial answer $\bar{a}$
                to $Q$ on~$\Dmc$,
                if $a_1 = \langle x_1,
                f_{x_1}\rangle \in \DGamma$ and $a_2 = \langle
                x_2,f_{x_2}\rangle \in \DGamma$ then
             $(f_{x_1}(y_0),f_{x_2}(y_k)) \in M_1M_2$.
        \item[MM2] %For any pair $(a, b) \in [n] \times [n]$,
            For all $(a,b) \in M_1M_2$, there is a complete answer $\bar{a}$ to $Q$ on $\Dmc$
                such that $a_1 = \langle x_1, f_{x_1}\rangle $, $a_2 = \langle y_x,f_{x_2}\rangle$ with $f_{x_1}(y_0) = a$ and  $f_{x_2}(y_k) = b$.

            \item[MM3] There are no more than $O(|M_1| + |M_2| + |M_1M_2|)$ minimal partial answers to $Q$ on $\Dmc$.
            \end{description}
            This holds both for minimal partial answers with a single
            wildcard and with multi-wildcards.
    \end{lemma}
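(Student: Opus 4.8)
My plan is to prove the three statements separately, using the structural facts about $\Dmc_0$ collected in Claims~\ref{claim:MM-small-Y} and~\ref{claim:MM-functional-database}, together with derivation completeness from Lemma~\ref{claim:MM-database-is-ok} and the path-preservation facts from Claim~\ref{claim:paths-in-core-hyperclique} (adapted to the present reduction).

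\textbf{MM1.} Suppose $\bar a$ is a minimal partial answer with $a_1 = \langle x_1,f_{x_1}\rangle$ and $a_2=\langle x_2,f_{x_2}\rangle$ both in $\DGamma$. Since $\bar a$ is a partial answer there is a homomorphism $h$ from $q$ to $\Umc_{\Dmc,\Omc}$ with $h(x_1)=a_1$ and $h(x_2)=a_2$. Recall that $x_1$ has a functional path to $y_0$ and $x_2$ a functional path to $y_k$ in $q$. First I would use an analogue of Claim~\ref{claim:triangleui}/\ref{claim:hypercliqueui} to locate, for $y_0$ and for $y_k$, witness variables $u_0,u_k$ with $h(u_0),h(u_k)\in\DGamma$ and $y_0\in Y_{u_0}$, $y_k\in Y_{u_k}$; in fact, since $x_1,x_2$ are already mapped into $\DGamma$ and connect functionally to $y_0,y_k$, Claim~\ref{claim:paths-in-core-hyperclique}(1) forces the whole functional paths to land in $\DGamma$, so I may take $u_0=x_1$ and $u_k=x_2$. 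Then by the $\Dmc_0$-analogue of Claim~\ref{claim:cycle-well-guided-hom} every variable mapped into $\DGamma$ has image of the form $\langle x,f_x\rangle$ with the functions agreeing on shared $Y$-coordinates. Now I trace the bad path $y_0,\dots,y_k$ in $q$: using Claim~\ref{claim:MM-small-Y}(1,3) and Claim~\ref{claim:MM-functional-database}, walking along the edges $R_i(\cdot,\cdot)$ of the path forces, via Cases~\textbf{MR1},\textbf{MR2},\textbf{MR3}, that the coordinate at $y_0$ is some $a$ with $(a,b)\in M_1$, that the intermediate coordinates $f(y_1)=\cdots=f(y_{k-1})=b$ are constant, and that the coordinate at $y_k$ is some $c$ with $(b,c)\in M_2$. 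Setting $a=f_{x_1}(y_0)$ and $c=f_{x_2}(y_k)$ then yields $(a,c)\in M_1M_2$. The delicate point is that some $y_i$ need not map into $\DGamma$; I handle this exactly as in the triangle reduction, arguing at the level of the witness variables $u_i$ and the functions they carry rather than at the $y_i$ directly.

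\textbf{MM2.} Given $(a,b)\in M_1M_2$, fix an intermediate $c$ with $(a,c)\in M_1$ and $(c,b)\in M_2$. I would define $h$ on the core variables by $h(x)=\langle x, f^{ac^{k-1}b}_x\rangle$, mirroring the \textbf{TD2}/\textbf{HD2} constructions (there the word was $ab^{k-1}c$). Checking that $h$ preserves every core atom $R(x,x')$ with $R\in\Sigma$ reduces to a case analysis on which of \textbf{MR1}–\textbf{MR3} introduced the required fact, governed by whether $y_0$ or $y_k$ lies in $Y_x\cup Y_{x'}$ — here Claim~\ref{claim:MM-small-Y}(2) guarantees $Y_x\cup Y_{x'}\subseteq\{y_i,y_{i+1}\}$ so exactly one case applies. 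Concept atoms with $A\in\Sigma$ hold by construction, and concept atoms with $A\notin\Sigma$ hold because $\Dmc$ is derivation complete at $\DGamma$ (Lemma~\ref{claim:MM-database-is-ok}(3)) and $q$ is non-empty, so every such $A$ is non-empty. The tree-shaped remainder of $q$ is extended using derivation completeness exactly as in the proof of \textbf{TD2}. By design $f_{x_1}(y_0)=a$ and $f_{x_2}(y_k)=b$.

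\textbf{MM3.} Here I would bound the number of minimal partial answers. The first two coordinates of any minimal partial answer with images in $\DGamma$ are, by \textbf{MM1}, determined by an element of $M_1M_2$, contributing $O(|M_1M_2|)$; coordinates mapped outside $\DGamma$ become wildcards and, by minimality together with the bounded-depth tree structure (the trees $\Dmc_R$ have size depending only on $Q$), contribute only boundedly many distinct wildcard patterns per fixed first two coordinates. The remaining answer variables $x'$ satisfy $Y_{x'}\in\{\emptyset,\{y_0\},\{y_k\}\}$ by Claim~\ref{claim:MM-small-Y}(4), so each is functionally tied either to the $y_0$-coordinate or the $y_k$-coordinate or carries no core information; thus each is determined by $a$, by $b$, or is a fixed wildcard. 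The total count is therefore $O(|M_1|+|M_2|+|M_1M_2|)$, the $|M_1|$ and $|M_2|$ terms absorbing the boundary contributions where one of $x_1,x_2$ maps outside $\DGamma$. I expect \textbf{MM3} to be the main obstacle, since it requires carefully arguing that minimality collapses the potentially many wildcard tuples into a number controlled by the output size rather than the domain size; the single-wildcard and multi-wildcard cases are treated uniformly because the bound depends only on how many distinct $(\DGamma\text{-coordinate}, \text{wildcard-pattern})$ combinations can arise, which is insensitive to the wildcard naming convention.
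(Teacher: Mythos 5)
Your treatments of \textbf{MM1} and \textbf{MM2} follow essentially the same route as the paper: for \textbf{MM1} one walks the bad path through witness variables that land in $\mn{adom}(\Dmc_0)$, uses the fact that all images there have the form $\langle x,f_x\rangle$ with functions agreeing on shared $Y$-coordinates, and reads off $(a,c)\in M_1$, a constant middle value, and $(c,b)\in M_2$ from the case split \textbf{MR1}--\textbf{MR3}; for \textbf{MM2} one uses the non-emptiness witness to decide which variables belong to the core and maps those via $f^{w}_x$ for the word $w=ad^{k-1}b$ with $d$ an intermediate value. (The paper packages \textbf{MM1} as a slightly more general claim about arbitrary variables $z,z'$ with $y_0\in Y_z$ and $y_k\in Y_{z'}$, which it then reuses inside the proof of \textbf{MM3}; that is a presentational difference only.)

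For \textbf{MM3}, however, there is a genuine gap. First, you never treat \emph{misguided} answers, i.e.\ homomorphisms mapping some variable $x$ to a constant $\langle x',f\rangle$ with $x'\neq x$; for these \textbf{MM1} gives no handle on the first two coordinates, and the paper needs a separate counting argument (exploiting self-join freeness and the tree structure) to bound them by $O(|M_1|+|M_2|)$. Second, and more seriously, your claim that coordinates mapped outside $\mn{adom}(\Dmc_0)$ ``become wildcards'' is false: such a coordinate can be a genuine constant of one of the copies of $\Dmc_R$ attached in Step~2, and there are $\Theta(|M_1|+|M_2|)$ such attached trees. If the answer variables could be distributed over these trees independently, one would obtain up to $(|M_1|+|M_2|)^{|\bar x|}$ distinct answers, destroying the bound. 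The heart of the paper's proof is (i) showing that every non-misguided homomorphism confines the answer variables to at most two attached trees, with roots $d_1$ (for variables tied to $y_0$) and $d_2$ (for those tied to $y_k$), and (ii) showing that the number of admissible \emph{pairs} $(d_1,d_2)$ is $O(|M_1|+|M_2|)$ rather than the naive $O((|M_1|+|M_2|)^2)$ --- this last step requires propagating a common value along the bad path to correlate the two roots, mirroring the \textbf{ME2} argument. Your sketch fixes ``the first two coordinates'' and counts wildcard patterns per choice, but when both $x_1$ and $x_2$ map outside $\mn{adom}(\Dmc_0)$ each of those coordinates ranges over a linearly large set, so without the correlation argument the count is quadratic, not linear; this case is not absorbed by the $|M_1|$ and $|M_2|$ terms without further work.
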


%    The first bullet is not trivial, neither is the second.
%
%    For the third bullet is ok.

    Before we prove the lemma, we let us explain how it is used to prove
    Lemma~\ref{lemma:lower-bound-enum-MM}.

    \medskip

%    \paragraph{The algorithm}
%    We can finally prove Lemma~\ref{lemma:lower-bound-enum-MM}.
    Assume that the complete or minimal partial answers to $Q$ (with a single
    or multiple wildcards) answers can be enumerated in \dlc.  We can
    then multiply matrices $M_1$ and $M_2$ in time
    $O(|M_1|+ |M_2| +|M_1M_2|)$ as follows.
    We build the database $\Dmc$ as described above.  Then, we
    enumerate the answers to $Q$ on $\Dmc$.  For every answer
    $\bar{a}$ with $a_1 = \langle x_1,f_{x_1}\rangle$ and
    $a_2 = \langle x_2, f_{x_2}\rangle$ we extract the pair
    $(f_{x_1}(y_0), f_{x_2}(y_k)) \in [n] \times [n]$. By \textbf{MM1}
    and \textbf{MM2}, the result of this extraction is exactly the set
    of pairs in $M_1 M_2$.
    Since $\Dmc$ can be constructed in time $O(|M_1|+|M_2|)$ and by
    ~\textbf{MM3} there are no more than $O(|M_1|+ |M_2| +|M_1M_2|)$
    answers, this algorithm produces $M_1M_2$ in time
    $O(|M_1|+|M_2|+|M_1M_2|)$. % The additional manipulations on the
    % answer set can be carried out in time $O(|M_1|+|M_2|+|M_1M_2|)$
    % and, thus, the overall time is $O(|M_1|+|M_2|+|M_1M_2|)$.

     % The correctness follows directly from the first two bullets of Claim~\ref{lemma:lower-bound-partial-answers-and-matrix}.

\medskip

To prepare for proof of
Lemma~\ref{lemma:lower-bound-partial-answers-and-matrix},
we make some technical observations.

    % As in Example~\ref{ex:simple-bad-path} we want to discover a function
    % $\fun{f}{Y}{[n]}$ such that $(f(y_0), f(y_1)) \in M_1$, $(f(y_1), f(y_k)) \in M_2$ with
    % $f(y_0) = f_{x_1}(y_0)$ and $f(y_k) = f_{x_2}(y_k)$.
    % The natural candidates for the values $f(y_i)$ would be $f'(y_i)$,
    % where $f'$ is a function such that $h(y_i) = \langle y', f'\rangle$.
    % Unfortunately, it may be the case that $y_i \notin \dom(f')$
    % or, even, that $h(y_i) \notin \dom(\Dmc_0)$ and, thus, $f'$ is undefined.

    % Nevertheless, it is possible to define the function $f$.
    % To do so, we start with two observations
    % concerning the homomorphisms from $q$ to $\Umc_{\Dmc, \Omc}$.
    % The first one states that for every simple path in $q$ with both ends mapped to $\dom(\Dmc_0)$
    % the image of  the path is all contained in $\dom(\Dmc_0)$.
    %
    \begin{claim}
        \label{claim:paths-in-core}
        Let $h$ be a homomorphism from $q$ to $\Umc_{\Dmc, \Omc}$ and
        $u_0, \dots, u_m$ a simple path in $q$, $m>0$.
        %        {\color{violet} If $h(u_0) \in \dom(\Dmc)$ and $h(u_m) \in \dom(\Dmc)$, then $h(u_i) \in \dom(\Dmc)$ for
            %            $0<i\leq m$. Moreover, if}
        If
        $h(u_0) \in \DGamma$ and $h(u_m) \in \DGamma$, then $h(u_i) \in \DGamma$ for
        $0\leq i\leq m$.
    \end{claim}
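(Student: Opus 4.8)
The plan is to mirror the argument already given for Point~1 of Claim~\ref{claim:paths-in-core-triangle}, since the relevant structure of the universal model $\Umc_{\Dmc,\Omc}$ is the same here. The key structural fact I would first record is that, after deleting from $\Umc_{\Dmc,\Omc}$ all edges $r(a,b)$ with $a,b \in \DGamma$, what remains is a disjoint collection of trees, each containing exactly one constant of $\DGamma$ as its root (every attached copy of $\Dmc_{\mn{tree}}$ and every tree $\Umc_{\Dmc,\Omc}^{\downarrow c}$ is glued to a single core constant and otherwise uses fresh elements) and each carrying only single-role edges. This last point is crucial and is exactly where $\Omc$ being an $\fEL$-ontology enters: there are no role inclusions, so every existentially generated edge and every edge inside a copy of $\Dmc_{\mn{tree}}$ carries exactly one role, and the chase never places a second role onto an existing edge.

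Next I would argue by contradiction. Suppose some interior vertex is mapped outside $\DGamma$. Let $j$ be least with $h(u_j) \notin \DGamma$; then $j \geq 1$ since $h(u_0)\in\DGamma$, and since $h(u_m)\in\DGamma$ there is a least $\ell>j$ with $h(u_\ell)\in\DGamma$. By minimality, $h(u_i)\notin\DGamma$ for $j \leq i < \ell$, while $h(u_{j-1}),h(u_\ell)\in\DGamma$. Every atom realizing a step of the subpath $u_{j-1},\dots,u_\ell$ has at least one endpoint outside $\DGamma$, so none of these atoms is a deleted core–core edge; hence the whole subpath maps into the forest $\Imc$ obtained by the deletion above.

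The image of the subpath is then a walk in $\Imc$ starting and ending at constants of $\DGamma$, i.e.\ at roots, with all interior vertices outside $\DGamma$ and hence proper (non-root) tree nodes. Since a walk stays in one connected component and each tree has a unique root, both endpoints are mapped to the same root, so the walk is closed and of positive length inside a single tree. A standard acyclicity argument then shows that its first and last steps traverse the same tree edge: the interior never revisits the root, so the walk cannot escape the subtree entered by the first step, forcing the last step to return along the edge of the first step.

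The main obstacle — the step I expect to require the most care — is converting this edge coincidence into a contradiction. The first and last steps are realized by two distinct atoms of the simple path, so by self-join freeness their relation symbols are distinct role names $R \neq R'$. Because the shared tree edge carries a single role, matching both atoms (in whatever orientation each is traversed) forces one of $R=R'$, $R=(R')^-$, $R^-=R'$, or $R^-=(R')^-$. The first and last contradict $R \neq R'$, while the middle two equate a role name with an inverse role and are therefore impossible. This contradiction shows that no interior vertex of the path leaves $\DGamma$, proving the claim; note that the single-role property, and hence the absence of role inclusions in $\fEL$, is precisely what makes the final case analysis go through.
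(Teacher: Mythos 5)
Your proof is correct and follows essentially the same route as the paper: take the minimal offending indices $j$ and $\ell$, observe that deleting the core–core edges from $\Umc_{\Dmc,\Omc}$ leaves a forest of multi-edge-free trees rooted at constants of $\DGamma$, and derive a contradiction between self-join freeness and the image of the subpath being a closed walk at a root. You merely spell out the final step (that two distinct atoms must map to the same single-role tree edge) which the paper dismisses with ``clearly contradicts,'' and you correctly identify the absence of role inclusions as the reason the tree edges carry a single role.
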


    \begin{proof}
        Assume towards a contradiction that $h(u_j) \notin \DGamma$ for
        some $j$ with $0 \leq j \leq m$ and additionally let $j$ be
        smallest with this property. Moreover, let $\ell>j$ be smallest
        such that $h(u_\ell)\in \DGamma$. Note that $0 < j < \ell < m$.
        By definition of a path, $q$ must contain atoms
        $R_j(u_{j-1},u_j),\dots,R_\ell(u_{\ell-1},u_\ell)$. Since $q$ is
        self-join free, the relation symbols  $R_j,\dots,R_\ell$ are
        all distinct.

        Next observe that by construction of \Dmc and by definition of
        universal models, the interpretation \Imc obtained from
        $\Umc_{\Dmc, \Omc}$ by removing all edges $r(a,b)$ with
        $a,b \in \DGamma$ is a collection of trees without multi-edges.
        More precisely, these are the trees added in the second step in
        the construction of \Dmc and the trees $\Umc_{\Dmc,
            \Omc}^{\downarrow c}$
        of $\Umc_{\Dmc, \Omc}$.
        We view as the roots of these trees the elements of $\DGamma$.

        Then $h$ maps $R_j(u_{j-1},u_j),\dots,R_\ell(u_{\ell-1},u_\ell)$
        to some tree of this collection, and both $u_{j-1}$ and $u_\ell$
        are mapped to the root of the tree. This, however, clearly
        contradicts the fact that the relation symbols
        $R_j,\dots,R_\ell$ are all distinct and the tree has no multi-edges.
    \end{proof}

    % Intuitively, the above claim will allow us to assign the value $f(y_i)$
    % for every $y_i \in Y$, as we will show that for every such $y_i$ there is
    % a variable $y$ such that $h(y) = \langle y, f_y\rangle$ and $y_i \in Y_i = \dom(f_y)$.
    % To assure that this assignment does not depend on the choice of $y$
    % we make the following observation.

    \begin{claim}
        \label{claim:MM-well-guided-hom}
        Let $y,y'$ be variables in $q$ such that $y \neq y'$ and $h(y),h(y') \in \DGamma$.
        Then, $h(y) = \langle y, f_y\rangle$ and $h(y) = \langle y', f_{y'}\rangle$
        for some functions \fun{f_y}{Y_y}{[n]} and \fun{f_{y'}}{Y_{y'}}{[n]}.
        Moreover, if $y_i \in Y_{y} \cap Y_{y'}$ then $f_y(y_i) = f_{y'}(y_i)$.
    \end{claim}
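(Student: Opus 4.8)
The plan is to transcribe the proof of Claim~\ref{claim:cycle-well-guided-hom} into the matrix\=/multiplication setting, substituting the auxiliary facts available here. For the first part I would use connectedness of $q$ to fix a simple path $z_0,\dots,z_\ell$ in $q$ with $z_0=y$ and $z_\ell=y'$. Since $h(y),h(y')\in\DGamma$, Claim~\ref{claim:paths-in-core} forces $h(z_j)\in\DGamma$ for every $j$; in particular the first edge of the path comes from an atom $R(y,z_1)$ (or $R(z_1,y)$), whose image $R(h(y),h(z_1))$ has both endpoints in $\DGamma$. The key observation is that, because $\Omc$ is an $\fEL$ ontology and hence contains no role inclusions, the chase introduces no new binary facts (as already recorded in the proof of Lemma~\ref{lem:sim}), and the trees glued in Step~2 only add edges incident to fresh tree constants. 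Hence every role fact of $\Umc_{\Dmc,\Omc}$ between two constants of $\DGamma$ already lies in $\Dmc_0$, so $R(h(y),h(z_1))\in\Dmc_0$. Since $q$ is self\=/join free, the symbol $R$ occurs in only this one atom, so this fact must have been introduced by one of Steps~\textbf{MR1}--\textbf{MR3} while processing $R(y,z_1)$; by construction every fact added there has the form $r(\langle y,f^w_y\rangle,\langle z_1,f^w_{z_1}\rangle)$. Therefore $h(y)=\langle y,f_y\rangle$ for a partial function $f_y$ with domain exactly $Y_y$, and the same argument applied to the last edge of the path gives $h(y')=\langle y',f_{y'}\rangle$ with $f_{y'}$ defined on $Y_{y'}$.

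For the second part, suppose $y_i\in Y_y\cap Y_{y'}$. By definition of $Y_\cdot$ there is a functional path in $q$ from $y$ to $y_i$ and one from $y'$ to $y_i$, and every variable $u$ on either path satisfies $y_i\in Y_u$. Concatenating the first path with the reverse of the second yields a walk from $y$ to $y'$ on which every variable $u$ has $y_i\in Y_u$; repeatedly cutting out repeated variables turns it into a \emph{simple} path with the same property, since deleting a loop only discards variables. As the endpoints map into $\DGamma$, Claim~\ref{claim:paths-in-core} again places every variable of this simple path in $\DGamma$, so by the first part each such $u$ satisfies $h(u)=\langle u,f_u\rangle$ with $y_i\in\dom(f_u)$. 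For each edge the corresponding fact lies in $\Dmc_0$, so Claim~\ref{claim:MM-functional-database} gives $f_u(y_i)=f_{u'}(y_i)$ for consecutive variables $u,u'$; chaining these equalities along the path yields $f_y(y_i)=f_{y'}(y_i)$, as required.

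The main obstacle, and the step deserving the most care, is the first part's assertion that $R(h(y),h(z_1))$ actually lies in $\Dmc_0$ rather than merely in $\Umc_{\Dmc,\Omc}$: this is exactly where the restriction to $\fEL$ is essential, since the absence of role inclusions guarantees the chase creates no fresh binary facts and that attaching trees adds only edges touching new constants, never a new edge between two constants of $\DGamma$. The remaining steps are a routine transcription of the triangle case, so I would keep them brief and concentrate the exposition on this point, together with the easy verification that the simple\=/path extraction in the second part preserves the invariant $y_i\in Y_u$.
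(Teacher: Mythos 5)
Your proof is correct and follows essentially the same route as the paper's: a simple path from $y$ to $y'$ whose interior is forced into $\mn{adom}(\Dmc_0)$ by Claim~\ref{claim:paths-in-core}, identification of the witnessing atom via self\=/join freeness and the rules \textbf{MR1}--\textbf{MR3}, and chaining Claim~\ref{claim:MM-functional-database} along a simple path through $y_i$ for the second part. Your extra justification that a role fact of $\Umc_{\Dmc,\Omc}$ between two constants of $\mn{adom}(\Dmc_0)$ must already lie in $\Dmc_0$ is a detail the paper leaves implicit, and it is correctly argued.
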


    \begin{proof}
        Since $q$ is connected, there is a simple path $z_0, \dots, z_\ell$ in $q$ from $z_0=y$ to $z_\ell=y'$.
        Since $h(y), h(y') \in \DGamma$, by the previous claim, all variables on this path
        are mapped to $\DGamma$. Thus, there are atoms $R_0(y, z_1)$ and $R_2(z_{\ell-1}, y')$
        such that $R_0(h(y), h(z_1)) \in \Dmc_0$ and $R_2(h(z_{\ell-1}), h(y'))\in \Dmc_0$.
        Those facts have been added by one of the rules \textbf{MR1}, \textbf{MR2}, or \textbf{MR3}
        in the construction of $\Dmc_0$ and, thus, $h(y) = \langle y, f_y\rangle$ and $h(y) = \langle y', f_{y'}\rangle$.

        For the second part of the claim observe that if $y_i \in Y_{y} \cap Y_{y'}$
        then there are a simple functional path from $y$ to $y_i$
        and a simple functional path from $y'$ to $y_i$.
        Hence, there is a simple path from $y$ to $y'$
        such that for every variable $u$ on this path we have that $y_i \in Y_u$.
        By Claim~\ref{claim:MM-functional-database}, this implies that $f_y(y_i) = f_{y'}(y_i)$.
    \end{proof}

    We now prove
    Lemma~\ref{lemma:lower-bound-partial-answers-and-matrix},
    starting with \textbf{MM1}.
    In fact, we show the following slightly more general
    statement that will be useful also in the proof of {\bf MM3}.
    \begin{claim}
        \label{claim:mapping-to-core}
        Let $h$ be a homomorphism from $q$ to $\Umc_{\Dmc,\Omc}$ and
        $z,z' \in \mn{var}(q)$ such that $y_0 \in Y_z$,
        $y_k \in Y_{z'}$, $h(z) = \langle z,f_z\rangle$, and
        $h(z') = \langle z',f_{z'}\rangle $. Then
        $(f_{z}(y_0),f_{z'}(y_k)) \in M_1M_2$.
    \end{claim}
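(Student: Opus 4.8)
The plan is to trace the bad path $y_0,\dots,y_k$ of $q^+$ back into $q$, read off the three construction rules \textbf{MR1}--\textbf{MR3} along it, and assemble the matrix product. First I would fix, for each edge $\{y_i,y_{i+1}\}$ of the bad path, a binary atom $R_i(s_i,t_i)$ of $q$ together with simple functional paths from $s_i$ to $y_i$ and from $t_i$ to $y_{i+1}$; such an atom exists by the construction of $q^+$ and the connectivity of $q$ (every variable sits in a binary atom, exactly as argued in the hyperclique case). By Point~2 of Claim~\ref{claim:MM-small-Y} we then have $Y_{s_i}\cup Y_{t_i}\subseteq\{y_i,y_{i+1}\}$, with $y_i\in Y_{s_i}$ and $y_{i+1}\in Y_{t_i}$ after relabeling. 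Concatenating the functional path from $z$ to $y_0$, the functional paths attached to each atom, and the functional path from $z'$ to $y_k$ yields a connected ``backbone'' $B\subseteq q$ joining $z$ and $z'$.

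The first real task is to show that $h$ maps every variable of $B$ into $\DGamma$. Both endpoints $z,z'$ lie in $\DGamma$ by hypothesis, and the idea is to realise each backbone variable as an interior vertex of a simple path in $q$ whose two endpoints are already in the core, so that Claim~\ref{claim:paths-in-core} applies. Reading $B$ as a walk from $z$ to $z'$, I would extract from it simple $z$-to-$z'$ paths and conclude that every variable they traverse is mapped into $\DGamma$; the care needed is to perform this extraction so that each matrix-carrying atom $R_i(s_i,t_i)$ is retained on such a path, since $B$ need not be simple. Having placed $s_i,t_i$ in the core, Claim~\ref{claim:MM-well-guided-hom} lets me write $h(s_i)=\langle s_i,f_{s_i}\rangle$ and $h(t_i)=\langle t_i,f_{t_i}\rangle$ and supplies the crucial consistency statement: if $y_j\in Y_w\cap Y_{w'}$ for core-mapped variables $w,w'$, then $f_w(y_j)=f_{w'}(y_j)$. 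Hence the value $a_j:=f_w(y_j)$ is well defined independently of the witness $w$; in particular $a_0=f_z(y_0)$ and $a_k=f_{z'}(y_k)$.

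With the backbone in the core, reading off the matrix structure is a case distinction on which rule produced the fact $R_i(h(s_i),h(t_i))\in\Dmc_0$, governed by $Y_{s_i}\cup Y_{t_i}\subseteq\{y_i,y_{i+1}\}$. For $i=0$ we have $y_0\in Y_{s_0}$, so rule \textbf{MR1} fires with some word $ab^k$ and $(a,b)\in M_1$; reading positions $0$ and $1$ gives $a_0=a$ and $a_1=b$, i.e.\ $(a_0,a_1)\in M_1$. For $0<i<k-1$ neither $y_0$ nor $y_k$ lies in $Y_{s_i}\cup Y_{t_i}$, so rule \textbf{MR3} fires with word $b^{k+1}$, forcing $a_i=a_{i+1}$. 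For $i=k-1$ we have $y_k\in Y_{t_{k-1}}$, so rule \textbf{MR2} fires with word $b^kc$ and $(b,c)\in M_2$, giving $(a_{k-1},a_k)\in M_2$. Chaining the \textbf{MR3} equalities and gluing them to the \textbf{MR1}/\textbf{MR2} readings via the consistency of Claim~\ref{claim:MM-well-guided-hom} shows $a_1=\cdots=a_{k-1}=:c$, whence $(a_0,c)\in M_1$ and $(c,a_k)\in M_2$, and therefore $(f_z(y_0),f_{z'}(y_k))=(a_0,a_k)\in M_1M_2$.

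I expect the main obstacle to be the core-membership step, not the bookkeeping of the rules. The difficulty is that $h(y_i)$ need not lie in $\DGamma$ and the functional paths may run through roles outside $\Sigma$, so one cannot simply propagate core membership one edge at a time; each backbone variable must genuinely be caught as an interior vertex of a simple $z$-to-$z'$ path, yet the walk $B$ is in general not simple. Turning $B$ into simple paths while retaining the matrix-carrying atoms $R_i$ requires the same delicate path-surgery (cutting out repeated sub-paths without deleting the target variable) used to prove Claim~\ref{claim:hypercliqueui} in the hyperclique reduction, and this is where Point~2 of Claim~\ref{claim:MM-small-Y} and the chordlessness of the bad path do the real work. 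A secondary point is checking the boundary case $k=2$, where the \textbf{MR3} range is empty and one must verify directly that the single middle value $a_1$ read from \textbf{MR1} agrees with the one read from \textbf{MR2}.
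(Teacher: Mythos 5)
Your overall strategy coincides with the paper's: concatenate functional paths through $y_0,\dots,y_k$ into a $z$\==to\==$z'$ walk, use Claim~\ref{claim:paths-in-core} to force its variables into $\mn{adom}(\Dmc_0)$, use Claim~\ref{claim:MM-well-guided-hom} to get well-definedness of the values $f(y_i)$, and then read off \textbf{MR1}/\textbf{MR3}/\textbf{MR2} along the path. The rule bookkeeping at the end (including the $k=2$ boundary) matches the paper's steps \textbf{ME1}--\textbf{ME4}.

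There is, however, a genuine gap exactly at the point you flag as the main obstacle, and your proposed fix is not the one that works. You commit up front to specific atoms $R_i(s_i,t_i)$ witnessing each edge $\{y_i,y_{i+1}\}$ of the bad path, and then require that the simplification of the backbone walk \emph{retain} these atoms. Nothing guarantees this: when you cut a repeated subpath out of the walk, an atom $R_i(s_i,t_i)$ sitting inside that subpath disappears, and unlike in Claim~\ref{claim:hypercliqueui} you need to preserve $k$ distinguished atoms simultaneously on a single simple path, not one distinguished variable on a cycle; the surgery used there does not transfer. The paper inverts the order of quantifiers: it first builds the walk so that every variable $u$ on it satisfies $Y_u \neq \emptyset$, simplifies it to a simple path $z_0,\dots,z_\ell$ without trying to protect anything, and only \emph{then} proves that the surviving atoms suffice. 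Concretely, setting $S_j = Y_{z_j} \cup Y_{z_{j+1}}$, Point~1 of Claim~\ref{claim:MM-small-Y} gives $S_j \subseteq \{y_o,y_{o+1}\}$ for some $o$, consecutive $S_j$'s intersect (each $Y_{z_j}$ is non-empty), $y_0 \in S_0$ and $y_k \in S_{\ell-1}$; a short induction then shows that every pair $\{y_i,y_{i+1}\}$ is equal to some $S_j$, i.e.\ each consecutive pair is covered by \emph{some} atom of the simplified path, not necessarily the one you originally chose. This covering argument is the missing ingredient; once you replace ``retain the chosen atoms'' by ``the simplified path necessarily contains suitable atoms,'' the rest of your argument goes through as written.
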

    \noindent
    We may infer {\bf MM1} by applying the claim for $z = x_1$ and
    \mbox{$z' = x_2$}.
%     Note that, by Claim~\ref{claim:MM-small-Y},
%    $Y_{x_1}=\{y_0\}$ and $Y_{x_2}=\{y_k\}$ so the application of
%    Claim~\ref{claim:mapping-to-core} is indeed possible. % We know that
    % $y_0 \in Y_{x_1}$ and $y_k \in Y_{x_2}$, so we have to argue that
    % there are no other variables in these sets. Assume that
    % $y_i \in Y_{x_1}$ with $i \neq 0$. Claim~1 yields $i=1$. But then
    % $y_1$ is an answer variable in $q^+$, contradicting the fact
    % that $y_0,\dots,y_k$ is a bad path in $q^+$.

    \smallskip

    \noindent
    To prove Claim~\ref{claim:mapping-to-core}, let $h$ be a
    homomorphism from $q$ to $\Umc_{\Dmc,\Omc}$ and
    $z,z' \in \mn{var}(q)$ such that $y_0 \in Y_z$, $y_k \in Y_{z'}$,
    $h(z) = \langle z,f_z\rangle$, and
    $h(z') = \langle z',f_{z'}\rangle $.

    Recall that $y_0,\dots,y_k$
    is a bad path in $q^+$.
    Thus, % by Lemma~\ref{lemma:paths-from-q-plus-to-q},
    for every pair
    $y_i,y_{i+1}$, $0 \leq i <k$,  there is an atom $R^+(\bar{z})$ in $q^+$
    that uses both $y_i$ and $y_{i+1}$.
    Hence, there is a simple path $\pi_i$ from $y_i$ to $y_{i+1}$ in $q$
    such that for every variable $x$ on the path we have that $Y_x \neq \emptyset$.
    Indeed, since there is an atom in $q^+$ that uses both  $y_i$ and $y_{i+1}$,
    by definition of $q^+$ there is an atom $R(v,v')$ in $q$ such that
    there are a simple functional path $\nu$ from $v$ to $y_i$ in $q$
    and a simple functional path $\nu'$ from $v'$ to $y_{i+1}$ in $q$ .
    Let $v''$ be the last variable on the path $\nu$ that also belongs to $\nu'$.
    The desired simple path is the concatenation of the reverse of the suffix from $v''$ to $y_i$ of the path $\nu$
    and the suffix from $v''$ to $y_{i+1}$ of the path $\nu'$ .

%    by definition of $q^+$ there are a functional path $\pi_i$ in $q$ from some variable $u_i$ to $y_i$
%    and a functional path $\nu_i$ from some variable $u'_i$ to $y_{i+1}$ such that the path $\pi_i^-\nu_i$, i.e. concatenation of the reverse of $\pi_i$ and $\nu$,
%    is a simple functional path. Notice that for every
%    variable $u$ on this path $\emptyset \subsetneq Y_u$.
%    More precisely, for every
%    variable $u$ that $\pi_i$ crosses we have that $y_i \in Y_u$
%    and for every
%    variable $u$ that $\nu_i$ crosses we have that $y_{i+1} \in Y_u$. % \subseteq \{y_i,y_{i+1}\}$.

    Moreover, since $y_0 \in Y_z$ and $y_k \in Y_{z'}$, there are a
    simple functional path $\pi_{-1}$ from $z$ to $y_0$ in $q$ and a simple functional path $\pi_{k+1}^-$ from $z'$ to~$y_k$ in $q$.
    Concatenating paths $\pi_i$, $-1 \leq i \leq k+1 $, we obtain a path from $z$ to $z'$
    such that  $Y_u \neq \emptyset$ for every variable $u$ on this path.
    Since $z \neq z'$, we can refine it to a simple path
    $z_0, \dots, z_\ell$ from $z_0=z$ to $z_\ell = z'$ in $q$, by removing internal cycles.
    We call this path $\pi$.
    Notice that $y_0 \in Y_{z_0}$, $y_k \in Y_{z_\ell}$,
    and for every variable $u$ on this path the set $Y_u$ is not empty.

    Let $R_j(z_j,z_{j+1})$, $0 \leq j < \ell$, be a sequence of atoms on this path.
    Then, for every $0\leq i < k$ there is an atom $R_i(v_i,v'_i)$
    such that $\{y_i,y_{i+1}\} \subseteq Y_{v_i} \cup Y_{v'_i}$.

    To show this let $S_j = Y_{z_j} \cup Y_{z_{j+1}}$ for $0 \leq j < \ell$.
    Now, recall that by Claim~\ref{claim:MM-small-Y} for every $0 \leq j < \ell$ there is
    $0 \leq o < k$ such that $ S_j \subseteq \{y_o,y_{o+1}\}$ and,
    by the choice of $\pi$, we have that $S_{j} \cap S_{j+1} \neq \emptyset$.
    Moreover $y_0 \in S_0$ and $y_k \in S_{\ell -1}$.
    Now assume that there is $0 \leq i < k$ such that $\{y_i, y_{i+1}\} \not \subseteq S_j$ for all $0 \leq j < \ell$.
    Then, by a simple inductive argument on the length of the sequence of sets $S_j$, we can show that $y_o \notin S_j$ for all $o > i$.
    Thus, since $y_k \in S_\ell$, if $\{y_i, y_{i+1}\} \not \subseteq S_j$ for all $0 \leq j < \ell$
    then $i > k$. This implies that for every $0\leq i < k$ there is an atom $R_i(v_i,v'_i)$
    such that $\{y_i,y_{i+1}\} \subseteq Y_{z_j} \cup Y_{z_{j+1}}$.

%    Thus, as a consequence of , for every $0\leq i < k$ there is an atom $R_i(v_i,v_i')$
%    such that $v_i=z_j$ and $v_{i}'=z_{j+1}$ are consecutive variables on the path
%    and $Y_{v_i} \cup Y_{v_i'} = \{y_i,y_{i+1}\}$.
%    Indeed, by Claim~\ref{claim:MM-small-Y} we know that for every two consecutive variables $v,v'$
%    on the path we have that $Y_{v} \cup Y_{v'} \subseteq \{y_i,y_{i+1}\}$ as there is an atom $R(v,v') \in q$.
%    On the other hand, if for some pair $y_i, y_{i+1}$, $0 \leq i < k$ we have that
%    $Y_{v} \cup Y_{v'} \neq \{y_i,y_{i+1}\}$ for all consecutive variables $v,v'$ on the path
%    then it is not possible that $y_0 \in Y_{z_0}$ and $y_k \in Y_{z_\ell}$.

%    Let us, for a given $i$ fix a pair of the variables $z_j, z_{j+1}$ as above
%    and let us denote them as $v_i$ and $v'_i$, respectively.
%
    Since the path $\pi$ is simple and its ends are mapped to $\DGamma$, i.e. $h(z_0), h(z_\ell) \in \DGamma$,
    by Claim~\ref{claim:paths-in-core},
    we have that $h(u) \in \DGamma$ for every variable $u$ in $\pi$.
    Thus, by Claim~\ref{claim:MM-well-guided-hom}, we know that $h(v_i) = \langle v_i, f_{v_i} \rangle$
    and $h(v_i') = \langle v_i', f_{v_i'} \rangle$ for some partial functions $f_{v_i'},f_{v_i}$
    from $Y$ to $[n]$.

    Let $f = \bigcup_{0\leq i <k} f_{v_i} \cup f_{v_i'}$.
    We claim that $f$ is a function from $Y$ to $[n]$.
    The existence of atoms $R_i(v_i,v_i')$ shows that for every $y_i \in Y$
    there is $d \in [n]$ such that $(y_i,d) \in f$.
    On the other hand, Claim~\ref{claim:MM-well-guided-hom}
    shows that the value $d$ is uniquely defined. Thus, $f$ is a function from $Y$ to $[n]$.

        Let $f(y_0)=a$, $f(y_1)=c$, and $f(y_k)=b$. Then
    \begin{description}
        \item[ME1] $f(y_0) = f_z(y_0) = a$,
        \item[ME2] for $0<i<k$ we have that $f(y_i) = c$,
        \item[ME3] and $f(y_k) = f_{z'}(y_k) = b$.
        \item[ME4] Finally, $(a,c) \in M_1$, $(c,b)\in M_2$.
    \end{description}

    For \textbf{ME1} recall that $y_0 \in Y_{z}$. Hence, there is a simple functional path %$v_0=z, \dots, v_\ell=y_0$
    from $z$ to $y_0$. For all vertices $u$ on this path we have that $y_0 \in Y_u$ and, moreover,
    $h(u) \in \DGamma$. Thus $f(y_0) = f_z(y_0)$.

%    Thus, for all $0 \leq i < \ell$ and every atom $R(v_i,v_{i+1})$ in $q$
%    we have that $R(h(v_i), h(v_{i+1})) \in \Dmc_0$. Thus

    \textbf{ME3} is shown in the same way.
    For \textbf{ME2} observe that for $0<i<k-1$ we have that $Y_{v_i} \cup Y_{v_i'} = \{y_i,y_{i+1}\}$
    thus fact $R_i\big(\langle v_i, f_{v_i} \rangle,\langle v_i', f_{v_i'} \rangle \big)$
    was added by the third rule \textbf{MR3} in construction of $\Dmc_0$.
    Thus, there is a word $w=c_1c^{k-1}c_2$, for some $c$, $c_1$, $c_2 \in [n]$, such that $f_{v_i}^{w} = f_{v_i}$ and $f_{v_i'}^w = f_{v_i'}$.
    By the definition of $f_{x}^w$ we obtain $f(y_i) = f(y_{i+1})= c$.

    To show $(a,c) \in M_1$ observe that $Y_{v_0} \cup Y_{v_0'} = \{y_0,y_{1}\}$.
    Thus, the fact $R_0\big(\langle v_0, f_{v_0} \rangle,\langle v_0', f_{v_0'} \rangle \big)$
    was added by the first rule (\textbf{MR1}) in construction of $\Dmc_0$.
    Thus, $f_{v_0}^{w} = f_{v_0}$ and $f_{v_0'}^w = f_{v_0'}$ for $w=ac^{k}$.
    By definition of \textbf{MR1}, this implies that $(a,c) \in M_1$.
    A similar argument shows that $(c,b)\in M_2$.

    From \textbf{ME4} we infer immediately that $(a,b) \in M_1M_2$,
    which shows that $(f_z(y_0), f_{z'}(y_k)) \in M_1M_2$ and ends the proof of \textbf{MM1}.

    \paragraph{MM2}
%     To show \textbf{HD2}, let $v_0, \dots, v_k$ be a hyperclique in
% $G$. To show that there is a complete answer to $Q$ on \Dmc, it
% suffices to exhibit a homomorphism $h$ from $q$ to $\Umc_{\Dmc, \Omc}$
% such that $h(x) \in \mn{adom}(\Dmc_0)$ for every answer variable~$x$.
% The construction of $h$ is very similar to the proof of the \textbf{TD2} part
% of Claim~\ref{claim:lower-bound-partial-answers-and-cycle} and we
% provide only an outline.

% % We first fix a partial mapping $h$ from $q$ to $\Umc_{\Dmc, \Omc}$
% % such that $h(y_0) = \{(y_0, a)\}$, $h(y_i) = \{(y_i, b)\}$ for
% % $0 < i < k$, and $h(y_k) = \{(y_k, c)\}$.  Following the same steps as
% % in proof of \textbf{MM2}, we can show that this mapping can be
% % extended to a homomorphism from $q$ to $\Umc_{\Dmc, \Omc}$. This
% % proves \textbf{CD2}.

% Since $Q$ is non\=/empty, there is a database $\Dmc'$ that is
% satisfiable w.r.t.\ \Omc and a homomorphism $h'$ from $q$ to
% $\Umc_{\Dmc', \Omc}$ such that $h'(\bar{x})$ contains only elements of
% $\mn{adom}(\Dmc')$.  Let $W_0$ be the set of variables
% $x \in \mn{var}(q)$ with $h'(x) \in \mn{adom}(\Dmc')$.

% We may construct the desired homomorphism $h$ by setting
% $h(x) = f^{v_0\cdots v_k}_x$ for all $x \in W_0$, and then extending
% to the remaining variables in $q$ (which are part of tree-shaped
% subqueries of $q$) exactly as in the proof of the \textbf{TD2} part
% of Claim~\ref{claim:lower-bound-partial-answers-and-cycle}.

    Let $(a,b) \in M_1M_2$ and choose a $d$ such that $(a,d) \in M_1$
    and $(d,b) \in M_2$.  To prove MM2, it suffices to identify a
    homomorphism $h$ from $q$ to ${\Umc_{\Dmc, \Omc}}$ such that
    $h(\bar{x}) = \bar{a}$.

    Since $Q$ is non\=/empty, there is a database $\Dmc'$ that is
    satisfiable w.r.t.\ \Omc and a homomorphism $h'$ from $q$ to
    $\Umc_{\Dmc', \Omc}$ such that $h'(\bar{x})$ contains only
    elements of $\mn{adom}(\Dmc')$.
    Let $W_0$ be the set of variables $x \in \mn{var}(q)$ with
    $h'(x) \in \mn{adom}(\Dmc')$.

    We may construct the desired homomorphism $h$ by setting
    $h(x) = \langle x, f^{ad^{k-1}b}_x\rangle$ for all $x \in W_0$,
    and then extending to the remaining variables in $q$ (which are
    part of tree-shaped subqueries of $q$) exactly as in the proof of
    the \textbf{TD2} part of
    Claim~\ref{claim:lower-bound-partial-answers-and-cycle}.

    \paragraph{MM3}
%     We want to prove that the database $\Dmc$ provides not too many answers to $Q$ on $\Dmc$.
%     To do so, we will estimate all the possible wildcard tuples that can be partial answers to $Q$ on $\Dmc$.
%
% %    Let $m$ be the number of answer variables.
%     For a $\bar{a}$ partial answer, let us fix a homomorphism $h$ from $q$ to $\Umc_{\Dmc, \Omc}$ such that for all $1 \leq i \leq m$
%     if $a_i$ is not a wildcard then $h(x_i) = a_i$.
%
%     To analis
%
%     \paragraph{MM3a}
    Let $\bar a \in Q^*(\Dmc)$ be a minimal partial answer and
    $x \in \mn{var}(q)$.
    We say that $\bar a$ is \emph{misguided} and
    that $x$ is \emph{misguided to $c \in \mn{adom}(\Dmc_0)$ for}
    $\bar a$ if there is a homomorphism $h$ from $q$ to
    $\Umc_{\Dmc, \Omc}$ such that $h(\bar{x}) = \bar{a}$ and
    $h(x) = c$ has the form $\langle x', f_x\rangle$ with
    $x'\neq x$. We first show the following.
    \begin{claim}
    There are no more than
    $$|\bar x| \cdot |\DGamma| \cdot (|\mn{adom}(\Dtree)| +1)^{|\bar
      x|}$$ misguided minimal partial answers.
        \end{claim}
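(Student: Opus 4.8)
The plan is to exploit the rigid ``cluster'' structure of the universal model $\Umc_{\Dmc,\Omc}$ together with the self\=/join freeness of $q$, and then to encode every misguided minimal partial answer by a bounded amount of data attached to a single core constant.

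First I would fix the decomposition of $\Umc_{\Dmc,\Omc}$ already used in Claim~\ref{claim:paths-in-core}: deleting from $\Umc_{\Dmc,\Omc}$ all role facts $r(a,b)$ with $a,b\in\DGamma$ leaves a disjoint union of trees, each rooted at a unique core constant $c\in\DGamma$. For $c\in\DGamma$ call this tree together with $c$ the \emph{cluster} $\mathrm{Cl}(c)$; its root is the only core constant it contains, and every non\=/root element is either a constant of an attached copy of $\Dtree$ or a trace of $\Umc_{\Dmc,\Omc}$. The crucial observation is a label\=/preservation property of core edges: since $q$ is self\=/join free, every relation $R$ occurring in $q$ labels a unique atom $R(x_R,y_R)$, and by construction of $\Dmc_0$ (rules \textbf{MR1}--\textbf{MR3}) every core $R$\=/fact has the form $R(\langle x_R,\cdot\rangle,\langle y_R,\cdot\rangle)$. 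Consequently, if a homomorphism $h$ maps an atom $R(u,v)$ of $q$ to a core edge, then $h(u)=\langle u,\cdot\rangle$ and $h(v)=\langle v,\cdot\rangle$, i.e.\ both endpoints are \emph{correctly guided}; equivalently, switching clusters along a path of $q$ is only possible through correctly guided core constants.

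The heart of the argument is then a confinement claim: if $\bar a$ is misguided with witnessing homomorphism $h$ and misguided variable $x_0$ such that $h(x_0)=\langle x',f\rangle$ and $x'\neq x_0$, then every answer variable of $q$ is mapped by $h$ into the single cluster $\mathrm{Cl}(\langle x',f\rangle)$ (possibly to a trace). I would prove this by propagating from $x_0$ along the connected graph of $q$: a neighbour of a variable whose image is a non\=/root element of $\mathrm{Cl}(c)$ cannot leave $\mathrm{Cl}(c)$, because non\=/root elements carry no core edges; and a neighbour of $x_0$ itself cannot leave either, since $h(x_0)=\langle x',f\rangle$ with $x'\neq x_0$ forces the incident atom to be mapped off the core (by label\=/preservation), hence onto a non\=/root element of $\mathrm{Cl}(c)$. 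The only way the image could still reach another cluster is through a variable that maps \emph{correctly} to the root $\langle x',f\rangle$ (i.e.\ equals $x'$) and then uses an outgoing core edge; to rule this out I would invoke the minimality of $\bar a$, arguing that such a globally correct core excursion would yield a dominating homomorphism placing the affected answer variables on genuine constants rather than on traces, contradicting minimality of $\bar a$. This minimality step, reconciling the locally misguided region with possible globally correct parts, is the main obstacle.

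Finally, granting confinement, the counting is routine. For a misguided minimal $\bar a$, choose a distinguished answer variable $x_{j}$ witnessing misguidedness and let $c^\ast=\langle x',f\rangle\in\DGamma$ be the cluster root supplied by the confinement claim. Since every answer variable is mapped into $\mathrm{Cl}(c^\ast)$ and all trace elements of $\mathrm{Cl}(c^\ast)$ collapse to the wildcard, the image of each answer variable is determined by an element of $\mn{adom}(\Dtree)\cup\{\ast\}$, namely its offset inside the copy of $\Dtree$ hanging off $c^\ast$ or the symbol $\ast$. Hence $\bar a$ is completely determined by the triple consisting of $x_{j}$, of $c^\ast$, and of this assignment $\bar x\to\mn{adom}(\Dtree)\cup\{\ast\}$, so the number of misguided answers is at most $|\bar x|\cdot|\DGamma|\cdot(|\mn{adom}(\Dtree)|+1)^{|\bar x|}$. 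The boundedly many all\=/wildcard answers, which carry no distinguished core constant, are absorbed into this bound since $|\DGamma|\ge 1$, yielding the stated estimate.
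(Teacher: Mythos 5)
Your architecture (cluster decomposition, label preservation via self-join freeness, confinement of all answer variables to a single cluster, then counting) mirrors the paper's proof, and the final counting step is fine. But there is a genuine gap exactly where you flag ``the main obstacle'': the confinement claim is left unproven, and the route you propose for it --- invoking minimality of $\bar a$ to exclude a ``globally correct core excursion'' --- neither works nor is needed. It does not work because minimality of $\bar a$ only asserts the non-existence of a partial answer $\bar b \prec \bar a$; the misguided homomorphism $h$ itself already certifies that $\bar a$ is a partial answer, and the presence of a correctly guided portion of $h$ gives no construction of a strictly more informative tuple. Also, ``misguided'' fixes a particular witnessing homomorphism $h$ with $h(x)=\langle x',f\rangle$ and $x'\neq x$, and the confinement must be established for that very $h$; minimality constrains $\bar a$, not the shape of~$h$.

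The minimality detour is unnecessary because Claim~\ref{claim:paths-in-core} already closes the door you are worried about. Take any answer variable $x_i$ and a simple path $z_1,\dots,z_\ell$ in $q$ from the misguided variable $x=z_1$ to $x_i=z_\ell$. Your label-preservation observation gives $h(z_2)\notin\DGamma$: if $R_1(h(z_1),h(z_2))$ were a fact of $\Dmc_0$, the construction of $\Dmc_0$ would force $q$ to contain an atom of the form $R_1(x',\cdot)$, contradicting self-join freeness together with $R_1(x,z_2)\in q$ and $x\neq x'$. Now if $h(z_j)\in\DGamma$ for some $j\geq 3$ --- in particular if the image ever returned to the cluster root, let alone crossed into another cluster --- then $z_1,\dots,z_j$ would be a simple path with both endpoints mapped into $\DGamma$, and Claim~\ref{claim:paths-in-core} would force $h(z_2)\in\DGamma$, a contradiction. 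Hence every $h(z_j)$ with $j\geq 2$, and in particular $h(x_i)$, lies in the tree attached to $c=\langle x',f\rangle$ or in the anonymous part of $\Umc_{\Dmc,\Omc}$. This is exactly the confinement you need, obtained without any appeal to minimality, and the bound of $(|\mn{adom}(\Dtree)|+1)^{|\bar x|}$ answers per pair $(x,c)$ then goes through as you describe.
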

        Note that the above count is
    $O(|M_1|+|M_2|)$ since $|\DGamma|$ is $O(|M_1|+|M_2|)$ and
    $|\bar x|$ and $|\mn{adom}(\Dtree)|$ are constants.
    \begin{proof}
    It clearly suffices to show that for every $x \in \mn{var}(q)$ and
    $c \in \mn{adom}(\Dmc_0)$, there are at most
    $(|\mn{adom}(\Dtree)| +1)^{|\bar x|}$ minimal partial answers
    $\bar a \in Q^*(\Dmc)$ such that $x$ is misguided to $c$
    for~$\bar a$. Thus fix $x$ and $c=\langle x', f_x\rangle$ with
    $x'\neq x$. It suffices to show that if $h$ is a homomorphism from
    $q$ to $\Umc_{\Dmc, \Omc}$ such that $h(\bar{x}) = \bar{a}$ and
    $h(x) = c$, then for every answer variable~$x_i$, the
    corresponding constant $a_i$ in $\bar a$ is either a constant from
    a tree attached in Step~2 of the construction of \Dmc to $c$
    (which includes the possibility that $h(x_i)=c$) or it is
    `$\ast$'.

    Let $x_i$ be an answer variable.  Since $q$ is connected, $q$
    contains a simple path $z_1,\dots,z_\ell$ from $x=z_1$ to
    $x_i=z_\ell$. There are thus atoms
    $R_1(z_1,z_2),\dots,R_{\ell-1}(z_{\ell-1},z_\ell) \in q$.
    Consider the fact $R_1(h(x),h(z_2)) \in \Dmc$. We argue that
    $R_1(h(x),h(z_2)) \notin \Dmc_0$.  Assume for a contradiction that
    the opposite is true. %$R_1(h(x),h(z_2)) \in \Dmc_0$.
    From $h(x) = \langle x', f_x\rangle$ and the construction of
    $\Dmc_0$, it follows that $q$ contains a fact of the form
    $R_1(x',y)$. This contradicts $q$ being self-join free and
    containing the atom $R_1(x,z_2)$ with $x \neq x'$.  From
    $R_1(h(x),h(z_2)) \notin \Dmc_0$, it follows that
    $h(z_2) \notin \DGamma$, that is, $h(z_2)$ is a constant from a
    tree attached in Step~2 of the construction of \Dmc to $c$ or an
    element of
    $\mn{adom}(\Umc_{\Dmc,\Omc}) \setminus \mn{adom}(\Dmc)$.

    To show that $a_i$ is a constant from a tree attached in Step~2 of
    the construction of \Dmc to $c$ or `$\ast$', it suffices to show
    that $h(z_j) \notin \mn{adom}(\Dmc_0)$ for $2 \leq j \leq \ell$,
    or in other words: after leaving $\DGamma$ in the first step,
    entering a tree-shaped part of $\Umc_{\Dmc,\Omc}$, the path
    $R_1(z_1,z_2),\dots,R_{\ell-1}(z_{\ell-1},z_\ell)$ never leaves
    that tree-shaped part again. Assume to the contrary that
    $h(z_j) \in \mn{adom}(\Dmc_0)$ with $2 \leq j \leq \ell$. Then
    Claim~\ref{claim:paths-in-core} applied to the simple path
    $z_1,\dots,z_j$ yields $h(z_1) \in \DGamma$, a contradiction.
    \end{proof}

%    \paragraph{MM3b} % Therefore, let us assume that for every variable $x$ in $q$ if $h(x) \in \DGamma$ then $h(x) = (x,f_x)$
    % for some function $\fun{f_x}{Y_x}{[n]}$. All the remaining cases are covered by MM3a.
    It remains to analyse the number of minimal partial answers
    $\bar a$ that are not misguided.  % Let $h$ be a homomorphism from
    % $q$ to $\Umc_{\Dmc, \Omc}$ such that $h(\bar{x}) = \bar{a}$. Then
    % for all answer variables $x$, if $h(x) \in \DGamma$ then
    % $h(x) = \langle x, f_x\rangle$. %Let $L = 2 |\Sigma| \cdot |\Dtree|$, which is a constant.

    \smallskip
    We first establish a technical claim that analyses the facts
    $R(a,b) \in \Dmc$, which `cross' from the $\Dmc_0$-part of \Dmc
    into the tree part added in the second step of the construction.
    \begin{claim}
        \label{claim:few-trees}
        Let $R(x,y) \in q$, $R(a,b) \in \Dmc$,
%        If $a=\langle x,f\rangle \in \DGamma$ and $b \notin \DGamma$ then $Y_y \not \subseteq Y_x$.
%
        % {\color{blue}Let $r \in \Sigma$ be such that  $R(x,y)$ is an atom in $q$ and $R(a,b)$ is a fact in $\Dmc$, for $R \in \{r,r^-\}$.
            % If $a=\langle x,f\rangle \in \DGamma$ and $b \notin \DGamma$ then $Y_y \not \subseteq Y_x$.}
         $a=\langle x,f\rangle \in \DGamma$, and $b \notin \DGamma$. Then
        \begin{enumerate}
            \item $\{y_0, y_k \} \cap Y_x = \emptyset$ and
            \item $\{y_0, y_k \} \cap Y_y \neq  \emptyset$.
%            \item and $\{y_0, y_k \} \cap (Y_x\cup Y_y) \neq  \emptyset$.
        \end{enumerate}
%        Moreover, $\{y_0, y_k \} \cap Y_x = \emptyset$, but  $\{y_0, y_k \} \cap (Y_x\cup Y_y) \neq  \emptyset$.
    \end{claim}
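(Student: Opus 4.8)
The plan is to read everything off the construction of $\Dmc_0$, using self-join freeness as the key leverage. The starting observation is that, since $b \notin \DGamma$, the fact $R(a,b)$ lies in a copy of some $\Dmc_R$ that was glued to $a$ in Step~2; in particular $R$ is a role name with $R \in \Sigma$, so $R(x,y)$ is indeed processed by \textbf{MR1}--\textbf{MR3}, and by the attachment condition $\Dmc_0$ contains \emph{no} fact $R(a,c')$. On the other hand, $q$ is self-join free, so $R(x,y)$ is the only atom of $q$ carrying the symbol $R$; hence every $R$-fact of $\Dmc_0$ is produced by \textbf{MR1}--\textbf{MR3} applied to $R(x,y)$ and has first component $\langle x, f^w_x\rangle$ for a suitable word $w$. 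The whole claim therefore reduces to: whenever Point~1 or Point~2 fails, the rule governing $R(x,y)$ generates a fact whose first component is exactly $a$, contradicting the attachment of $\Dmc_R$.

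First I would pin down the shape of $a$. As $a = \langle x,f\rangle \in \DGamma$, the constant $a$ was created by applying some rule to a role atom $\beta$ of $q$ in which $x$ occurs, so $f = f^w_x$ for a word $w$ of one of the forms $\alpha m^{k}$ (\textbf{MR1}), $m^{k}\gamma$ (\textbf{MR2}), or $m^{k+1}$ (\textbf{MR3}); in each case $w$ takes the constant value $m$ (its ``middle letter'') on all positions $1,\dots,k-1$. For Point~1, suppose $y_0 \in Y_x$. Since $x$ occurs in $\beta$, we have $y_0 \in Y_x \subseteq Y_u \cup Y_v$ for the two variables $u,v$ of $\beta$, so $\beta$ triggered \textbf{MR1} and $w = \alpha m^{k}$ with $(\alpha,m) \in M_1$. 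But $y_0 \in Y_x \cup Y_y$ as well, so \textbf{MR1} is also the rule applied to $R(x,y)$, and running it on the same pair $(\alpha,m)$ adds $R(\langle x, f^{\alpha m^{k}}_x\rangle, \langle y, f^{\alpha m^{k}}_y\rangle) = R(a,\cdot)$ to $\Dmc_0$ --- the desired contradiction. The case $y_k \in Y_x$ is symmetric via \textbf{MR2}; here I use Point~2 of Claim~\ref{claim:MM-small-Y} together with $k \geq 2$ to ensure $y_0$ and $y_k$ cannot both lie in $Y_x \cup Y_y$, so that the case split \textbf{MR1}/\textbf{MR2}/\textbf{MR3} is unambiguous. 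This proves $\{y_0,y_k\} \cap Y_x = \emptyset$.

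For Point~2 I assume $\{y_0,y_k\} \cap Y_y = \emptyset$; combined with Point~1 this gives $y_0,y_k \notin Y_x \cup Y_y$, so \textbf{MR3} is the rule applied to $R(x,y)$. Since $a \in \DGamma$, some rule has fired, so $M_1 \cup M_2 \neq \emptyset$ and the index set $B = \{m \mid \exists\alpha\,(\alpha,m)\in M_1 \text{ or } \exists\gamma\,(m,\gamma)\in M_2\}$ is nonempty. By Point~1 we have $Y_x \subseteq \{y_1,\dots,y_{k-1}\}$, so by the ``constant middle'' remark $f = f^w_x$ is constant on $Y_x$ with value the middle letter $m^\ast$ of the word $w$ that created $a$; inspecting the three forms of $w$ shows $m^\ast \in B$ in every case (and if $Y_x = \emptyset$ I simply pick any $m^\ast \in B$). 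Since then $f = f^{(m^\ast)^{k+1}}_x$ on $Y_x$, rule \textbf{MR3} applied to $R(x,y)$ with $m = m^\ast$ adds $R(\langle x, f^{(m^\ast)^{k+1}}_x\rangle, \cdot) = R(a,\cdot)$ to $\Dmc_0$, again contradicting the attachment of $\Dmc_R$. Hence $\{y_0,y_k\} \cap Y_y \neq \emptyset$.

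I expect the main obstacle to be precisely the decoupling in the last step: $a$ need not have been created by the atom $R(x,y)$, so I cannot directly read a contradiction off the word $w$. The fix is the middle-letter analysis combined with the membership $m^\ast \in B$, which lets the rule that created $a$ differ from the rule applied to $R(x,y)$ while still forcing the two to agree on $Y_x$. Everything else is a routine unwinding of the definitions of $f^w_x$ and of \textbf{MR1}--\textbf{MR3}.
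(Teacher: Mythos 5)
Your proof is correct and follows the same route as the paper: use the Step~2 attachment condition to conclude that $\Dmc_0$ contains no $R$-fact with first component $a$, then show that a violation of either point would force one of \textbf{MR1}--\textbf{MR3} applied to $R(x,y)$ to produce exactly such a fact. Your explicit ``middle letter'' analysis for Point~2 (decoupling the rule that created $a$ from the rule governing $R(x,y)$, and checking that the middle letter always qualifies for \textbf{MR3}) carefully fills in a step that the paper's own proof leaves implicit, as does your observation that $y_0 \in Y_x$ forces the creating rule to be \textbf{MR1}.
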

    \begin{proof}
      Let $R(x,y)$ and $R(a,b)$ be as in the claim.
 %        \cMP{fix the proof, done}
%
 % {\color{blue}Is this Claim~\ref{claim:MM-functional-database}? then it should be earlier!} \cMP{Not Claim~\ref{claim:MM-functional-database} -- it's analysis of construction of $\Dmc$; we don't state that if $R(a,b) \in \Dmc$ then $a,b$ are in relation, but we say that a fact was added in the construction.
 %            still changing proof and moving the claim here, the comment is void.}
      Since $b \notin \DGamma$, the fact $R(a,b)$ was added to the
      database $\Dmc$ in Step~2 of its construction. Since
      $a \in \DGamma$, it follows that there is no $d \in \DGamma$
      such that $R(a,d)\in \Dmc_0$.

        Since $a \in \DGamma$, there is a fact $R'(a,c) \in \Dmc_0$.
        Thus, by construction of $\Dmc_0$ there is a a word $w = w_0\cdots w_k \in [n]^\ast$ such that $f = f_x^w$
        and $(w_0,w_1) \in M_1$ or $(w_{k-1},w_k) \in M_2$.

        For Point~1, assume to the contrary of what is to be shown
        that $y_0 \in Y_x$. From $R(x,y) \in q$ and the construction
        of $\Dmc_0$ it then follows that
        $R(\langle x, f_x^w \rangle, \langle y, f_y^w \rangle) \in
        \Dmc_0$.  This is a contradiction to that fact that there is
        no $d \in \DGamma$ such that $R(a,d) \in \Dmc_0$. We can show
        similarly that $y_k \notin Y_x$.

        For Point~2, assume to the contrary that $\{y_0, y_k \} \cap (Y_x\cup Y_y) = \emptyset$.
        Then, by Point~3 of the construction of $\Dmc_0$ we have
        $R(\langle x,  f_x^w \rangle, \langle y, f_y^w \rangle) \in
        \Dmc_0$ which again yields a  contradiction.
    \end{proof}

    \newcommand{\treeroot}{\textmd{root}}

    Recall that the restriction of $\Umc_{\Dmc, \Omc}$ to domain
    $\dom(\Umc_{\Dmc, \Omc}) \setminus \DGamma$ is a disjoint union of
    trees. These trees consists of the constants that have been added
    in Step~2 of the construction of \Dmc and of the nulls that have
    been added in the construction of the universal model.
    Let $\mathfrak{T}'$ be the set of all those trees and let
    $\mathfrak{T}$ be the subset of trees in $\mathfrak{T}'$ that
    contain at least one constant in $\Dmc$, i.e.
    $\mathfrak{T} = \{\Tmc \in \mathfrak{T}' \mid \dom(\Tmc) \cap
    \dom(\Dmc) \neq \emptyset\}$.

    For a constant $c \in \dom (\bigcup \Tmf)$, let $\treeroot(c)$ be
    the root of the tree $\Tmc \in \Tmf$ that contains $c$,
    i.e.~$\treeroot(c)$ is the unique constant in $\Tmc$ such that
    there is a fact $R(\treeroot(c),d) \in \Dmc$ with $d \in \DGamma$.
    Moreover, let $T^{\Umc}_c$ be the domain of the tree that contains
    $c$.
   %
%    Let $T_c$ be the subset of all constants in $\dom(\Umc_{\Dmc, \Omc})$ in the tree rooted in $\treeroot(c)$.
%    That is, if $c \in \DGamma$ then
%
%    Let $T^{\Umc}_c$ be the subset of  $\dom(\Umc_{\Dmc, \Omc})$ that is the maximal connected component of $\Umc_{\Dmc, \Omc} \setminus \{R(c,c')\}$
%    that contains $c$. That is, $T^{\Umc}_c$ is the tree shaped connected part of the universal model that contains $c$ and does not contain the core of the database.
    Clearly, for all $c \in \Umc_{\Dmc, \Omc}$ we have $|T^{\Umc}_c \cap \dom(\Dmc)|\leq |\mn{adom}(\Dmc_{\mn{tree}})|$.

    Further let $\DGamma^\Umc=\bigcup_{c \in \DGamma} \dom(\Umc_{\Dmc, \Omc}^{\downarrow c})$.

    It is not hard to verify that the set $\DGamma^{\Umc}$ and the
    sets $T^{\Umc}_c$, for $c \in \dom(\Dmc) \setminus \DGamma$, form
    a partition of $\mn{adom}(\Umc_{\Dmc, \Omc})$.

    % Consider an answer variable $x_i$ in $q$. By
    % Claim~\ref{claim:MM-small-Y}, $Y_{x_i}$ is $\{y_0\}$,
    % $\{y_k\}$, or $\emptyset$. We also know that $Y_{x_1} = \{y_0\}$
    % and $Y_{x_2} = \{y_k\}$.
    % We will now inspect four cases that depend on whether some of the
    % answer variables $x$ with $Y_x \subsetneq \{y_0,y_k\}$ are mapped to $\DGamma^{\Umc}$.

%    As an immediate corollary of Clai\ref{claim:few-trees}, we get the following.
%    \begin{claim}
%        Fix $y \in \{y_0,y_k\}$. If there is a variable $x$ such that $Y_X = \{y\}$ and $h(x) \in \Gamma$ then
%        for every $x'$ such that $Y_{x'} = \{y\}$ we have that $h(x') \in \Gamma$ or $h(x') \notin \dom(\Dmc)$.
%    \end{claim}

    The next claim analyses the relationship between the homomorphism
    targets of answer variables in $q$. Recall that by
    Claim~\ref{claim:MM-small-Y}, $Y_{x_i}$ is $\{y_0\}$, $\{y_k\}$,
    or $\emptyset$ for each answer variable~$x_i$. Also recall that
    $Y_{x_1} = \{y_0\}$ and $Y_{x_2} = \{y_k\}$.
    Intuitively, part \textbf{MH1} implies that for homomorphisms from
    $q$ to $\Umc_{\Dmc, \Omc}$ that are not misguided either (\textbf{MH1a}) every variable $x$ such that $y_0 \in Y_{x}$
    is mapped to a trace below a constant in $\DGamma$ or (\textbf{MH1b}) there is a single constant $d \in \dom(\Dmc) \setminus \DGamma$
    such that every every variable $x$ such that $y_0 \in Y_{x}$ is mapped to the tree $T_{d}^{\Umc}$.
    A similar observation can be made for variables $x$ such that $y_k \in Y_x$.
    Finally, for variables $x$ such that $Y_x = \emptyset$, the claim implies, part \textbf{MH2}, that
    either x is mapped to a trace below a constant from $\Umc_{\Dmc, \Omc}$ or is mapped to one of the trees above.

%
%     From this observation, we may infer the relation between images of answer variables
%     for a fixed homomorphism.
%     Informally, the following claim states that the answer variables $x$ with $Y_{x}=\{y_0\}$
% %    are either all mapped to a single copy of $\Dtree$ or they are all mapped below $\Dmc_0$.
%     The same holds for variables $x$ with $Y_{x}=\{y_k\}$.
%     For the answer variables $x$ with $Y_{x}=\emptyset$, they are either mapped with
%     the $\{y_0\}$ variables, with the $\{y_k\}$ variables or mapped to $\langle x, \emptyset \rangle$.
    %
    \begin{claim}
    \label{claim:MM-matching-avars-to-trees}
    Let $h$ be a homomorphism from $q$ to $\Umc_{\Dmc, \Omc}$ that
    is  not misguided. Then the following hold:

    \begin{description}
    \item[{\bf MH1}]
    Let $x$ be an answer variable such that $Y_x = \{y_0\}$ or $Y_x =
    \{ y_k \}$. %$Y_x \neq \emptyset$.
    For every variable $x'$ such that $Y_x= Y_{x'}$:
     ~\\[-4mm]
     \begin{description}
         \item[\bf{MH1a}] if $h(x) \in \DGamma^{\Umc}$ then $h(x') \in \DGamma^{\Umc}$,
         \item[\bf{MH1b}] if $h(x) \notin \DGamma^{\Umc}$ then $h(x') \in T_{\treeroot(h(x))}^{\Umc}$.
    \end{description}

    \item[{\bf MH2}]
    Let $x$ be an answer variable such that $Y_x = \emptyset$.
    If $h(x) \notin \DGamma^{\Umc}$
    then there is an answer variable $x'$ with $Y_{x'} \neq \emptyset$ such that $h(x) \in T_{\treeroot(h(x'))}^{\Umc}$.
    \end{description}
    \end{claim}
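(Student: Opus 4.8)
The plan is to reduce both parts to a single \emph{region argument}. By the partition noted just above the claim, $\mn{adom}(\Umc_{\Dmc,\Omc})$ splits into the region $\DGamma^{\Umc}$ and the tree regions $T^{\Umc}_{c}$, and I would first check that the only facts of $\Umc_{\Dmc,\Omc}$ joining two distinct regions are the \emph{crossing facts} that Step~2 glues between a core constant in $\DGamma$ and the root of a tree. This uses that, since \Dmc satisfies all functionality assertions and \fEL\ has no role inclusions, the chase adds no new binary facts among existing constants (rules \textbf{R1},\textbf{R2} are unary, \textbf{R3} never fires, and \textbf{R4} only copies along an already-present edge), so every binary fact is either a fact of $\Dmc_0$ (core--core), a tree fact inside one tree, a crossing fact, or a generated fact $R(t,t\rho M)$ (both endpoints in one region). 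Consequently, if a simple path $z_0,\dots,z_m$ in $q$ contains no atom whose $h$-image is a crossing fact, then $h(z_0)$ and $h(z_m)$ lie in the same region. Moreover, since $h$ is not misguided we have $h(z)=\langle z,f_z\rangle$ whenever $h(z)\in\DGamma$, so Claim~\ref{claim:few-trees} applies to every crossing fact and shows its $\DGamma$-side variable $z$ satisfies $\{y_0,y_k\}\cap Y_z=\emptyset$.

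The crux is a \emph{$y_j$-path lemma} that I would prove next: if $j\in\{0,k\}$ and $y_j\in Y_z\cap Y_{z'}$, then $h(z)$ and $h(z')$ lie in the same region. By definition of the sets $Y$, there are functional paths in $q$ from $z$ to $y_j$ and from $z'$ to $y_j$; concatenating the first with the reverse of the second gives a walk from $z$ to $z'$ on which every variable $u$ satisfies $y_j\in Y_u$ (each such $u$ lies on a functional path ending in $y_j$). Deleting internal cycles turns this walk into a simple path with the same endpoints whose variables still all contain $y_j$. No atom of this path can map to a crossing fact, because the $\DGamma$-side variable of a crossing fact has $\{y_0,y_k\}\cap Y_{(\cdot)}=\emptyset$ while here every variable contains $y_j\in\{y_0,y_k\}$; the region argument then gives $h(z)$ and $h(z')$ the same region.

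For \textbf{MH1}, let $Y_x=\{y_j\}$ with $j\in\{0,k\}$ and let $x'$ satisfy $Y_{x'}=Y_x$, so $y_j\in Y_x\cap Y_{x'}$. The $y_j$-path lemma places $h(x)$ and $h(x')$ in the same region. If $h(x)\in\DGamma^{\Umc}$, that region is $\DGamma^{\Umc}$, giving \textbf{MH1a}; if $h(x)\notin\DGamma^{\Umc}$, the common region is the tree $T^{\Umc}_{\treeroot(h(x))}$ containing $h(x)$, so $h(x')\in T^{\Umc}_{\treeroot(h(x))}$, giving \textbf{MH1b}.

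For \textbf{MH2}, let $Y_x=\emptyset$ and $h(x)\notin\DGamma^{\Umc}$, so $h(x)$ lies in a tree region $\Tmc=T^{\Umc}_{c}$. If every variable of $q$ maps into $\Tmc$, then $h(x_1)\in\Tmc$ and I take $x'=x_1$ (recall $Y_{x_1}=\{y_0\}\neq\emptyset$). Otherwise, connectedness of $q$ yields an atom on $\{w,z^*\}$ with $h(z^*)\in\Tmc$ and $h(w)\notin\Tmc$; since the unique link leaving $\Tmc$ is its crossing fact to $\DGamma$, we get $h(w)\in\DGamma$ and $h(z^*)$ the root of $\Tmc$. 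As $h$ is not misguided, $h(w)=\langle w,f_w\rangle$, so Claim~\ref{claim:few-trees} (applied to the atom, rewriting with an inverse role if needed) gives $\{y_0,y_k\}\cap Y_{z^*}\neq\emptyset$, say $y_j\in Y_{z^*}$; set $x'=x_1$ if $j=0$ and $x'=x_2$ if $j=k$, so $y_j\in Y_{z^*}\cap Y_{x'}$ and $Y_{x'}\neq\emptyset$. The $y_j$-path lemma places $h(x')$ in the same region as $h(z^*)\in\Tmc$, hence $h(x')\in\Tmc$ and $T^{\Umc}_{\treeroot(h(x'))}=\Tmc\ni h(x)$, as required. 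The main obstacle I anticipate is the bookkeeping of the first paragraph---pinning down exactly that crossing facts are the sole inter-region links---since everything else then follows mechanically from Claim~\ref{claim:few-trees} and the functional-path structure recorded in the sets $Y_x$.
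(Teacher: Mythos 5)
Your proof is correct and follows essentially the same route as the paper: the paper's proof of \textbf{MH1} is exactly your $y_j$-path lemma (concatenate the functional paths to $y_j$, make the path simple, and use Claim~\ref{claim:few-trees} to rule out any crossing atom), and its proof of \textbf{MH2} likewise locates a crossing atom and invokes Claim~\ref{claim:few-trees} to transfer membership to $x_1$ or $x_2$. Your only deviation is to extract that path argument as an explicit lemma keyed on $y_j\in Y_z\cap Y_{z'}$ rather than $Y_z=Y_{z'}=\{y_j\}$, which is in fact the form the paper implicitly relies on when it reuses \textbf{MH1} inside the proof of \textbf{MH2}.
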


    \begin{proof}
        For {\bf{MH1}},
        let $x,x'$ be as in the claim.
        Since $Y_x = Y_x' = \{y\}$ for some $y \in \{y_0,y_k\}$, there are a simple functional path from $x$ to $y$ and a simple functional path from $x'$ to $y$.
        Moreover, since the paths are functional and $y \in Y_y$, for any variable $u$ on either of those paths we have that  $y \in Y_u$.
        Thus, there is a simple path $z_1, \dots, z_\ell$ from $z_1=x$ to $z_\ell = x'$ such that for every variable $z_i$ we have that $y \in Y_{z_i}$.

        For {\bf{MH1a}},
        let us assume, by contradiction,  that $h(x) \in \DGamma^{\Umc}$ and $h(x') \notin \DGamma^{\Umc}$.
        Thus, $\treeroot(h(x'))$ is well defined. Let $d = \treeroot(h(x'))$.
        By construction of $\Dmc$, there is $z_j$ such that $h(z_j) = d$.
        Thus, there are $z_{j-1}$ and an atom $R(z_{j-1}, z_{j})$ such that
        $h(z_{j-1}) = c\in \DGamma$, $h(z_{j})=d \notin \DGamma$, and $R(c,d) \in \Dmc$.
        By Claim~\ref{claim:few-trees}, we have that $y \notin Y_{z_{j-1}}$,
        but this is impossible as we know that for every $z_i$ on the path
        we have that $y \in Y_{z_i}$. This proves {\bf{MH1a}}.

        For {\bf{MH1b}}, let $h(x) \notin \DGamma^{\Umc}$.
        Thus, $\treeroot(h(x))$ is well defined. Let $d = \treeroot(h(x))$.
        By contradiction, let us assume that  $h(x') \notin T_{d}^{\Umc}$.
        Then, $h(z_1), \dots, h(z_\ell)$ is a path in $\Dmc$.
        Moreover, since $h(z_\ell) \notin T_{d}^{\Umc}$, by construction of $\Dmc$ there is a constant $h(z_{j+1}) = c \in \DGamma$
        and an atom $R(z_j, z_{j+1}) \in q$ such that $h(z_j)=d$.
        Since $c,d \in \dom{\Dmc}$, we have $R(c,d) \in \Dmc$ and, by Claim~\ref{claim:few-trees},
        we infer that $y \notin Y_{z_{j+1}}$. Again, this is impossible as we know that for every $z_i$ on the path
        we have that $y \in Y_{z_i}$. This proves {\bf{MH1b}}.

        For {\bf MH2},
        let $x$ be an answer variable with $Y_x = \emptyset$ and  $h(x) \notin \DGamma$.
        Then, $\treeroot(h(x))$ is well defined. Let $d = \treeroot(h(x))$.
        If $h(x_1) \in T^{\Umc}_d$, then we are done, so let us assume that $h(x_1) \notin T^{\Umc}_d$.
        Let $z_1, \dots, z_\ell$ be a simple path from $z_1=x$ to $z_\ell = x_1$.
        Since $h(x_1) \notin T^{\Umc}_d$ and $h(x) \in T^{\Umc}_d$, by construction of $\Dmc$ there
        is a variable $z_i$ on this path such that $h(z_i) = d$
        and $h(z_{i+1}) \in \DGamma$. Thus, there is an atom $R(z_i, z_{i+1}) \in q$
        and a fact $R(h(z_i),h(z_{i+1}))$. Thus, by Claim~\ref{claim:few-trees},
        we have that $\{y_0,y_k\} \cap Y_{h(z_i)} \neq \emptyset$.
        By {\bf MH1a} $y_0 \notin Y_{h(z_i)}$.
        Thus $y_k \in Y_{h(z_i)}$ and, by {\bf MH1b}, $h(x_2) \in T^{\Umc}_d$.
        This ends the proof of {\bf MH2}.
    \end{proof}

%    Intuitively, the above claim states that all answer variables $x$
%    with $Y_x = \{y_0\}$ are either mapped to the same copy of (or the
%    chase below) a tree added in Step 2. of the construction, or all
%    are mapped to $\Dmc_0$ (or to the chase below).  The same holds
%    for answer variables $x$ with $Y_x = \{y_k\}$.  Finally, answer
%    variables $x$ with $Y_x = \emptyset$ are mapped ``near'' the
%    answer variables $x'$ with $Y_x \neq \emptyset$.

    \medskip For every answer $\bar a \in Q(\Dmc)^\Wmc$, choose a
    homomorphism $h$ from $q$ to $\Umc_{\Dmc,\Omc}$ with
    $h(\bar x)=\bar a$ that is not misguided.  We analyse four cases
    according to where exactly $h$ maps the answer variables in~$q$,
    and determine the maximum number of minimal partial answers in
    each case. Note that by Claim~\ref{claim:MM-small-Y}, the cases
    are exhaustive.

    \medskip \textit{Case 1}:  There is no variable $x$ such that
    $Y_x = \{y_0\}$ or $Y_x = \{y_k\}$ and
    $h(x) \in \DGamma^{\Umc}$.
    We will show that there are $O(|M_1| + |M_2|)$ minimal partial answers
    consistent with this condition.

    \smallskip
    By Claim~\ref{claim:MM-matching-avars-to-trees} part \textbf{MH1b},
    %this \textit{Case 1.} assumption implies that
    there is a constant $d_1$ such that all answer variables $x$ with
    $Y_x = \{y_0\}$ are mapped to the tree $T_{d_1}^{\Umc}$, and
    likewise for variables with $Y_x = \{y_k\}$ and some constant
    $d_2$.  To be more precise, let $d_1 = \treeroot(h(x_1))$ and
    $d_2 = \treeroot(h(x_2))$. Since $Y_{x_1}=\{y_0\}$ and
    $Y_{x_2}=\{y_k\}$, Point~{\bf MH1b} of
    Claim~\ref{claim:MM-matching-avars-to-trees} implies that
    $h(x) \in T^{\Umc}_{d_1}$ for every answer variable $x$ with
    $Y_x = \{y_0\}$ and $h(x) \in T^{\Umc}_{d_2}$ for every answer
    variable $x$ with $Y_x = \{y_k\}$. Moreover, by {\bf MH2}, by
    Claim~\ref{claim:MM-small-Y}, and by fact that $h$ is not
    misguided, for every answer variable $x$ with $Y_x = \emptyset$ we
    have
    $h(x)\in T^{\Umc}_{d_1} \cup T^{\Umc}_{d_2} \cup \{\langle x,
    \emptyset \rangle\}$.  This implies that, for every pair of
    constants $d_1,d_2 \in \dom(\Dmc)$, for any answer variable $x$
    with $h(x) \in \dom(\Dmc)$ we have
    $h(x) \in T^{\Umc}_{d_1} \cup T^{\Umc}_{d_2} \cup \{\langle x,
    \emptyset \rangle\}$. This gives us
    $2 |\mn{adom}(\Dmc_{\mn{tree}})|+1$ possible values of $h(x)$.
    Since for every answer variable $x_i$ such that
    $h(x_i) \notin \dom(\Dmc)$ the value $a_i$ is one of
    $|\bar{x}|$ wildcards, for a fixed pair $(d_1,d_2)$ we have no
    more than
    $(2 |\mn{adom}(\Dmc_{\mn{tree}})|+1 + |\bar{x}|)^{|\bar{x}|}$
    minimal partial answers, which is independent
    of $||\Dmc||$.  Thus, to show that the number of possible minimal
    partial answers is $O(|M_1| + |M_2|)$, % in the first case
    it is enough to show that the number of possible pairs $(d_1,d_2)$
    is $O(|M_1| + |M_2|)$.

    By definition of the $\treeroot$ function, $\Dmc$ contains unique
    facts $R_1(c_1,d_1)$ and $R_2(c_2,d_2)$ such that
    $c_1,c_2 \in \DGamma$. Observe that there are no more than
    $|\Sigma|\cdot|\DGamma| \in O(|\DGamma|)$ pairs $(d_1,d_2)$ such
    that $c_1=c_2$. Thus, let us assume that $c_1 \neq c_2$. Since
    $c_1,c_2 \in \DGamma$ and the homomorphism $h$ is not misguided,
    there are variables $z_1, z_2$ in $q$ such that
    $c_1 = \langle z_1, f_{z_1} \rangle$ and
    $c_2 = \langle z_2, f_{z_2} \rangle$.  As $q$ is connected, there
    are variables $v_1,v_2$ in $q$ such that $h(v_1) = d_1$,
    $h(v_2) = d_2$ and $R_1(z_1,v_1), R_2(z_2,v_2)$ are atoms in $q$.
    By Claim~\ref{claim:few-trees} we have that
    $Y_{v_1} \cap \{y_0, y_k\} \neq \emptyset$. Since all variables
    $x$ with $y_k \in Y_x$ are mapped to the tree $T^{\Umc}_{d_2}$,
    which by assumption is different from $T^{\Umc}_{d_1}$, we infer
    that $y_0 \in Y_{v_1}$.  Thus, by Claim~\ref{claim:few-trees} and
    the fact that there is an $i$ with
    $Y_x \cup Y_{x'} \subseteq \{y_i, y_{i+1}\}$ for every atom
    $R(x,x') \in q$, cf.\ Point~1 of Claim~\ref{claim:MM-small-Y}, we
    infer that $Y_{z_1} \subseteq \{y_1\}$. Indeed, since
    $y_0 \in Y_{v_1}$ and $R_1(z_1,v_1) \in q$,
    $Y_{v_1} \cup Y_{z_1} \subseteq \{y_0, y_{1}\}$.  Thus, Point
    1 of Claim~\ref{claim:few-trees} implies that
    $Y_{z_1} \subseteq \{y_1\}$.  By a similar argument,
    $Y_{z_2} \subseteq \{y_{k-1}\}$.

    Now, following a similar argument as in the \textbf{ME2} statement
    of \textbf{MM1}, we can show that
    $f_{z_1}(y_1) = f_{z_2}(y_{k-1})$.  Thus, if $c_1 \neq c_2$ then
 %   $c_1 = \langle z_1, f_{z_1} \rangle$ and
   % $c_2 = \langle z_2, f_{z_2} \rangle$ such that
    $f_{z_1} \subseteq \{(y_0, a)\}$ and
    $f_{z_2} \subseteq \{(y_k, a)\}$ for some $a$ used in $M_1$ or in
    $M_2$, i.e. there is a pair $(a,b) \in M_1 \cup M_2$ or a pair
    $(b,a) \in M_1 \cup M_2$.

    Therefore, there are no more than $8|\mn{var}(q)|^2(|M_1|+|M_2|)$
    possible pairs $c_1\neq c_2$. Indeed, there is no more than
    $2(|M_1|+|M_2|)$ possible values of $a$ and, thus, no more than
    $4\cdot 2(|M_1|+|M_2|)$ possible pairs $(f_{z_1}, f_{z_2})$. Since
    there is no more than $|\mn{var}(q)|^2$ possible pairs $(z_1,z_2)$
    we can bound the number of pairs $(c_1,c_2)$ as above. Now,
    observe that for $i=1,2$ and every possible constant $c_i$, there
    is no more than $|\Sigma|$ different possible candidates for
    $d_i$. Thus, if $c_1 \neq c_2$ then there is no more than
    $$8|\mn{var}(q)|^2(|M_1|+|M_2|)|\Sigma| = O(|M_1|+|M_2|)$$ possible
    pairs $(d_1,d_2)$.
    Combining the $c_1 \neq c_2$ and the $c_1 = c_2$ case, in total there
    are no more than
    $$|\Sigma|\cdot|\DGamma| +
    8|\Sigma|\cdot|\mn{var}(q)|^2(|M_1|+|M_2|)$$ possible pairs
    $(d_1,d_2)$. Since, by Claim~\ref{claim:MM-database-is-ok}
    Point~1, $|\DGamma| = O(|M_1|+ |M_2|)$, we infer that there are no
    more than $O(|M_1|+ |M_2|)$ possible minimal partial answers in
    \textit{Case 1}.

    \medskip \textit{Case 2:} There is a variable $x$ such that
    $Y_x = \{y_0\}$ and $h(x) \in \DGamma^{\Umc}$, but no
    variable $y$ such that $Y_x = \{y_k\}$ and
    $h(y) \in \DGamma^{\Umc}$.

    \smallskip

    Let $d= \treeroot(h(x_2))$.
    We show that there is a pair $(a,b) \in M_1$ such that the following holds:
%    There is no $j$ such that $(j,a) \in M_1$ and there is a fact $R(d, \langle u, f_u\rangle)$
%    in $\Dmc$ such that $y_0 \notin Y_u$ and $f_u(y_1) = a$.
 \begin{enumerate}\itemsep=0pt
        \item for all answer variables $x'$ such that $Y_{x'} = \{y_k\}$ we have $h(x') \in T^\Umc_{d}$;
        \item for all answer variables $x'$ such that $Y_{x'} = \{y_0\}$ we have either $h(x') \notin \mn{adom}(\Dmc)$
        or $h(x')=\langle x',f_{x'}\rangle$ with $f_{x'}(y_0) = a$;
        note that in the latter case, $f_{x'}$ is only defined on the
        variable $y_0$ and thus $h(x')$ is uniquely determined;
        \item for all answer variables $x'$ such that $Y_{x'} = \emptyset$ we have either $h(x') \in \DGamma^{\Umc}$ or $h({x'}) \in T^\Umc_{d}$.
        \end{enumerate}
    It follows that in this case there are no more than $$|M_2|\cdot (|\mn{adom}(\Dmc_{\mn{tree}})|+1)^{|\bar{x}|}$$ possible partial answers.

    \smallskip Point~1 follows from {\bf MH1b} in
    Claim~\ref{claim:MM-matching-avars-to-trees}, which implies that for
    all answer variables $x'$ with $y_k \in Y_{x'}$ we have
    $h(x') \in T^\Umc_{d}$.

%    By the same argument, if there was a variable $u$ such that $y_0 \in  Y_{u}$ and $h(u) \notin \DGamma^{\Umc}$,
%    then we would have that for all variables $u'$ with $y_0 \in Y_{u'}$  holds $h(u') \in T^{\Umc}_{h(u)}$.
%    In particular, this would imply that $h(x) \in T^{\Umc}_{h(u)}$.
%    Since $h(x) \in \DGamma$ and for all $t \in \Tmc$ we have that $\dom(t) \cap \DGamma^{\Umc} = \emptyset$, this is impossible.
%    Hence, for all variables $u$ such that $y_0 \in  Y_{u}$ we have that $h(u) \in \DGamma^{\Umc}$.

    Now for Point~2.  By {\bf MH1a} in
    Claim~\ref{claim:MM-matching-avars-to-trees}, we have
    $h(x') \in \DGamma^{\Umc}$ for all answer variables $x'$ with
    $y_0 \in Y_{x'}$. If $h(x') \notin \mn{adom}(\Dmc)$ for all such
    $x'$, then we are done. Thus assume otherwise, that is, for some
    answer variable $x'$ with $y_0 \in Y_{x'}$ we have
    $h(x') \in \mn{adom}(\Dmc_0)$.

    Since $y_0 \in Y_{x'}$, there is a simple functional path
    $z_1,\dots,z_\ell$ from $x'=z_1$ to $y_0=z_\ell$.  Thus, there are
    atoms $R_1(z_1,z_2),\dots,R_{\ell-1}(z_{\ell-1},z_\ell) \in q$.
    Moreover, there is a variable $z = z_j$ such that
    $h(y_0) \in \dom(\Umc_{\Dmc, \Omc}^{\downarrow h(z)})$.  Thus,
    $h(z) = \langle z, f_z\rangle \in \DGamma$ and, by construction of
    $\Dmc_0$, $f_z(y_0) = a$ for some $(a,b) \in M_1$.

    Since $h(y_0) \in \dom(\Umc_{\Dmc, \Omc}^{\downarrow h(z)})$, for
    every answer variable $x'$ with $y_0 \in Y_{x'}$ and for every
    simple functional path $v_1,\dots,v_m$ from $x'=z_1$ to $y_0=z_m$,
    which has to exists as $y_0 \in Y_{x'}$, there is a variable
    $v = v_j$ on this path such that
    $h(v) = \langle v, f_v \rangle = \langle z, f_z \rangle = h(z)$.
    Since $y_0 \in Y_{v_i} $ or all variables $v_i$ on this path, by
    Claim~\ref{claim:MM-functional-database} we infer that
    $f_{x'}(y_0) = f_z(y_0) = a$.  This proves Point~2.

    Point~3 follows immediately from  {\bf MH2} in
    Claim~\ref{claim:MM-matching-avars-to-trees}.

%    For the third bullet,
%    let $u$ be an answer variable with $Y_u = \emptyset$. If $h(u) \in \DGamma$ then $h(u) = \langle u, \emptyset \rangle$.
%    If $h(u) \in T^{\Umc}_{d'}$ for some $d'$ then there is $c' \in \DGamma$ such that $R(c',d') \in \Dmc$.
%    Moreover, $c' = \langle z, f_z\rangle$ for some variable $z$.
%
%    If $d' = d$ then we are done, so let us assume that $d \neq d'$.
%    Then, there is a variable $v$ and an atom $R(z,v) \in q$ such that $h(v) = d'$.
%    This follows immediately from the construction of $\Dmc$ and the fact that $q$ is connected.
%
%    By Claim~\ref{claim:few-trees}, we have that $\{y_0,y_k\} \cap Y_z = \emptyset$.
%    Moreover, since $d' \neq d$, we have that $Y_v \cap \{y_0,y_k\}  = \emptyset$ as all
%    variables $v'$ such that $Y_{v'} \cap \{y_0,y_k\}  \neq \emptyset$ are mapped to $T^{\Umc}_{d_1} \cup \DGamma{\Umc}$.
%    Thus, $(Y_z \cup Y_v) \cap \{y_0,y_k\} = \emptyset$, which is impossible by Claim~\ref{claim:few-trees}.

    \medskip

    \textit{Case 3}:
    There is a variable $x$ such that $Y_x = \{y_k\}$ and $h(x) \in \DGamma^{\Umc}$,
    but no variable $y$ such that $Y_x = \{y_0\}$ and
    $h(y) \in \DGamma^{\Umc}$.

    \smallskip

    By an argument smmetric to the one for Case~2, we may show that there are
    no more than $|M_2|\cdot (|\mn{adom}(\Dmc_{\mn{tree}})|+1)^{|\bar{x}|}$ possible partial
    answers in this case.

\medskip

    \textit{Case 4};
   There is a variable $x$ such that $Y_x = \{y_0\}$ and $h(x) \in \DGamma^{\Umc}$
    and a variable $y$ such that $Y_y = \{y_k\}$ and $h(y) \in
    \DGamma^{\Umc}$.

\smallskip

%    Then, by connectedness of $q$ and Claim~\ref{claim:few-trees}, for every answer variable $z$ we have that $h(z) \in \DGamma^{\Umc}$.
     Then, by {\bf MH1a} of Claim~\ref{claim:MM-matching-avars-to-trees}, for every answer variable $z$ we have $h(z) \in \DGamma^{\Umc}$.

     If $h(x) \in \DGamma$ for some answer variable $x$ with
     $Y_{x} = \{y_0\}$ then we can argue as in Case~2 that
     for every
     answer variable $x'$ with $Y_{x'} = \{y_0\}$ and
     $h(x') \in \DGamma$ we have $h(x') = \langle x', f_x\rangle$.

     Similarly, if $h(y) \in \DGamma$ for some answer variable $y$
     with $Y_{y} = \{y_k\}$, then we can argue as in Case~3 that for
     every answer variable $y'$ with $Y_{y'} = \{y_k\}$ and
     $h(y') \in \DGamma$ we have $h(y') = \langle y', f_x\rangle$.

     Moreover, if both such $x$ and $y$ exist then
     Claim~\ref{claim:mapping-to-core} gives    $(f_x(y_0), f_y(y_k)) \in M_1M_2$.
     It follows that we have no more than
     $$O(|M_1| + |M_2| + |M_1M_2|) \cdot 2^{|\bar{x}|}$$ possible
     partial answers of this kind.

    \cMP{the below should be here, true?}
    To end the proof of \textbf{MM3} we recall that every minimal partial answer is either misguided or
    falls under one of the four cases \textit{Case 1-4.} above.
    We have shown that there is no more than $O(|M_1|+|M_2|)$ misguided minimal partial answers.
    Furthermore, every of the four cases
    admits $O(|M_1| + |M_2| + |M_1M_2|)$ minimal partial answers. Thus, \textbf{MM3} holds.

\section{Proofs for Section~\ref{sect:combined}}

\propcombinedptime*
\begin{proof} We first establish Theorem~\ref{prop:combinedptime} in
  the version where `minimal partial answers' are replaced with
  `partial answers'. Let an OMQ
  $Q(\bar x)=(\Omc,\Sigma,q) \in (\mathcal{ELH},\text{CQ})$ be given with $q$
  acyclic, as well as a $\Sigma$-database~$\Dmc$ and a tuple
  $\bar a^\ast \in (\mn{adom}(\Dmc) \cup \{ \ast \})^{|\bar x|}$.
%
  % We first observe that we can check in \PTime, given a tuple
  % $\bar b^\ast \in (\mn{adom}(\Dmc) \cup \{ \ast \})^{|\bar x|}$,
  % check whether $\bar a^\ast$ is a (not necessarily minimal)
  % partial answer to $Q$ on $\Dmc$. To do this, we
  We
  simultaneously
  modify $q$ and $\bar a^\ast$ as follows:
  \begin{itemize}

      \item let $q'(\bar x)$ be obtained from $q(\bar x)$ by quantifying
        all variables that have a `$\ast$' in the corresponding position
        of $\bar a^\ast$;

      \item let $\bar a$ be obtained from $\bar a^\ast$ by removing all
        wildcards.

      \end{itemize}
  We then check whether $\bar a \in Q'(\Dmc)$ where
  $Q'=(\Omc,\Sigma,q')$ and return the answer. This can be done in
  \PTime~\cite{DBLP:conf/ijcai/BienvenuOSX13}.

  \smallskip

  For the case of minimal partial answers, we first check using the
  above procedure whether $\bar a^\ast$ is a partial answer and return
  `no' if this is not the case. We then consider all tuples
  $\bar b^\ast \in (\mn{adom}(\Dmc) \cup \{ \ast \})^{|\bar x|}$ that
  may be obtained from $\bar a^\ast$ by replacing a single occurrence
  of `$\ast$' with a constant from $\mn{adom}(\Dmc)$. For each such
  tuple, we check whether it is a partial answer using the above
  procedure. If any of the checks succeeds, return~`no'.  Otherwise,
  return `yes'.
\end{proof}

\thmsinglemulti*
\begin{proof}
  The upper bound is established essentially as in the proof of
  Theorem~\ref{prop:combinedptime}, with two differences. First, when
  constructing $q'$ we now introduce one quantified variable per
  wildcard $\ast_i$ in $\bar a^\ast$, and thus the resulting CQ is not
  necessarily acyclic. Checking whether $\bar a^\ast$ is a partial
  answer is thus only possible in \NPclass now, rather than in \PTime.
  And second, we now consider all tuples tuples
  $\bar b^\ast \in (\mn{adom}(\Dmc) \cup \{ \ast \})^{|\bar x|}$ such
  that one of the following holds:
  \begin{enumerate}

      \item $\bar b^\ast$ is obtained from $\bar a^\ast$ by choosing a
        wildcard $\ast_i$ and replacing all occurrences of `$\ast_i$' with
        a constant from $\mn{adom}(\Dmc)$;

      \item $\bar b^\ast$ is obtained from $\bar a^\ast$ by choosing two
        distinct wildcard $\ast_i$ and $\ast_j$, replacing all occurrences
        of $\ast_j$ with $\ast_i$, and renaming the remaining wildcard
        to ensure that they are numbered consecutively.

      \end{enumerate}
  In summary, we then obtain an \NPclass upper
  bound.

  \smallskip

  For the lower bound, we use a reduction from positive 1-in-3-SAT. % ,
  % \NPclass-hardness proof is in Schaefer's famous 1978 paper.
  Let
  $\vp$ be a 3-formula with clauses $c_1,\dots,c_n$ and variables
  $z_1,\dots,z_m$. All literals are positive and we are interested in
  whether $\vp$ has a 1-in-3 assignment, that is, an assignment that
  satisfies exactly one variable per clause.
% Further let $x_{i,j}$ denote the variable used in
  % the $j$-th literal of $c_i$, for $1 \leq i \leq n$ and
  % $1 \leq j \leq 3$.
  % A \emph{clause type} is a word $t \in \{0,1\}^3$
  % that describes the `negation pattern' of a clause. For instance,
  % $t=100$ means that the first literal is positive while the second
  % and third literal are negative.

  Let \Omc be the \EL-ontology $\{ V \sqsubseteq \exists r . \top \}$
  and $q$ the acyclic CQ that contains one connected component for
  each clause $c_i$ of $\vp$, as follows:
  $$
  \begin{array}{l}
      v_1(y_i,y_{i,1}) \wedge v_2(y_i,y_{i,2}) \wedge
      v_3(y_i,y_{i,3}) \; \wedge \\[1mm]
      r(y_{i,1},x_{i,1}) \wedge
      r(y_{i,2},x_{i,2}) \wedge
      r(y_{i,3},x_{i,3}).
      \end{array}
  $$
  The answer variables are the variables $x_{i,j}$ with
  $1 \leq i \leq n$ and $1 \leq j \leq 3$.  The OMQ $Q$ is now
  $(\Omc,\Sigma,q)$ where $\Sigma$ is the set of relation symbols in
  \Omc and $q$.

  The tuple $\bar a^\Wmc$ to be tested uses wildcards
  $\ast_1,\dots,\ast_m$. In the position for each answer variable
  $x_{i,j}$, it has wildcard $\ast_\ell$ if the $j$-th variable in
  clause $i$ is $z_\ell$. Thus every wildcard is used multiple
  times, which ensures that occurrences of the same variable
  in different clauses receive the same truth value.

  The database $\Dmc$ contains the following facts:
  $$
  \begin{array}{lll}
        V(a_T) & V(a_F) \\[1mm]
        v_1(a_1,a_T) &     v_2(a_1,a_F) &     v_3(a_1,a_F) \\[1mm]
        v_1(a_2,a_F) &     v_2(a_2,a_T) &     v_3(a_2,a_F) \\[1mm]
        v_1(a_3,a_F) &     v_2(a_3,a_F) &     v_3(a_3,a_T)
      \end{array}
  $$
  Informally, each variable $y_i$ from each clause may be
  mapped to each of $a_1, a_2, a_3$, representing the
  options of making the first, second, or third variable in
  $c_i$ true while making all other variables false.

  It can be verified that $\vp$ has a 1-in-3 assignment
  if and only if $\bar a^\Wmc \in Q(\Dmc)^\Wmc$.
\end{proof}

\thmDP*

\begin{proof}
  For the upper bound, we use essentially the same strategy as in the
  proofs of Theorems~\ref{prop:combinedptime}
  and~\ref{thm:singlemulti}.  We first observe that in
  $(\mathcal{ELH},\text{CQ})$, single-testing (not necessarily
  minimal) partial answers with multi-wildcards is in~\NPclass.  Let
  us call this problem $P_1$. To implement the strategy followed in
  the proofs of Theorems~\ref{prop:combinedptime}
  and~\ref{thm:singlemulti} in \DPclass, we consider the intersection
  of $P_1$ with the following \coNP-problem $P_2$: given an OMQ
  $Q=(\Omc,\Sigma,q) \in (\mathcal{ELH},\text{CQ})$, a database~\Dmc,
  and a list of tuples
  $\bar a_1^\ast,\dots, \bar a_n^\ast \in (\mn{adom}(\Dmc) \cup \{
  \ast \})^{|\bar x|}$, whether $(Q,\Dmc,\bar a_i^\ast)$ is a
  no-instance of $P_1$, for some $i$ with $1 \leq i \leq n$.

  \medskip To prove hardness, we provide a polynomial time reduction
  of the \DPclass-complete problem 3COL-no3COL, defined as follows:
  given a pair $(G_1, G_2)$ of undirected graphs, decide whether $G_1$
  is 3-colorable and $G_2$ is not.

  Let $(G_1,G_2)$ be an instance of 3COL-no3COL.  We construct in
  polynomial time an OMQ $Q \in (\EL, \text{CQ})$, a database $\Dmc$,
  and a wildcard tuple $\bar{a}^\ast$ such that
  $\bar{a}^\ast \in Q(\Dmc)^\ast$ if and only if $G_1$ is
  3\=/colorable and $G_2$ is not.  Assume that $G_1 = ([n], E_1)$ and
  $G_2 = ([m], E_2)$.  Define $Q(x) = (\Omc, \Sigma, q)$ where
  $$
  \begin{array}{r@{\;}c@{\;}l}
        \Omc &=& \{ B \sqsubseteq \exists R.A \} \\[1mm]
        \Sigma &=& \{A,B,r,s,t\}\\[1mm]
        q(x) &=&\!\!\!\!\!\displaystyle\bigwedge_{(i,j) \in E_1} r(x_i,x_j) \land
                 \!\!\!\!\!\bigwedge_{(i,j) \in E_2} s(y_i,y_j) \land t(y_1,x) \land
                 A(x).
      \end{array}
  $$
    \noindent
  Further, define a $\Sigma$\=/database $\Dmc$ with domain
  $\{a,b,c,d,e\}$ and facts
    \begin{itemize}
            \item $r(a,b), r(b,c), r(c,a), r(b,a), r(c,b), r(a,c)$,
            \item $s(a,b), s(b,c), s(c,a), s(b,a), s(c,b), s(a,c)$,
            \item $t(a,d), A(d),$
            \item $s(e,e), B(e)$.
        \end{itemize}
    It should be clear that the first conjunction in $q$ checks for
    3-colorability of $G_1$ as the $r$-part of \Dmc is simply
    the CSP-template for 3-colorability. The remaining conjunction of
    $q$ check for non-3-colorability of $G_2$ in the following way:
    the answer variable $x$ can be mapped to the constant $d$ if and only if
    $G_2$ is 3-colorable and it can always be mapped to an anonymous
    element generated by $\Omc$. It is this easy to prove the
    following
    (for the multi-wildcard version, `$\ast$' is simply replaced with
    `$\ast_1$').
    \begin{claim}
        $Q(\Dmc)^{\ast} = \{\ast\}$ if and only if $G_1$ is 3\=/colorable and $G_2$ is not.
        \end{claim}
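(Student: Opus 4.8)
The plan is to exploit that $q$ has a single answer variable $x$, so every (partial) answer is a tuple of length one, and the only candidates are the constant tuples $(u)$ with $u \in \mn{adom}(\Dmc)=\{a,b,c,d,e\}$ and the all\=/wildcard tuple $(\ast)$. Since a constant tuple contains no wildcard, $(u)$ is a partial answer iff it is a \emph{complete} answer, i.e.\ iff $\Dmc,\Omc\models q(u)$; and $(\ast)$ is a partial answer iff $q$ has \emph{some} answer in every model, i.e.\ iff the Boolean query $\exists x\,q$ is certain. By Lemma~\ref{lem:univbasic} and universality of $\Umc_{\Dmc,\Omc}$, both conditions can be read off the single universal model: $(u)$ is a complete answer iff $u\in q(\Umc_{\Dmc,\Omc})$, and $(\ast)$ is a partial answer iff $q(\Umc_{\Dmc,\Omc})\neq\emptyset$. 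So first I would describe $\Umc_{\Dmc,\Omc}$: it is $\Dmc$ together with a single fresh null $e'$ with $t(e,e')$ and $A(e')$, generated because $B(e)\in\Dmc$ and the inclusion in $\Omc$ attaches to every $B$\=/element an $A$\=/labelled $t$\=/successor; no $r$\=/ or $s$\=/facts are ever added.

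Second, I would analyse homomorphisms $h\colon q\to\Umc_{\Dmc,\Omc}$. The $r$\=/part of $\Dmc$ is exactly the symmetric $3$\=/clique on $\{a,b,c\}$, and these are the only $r$\=/facts, so $h$ restricted to the $G_1$\=/gadget exists iff $G_1$ is $3$\=/colorable (and then sends the $x_i$ into $\{a,b,c\}$). For the component $\{y_1,\dots,y_m,x\}$ there are exactly two possibilities. Either $h(x)=d$: then $A(d)$ and the unique $t$\=/fact $t(a,d)$ force $h(y_1)=a$, and since the only $s$\=/facts touching $\{a,b,c\}$ are the $3$\=/clique while $e$ is $s$\=/isolated from them (having only the self\=/loop $s(e,e)$), the component of $G_2$ containing $y_1$ must be properly $3$\=/colored, whereas any other component may be collapsed onto $e$ via $s(e,e)$. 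Or $h(x)=e'$: then $h(y_1)=e$, all $y_i$ collapse onto $e$ by $s(e,e)$, and $t(e,e')$, $A(e')$ satisfy $t(y_1,x)\wedge A(x)$; this route imposes \emph{no} constraint on $G_2$. Assuming w.l.o.g.\ that $G_2$ is connected (legitimate, since $3$\=/colorability and its complement remain hard on connected graphs, so that ``$y_1$\=/component colorable'' coincides with ``$G_2$ colorable''), these observations yield: the only possible complete answer is $(d)$, and $(d)\in q(\Umc_{\Dmc,\Omc})$ iff $G_1$ and $G_2$ are both $3$\=/colorable; while $q(\Umc_{\Dmc,\Omc})\neq\emptyset$, i.e.\ $(\ast)$ is a partial answer, iff $G_1$ is $3$\=/colorable.

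Finally I would combine these in a three\=/way case distinction. If $G_1$ is not $3$\=/colorable then $q(\Umc_{\Dmc,\Omc})=\emptyset$ and $Q(\Dmc)^\ast=\emptyset$. If $G_1$ and $G_2$ are both $3$\=/colorable then $(d)$ is a complete answer and $(d)\prec(\ast)$, so $(\ast)$ is not minimal and $Q(\Dmc)^\ast=\{(d)\}$. If $G_1$ is $3$\=/colorable but $G_2$ is not, then no constant is an answer (the only candidate $d$ fails), so nothing lies $\prec(\ast)$, hence $(\ast)$ is minimal and is the unique partial answer, giving $Q(\Dmc)^\ast=\{\ast\}$ --- exactly the claimed equivalence. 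The multi\=/wildcard version is identical, since with a single answer variable only one wildcard can ever occur and $\ast$ and $\ast_1$ coincide. The main obstacle I anticipate is the bookkeeping in the second step: making rigorous that the self\=/loop $s(e,e)$ together with the $s$\=/isolation of $e$ from $\{a,b,c\}$ forces precisely the intended behaviour (``$y_1$\=/component properly colored, all other components collapsed to $e$''), and hence that $(d)$ is an answer exactly when the connected $G_2$ is $3$\=/colorable. That is where the correctness of the reduction genuinely resides.
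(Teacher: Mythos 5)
Your proof is correct and follows essentially the same route as the paper, which only sketches the argument (the $r$\-/triangle as the CSP template for 3\-/colorability, $x \mapsto d$ forcing a proper coloring of the $G_2$ gadget, and the anonymous $A$\-/labelled successor of $e$ always providing the witness that makes $(\ast)$ a partial answer whenever $G_1$ is 3\-/colorable). You are in fact more careful than the paper on one point: the reduction as written needs $G_2$ to be connected, since otherwise $x\mapsto d$ only forces a proper coloring of the component containing vertex~$1$ while other components can collapse onto the self\-/loop at $e$; your w.l.o.g.\ justification for assuming connectedness is sound and closes a gap the paper leaves implicit.
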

%
    % Let
    % \[q_1()  \leftarrow \bigwedge_{(i,j) \in E_1} r(x_i,x_j)\]
    % and
    % \[q_2(x) \leftarrow \bigwedge_{(i,j) \in E_2} s(y_i,y_j) \land t(y_1,x) \land A(x).\]
    %
    % It is folklore that $G_1$ is 3\=/colorable if and only if $q_1(\Dmc) = \{()\}$,
    % that is if there is a homomorphism from $q_1$ to $\Dmc$.
    % Moreover, it is clear that $q_1(\Dmc) = q_1(\mn{ch}_{\Omc}(\Dmc))$.
    %
    % Let $f \in \mn{ch}_{\Omc}(\Dmc)$ be a null added by the chase of $\Omc$.
    % Notice that there is a homomorphism $h$ from $q_2$ to $\mn{ch}_{\Omc}(\Dmc)$
    % such that $h(x) = f$. Thus, $*$ is always a partial answer to $q_2$ on $\mn{ch}_{\Omc}(\Dmc)$
    % with respect to nulls.
    % It is easy to check that if $G_2$ is 3\=/colorable then also $d$
    % is a partial answer to $q_2$ on $\mn{ch}_{\Omc}(\Dmc)$ with respect to nulls.
    % Since $d \prec \ast$, $*$ is a minimal partial answer to $q_2$
    % on $\mn{ch}_{\Omc}(\Dmc)$ with respect to nulls
    % if and only if $G_2$ is not 3\=/colorable.
    %
    % Thus, by the construction the only possible partial answers to $Q$ on $\Dmc$ are $\ast$ and $d$ and the following holds.
    %
    % \begin{itemize}
        %     \item If $Q(\Dmc)^{\ast} = \{\ast\}$ then
        %     $G_1$ is 3\=/colorable and $G_2$ is not.
        %     \item If $Q(\Dmc)^{\ast} = \{d\}$ then  $G_1$ is 3\=/colorable and so is $G_2$.
        %     \item If $Q(\Dmc)^{\ast} = \emptyset$ then  $G_1$ is not 3\=/colorable.
        % \end{itemize}
    %
    % Hence, $\ast \in Q(\Dmc)^{\ast}$ if and only if $G_1$ is 3\=/colorable and $G_2$ is not.
    % Thus, taking $\bar{a}^{\ast} = \ast$ we have constructed $Q$, $\Dmc$, and $\bar{a}^{\ast}$
    % as previously demanded, which ends the proof.
    \end{proof}

    \thmourelcomb*
    \begin{proof}
      We may once more use the strategy from the proofs of
      Theorems~\ref{prop:combinedptime} and~\ref{thm:singlemulti}.
      It amounts to polynomially many calls to an \ExpTime problem,
      thus resulting in \ExpTime overall complexity.
    \end{proof}

\end{document}